\let\pdfoutput=\undefined\fi
\chardef\@x10\chardef\@xv60
\def\tcitime{
\def\@time{%
  \@minute\time\@hour\@minute\divide\@hour\@xv
  \ifnum\@hour<\@x 0\fi\the\@hour:%
  \multiply\@hour\@xv\advance\@minute-\@hour
  \ifnum\@minute<\@x 0\fi\the\@minute
  }}%
\def\x@hyperref#1#2#3{%
   \catcode`\~ = 12
   \catcode`\$ = 12
   \catcode`\_ = 12
   \catcode`\# = 12
   \catcode`\& = 12
   \y@hyperref{#1}{#2}{#3}%
}
\def\y@hyperref#1#2#3#4{%
   #2\ref{#4}#3
   \catcode`\~ = 13
   \catcode`\$ = 3
   \catcode`\_ = 8
   \catcode`\# = 6
   \catcode`\& = 4
}
\def\QCTOpt[#1]#2{%
  \def\QCTOptB{#1}
  \def\QCTOptA{#2}
}
\def\QCTNOpt#1{%
  \def\QCTOptA{#1}
  \let\QCTOptB\empty
}
\def\Qct{%
  \@ifnextchar[{%
    \QCTOpt}{\QCTNOpt}
}
\def\QCBOpt[#1]#2{%
  \def\QCBOptB{#1}%
  \def\QCBOptA{#2}%
}
\def\QCBNOpt#1{%
  \def\QCBOptA{#1}%
  \let\QCBOptB\empty
}
\def\Qcb{%
  \@ifnextchar[{%
    \QCBOpt}{\QCBNOpt}%
}
\def\PrepCapArgs{%
  \ifx\QCBOptA\empty
    \ifx\QCTOptA\empty
      {}%
    \else
      \ifx\QCTOptB\empty
        {\QCTOptA}%
      \else
        [\QCTOptB]{\QCTOptA}%
      \fi
    \fi
  \else
    \ifx\QCBOptA\empty
      {}%
    \else
      \ifx\QCBOptB\empty
        {\QCBOptA}%
      \else
        [\QCBOptB]{\QCBOptA}%
      \fi
    \fi
  \fi
}
\def\GRAPHICSPS#1{%
 \ifcase\GRAPHICSTYPE
   \special{ps: #1}%
 \or
   \special{language "PS", include "#1"}%
 \fi
}%
\def\graffile#1#2#3#4{%
    \bgroup
	   \@inlabelfalse
       \leavevmode
       \@ifundefined{bbl@deactivate}{\def~{\string~}}{\activesoff}%
        \raise -#4 \BOXTHEFRAME{%
           \hbox to #2{\raise #3\hbox to #2{\null #1\hfil}}}%
    \egroup
}%
\def\draftbox#1#2#3#4{%
 \leavevmode\raise -#4 \hbox{%
  \frame{\rlap{\protect\tiny #1}\hbox to #2%
   {\vrule height#3 width\z@ depth\z@\hfil}%
  }%
 }%
}%
\let\nographics=\@msidraft
\newif\ifwasdraft
\def\GRAPHIC#1#2#3#4#5{%
   \ifnum\@msidraft=\@ne\draftbox{#2}{#3}{#4}{#5}%
   \else\graffile{#1}{#3}{#4}{#5}%
   \fi
}
\def\addtoLaTeXparams#1{%
    \edef\LaTeXparams{\LaTeXparams #1}}%
\newif\ifBoxFrame \BoxFramefalse
\newif\ifOverFrame \OverFramefalse
\newif\ifUnderFrame \UnderFramefalse
\def\BOXTHEFRAME#1{%
   \hbox{%
      \ifBoxFrame
         \frame{#1}%
      \else
         {#1}%
      \fi
   }%
}
\def\doFRAMEparams#1{\BoxFramefalse\OverFramefalse\UnderFramefalse\readFRAMEparams#1\end}%
\def\readFRAMEparams#1{%
 \ifx#1\end%
  \let\next=\relax
  \else
  \ifx#1i\dispkind=\z@\fi
  \ifx#1d\dispkind=\@ne\fi
  \ifx#1f\dispkind=\tw@\fi
  \ifx#1t\addtoLaTeXparams{t}\fi
  \ifx#1b\addtoLaTeXparams{b}\fi
  \ifx#1p\addtoLaTeXparams{p}\fi
  \ifx#1h\addtoLaTeXparams{h}\fi
  \ifx#1X\BoxFrametrue\fi
  \ifx#1O\OverFrametrue\fi
  \ifx#1U\UnderFrametrue\fi
  \ifx#1w
    \ifnum\@msidraft=1\wasdrafttrue\else\wasdraftfalse\fi
    \@msidraft=\@ne
  \fi
  \let\next=\readFRAMEparams
  \fi
 \next
 }%
\def\IFRAME#1#2#3#4#5#6{%
      \bgroup
      \let\QCTOptA\empty
      \let\QCTOptB\empty
      \let\QCBOptA\empty
      \let\QCBOptB\empty
      #6%
      \parindent=0pt
      \leftskip=0pt
      \rightskip=0pt
      \setbox0=\hbox{\QCBOptA}%
      \@tempdima=#1\relax
      \ifOverFrame
          \typeout{This is not implemented yet}%
          \show\HELP
      \else
         \ifdim\wd0>\@tempdima
            \advance\@tempdima by \@tempdima
            \ifdim\wd0 >\@tempdima
               \setbox1 =\vbox{%
                  \unskip\hbox to \@tempdima{\hfill\GRAPHIC{#5}{#4}{#1}{#2}{#3}\hfill}%
                  \unskip\hbox to \@tempdima{\parbox[b]{\@tempdima}{\QCBOptA}}%
               }%
               \wd1=\@tempdima
            \else
               \textwidth=\wd0
               \setbox1 =\vbox{%
                 \noindent\hbox to \wd0{\hfill\GRAPHIC{#5}{#4}{#1}{#2}{#3}\hfill}\\%
                 \noindent\hbox{\QCBOptA}%
               }%
               \wd1=\wd0
            \fi
         \else
            \ifdim\wd0>0pt
              \hsize=\@tempdima
              \setbox1=\vbox{%
                \unskip\GRAPHIC{#5}{#4}{#1}{#2}{0pt}%
                \break
                \unskip\hbox to \@tempdima{\hfill \QCBOptA\hfill}%
              }%
              \wd1=\@tempdima
           \else
              \hsize=\@tempdima
              \setbox1=\vbox{%
                \unskip\GRAPHIC{#5}{#4}{#1}{#2}{0pt}%
              }%
              \wd1=\@tempdima
           \fi
         \fi
         \@tempdimb=\ht1
         \advance\@tempdimb by -#2
         \advance\@tempdimb by #3
         \leavevmode
         \raise -\@tempdimb \hbox{\box1}%
      \fi
      \egroup%
}%
\def\DFRAME#1#2#3#4#5{%
  \vspace\topsep
  \hfil\break
  \bgroup
     \leftskip\@flushglue
	 \rightskip\@flushglue
	 \parindent\z@
	 \parfillskip\z@skip
     \let\QCTOptA\empty
     \let\QCTOptB\empty
     \let\QCBOptA\empty
     \let\QCBOptB\empty
	 \vbox\bgroup
        \ifOverFrame 
           #5\QCTOptA\par
        \fi
        \GRAPHIC{#4}{#3}{#1}{#2}{\z@}%
        \ifUnderFrame 
           \break#5\QCBOptA
        \fi
	 \egroup
  \egroup
  \vspace\topsep
  \break
}%
\def\FFRAME#1#2#3#4#5#6#7{%
  \@ifundefined{floatstyle}
    {
     \begin{figure}[#1]%
    }
    {
	 \ifx#1h
      \begin{figure}[H]%
	 \else
      \begin{figure}[#1]%
	 \fi
	}
  \let\QCTOptA\empty
  \let\QCTOptB\empty
  \let\QCBOptA\empty
  \let\QCBOptB\empty
  \ifOverFrame
    #4
    \ifx\QCTOptA\empty
    \else
      \ifx\QCTOptB\empty
        \caption{\QCTOptA}%
      \else
        \caption[\QCTOptB]{\QCTOptA}%
      \fi
    \fi
    \ifUnderFrame\else
      \label{#5}%
    \fi
  \else
    \UnderFrametrue%
  \fi
  \begin{center}\GRAPHIC{#7}{#6}{#2}{#3}{\z@}\end{center}%
  \ifUnderFrame
    #4
    \ifx\QCBOptA\empty
      \caption{}%
    \else
      \ifx\QCBOptB\empty
        \caption{\QCBOptA}%
      \else
        \caption[\QCBOptB]{\QCBOptA}%
      \fi
    \fi
    \label{#5}%
  \fi
  \end{figure}%
 }%
\def\makeactives{
  \catcode`\"=\active
  \catcode`\;=\active
  \catcode`\:=\active
  \catcode`\'=\active
  \catcode`\~=\active
}
   \gdef\activesoff{%
      \def"{\string"}%
      \def;{\string;}%
      \def:{\string:}%
      \def'{\string'}%
      \def~{\string~}%
    }
\def\FRAME#1#2#3#4#5#6#7#8{%
 \bgroup
 \ifnum\@msidraft=\@ne
   \wasdrafttrue
 \else
   \wasdraftfalse%
 \fi
 \def\LaTeXparams{}%
 \dispkind=\z@
 \def\LaTeXparams{}%
 \doFRAMEparams{#1}%
 \ifnum\dispkind=\z@\IFRAME{#2}{#3}{#4}{#7}{#8}{#5}\else
  \ifnum\dispkind=\@ne\DFRAME{#2}{#3}{#7}{#8}{#5}\else
   \ifnum\dispkind=\tw@
    \edef\@tempa{\noexpand\FFRAME{\LaTeXparams}}%
    \@tempa{#2}{#3}{#5}{#6}{#7}{#8}%
    \fi
   \fi
  \fi
  \ifwasdraft\@msidraft=1\else\@msidraft=0\fi{}%
  \egroup
 }%
\def\TEXUX#1{"texux"}
\def\limfunc#1{\mathop{\rm #1}}%
\def\func#1{\mathop{\rm #1}\nolimits}%
\long\def\QQQ#1#2{%
     \long\expandafter\def\csname#1\endcsname{#2}}%
\long\def\QQA#1#2{}%
\def\QTR#1#2{{\csname#1\endcsname {#2}}}%
\def\EXPAND#1[#2]#3{}%
\def\NOEXPAND#1[#2]#3{}%
\def\LaTeXparent#1{}%
\def\ChildStyles#1{}%
\def\ChildDefaults#1{}%
\def\QTagDef#1#2#3{}%
  \providecommand{\UNICODE}[2][]{\protect\rule{.1in}{.1in}}
  \providecommand{\U}[1]{\protect\rule{.1in}{.1in}}
\def\QQfnmark#1{\footnotemark}
 \def\abstract{%
  \if@twocolumn
   \section*{Abstract (Not appropriate in this style!)}%
   \else \small 
   \begin{center}{\bf Abstract\vspace{-.5em}\vspace{\z@}}\end{center}%
   \quotation 
   \fi
  }%
   \def\registered{\relax\ifmmode{}\r@gistered
                    \else$\m@th\r@gistered$\fi}%
 \def\r@gistered{^{\ooalign
  {\hfil\raise.07ex\hbox{$\scriptstyle\rm\text{R}$}\hfil\crcr
  \mathhexbox20D}}}}{}%
\newdimen\theight
\def\newfmtname{LaTeX2e}
  \DeclareOldFontCommand{\rm}{\normalfont\rmfamily}{\mathrm}
  \DeclareOldFontCommand{\sf}{\normalfont\sffamily}{\mathsf}
  \DeclareOldFontCommand{\tt}{\normalfont\ttfamily}{\mathtt}
  \DeclareOldFontCommand{\bf}{\normalfont\bfseries}{\mathbf}
  \DeclareOldFontCommand{\it}{\normalfont\itshape}{\mathit}
  \DeclareOldFontCommand{\sl}{\normalfont\slshape}{\@nomath\sl}
  \DeclareOldFontCommand{\sc}{\normalfont\scshape}{\@nomath\sc}
\def\alpha{{\Greekmath 010B}}%
\def\beta{{\Greekmath 010C}}%
\def\gamma{{\Greekmath 010D}}%
\def\delta{{\Greekmath 010E}}%
\def\epsilon{{\Greekmath 010F}}%
\def\zeta{{\Greekmath 0110}}%
\def\eta{{\Greekmath 0111}}%
\def\theta{{\Greekmath 0112}}%
\def\iota{{\Greekmath 0113}}%
\def\kappa{{\Greekmath 0114}}%
\def\lambda{{\Greekmath 0115}}%
\def\mu{{\Greekmath 0116}}%
\def\nu{{\Greekmath 0117}}%
\def\xi{{\Greekmath 0118}}%
\def\pi{{\Greekmath 0119}}%
\def\rho{{\Greekmath 011A}}%
\def\sigma{{\Greekmath 011B}}%
\def\tau{{\Greekmath 011C}}%
\def\upsilon{{\Greekmath 011D}}%
\def\phi{{\Greekmath 011E}}%
\def\chi{{\Greekmath 011F}}%
\def\psi{{\Greekmath 0120}}%
\def\omega{{\Greekmath 0121}}%
\def\varepsilon{{\Greekmath 0122}}%
\def\vartheta{{\Greekmath 0123}}%
\def\varpi{{\Greekmath 0124}}%
\def\varrho{{\Greekmath 0125}}%
\def\varsigma{{\Greekmath 0126}}%
\def\varphi{{\Greekmath 0127}}%
\def\nabla{{\Greekmath 0272}}
\def\FindBoldGroup{%
   {\setbox0=\hbox{$\mathbf{x\global\edef\theboldgroup{\the\mathgroup}}$}}%
}
\def\Greekmath#1#2#3#4{%
    \if@compatibility
        \ifnum\mathgroup=\symbold
           \mathchoice{\mbox{\boldmath$\displaystyle\mathchar"#1#2#3#4$}}%
                      {\mbox{\boldmath$\textstyle\mathchar"#1#2#3#4$}}%
                      {\mbox{\boldmath$\scriptstyle\mathchar"#1#2#3#4$}}%
                      {\mbox{\boldmath$\scriptscriptstyle\mathchar"#1#2#3#4$}}%
        \else
           \mathchar"#1#2#3#4%
        \fi 
    \else 
        \FindBoldGroup
        \ifnum\mathgroup=\theboldgroup 
           \mathchoice{\mbox{\boldmath$\displaystyle\mathchar"#1#2#3#4$}}%
                      {\mbox{\boldmath$\textstyle\mathchar"#1#2#3#4$}}%
                      {\mbox{\boldmath$\scriptstyle\mathchar"#1#2#3#4$}}%
                      {\mbox{\boldmath$\scriptscriptstyle\mathchar"#1#2#3#4$}}%
        \else
           \mathchar"#1#2#3#4%
        \fi     	    
	  \fi}
\newif\ifGreekBold  \GreekBoldfalse
\let\SAVEPBF=\pbf
\def\pbf{\GreekBoldtrue\SAVEPBF}%
  \newcounter{equationnumber}  
  \def\mathletters{%
     \addtocounter{equation}{1}
     \edef\@currentlabel{\theequation}%
     \setcounter{equationnumber}{\c@equation}
     \setcounter{equation}{0}%
     \edef\theequation{\@currentlabel\noexpand\alph{equation}}%
  }
    \def\BibTeX{{\rm B\kern-.05em{\sc i\kern-.025em b}\kern-.08em
                 T\kern-.1667em\lower.7ex\hbox{E}\kern-.125emX}}}{}%
\def\AmS{{\protect\usefont{OMS}{cmsy}{m}{n}%
                A\kern-.1667em\lower.5ex\hbox{M}\kern-.125emS}}}{}%
\def\@@eqncr{\let\@tempa\relax
    \ifcase\@eqcnt \def\@tempa{& & &}\or \def\@tempa{& &}%
      \else \def\@tempa{&}\fi
     \@tempa
     \if@eqnsw
        \iftag@
           \@taggnum
        \else
           \@eqnnum\stepcounter{equation}%
        \fi
     \fi
     \global\tag@false
     \global\@eqnswtrue
     \global\@eqcnt\z@\cr}
\def\TCItag{\@ifnextchar*{\@TCItagstar}{\@TCItag}}
\def\@TCItag#1{%
    \global\tag@true
    \global\def\@taggnum{(#1)}}
\def\@TCItagstar*#1{%
    \global\tag@true
    \global\def\@taggnum{#1}}
\def\QATOPD#1#2#3#4{{#3 \atopwithdelims#1#2 #4}}%
\def\tsum{\mathop{\textstyle \sum }}%
\def\tprod{\mathop{\textstyle \prod }}%
\def\tbigoplus{\mathop{\textstyle \bigoplus }}%
\def\tbigcup{\mathop{\textstyle \bigcup }}%
\def\ExitTCILatex{\makeatother }
\if@compatibility\message{amsmath already loaded}\fi\aftergroup\ExitTCILatex}
\if@compatibility\message{amstex already loaded}\fi\aftergroup\ExitTCILatex}
\if@compatibility\message{amsgen already loaded}\fi\aftergroup\ExitTCILatex}
\let\DOTSI\relax
\def\RIfM@{\relax\ifmmode}%
\def\FN@{\futurelet\next}%
\def\iint{\DOTSI\intno@\tw@\FN@\ints@}%
\def\iiint{\DOTSI\intno@\thr@@\FN@\ints@}%
\def\iiiint{\DOTSI\intno@4 \FN@\ints@}%
\def\idotsint{\DOTSI\intno@\z@\FN@\ints@}%
\def\ints@{\findlimits@\ints@@}%
\newif\iflimtoken@
\newif\iflimits@
\def\findlimits@{\limtoken@true\ifx\next\limits\limits@true
 \else\ifx\next\nolimits\limits@false\else
 \limtoken@false\ifx\ilimits@\nolimits\limits@false\else
 \ifinner\limits@false\else\limits@true\fi\fi\fi\fi}%
\def\multint@{\int\ifnum\intno@=\z@\intdots@                          
 \else\intkern@\fi                                                    
 \ifnum\intno@>\tw@\int\intkern@\fi                                   
 \ifnum\intno@>\thr@@\int\intkern@\fi                                 
 \int}
\def\multintlimits@{\intop\ifnum\intno@=\z@\intdots@\else\intkern@\fi
 \ifnum\intno@>\tw@\intop\intkern@\fi
 \ifnum\intno@>\thr@@\intop\intkern@\fi\intop}%
\def\intic@{%
    \mathchoice{\hskip.5em}{\hskip.4em}{\hskip.4em}{\hskip.4em}}%
\def\negintic@{\mathchoice
 {\hskip-.5em}{\hskip-.4em}{\hskip-.4em}{\hskip-.4em}}%
\def\ints@@{\iflimtoken@                                              
 \def\ints@@@{\iflimits@\negintic@
   \mathop{\intic@\multintlimits@}\limits                             
  \else\multint@\nolimits\fi                                          
  \eat@}
 \else                                                                
 \def\ints@@@{\iflimits@\negintic@
  \mathop{\intic@\multintlimits@}\limits\else
  \multint@\nolimits\fi}\fi\ints@@@}%
\def\intkern@{\mathchoice{\!\!\!}{\!\!}{\!\!}{\!\!}}%
\def\plaincdots@{\mathinner{\cdotp\cdotp\cdotp}}%
\def\intdots@{\mathchoice{\plaincdots@}%
 {{\cdotp}\mkern1.5mu{\cdotp}\mkern1.5mu{\cdotp}}%
 {{\cdotp}\mkern1mu{\cdotp}\mkern1mu{\cdotp}}%
 {{\cdotp}\mkern1mu{\cdotp}\mkern1mu{\cdotp}}}%
\def\RIfM@{\relax\protect\ifmmode}
\def\text{\RIfM@\expandafter\text@\else\expandafter\mbox\fi}
\let\nfss@text\text
\def\text@#1{\mathchoice
   {\textdef@\displaystyle\f@size{#1}}%
   {\textdef@\textstyle\tf@size{\firstchoice@false #1}}%
   {\textdef@\textstyle\sf@size{\firstchoice@false #1}}%
   {\textdef@\textstyle \ssf@size{\firstchoice@false #1}}%
   \glb@settings}
\def\textdef@#1#2#3{\hbox{{%
                    \everymath{#1}%
                    \let\f@size#2\selectfont
                    #3}}}
\newif\iffirstchoice@
\def\Let@{\relax\iffalse{\fi\let\\=\cr\iffalse}\fi}%
\def\vspace@{\def\vspace##1{\crcr\noalign{\vskip##1\relax}}}%
\def\multilimits@{\bgroup\vspace@\Let@
 \baselineskip\fontdimen10 \scriptfont\tw@
 \advance\baselineskip\fontdimen12 \scriptfont\tw@
 \lineskip\thr@@\fontdimen8 \scriptfont\thr@@
 \lineskiplimit\lineskip
 \vbox\bgroup\ialign\bgroup\hfil$\m@th\scriptstyle{##}$\hfil\crcr}%
\def\Sb{_\multilimits@}%
\def\endSb{\crcr\egroup\egroup\egroup}%
\def\Sp{^\multilimits@}%
\newdimen\ex@
\def\rightarrowfill@#1{$#1\m@th\mathord-\mkern-6mu\cleaders
 \hbox{$#1\mkern-2mu\mathord-\mkern-2mu$}\hfill
 \mkern-6mu\mathord\rightarrow$}%
\def\leftarrowfill@#1{$#1\m@th\mathord\leftarrow\mkern-6mu\cleaders
 \hbox{$#1\mkern-2mu\mathord-\mkern-2mu$}\hfill\mkern-6mu\mathord-$}%
\def\leftrightarrowfill@#1{$#1\m@th\mathord\leftarrow
\mkern-6mu\cleaders
 \hbox{$#1\mkern-2mu\mathord-\mkern-2mu$}\hfill
 \mkern-6mu\mathord\rightarrow$}%
\def\overrightarrow{\mathpalette\overrightarrow@}%
\def\overrightarrow@#1#2{\vbox{\ialign{##\crcr\rightarrowfill@#1\crcr
 \noalign{\kern-\ex@\nointerlineskip}$\m@th\hfil#1#2\hfil$\crcr}}}%
\def\overleftarrow{\mathpalette\overleftarrow@}%
\def\overleftarrow@#1#2{\vbox{\ialign{##\crcr\leftarrowfill@#1\crcr
 \noalign{\kern-\ex@\nointerlineskip}$\m@th\hfil#1#2\hfil$\crcr}}}%
\def\overleftrightarrow{\mathpalette\overleftrightarrow@}%
\def\overleftrightarrow@#1#2{\vbox{\ialign{##\crcr
   \leftrightarrowfill@#1\crcr
 \noalign{\kern-\ex@\nointerlineskip}$\m@th\hfil#1#2\hfil$\crcr}}}%
\def\underrightarrow{\mathpalette\underrightarrow@}%
\def\underrightarrow@#1#2{\vtop{\ialign{##\crcr$\m@th\hfil#1#2\hfil
  $\crcr\noalign{\nointerlineskip}\rightarrowfill@#1\crcr}}}%
\def\underleftarrow{\mathpalette\underleftarrow@}%
\def\underleftarrow@#1#2{\vtop{\ialign{##\crcr$\m@th\hfil#1#2\hfil
  $\crcr\noalign{\nointerlineskip}\leftarrowfill@#1\crcr}}}%
\def\underleftrightarrow{\mathpalette\underleftrightarrow@}%
\def\underleftrightarrow@#1#2{\vtop{\ialign{##\crcr$\m@th
  \hfil#1#2\hfil$\crcr
 \noalign{\nointerlineskip}\leftrightarrowfill@#1\crcr}}}%
\def\qopnamewl@#1{\mathop{\operator@font#1}\nlimits@}
\let\nlimits@\displaylimits
\def\setboxz@h{\setbox\z@\hbox}
\def\varlim@#1#2{\mathop{\vtop{\ialign{##\crcr
 \hfil$#1\m@th\operator@font lim$\hfil\crcr
 \noalign{\nointerlineskip}#2#1\crcr
 \noalign{\nointerlineskip\kern-\ex@}\crcr}}}}
 \def\rightarrowfill@#1{\m@th\setboxz@h{$#1-$}\ht\z@\z@
  $#1\copy\z@\mkern-6mu\cleaders
  \hbox{$#1\mkern-2mu\box\z@\mkern-2mu$}\hfill
  \mkern-6mu\mathord\rightarrow$}
\def\leftarrowfill@#1{\m@th\setboxz@h{$#1-$}\ht\z@\z@
  $#1\mathord\leftarrow\mkern-6mu\cleaders
  \hbox{$#1\mkern-2mu\copy\z@\mkern-2mu$}\hfill
  \mkern-6mu\box\z@$}
\def\projlim{\qopnamewl@{proj\,lim}}
\def\injlim{\qopnamewl@{inj\,lim}}
\def\varinjlim{\mathpalette\varlim@\rightarrowfill@}
\def\varprojlim{\mathpalette\varlim@\leftarrowfill@}
\def\varliminf{\mathpalette\varliminf@{}}
\def\varliminf@#1{\mathop{\underline{\vrule\@depth.2\ex@\@width\z@
   \hbox{$#1\m@th\operator@font lim$}}}}
\def\varlimsup{\mathpalette\varlimsup@{}}
\def\varlimsup@#1{\mathop{\overline
  {\hbox{$#1\m@th\operator@font lim$}}}}
\def\align{\@verbatim \frenchspacing\@vobeyspaces \@alignverbatim
You are using the "align" environment in a style in which it is not defined.}
\let\csname endalign*\endcsname =\endtrivlist
\def\alignat{\@verbatim \frenchspacing\@vobeyspaces \@alignatverbatim
You are using the "alignat" environment in a style in which it is not defined.}
\let\csname endalignat*\endcsname =\endtrivlist
\def\xalignat{\@verbatim \frenchspacing\@vobeyspaces \@xalignatverbatim
You are using the "xalignat" environment in a style in which it is not defined.}
\let\csname endxalignat*\endcsname =\endtrivlist
\def\gather{\@verbatim \frenchspacing\@vobeyspaces \@gatherverbatim
You are using the "gather" environment in a style in which it is not defined.}
\let\csname endgather*\endcsname =\endtrivlist
\def\multiline{\@verbatim \frenchspacing\@vobeyspaces \@multilineverbatim
You are using the "multiline" environment in a style in which it is not defined.}
\let\csname endmultiline*\endcsname =\endtrivlist
\def\arrax{\@verbatim \frenchspacing\@vobeyspaces \@arraxverbatim
You are using a type of "array" construct that is only allowed in AmS-LaTeX.}
\def\tabulax{\@verbatim \frenchspacing\@vobeyspaces \@tabulaxverbatim
You are using a type of "tabular" construct that is only allowed in AmS-LaTeX.}
\let\csname endarrax*\endcsname =\endtrivlist
\let\csname endtabulax*\endcsname =\endtrivlist
 \def\endequation{%
     \ifmmode\ifinner 
      \iftag@
        \addtocounter{equation}{-1} 
        $\hfil
           \displaywidth\linewidth\@taggnum\egroup \endtrivlist
        \global\tag@false
        \global\@ignoretrue   
      \else
        $\hfil
           \displaywidth\linewidth\@eqnnum\egroup \endtrivlist
        \global\tag@false
        \global\@ignoretrue 
      \fi
     \else   
      \iftag@
        \addtocounter{equation}{-1} 
        \eqno \hbox{\@taggnum}
        \global\tag@false%
        $$\global\@ignoretrue
      \else
        \eqno \hbox{\@eqnnum}
        $$\global\@ignoretrue
      \fi
     \fi\fi
 } 
 \newif\iftag@ \tag@false
 \def\TCItag{\@ifnextchar*{\@TCItagstar}{\@TCItag}}
 \def\@TCItag#1{%
     \global\tag@true
     \global\def\@taggnum{(#1)}}
 \def\@TCItagstar*#1{%
     \global\tag@true
     \global\def\@taggnum{#1}}
     \def\tag{\@ifnextchar*{\@tagstar}{\@tag}}
     \def\@tag#1{%
         \global\tag@true
         \global\def\@taggnum{(#1)}}
     \def\@tagstar*#1{%
         \global\tag@true
         \global\def\@taggnum{#1}}
\def\dfrac#1#2{{\displaystyle {#1 \over #2}}}%
\def\binom#1#2{{#1 \choose #2}}%
\theoremstyle{plain}
\numberwithin{theorem}{section}
\numberwithin{definition}{section}
\numberwithin{equation}{section}
\numberwithin{figure}{section}
\numberwithin{remark}{section}
\numberwithin{example}{section}
\begin{document}
\setcounter{MaxMatrixCols}{10}
\def\mtctitle{}
\normalsize
\baselineskip=16pt
\pagestyle{myheadings}
\pagenumbering{arabic}
\setcounter{page}{1}
\setcounter{tocdepth}{2}
\setcounter{minitocdepth}{1}
\setcounter{secnumdepth}{3}

\chapter{\Large Cylindric versions of specialised Macdonald functions and\\ a deformed Verlinde
algebra}
\markboth{A deformed Verlinde algebra}{Christian Korff}

\begin{center}

{\large Christian Korff}

\vspace*{5mm}

\textit{School of Mathematics and Statistics, University of Glasgow, \\
15 University Gardens, Glasgow G12 8QW, Scotland, UK\\
E-mail: christian.korff@glasgow.ac.uk}
\end{center}

\vfill

\begin{abstract}
We define cylindric generalisations of skew Macdonald functions $P_{\lambda
/\mu }(q,t)$ when either $q=0$ or $t=0$. Fixing two integers $n>2$ and $k>0$
we shift the skew diagram $\lambda/\mu$, viewed as a subset of the two-dimensional integer
lattice, by the period vector $(n,-k)$. Imposing a periodicity condition one
defines cylindric skew tableaux and associated weight functions. The
resulting weighted sums over these cylindric tableaux are symmetric
functions. They appear in the coproduct of a commutative Frobenius algebra which is
a particular quotient of the spherical Hall algebra. We realise this Frobenius algebra as
a commutative subalgebra in the endomorphisms over a $U_{q}\widehat{\mathfrak{sl}}(n)$ %
Kirillov-Reshetikhin module. Acting with special elements of this subalgebra, which
are noncommutative analogues of Macdonald polynomials, on a highest weight vector, one obtains Lusztig's canonical basis. In the limit $q=t=0$, this Frobenius algebra is isomorphic to the $\widehat{\mathfrak{sl}}(n)$ Verlinde
algebra at level $k$, i.e. the structure constants become the $\widehat{\mathfrak{sl}}(n)_{k}$ Wess-Zumino-Novikov-Witten fusion coefficients.
Further motivation comes from exactly solvable lattice models in statistical
mechanics: the cylindric Macdonald functions discussed here arise as
partition functions of so-called vertex models obtained from solutions to
the Yang-Baxter equation. We show this by stating explicit bijections between cylindric tableaux and lattice configurations of non-intersecting paths. Using the algebraic Bethe ansatz the idempotents of the Frobenius algebra are computed.
\end{abstract}

\vfill

\vfill

\clearpage
\parindent=0pt

\section{Introduction}

The fusion or Verlinde ring $\mathcal{V}_{k}$, $k\in \mathbb{Z}_{\geq 0}$ of
a Kac Moody algebra $\mathfrak{\hat{g}}$ is a particular finite-dimensional
quotient of the Grothendieck ring $\limfunc{Rep}\mathfrak{g}%
=\tbigoplus_{\lambda \in \mathcal{P}^{+}}\mathbb{Z[\pi }_{\lambda }\mathbb{]}
$ (with product $\otimes $), where $\mathfrak{g}$ is
the corresponding (non-affine) semi-simple Lie algebra, $\mathcal{P}^{+}$ is
the set of dominant integral weights and $[\pi _{\lambda }]$ stands for the
isomorphism class of the irreducible representation $\pi _{\lambda }$ with
highest weight $\lambda $. Given a non-negative
integer $k$ define $\mathfrak{I}_{k}$ to be the ideal generated by elements of the form $[\pi _{\lambda
}]-(-1)^{\ell (w)}[\pi _{w\circ \lambda }]$ where $w$ is an element in the
affine Weyl group $\tilde{W}$ and $w\circ \lambda = w(\lambda+\rho)-\rho$ denotes the non-affine
part of the weight obtained under the shifted level-$k$ action of $\tilde{W}$ %
with $\rho$ being the Weyl vector. For instance, in the case of $\mathfrak{g}=\mathfrak{sl}(n)$ the action
 of the simple Weyl reflections is detailed in
equation (\ref{affaction}) in the text.
The Verlinde algebra is then defined as $\mathcal{V}_{k}:=\limfunc{Rep}\mathfrak{g}/\mathfrak{I}_{k}$; this
is in essence the celebrated Kac-Walton formula \cite{Kac} \cite{Walton}. The structure
constants of $\mathcal{V}_{k}$ are known to coincide with the fusion
coefficients in Wess-Zumino-Novikov-Witten (WZNW) conformal field theory, dimension
of moduli spaces of generalised $\theta $-functions and multiplicities of
tilting modules of quantum groups at roots of unity.

\subsection{Review of previous results}

In this article we will only consider the simplest case when $\mathfrak{g}=\mathfrak{sl}(n)$ or $\mathfrak{gl}(n)$ with Weyl group $W=\mathfrak{S}_{n}$, the symmetric group. There is a
ring isomorphism $\chi :\limfunc{Rep}\mathfrak{gl}(n)\rightarrow \mathbb{Z}%
[x_{1},\ldots ,x_{n}]^{\mathfrak{S}_{n}}$ which maps each isomorphism class $%
[\pi _{\lambda }]$ onto its Weyl character which can be identified with the
Schur function $s_{\lambda }$ and we have $s_{\mu }s_{\nu }=\sum_{\lambda
\in \mathcal{P}_{n}^{+}}c_{\mu \nu }^{\lambda }s_{\lambda }$ with $c_{\mu
\nu }^{\lambda }$ being the famous Littlewood-Richardson coefficients. In
what follows it will be important to note that the ring of symmetric
functions $\mathbb{Z}[x_{1},\ldots ,x_{n}]^{\mathfrak{S}_{n}}$ can be turned
into an infinite-dimensional bialgebra \cite{Gei} \cite{Z} (which for convenience we
define over $\mathbb{C}$) with coproduct $\Delta s_{\lambda }=\sum_{\mu \in
\mathcal{P}^{+}}s_{\lambda /\mu }\otimes s_{\mu }$ where $s_{\lambda /\mu
}=\sum_{\nu \in \mathcal{P}_{n}^{+}}c_{\mu \nu }^{\lambda }s_{\nu }$ is the
skew Schur function.\smallskip

\noindent Following \cite{Gepner}, \cite{GoodmanWenzl} define $\mathcal{I}%
_{k}=\langle s_{(1^{n})}-1,s_{(k+1)},s_{(k+2)},\ldots ,s_{(k+n-1)}\rangle $.

\begin{theorem}[Gepner, Goodman-Wenzl]
\label{GGW}The map $[\pi _{\lambda }]\mapsto \lbrack s_{\lambda
}]:=s_{\lambda }+\mathcal{I}_{k}$ defines a ring isomorphism $\chi _{k}:%
\mathcal{V}_{k}\rightarrow \mathbb{Z}[x_{1},\ldots ,x_{n}]^{\mathfrak{S}%
_{n}}/\mathcal{I}_{k}$.
\end{theorem}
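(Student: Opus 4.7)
The plan is to promote the classical Weyl character isomorphism $\chi:\mathrm{Rep}\,\mathfrak{gl}(n)\to\mathbb{Z}[x_1,\ldots,x_n]^{\mathfrak{S}_n}$, sending $[\pi_\lambda]\mapsto s_\lambda$, to a map between the quotients. Since $\chi$ is already a ring isomorphism, the theorem reduces to the single identity of ideals $\chi(\mathfrak{I}_k)=\mathcal{I}_k$, after which it remains to verify that the induced $\chi_k$ is a bijection, for which a rank count will suffice.

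First I would establish $\chi(\mathfrak{I}_k)\subseteq\mathcal{I}_k$. The affine Weyl group $\tilde W$ is generated by the finite symmetric group $\mathfrak{S}_n$ together with a single affine reflection $s_0$, so it is enough to verify the relation $s_\lambda\equiv(-1)^{\ell(w)}s_{w\circ\lambda}$ modulo $\mathcal{I}_k$ for these generators. For $w\in\mathfrak{S}_n$ the identity in fact holds on the nose in $\mathbb{Z}[x]^{\mathfrak{S}_n}$, by the alternant formula $s_\lambda=a_{\lambda+\rho}/a_\rho$: either $w$ permutes the distinct entries of $\lambda+\rho$ with sign $(-1)^{\ell(w)}$, or $\lambda+\rho$ has a coincident pair and both sides vanish. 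For $s_0$, which is the reflection across the wall $\langle\mu+\rho,\theta^{\vee}\rangle=k+n$ (so that $s_0\circ\lambda$ effectively exchanges $\lambda_1$ and $\lambda_n$ after a shift by $k+1$), I would compute $s_\lambda+s_{s_0\circ\lambda}$ via the Jacobi--Trudi determinant $s_\lambda=\det(h_{\lambda_i-i+j})_{i,j=1}^{n}$ and use column operations to absorb the discrepancy into the ideal generated by $h_{k+1},\ldots,h_{k+n-1}$ and $e_n-1$.

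For the reverse containment $\mathcal{I}_k\subseteq\chi(\mathfrak{I}_k)$, each generator of $\mathcal{I}_k$ must be realised as the image under $\chi$ of an element in $\mathfrak{I}_k$. The relation $h_{k+r}=s_{(k+r)}\equiv 0$ comes from applying $s_0$ to the one-row weight $(k+r,0,\ldots,0)$: the shift makes $(k+r,0,\ldots,0)+\rho$ develop a repeated component under $s_0$, so the corresponding difference $s_\lambda-(-1)^{\ell(s_0)}s_{s_0\circ\lambda}$ collapses to $s_{(k+r)}$ itself. The remaining generator $e_n-1=s_{(1^n)}-s_\emptyset$ arises from the outer automorphism of the affine Dynkin diagram of $\widehat{\mathfrak{sl}}(n)$, equivalently from the passage $\mathfrak{gl}(n)\to\mathfrak{sl}(n)$ in which the determinant character becomes trivial.

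Finally, bijectivity follows from a rank count. On the representation-theoretic side $\mathcal{V}_k$ has a distinguished $\mathbb{Z}$-basis indexed by the dominant weights in the level-$k$ fundamental alcove $\mathcal{P}_k^+=\{\lambda\in\mathcal{P}^+:\lambda_1-\lambda_n\leq k\}$. On the symmetric-function side, iterated application of the congruences established above (Jacobi--Trudi straightening) shows that every $s_\lambda$ is equivalent mod $\mathcal{I}_k$ to $\pm s_{\tilde\lambda}$ for some $\tilde\lambda\in\mathcal{P}_k^+$ or else to zero, producing a spanning set of the correct cardinality. Linear independence of this set can be confirmed by evaluating at the Verlinde specialisation $x_j=\exp(2\pi i(j-1/2)/(k+n))$ and invoking non-degeneracy of the Kac--Peterson $S$-matrix. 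I expect the main obstacle to be the column manipulation of the Jacobi--Trudi determinant: one must check that exactly the generators $h_{k+1},\ldots,h_{k+n-1}$ absorb the mismatch between $s_\lambda$ and $-s_{s_0\circ\lambda}$ modulo $\mathcal{I}_k$, and that the relation $e_n=1$ is the single further ingredient needed to cycle row $n$ past row $1$ of the determinant and recover the correct affine Weyl sign.
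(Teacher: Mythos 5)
The paper does not prove this theorem: it is quoted from Gepner and Goodman--Wenzl, so there is no internal argument to compare yours against. Your outline is the standard route (Weyl character map, affine straightening modulo the ideal, rank count against the level-$k$ alcove, separation of the spanning set by evaluation at the Kac--Peterson points), and as a strategy it is sound. Two concrete problems remain.

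First, the ideal identity $\chi(\mathfrak{I}_k)=\mathcal{I}_k$ is false as you have framed it, with $\chi$ the character map on $\mathrm{Rep}\,\mathfrak{gl}(n)$. Every simple reflection in the level-$k$ action (\ref{affaction}) preserves $|\lambda|=\sum_i\lambda_i$, so $\chi(\mathfrak{I}_k)$ is a \emph{homogeneous} ideal of $\mathbb{Z}[x_1,\ldots,x_n]^{\mathfrak{S}_n}$; if it contained the inhomogeneous element $e_n-1=s_{(1^n)}-1$ it would contain $1$ and the quotient would be zero. The affine diagram automorphism does not rescue this: that element generates the \emph{extended} affine Weyl group, which is not the $\tilde W$ in the definition of $\mathfrak{I}_k$, and in any case it sends $\emptyset$ to $(k,0,\ldots,0)$, not to $(1^n)$. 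The repair is the one the paper's wording already dictates ($\mathfrak{g}$ is the \emph{semisimple} algebra): define $\mathcal{V}_k$ from $\mathrm{Rep}\,\mathfrak{sl}(n)$, record that $\chi$ descends to an isomorphism $\mathrm{Rep}\,\mathfrak{sl}(n)\cong\mathbb{Z}[x_1,\ldots,x_n]^{\mathfrak{S}_n}/(e_n-1)$, and only then match the remaining generators $s_{(k+1)},\ldots,s_{(k+n-1)}$ against the image of $\mathfrak{I}_k$.

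Second, the step that carries the whole theorem --- $s_\lambda+s_{\sigma_0\circ\lambda}\in\mathcal{I}_k$ --- is only gestured at, and the naive mechanism one would reach for fails: the quasi-periodicity $h_{m+n}\equiv(-1)^{n-1}h_m$ that would let you shift a single Jacobi--Trudi row does not hold in the quotient (for instance $h_{k+n+1}\equiv(-1)^{n-1}e_1h_k$, which is not $(-1)^{n-1}h_{k+1}\equiv 0$). The row operations must therefore be organised on the whole determinant at once, i.e.\ the rim-hook/$n$-core straightening of Bertram--Ciocan-Fontanine--Fulton \cite{BCF}, or the exactly parallel computation carried out for the dual ideal $\mathcal{I}_k'$ in \cite{KS}; until that is written out your proof has an asserted, not established, core. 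A smaller remark: once well-definedness of $\chi_k$ and the rank count are in place, surjectivity plus equality of ranks of free $\mathbb{Z}$-modules already forces injectivity, so your separate argument for the reverse inclusion $\mathcal{I}_k\subseteq\chi(\mathfrak{I}_k)$ is redundant and can be dropped (which is fortunate, since the weight-$(k+r,0,\ldots,0)$ generators you invoke there produce $s_{(k+r)}$ only under the additional convention that $[\pi_{w\circ\lambda}]:=0$ when $w\circ\lambda+\rho$ is singular).
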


Define a non-degenerate bilinear form $\eta ([\pi _{\lambda }],[\pi _{\mu
}])=\delta _{\lambda \mu ^{\ast }}$ on the Verlinde algebra $\mathcal{V}_{k}\otimes _{\mathbb{Z}}\mathbb{%
C}$, where $\lambda ,\mu \in \mathcal{P}_{n,k}^{+}$ are dominant integral weights at level $k$ and $\mu ^{\ast }$
denotes the contragredient weight of $\mu$. Then $(\mathcal{V}_{k}\otimes _{\mathbb{Z}}\mathbb{%
C},\eta)$ is a (finite-dimensional) commutative Frobenius algebra.
This fact is not often mentioned in the literature, but it will motivate our definition of a deformed Verlinde
algebra below.

In \cite{KS} it was shown that there exists an alternative, combinatorial
description of the Verlinde algebra which employs a local affine version of
the plactic algebra \cite{LasSchutz} in the Robinson-Schensted-Knuth
correspondence. Each dominant integral weight $\lambda \in \mathcal{P}%
_{n,k}^{+}\}$ at level $k$ corresponds to a unique composition $m(\lambda )=(m_{1},\ldots
,m_{n})\in \mathbb{Z}_{\geq 0}^{n}$ where $m_{i}=m_{i}(\lambda )$ is the
multiplicity of the part $i$ in the conjugate partition $\lambda ^{\prime }$
for $i=1,\ldots ,n-1$ and $k=\sum_{i=1}^{n}m_{i}$. Interpret each $m(\lambda
)$ as a particle configuration on the $\widehat{\mathfrak{sl}}(n)$ Dynkin
diagram where $m_{i}(\lambda )$ particles are sitting at the $i$th node; see
Figure \ref{fig:hoppingalg} for a simple example. Define maps $\beta
_{i}^{\ast},\beta_i:\mathcal{P}_{n,k}^{+}\rightarrow \mathcal{P}_{n,k\pm 1}^{+}$
which increase and decrease the number of particles at node $i$ by one, respectively. The
\emph{affine plactic algebra} is then generated by the maps $a_{i}=\beta
_{i+1}^{\ast}\beta _{i}$ which move one particle from node $i$ to node $i+1$
with $i\in \mathbb{Z}_{n}$. The directed coloured graph obtained from
setting $\lambda \overset{i}{\longrightarrow }\mu $ if $a_{i}m(\lambda
)=m(\mu )$ matches the Kirillov-Reshetikhin crystal graph $B^{1,k}$ of the quantum enveloping algebra $%
U_{q}\widehat{\mathfrak{sl}}(n)$. Define the \emph{affine plactic
Schur polynomial} \cite{KS, Korff} as $\boldsymbol{s}_{\lambda }:=\det (\boldsymbol{h%
}_{\lambda _{i}-i+j})_{1\leq i,j\leq n}$ with $\boldsymbol{h}_{r}=\sum_{\mu \vdash r}(\beta
_{1}^{\ast})^{\mu _{n}}a_{1}^{\mu _{1}}\cdots a_{n-1}^{\mu _{n-1}}\beta
_{n}^{\mu _{n}}$. The polynomial $\boldsymbol{s}_{\lambda }$ is well-defined, since one can show that
$\boldsymbol{h}_{r}\boldsymbol{h}_{r'}=\boldsymbol{h}_{r'}\boldsymbol{h}_{r}$ for all $r,r'\in\mathbb{Z}_{\geq 0}$.

\begin{theorem}[Korff-Stroppel]
\label{KSThm0}Consider the free abelian group $\mathbb{Z}\mathcal{P}%
_{n,k}^{+}$ with respect to addition. Introduce the product $\lambda
\circledast \mu :=\boldsymbol{s}_{\lambda }\mu $, then $(\mathbb{Z}\mathcal{P%
}_{n,k}^{+},\circledast )$ is canonically isomorphic to the Verlinde ring $\mathcal{V}%
_{k}$.
\end{theorem}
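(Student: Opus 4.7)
The natural map to test is $\phi:\mathcal{V}_{k}\to(\mathbb{Z}\mathcal{P}_{n,k}^{+},\circledast)$, $[\pi_{\lambda}]\mapsto\lambda$, so what must be verified is that the noncommutative Schur operators $\boldsymbol{s}_{\lambda}$ have matrix entries, in the basis $\mathcal{P}_{n,k}^{+}$, equal to the WZNW fusion coefficients $N_{\mu\nu}^{\lambda}$. I would factor the argument through Theorem \ref{GGW}: it suffices to exhibit a ring homomorphism
\[
\varphi:\mathbb{Z}[x_{1},\ldots,x_{n}]^{\mathfrak{S}_{n}}\longrightarrow\limfunc{End}(\mathbb{Z}\mathcal{P}_{n,k}^{+}),\qquad h_{r}\longmapsto\boldsymbol{h}_{r},
\]
that kills the ideal $\mathcal{I}_{k}=\langle s_{(1^{n})}-1,\,s_{(k+1)},\ldots,s_{(k+n-1)}\rangle$, and then show that evaluation of the induced quotient map at the highest-weight composition $m(\varnothing)=(k,0,\ldots,0)$ (or equivalently, any fixed reference state) is a bijection onto $\mathbb{Z}\mathcal{P}_{n,k}^{+}$ that intertwines the two products.

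The first step is to verify that $\varphi$ is well-defined: the stated commutativity $\boldsymbol{h}_{r}\boldsymbol{h}_{r'}=\boldsymbol{h}_{r'}\boldsymbol{h}_{r}$ turns $\mathbb{Z}[h_{1},h_{2},\ldots]$ into a polynomial ring acting on $\mathbb{Z}\mathcal{P}_{n,k}^{+}$, and the Jacobi-Trudi identity is then a formal consequence, so $s_{\lambda}\mapsto\boldsymbol{s}_{\lambda}$. The second, and arithmetically key, step is to check the two types of ideal relations. For $j=1,\ldots,n-1$ one must show that $\boldsymbol{h}_{k+j}$, and hence $\boldsymbol{s}_{(k+j)}$, acts as zero on every configuration with exactly $k$ particles: inspecting $\boldsymbol{h}_{r}=\sum_{\mu\vdash r}(\beta_{1}^{\ast})^{\mu_{n}}a_{1}^{\mu_{1}}\cdots a_{n-1}^{\mu_{n-1}}\beta_{n}^{\mu_{n}}$, every summand either tries to transport more particles across a bond than are available (for $\mu_{n}=0$) or stacks so many creations at node $1$ and annihilations at node $n$ that the intermediate state exceeds the available density, giving a strictly negative occupation at some site after particle conservation is imposed. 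For the boundary relation, $\boldsymbol{s}_{(1^{n})}$ is the product of the elementary symmetric functions evaluated in the $a_{i}$'s together with the creation/annihilation pair, and a direct computation shows that on any $\lambda\in\mathcal{P}_{n,k}^{+}$ the only non-vanishing term is the full ``rotation'' which advances each particle once around the affine Dynkin diagram, returning the configuration unchanged; thus $\boldsymbol{s}_{(1^{n})}=\mathrm{id}$.

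With these vanishings in place, $\varphi$ descends to a ring homomorphism $\bar{\varphi}:\mathbb{Z}[x_{1},\ldots,x_{n}]^{\mathfrak{S}_{n}}/\mathcal{I}_{k}\to\limfunc{End}(\mathbb{Z}\mathcal{P}_{n,k}^{+})$. To obtain the isomorphism claimed, choose the reference vector $v_{0}\in\mathbb{Z}\mathcal{P}_{n,k}^{+}$ indexed by the empty partition and consider the orbit map $\Psi:f\mapsto\bar{\varphi}(f)v_{0}$. A Pieri-style argument, performed inductively on the dominance order using the explicit action of the $\boldsymbol{h}_{r}$ on particle configurations, identifies $\boldsymbol{s}_{\lambda}v_{0}$ with the basis vector indexed by $\lambda$ up to terms expressible via the relations of $\mathcal{I}_{k}$; equivalently, the affine plactic analogue of the crystal model for $B^{1,k}$ produces exactly one cylindric tableau of shape $\lambda$ and content $v_{0}$. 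Consequently $\Psi$ is a bijection of bases, so $\bar{\varphi}$ is injective with image the cyclic subalgebra generated by $v_{0}$, and transporting the product via $\Psi^{-1}$ yields precisely $\lambda\circledast\mu=\boldsymbol{s}_{\lambda}\mu$. Combining with Theorem \ref{GGW} gives $(\mathbb{Z}\mathcal{P}_{n,k}^{+},\circledast)\cong\mathcal{V}_{k}$.

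The main obstacle is the vanishing $\boldsymbol{s}_{(k+j)}=0$ on $\mathcal{P}_{n,k}^{+}$: it is really the ``quantum'' truncation that distinguishes $\mathcal{V}_{k}$ from the classical Schur ring, and proving it cleanly requires controlling cancellations in the determinantal expansion of $\boldsymbol{s}_{(k+j)}=\boldsymbol{h}_{k+j}$ together with the particle-number constraint. Once this truncation and the cyclic relation $\boldsymbol{s}_{(1^{n})}=\mathrm{id}$ are established, the rest of the argument is essentially formal, reducing to the Gepner--Goodman-Wenzl presentation.
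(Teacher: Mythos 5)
Your particle-counting argument for the vanishing of $\boldsymbol{h}_{k+j}$ is correct, but it proves too much, and this is where the proposal breaks. The same counting (each monomial $(\beta_1^{\ast})^{\mu_n}a_1^{\mu_1}\cdots a_{n-1}^{\mu_{n-1}}\beta_n^{\mu_n}$ kills any configuration with $m_i<\mu_i$ for some $i$, since $a_i^{\mu_i}=(\beta_{i+1}^{\ast})^{\mu_i}\beta_i^{\mu_i}$ and no particles have yet arrived at node $i$ when it acts) shows $\boldsymbol{h}_r=0$ on $\mathbb{Z}\mathcal{P}_{n,k}^{+}$ for \emph{every} $r>k$, in particular $\boldsymbol{h}_{k+n}=0$. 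But $\mathbb{Z}[x_1,\ldots,x_n]^{\mathfrak{S}_n}=\mathbb{Z}[e_1,\ldots,e_n]$, so a ring homomorphism cannot assign the images of all the $h_r$ independently: the Wronski relations $\sum_{i=0}^{n}(-1)^ie_ih_{r-i}=0$ together with $e_n=1$ and $h_{k+1}=\cdots=h_{k+n-1}=0$ force $h_{k+n}=(-1)^{n+1}h_k\neq 0$ in $\mathbb{Z}[x_1,\ldots,x_n]^{\mathfrak{S}_n}/\mathcal{I}_k$. Equivalently, $e_{n+k}=s_{(1^{n+k})}$ is identically zero in $n$ variables, yet its would-be image $\det(\boldsymbol{h}_{1-i+j})_{1\leq i,j\leq n+k}$ is not the zero operator: for $n=3$, $k=1$ all $\boldsymbol{h}_{\geq 2}$ vanish, the determinant collapses to $\boldsymbol{h}_1^4=\boldsymbol{h}_1\cdot(\boldsymbol{h}_1^3)=\boldsymbol{h}_1\neq 0$. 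So your map $\varphi$ is not well defined on $\mathbb{Z}[x_1,\ldots,x_n]^{\mathfrak{S}_n}$ and the plan of factoring through Theorem \ref{GGW} collapses at the first step.

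The presentation the operators actually realise is the level-rank dual one of Theorem \ref{KSThm1}: the homomorphism should send $e_r\mapsto\boldsymbol{h}_r$ (so $e_r\mapsto 0$ for $r>k$ is exactly your counting argument and one lands in $\mathbb{Z}[x_1,\ldots,x_k]^{\mathfrak{S}_k}$), hence $s_\lambda\mapsto\boldsymbol{s}_{\lambda'}$, and the relations of $\mathcal{I}_k'$ that must be verified are the determinantal identities $\boldsymbol{s}_{(1^n)}=\det(\boldsymbol{h}_{1-i+j})_{1\leq i,j\leq n}=1$ and $\boldsymbol{s}_{(1^{n+j})}=0$ for $0<j<k$. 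These are the genuinely hard content and your sketch does not establish them: the heuristic that ``the full rotation returns the configuration unchanged'' is not a proof, and taken literally a full rotation is the cyclic shift $m_i\mapsto m_{i-1}$, not the identity --- one needs to control the cancellations in the determinant (in the paper's $q\neq 0$ setting this is precisely what the Yang--Baxter/$TQ$ machinery of Section 3 and the Bethe-variety argument of Section 7 deliver; note the paper itself gives no proof of Theorem \ref{KSThm0}, which is quoted from \cite{KS} and proved there along those lines rather than through Theorem \ref{GGW}). The remaining ingredients of your outline --- commutativity of the $\boldsymbol{h}_r$, the Pieri-type identity $\boldsymbol{s}_\lambda$ applied to the reference vector giving $|\lambda\rangle$ (the unit is the configuration with all $k$ particles at the affine node, $m(\varnothing)=(0,\ldots,0,k)$, not $(k,0,\ldots,0)$), and the transport of the product --- are sound once the presentation is corrected and the determinantal relations are supplied.
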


Note that $\boldsymbol{s}_{\lambda }$ specialises to the finite, non-affine
plactic Schur polynomial of Fomin and Greene \cite{FG} when setting formally
$a_{n}=0$; see also the construction of noncommutative Schur's $P,Q$-functions
using a shifted plactic monoid in \cite{Serrano}. The combinatorics of these constructions
is less involved than the one of the affine polynomials. Other approaches to
noncommutative symmetric functions can be found in \cite{Gelfandetal} and,
in particular, noncommutative Hall-Littlewood functions have been discussed
in \cite{Hivert}, \cite{Novellietal}.
\begin{figure}[tbp]
\begin{equation*}
\includegraphics[scale=0.4]{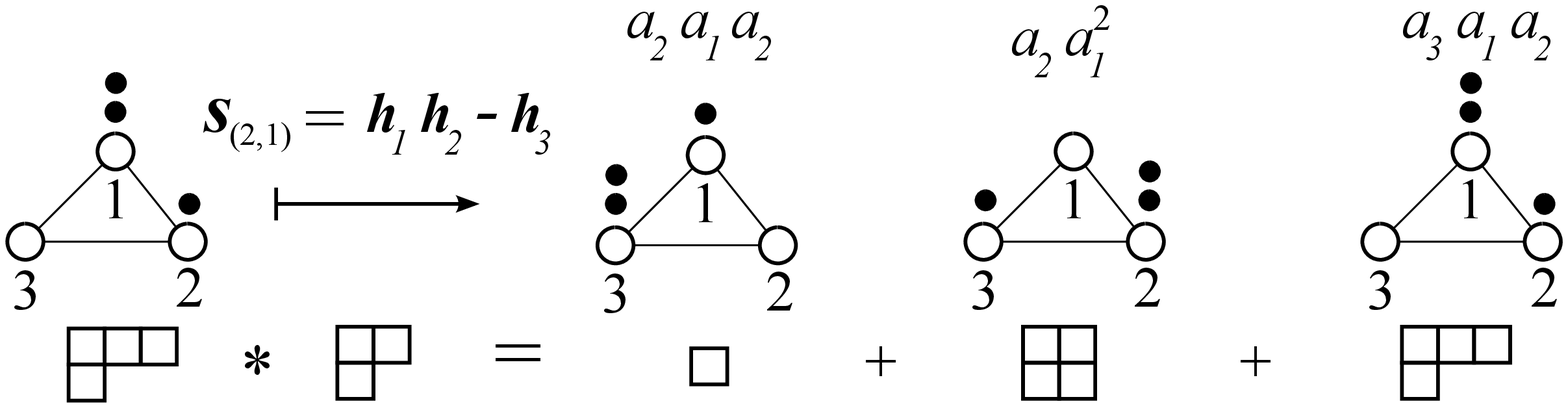}
\end{equation*}%
\caption{Graphical depiction of the combinatorial fusion product for $n=k=3$%
. The start configuration corresponds to the weight $\protect\lambda =2%
\protect\omega _{1}+\protect\omega _{2}$. Acting with the affine plactic
Schur polynomial $\boldsymbol{s}_{(2,1)}$ yields three nonzero terms; each
is obtained as a sequence of 3 hopping moves written as monomials in the $%
a_{i}$'s (top). The bottom line shows the fusion product where columns of
height 3 have been dropped from the Young diagrams which represent the
weights.}
\label{fig:hoppingalg}
\end{figure}
\smallskip

\noindent Define the two-sided ideal $\mathcal{I}_{k}^{\prime }=\langle
s_{(n)}-1,s_{(n+1)},s_{(n+2)},\ldots ,s_{(n+k-1)}\rangle $.

\begin{theorem}[Korff-Stroppel]
\label{KSThm1} The map $[\pi _{\lambda }]\mapsto \lbrack s_{\lambda ^{\prime
}}]:=s_{\lambda ^{\prime }}+\mathcal{I}_{k}^{\prime }$, where $\lambda
^{\prime }$ is the conjugate partition, defines a
ring isomorphism $\chi _{k}^{\prime }:\mathcal{V}_{k}\rightarrow \mathbb{Z}%
[x_{1},\ldots ,x_{k}]^{\mathfrak{S}_{k}}/\mathcal{I}_{k}^{\prime }$.
\end{theorem}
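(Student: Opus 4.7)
The plan is to deduce this from Theorem~\ref{GGW} by constructing a ring isomorphism
$$
\bar\omega:A_n:=\mathbb{Z}[x_1,\ldots,x_n]^{\mathfrak{S}_n}/\mathcal{I}_k\;\longrightarrow\;A_k':=\mathbb{Z}[x_1,\ldots,x_k]^{\mathfrak{S}_k}/\mathcal{I}_k',\quad [s_\lambda]\longmapsto[s_{\lambda'}],
$$
and then setting $\chi_k':=\bar\omega\circ\chi_k$. The guiding object is the classical involution $\omega$ on the ring $\Lambda$ of symmetric functions in infinitely many variables, characterised by $\omega(h_r)=e_r$ and hence $\omega(s_\lambda)=s_{\lambda'}$. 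Applied to the generators of $\mathcal{I}_k=(e_n-1,h_{k+1},\ldots,h_{k+n-1})$, $\omega$ produces $h_n-1$ together with $e_{k+1},\ldots,e_{k+n-1}$, the first being a generator of $\mathcal{I}_k'$ and the rest vanishing in $k$ variables. Note, however, that $\omega$ does \emph{not} descend verbatim between the two finite-variable quotients, because in $A_k'$ the $h_m$ with $m\geq n+k$ are not forced to vanish, whereas $\omega$ would map the relations $e_r=0$ ($r>n$) of the source to $h_r=0$ for all $r>n$; consequently $\bar\omega$ must be defined on the Schur basis rather than lifted from $\Lambda$.

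The first step is to verify that $\{[s_{\lambda'}]:\lambda\in\mathcal{P}_{n,k}^+\}$ is a $\mathbb{Z}$-basis of $A_k'$. Both $A_n$ and $A_k'$ are free $\mathbb{Z}$-modules of rank $|\mathcal{P}_{n,k}^+|$, and a straightening argument---parallel to the one used in the proof of Theorem~\ref{GGW} but now applied to the Jacobi-Trudi expansion $s_\mu=\det(h_{\mu_i-i+j})_{1\leq i,j\leq k}$ in $k$ variables together with the relations $h_n=1$ and $h_{n+1}=\cdots=h_{n+k-1}=0$---reduces every Schur class in $A_k'$ to a $\mathbb{Z}$-combination of $[s_{\lambda'}]$'s with $\lambda\in\mathcal{P}_{n,k}^+$. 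Conjugation $\lambda\leftrightarrow\lambda'$ then supplies the bijection of index sets, so $\bar\omega$ is well-defined as a $\mathbb{Z}$-linear isomorphism of modules.

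The main obstacle is multiplicativity. From $\omega$ being a ring automorphism of $\Lambda$ one obtains the identity $s_{\mu'}s_{\nu'}=\sum_\eta c_{\mu\nu}^{\eta}\,s_{\eta'}$ with the same Littlewood-Richardson coefficients that govern $s_\mu s_\nu=\sum_\eta c_{\mu\nu}^\eta s_\eta$. Establishing multiplicativity amounts to showing that the reduction of $\sum_\eta c_{\mu\nu}^\eta[s_{\eta'}]$ modulo $\mathcal{I}_k'$ equals $\sum_\lambda N_{\mu\nu}^\lambda[s_{\lambda'}]$ with the $\widehat{\mathfrak{sl}}(n)_k$ fusion coefficients $N_{\mu\nu}^\lambda$; the analogous statement for $\sum_\eta c_{\mu\nu}^\eta[s_\eta]$ in $A_n$ is precisely Theorem~\ref{GGW}. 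Equivalently, the Kac--Walton straightening of the $k\times k$ Jacobi-Trudi determinant for $s_{\eta'}$ in $A_k'$ must match, term by term under $\eta\leftrightarrow\eta'$, the straightening of the $n\times n$ Jacobi-Trudi determinant for $s_\eta$ in $A_n$. Both procedures implement the same level-$k$ affine Weyl group action on dominant $\mathfrak{sl}(n)$-weights, written in the dual Frobenius-type coordinate systems $(\eta_i-i)_{i=1}^n$ and $(\eta'_i-i)_{i=1}^k$; verifying their coincidence is cleanest in a common combinatorial model, namely the cylindric tableaux and non-intersecting path configurations developed later in the paper, where the two presentations correspond to reading the same lattice configuration along rows or along columns. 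Once this match is secured, $\bar\omega$ is a ring isomorphism and $\chi_k':=\bar\omega\circ\chi_k$ completes the proof.
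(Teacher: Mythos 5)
A point of orientation first: the paper does not actually prove Theorem~\ref{KSThm1} --- it is quoted from \cite[Theorem 1.3]{KS}, where the argument runs through the combinatorial ring of Theorem~\ref{KSThm0} and the Bethe ansatz for the phase model: the quotient $\mathbb{Z}[x_{1},\ldots ,x_{k}]^{\mathfrak{S}_{k}}/\mathcal{I}_{k}^{\prime }$ is realised as the coordinate ring of the variety of Bethe roots, whose points furnish the idempotents. That is exactly the $t=0$ specialisation of what Section~7 of the present paper does for the deformation. Your plan --- deducing the $k$-variable presentation from the $n$-variable one of Theorem~\ref{GGW} via a conjugation map $\bar\omega:[s_{\lambda }]\mapsto [s_{\lambda ^{\prime }}]$ --- is therefore a genuinely different route, and your preliminary observation that the classical involution $\omega $ does not descend between the two finite-variable quotients is correct and worth making.

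The problem is that your central step, the multiplicativity of $\bar\omega$, is asserted rather than proved, and it is precisely where the content of the theorem lies. You claim the straightening of $s_{\eta ^{\prime }}$ modulo $\mathcal{I}_{k}^{\prime }$ in $k$ variables matches, term by term under $\eta \leftrightarrow \eta ^{\prime }$, the straightening of $s_{\eta }$ modulo $\mathcal{I}_{k}$ in $n$ variables, because ``both implement the same level-$k$ affine Weyl group action in dual coordinates''. But the two reductions are not mirror images in any naive sense. In $A_{n}$ a term with $\ell (\eta )>n$ vanishes outright, while its partner $s_{\eta ^{\prime }}$ (which then has $\eta _{1}^{\prime }>n$) does not vanish in $k$ variables and must be straightened using $h_{n}\equiv 1$, $h_{n+1}\equiv \cdots \equiv h_{n+k-1}\equiv 0$; conversely a term with $\eta _{1}>k$ kills $s_{\eta ^{\prime }}$ in $k$ variables but leaves $s_{\eta }$ to be straightened in $A_{n}$. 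Moreover the periodicity of the $h$-relations in $A_{k}^{\prime }$ involves the nontrivial unit $e_{k}$ (one computes $h_{n+k}\equiv (-1)^{k+1}e_{k}$, not $0$), whereas the corresponding unit in $A_{n}$ is $e_{n}\equiv 1$, so the two straightenings even carry different ``monodromy''. That these distinct mechanisms nevertheless produce identical coefficients under conjugation is essentially the level--rank statement you are trying to establish, and deferring it to ``the cylindric tableaux and path configurations developed later in the paper'' does not close the gap: the link between that machinery and the fusion ring is itself established in Section~7 by an independent Bethe-ansatz argument which, at $t=0$, reproves Theorem~\ref{KSThm1}, so the appeal is circular within this paper. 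A repairable version of your plan would reduce multiplicativity to the Pieri case (compare $[h_{r}][s_{\lambda }]$ in $A_{n}$ with $[e_{r}][s_{\lambda ^{\prime }}]$ in $A_{k}^{\prime }$, the $e_{r}$ being generators) and verify directly, with signs, that the straightenings of a single horizontal strip and its conjugate vertical strip agree; your basis claim for $A_{k}^{\prime }$ also needs this finer analysis, since the naive Jacobi--Trudi reduction stalls on rows containing $h_{m}$ with $m\geq n+k$.
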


This result \cite[Theorem 1.3]{KS} is intimately linked to a quantum
integrable system, the so-called phase model, which has been considered in
\cite{Bogoliubovetal}. Interpret the complex linear span $\mathbb{C}\mathcal{%
P}_{n,k}^{+}$ as the state space of a discrete quantum mechanical system and
$H_{r}^{\pm }=\boldsymbol{s}_{(r)}\pm \boldsymbol{s}_{(n-r)}$ as its set of
commuting quantum Hamiltonians. Computing the common eigenbasis of the
latter via the so-called \emph{Bethe ansatz} or the \emph{quantum inverse
scattering method} leads to the coordinate ring of a finite 0-dimensional
affine variety, the solutions to the so-called Bethe ansatz equations. This
coordinate ring is the quotient ring $\mathbb{Z}[x_{1},\ldots ,x_{k}]^{%
\mathfrak{S}_{k}}/\mathcal{I}_{k}^{\prime }$ and the so-called Bethe states,
the eigenstates of the quantum Hamiltonians, are the idempotents of the
Verlinde algebra \cite{KS} \cite{Korffproceedings}.

\subsection{Deformed fusion matrices and canonical bases}

The main result of this article is that the combinatorial description of the
Verlinde algebra in terms of affine plactic Schur polynomials and
Kirillov-Reshetikhin crystal graphs can be lifted to the quantum affine algebra $%
U_{q}^{\prime }\widehat{\mathfrak{gl}}(n)$ using Lusztig's canonical basis \cite{Lusztig} \cite%
{LusztigBook}, which is the same as Kashiwara's lower crystal basis \cite%
{Kashiwara}. This induces in a natural way a product on the corresponding Kirillov-Reshetikhin
module which then becomes a commutative Frobenius algebra.

The central algebraic object
which we are going to employ is the $n$-fold tensor product $\mathcal{H}%
_{q}^{\otimes n}$ of the $q$-oscillator or Heisenberg algebra $\mathcal{H}%
_{q}$ whose generators will be realised as maps $\beta _{i}^{\ast},\beta_i:\mathbb{C}%
(q)\mathcal{P}_{n,k}^{+}\rightarrow \mathbb{C}(q)\mathcal{P}_{n,k\pm 1}^{+}$ %
which generalise the maps mentioned previously in the contaxt of the combinatorial Verlinde algebra.
We will show that there exists an algebra homomorphism $U_{q}^{\prime }%
\widehat{\mathfrak{gl}}(n)\rightarrow \mathcal{H}_{q}^{\otimes n}$ which
allows one to pull the $n$-fold tensor product of any $\mathcal{H}_{q}$%
-module back\ to the quantum affine algebra. The linear span of the particle configurations on
the $\widehat{\mathfrak{sl}}(n)$ Dynkin diagram discussed earlier corresponds
to the infinite-dimensional highest weight module known as Fock space $%
\mathcal{F}^{\otimes n}\cong \tbigoplus_{k\geq 0}\mathbb{C}(q)\mathcal{P}%
_{n,k}^{+}$. Denote by $S^{k}(V)$ the $k^{\text{th}}$ divided power in the
quantum symmetric tensor algebra of the vector representation $V$ of $U_{q}%
\mathfrak{gl}(n)$ and let $\omega _{1}$ be the first fundamental affine
weight of $\widehat{\mathfrak{sl}}(n)$.

\begin{proposition}
There exists a $U_{q}^{\prime }\widehat{\mathfrak{sl}}(n)$-module
isomorphism $\mathcal{F}^{\otimes n}\cong \tbigoplus_{k\geq 0}W^{1,k}$,
where $W^{1,k}$ is the Kirillov-Reshetikhin module $W(k\omega _{1})$. When
restricting to the finite algebra $U_{q}\mathfrak{gl}(n)$ one obtains the
module isomorphism $\mathcal{F}^{\otimes n}\cong S(V):=\tbigoplus_{k\geq
0}S^{k}(V)$.
\end{proposition}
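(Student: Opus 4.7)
The plan is to establish both isomorphisms simultaneously by writing an explicit action of $U_{q}^{\prime}\widehat{\mathfrak{sl}}(n)$ on $\mathcal{F}^{\otimes n}$ through the announced homomorphism $U_{q}^{\prime}\widehat{\mathfrak{sl}}(n)\to\mathcal{H}_{q}^{\otimes n}$, and then matching it on each graded piece with $S^{k}(V)$ respectively $W^{1,k}$. First I fix on each component $\mathbb{C}(q)\mathcal{P}_{n,k}^{+}$ the natural basis $|m(\lambda)\rangle$ labelled by particle configurations $m(\lambda)=(m_{1},\ldots,m_{n})$ with $\sum m_{i}=k$, and prescribe the action of $\beta_{i},\beta_{i}^{\ast}$ by the standard $q$-boson formulas so that the defining commutation relations of $\mathcal{H}_{q}$ hold on each tensor factor while distinct factors commute.

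Next I define the Chevalley generators by
\[ E_{i}=\beta_{i+1}^{\ast}\beta_{i},\qquad F_{i}=\beta_{i}^{\ast}\beta_{i+1},\qquad K_{i}^{\pm 1}=q^{\pm(N_{i+1}-N_{i})}, \qquad i\in\mathbb{Z}/n\mathbb{Z}, \]
where $N_{j}$ denotes the particle number operator at node $j$. The Cartan relations are immediate from the $q$-boson commutations, and the quantum Serre relations reduce to the observation that $\beta_{i}$ and $\beta_{j}^{\ast}$ commute whenever $i\neq j$ since they act on distinct tensor factors of $\mathcal{H}_{q}^{\otimes n}$. Restricting to the subalgebra $U_{q}\mathfrak{gl}(n)$ generated by $E_{i},F_{i},K_{i}$ for $i=1,\ldots,n-1$, the basis $|m(\lambda)\rangle$ in degree $k$ is in bijection with the standard monomial basis of $S^{k}(V)$ (multisets of cardinality $k$ in $\{1,\ldots,n\}$), and a direct comparison of matrix coefficients on this basis yields the finite isomorphism $\mathcal{F}^{\otimes n}\cong S(V)$.

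For the affine statement, the additional generators $E_{0}=\beta_{1}^{\ast}\beta_{n}$, $F_{0}=\beta_{n}^{\ast}\beta_{1}$, $K_{0}^{\pm 1}=q^{\pm(N_{1}-N_{n})}$ preserve each graded piece $\mathbb{C}(q)\mathcal{P}_{n,k}^{+}$, turning it into a finite-dimensional $U_{q}^{\prime}\widehat{\mathfrak{sl}}(n)$-module whose classical restriction is $S^{k}(V)$. Since $W^{1,k}=W(k\omega_{1})$ is characterised among such modules by its classical decomposition together with its Drinfeld polynomial, it suffices to pin this invariant down. The main obstacle is fixing the spectral parameter and the precise normalisation of $E_{0},F_{0}$ so that one lands exactly in $W^{1,k}$ rather than a twist; I expect to settle this by evaluating on the extremal vectors $|(k,0,\ldots,0)\rangle$ and $|(0,\ldots,0,k)\rangle$, where the $0$-action reduces to a short explicit chain, and then invoking uniqueness of the minimal affinisation of $S^{k}(V)$. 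As a cross-check, the directed coloured graph generated by the hopping operators $a_{i}=\beta_{i+1}^{\ast}\beta_{i}$ was already identified with the Kirillov-Reshetikhin crystal $B^{1,k}$ in the combinatorial discussion preceding the proposition, so the candidate module isomorphism is forced to agree with the known crystal structure at $q=0$.
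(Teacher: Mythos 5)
Your overall strategy is the one the paper follows: realise the Chevalley generators inside $\mathcal{H}_q^{\otimes n}$ (the Jordan--Schwinger homomorphism), match the degree-$k$ piece with $S^k(V)$ by comparing matrix coefficients on the configuration basis, and then identify the affine module by evaluating the $0$-node action on extremal vectors and reading off the Drinfeld polynomial. The affine step in particular is essentially the paper's argument (the paper uses $|n^k\rangle$ and the vector one step below it, finds evaluation parameters $z_i=zq^{-k+2i-1}$, hence $P_1(z)=(1-zq^{-k+1})\cdots(1-zq^{k-1})$, $P_l=1$ for $l\neq 1$, and invokes the Chari--Pressley classification), and your honest flagging of the normalisation/twist issue is exactly where the remaining work lies.

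However, there is a genuine gap at the foundation. Your assignment $E_i=\beta_{i+1}^{\ast}\beta_i$, $F_i=\beta_i^{\ast}\beta_{i+1}$ does not define a homomorphism from $U_q'\widehat{\mathfrak{sl}}(n)$, and your justification of the Serre relations --- that they ``reduce to the observation that $\beta_i$ and $\beta_j^{\ast}$ commute whenever $i\neq j$'' --- only covers the trivial case $|i-j|>1$. For adjacent nodes, $E_i$ and $E_{i+1}$ share the tensor factor $i{+}1$ and do not commute ($\beta\beta^{\ast}-\beta^{\ast}\beta=(1-q^2)q^{2N}$), and checking the cubic Serre relations there is the actual computational content of the homomorphism. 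Worse, the undressed bilinears $a_i=\beta_{i+1}^{\ast}\beta_i$ provably satisfy the \emph{twisted} relations $a_{i+1}a_i^2+q^2a_i^2a_{i+1}=(1+q^2)a_ia_{i+1}a_i$ (the ``quantum Knuth relations'' of the paper, which are the Serre relations for $K_iF_i$, not for $F_i$), so your generators fail (R3) as stated; a short computation also shows $[E_i,F_i]=(1-q^2)(q^{2N_i}-q^{2N_{i+1}})$ up to sign with your normalisation, which is not of the form $(K_i-K_i^{-1})/(q-q^{-1})$, so (R2) fails too. The fix is the dressing used in the paper, $E_i\mapsto -q^{-N_i}\beta_i^{\ast}\beta_{i+1}/(q-q^{-1})$, $F_i\mapsto -q^{-N_{i+1}}\beta_i\beta_{i+1}^{\ast}/(q-q^{-1})$, $K_i^{\pm1}\mapsto q^{\pm N_i}$, after which the relations follow from $\beta\beta^{\ast}-q^2\beta^{\ast}\beta=1-q^2$. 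Relatedly, in the finite part the matrix coefficients only match those of $S^k(V)$ on the \emph{divided-power} basis $X_\lambda=([m_1]_q!\cdots[m_n]_q!)^{-1}M_\lambda X_k$; ``a direct comparison on the monomial basis'' without this normalisation will not close.
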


Similar to the case of the Verlinde algebra, we consider the (noncommutative) subalgebra $\subset
\mathcal{H}_{q}^{\otimes n}$ which is generated by the alphabet $%
\{a_{i}=\beta _{i+1}^{\ast}\beta _{i}:i\in \mathbb{Z}_{n}\}$. The latter
corresponds to the images of the products $\{K_{1}F_{1},\ldots ,K_{n}F_{n}\}$
of quantum group Chevalley generators under the above homomorphism $U_{q}%
\widehat{\mathfrak{gl}}(n)\rightarrow \mathcal{H}_{q}^{\otimes n}$. In
particular they obey
\begin{eqnarray}
a_{i}a_{j} &=&a_{j}a_{i},\qquad |i-j|>1,  \notag \\
a_{i+1}a_{i}^{2}+q^{2}a_{i}^{2}a_{i+1} &=&(1+q^{2})a_{i}a_{i+1}a_{i},  \notag
\\
a_{i+1}^{2}a_{i}+q^{2}a_{i}a_{i+1}^{2}
&=&(1+q^{2})a_{i+1}a_{i}a_{i+1}\;,\qquad i,j\in \mathbb{Z}_{n}\;.
\label{quantumKnuth}
\end{eqnarray}%
These identities are simply the quantum Serre relations of $U_{q}\widehat{%
\mathfrak{gl}}(n)$ rewritten in the generators $K_{i}F_{i}$. In the
crystal limit $q=0$ (\ref{quantumKnuth}) are the Knuth relations of the (local) affine
plactic algebra considered in \cite{KS} and we have now the following
generalisation of the affine plactic Schur polynomials.

Let $|k^{n}\rangle $ denote the $U_{q}\widehat{%
\mathfrak{sl}}(n)$ highest weight vector in $S^{k}(V)$ and
denote by $B_{n,k}=\{|\lambda \rangle :\lambda \in \mathcal{P}%
_{n,k}^{+}\}\subset S^{k}(V)$ the canonical basis. The precise definition
of the basis vectors $|\lambda \rangle$ will be given in the text.

\begin{theorem}[deformed fusion matrices]
There exists a set of \emph{commuting }elements $\boldsymbol{B}_{n}:=\{%
\boldsymbol{Q}_{\lambda }^{\prime }:\lambda _{1}\geq \cdots \geq \lambda
_{n},\;\lambda _{i}\in \mathbb{Z}_{\geq 0}\}\subset \mathcal{H}%
_{q}^{\otimes n}$, polynomial in the $a_{i}$'s, such that%
\begin{equation}
|\mu \rangle \circledast |\nu \rangle :=\boldsymbol{Q}_{\mu }^{\prime }|\nu
\rangle \;,\qquad \mu ,\nu \in \mathcal{P}_{n,k}^{+}  \label{tfusion0}
\end{equation}%
defines a commutative Frobenius algebra $\mathfrak{F}_{n,k}=(\mathbb{C}(q)%
\mathcal{P}_{n,k}^{+},\circledast )$. The unit is given by the highest weight vector $|k^n\rangle$, that is $\boldsymbol{Q}%
_{\lambda }^{\prime }|k^{n}\rangle =|\lambda \rangle $.
\end{theorem}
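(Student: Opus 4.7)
The plan is to construct the elements $\boldsymbol{Q}'_\lambda$ explicitly as noncommutative Hall--Littlewood-type polynomials built from the generators $a_i = \beta^*_{i+1}\beta_i$. Following the $q=0$ case of \cite{KS}, I first introduce noncommutative row-sums $\boldsymbol{g}_r \in \mathcal{H}_q^{\otimes n}$ (a $q$-deformation of the $\boldsymbol{h}_r$) as $q$-weighted sums over suitable monomials in the $a_i$, and set $\boldsymbol{Q}'_\lambda := \det(\boldsymbol{g}_{\lambda_i - i + j})_{1 \leq i, j \leq n}$. The cleanest route to commutativity $[\boldsymbol{g}_r, \boldsymbol{g}_{r'}]=0$ (whence $[\boldsymbol{Q}'_\mu, \boldsymbol{Q}'_\nu]=0$) is to realise the generating function $\boldsymbol{G}(z) = \sum_r \boldsymbol{g}_r z^r$ as the trace of a monodromy matrix built from a Heisenberg-valued $L$-operator, and to deduce commutativity from a Yang--Baxter relation $R(z/w)\bigl(L(z) \otimes L(w)\bigr) = \bigl(L(w) \otimes L(z)\bigr) R(z/w)$. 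Verifying the latter reduces entirely to checking the quadratic and cubic identities (\ref{quantumKnuth}) which the paper has already derived from the quantum Serre relations of $U_q\widehat{\mathfrak{gl}}(n)$.

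The main obstacle is the identification $\boldsymbol{Q}'_\lambda|k^n\rangle = |\lambda\rangle$ with Lusztig's canonical basis. I would proceed in two stages. First, show that $\{\boldsymbol{Q}'_\lambda|k^n\rangle : \lambda \in \mathcal{P}_{n,k}^+\}$ is a basis of $\mathbb{C}(q)\mathcal{P}_{n,k}^+$ by proving unitriangularity with respect to the standard Fock-space monomial basis (ordered by dominance), mirroring the Pieri-type argument of \cite{KS} in the $q=0$ Schur case. Second, verify Kashiwara's two defining conditions for the canonical basis: bar-invariance of $\boldsymbol{Q}'_\lambda|k^n\rangle$, which follows from the symmetry of the Jacobi--Trudi determinant together with the bar-invariance of the generators and of $|k^n\rangle$; and the crystal-limit congruence $\boldsymbol{Q}'_\lambda|k^n\rangle \equiv |\lambda\rangle \pmod{q\mathcal{L}}$ in the Kashiwara lattice $\mathcal{L}$, which I would reduce to the corresponding statement of Theorem \ref{KSThm0}.

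Once these two ingredients are in place the Frobenius algebra structure follows formally. Commutativity of $\circledast$ is immediate from $\boldsymbol{Q}'_\mu\boldsymbol{Q}'_\nu|k^n\rangle = \boldsymbol{Q}'_\nu\boldsymbol{Q}'_\mu|k^n\rangle$ combined with $\boldsymbol{Q}'_\lambda|k^n\rangle = |\lambda\rangle$. Associativity requires the operator identity $\boldsymbol{Q}'_\mu\boldsymbol{Q}'_\nu = \sum_\lambda c^\lambda_{\mu\nu}\boldsymbol{Q}'_\lambda$ on the weight-$k$ subspace, which follows from the linear independence of $\{\boldsymbol{Q}'_\lambda\}$ as operators there (a consequence of the basis property) together with the numerical identity obtained by evaluation on $|k^n\rangle$. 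The unit assertion amounts to $\boldsymbol{Q}'_{(k^n)} = \mathrm{id}$ on $\mathbb{C}(q)\mathcal{P}_{n,k}^+$, a normalisation check analogous to the fact that the rectangular Schur function is the identity in $\mathcal{V}_k$. For the Frobenius form, take $\eta$ to be the pairing which makes $\{|\lambda\rangle\}$ self-dual up to the contragredient involution $\lambda \mapsto \lambda^\ast$; the Frobenius identity $\eta(x\circledast y, z) = \eta(x, y\circledast z)$ then reduces to the adjoint relation $(\boldsymbol{Q}'_\mu)^\dagger = \boldsymbol{Q}'_{\mu^\ast}$, which can be read off from the invariance of the monodromy matrix under the antipode / charge-conjugation symmetry of the underlying vertex model.
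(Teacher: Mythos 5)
Your overall architecture --- Yang--Baxter commutativity for the row generating function, a Jacobi--Trudi determinant, then formal deductions for the algebra structure --- matches the paper's first step, and your associativity argument is in fact cleaner than the paper's: since the $\boldsymbol{Q}'_\lambda$ all lie in the commutative algebra generated by the $\boldsymbol{g}'_r$ and $|k^n\rangle$ is (by the asserted basis property) a cyclic vector for that algebra, the evaluation map $X\mapsto X|k^n\rangle$ is injective on it, so the numerical identity on $|k^n\rangle$ does lift to the operator identity you need; the unit property follows the same way. The paper instead obtains associativity, the symmetries and the Frobenius pairing all at once from a deformed Verlinde formula (\ref{Verlinde1}), which requires diagonalising the transfer matrices by the algebraic Bethe ansatz and a separate algebro-geometric proof of completeness (Theorem \ref{complete}).

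The genuine gap is the identity $\boldsymbol{Q}'_\lambda|k^n\rangle=|\lambda\rangle$, which is the heart of the theorem and which your construction gets wrong. The plain determinant $\det(\boldsymbol{g}'_{\lambda_i-i+j})$ is the paper's $\boldsymbol{S}'_\lambda$, and its matrix elements against the highest weight vector are Kostka--Foulkes polynomials, $\langle\mu|\boldsymbol{S}'_{\lambda'}|n^{k}\rangle=K_{\lambda\tilde\mu}(t)$ (Proposition \ref{KFpolys}); since $K(t)$ is unitriangular but not the identity, $\boldsymbol{S}'_\lambda|k^n\rangle\neq|\lambda\rangle$. The correct object is the determinant dressed by the \emph{inverse} Kostka--Foulkes matrix, $\boldsymbol{Q}'_\lambda=\sum_{\mu}K^{-1}_{\lambda'\mu'}(t)\boldsymbol{S}'_\mu$ as in (\ref{ncQ'S'}), and proving that this acts as claimed is a nontrivial lattice-model/cylindric-tableau computation, not a formal consequence of unitriangularity. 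Your fallback via Kashiwara's characterisation would actually detect this failure, but your bar-invariance argument does not stand: the $a_i$ are images of $K_iF_i$ and $\overline{K_iF_i}=K_i^{-1}F_i\neq K_iF_i$, so the determinant is not manifestly bar-invariant, and the paper never argues this way (it identifies the Fock basis with the canonical basis through the explicit $U_n$-module isomorphism with $S^k(V)$ following Brundan). Two further real issues: the Frobenius form needs the normalisation $\eta(|\lambda\rangle\otimes|\mu\rangle)=\delta_{\lambda\mu^\ast}/b_\lambda(t)$ --- without the factor $b_\lambda(t)^{-1}$ the invariance $\eta(x\circledast y,z)=\eta(x,y\circledast z)$ fails, and the required charge-conjugation identity $b_\lambda(t) N^{\lambda}_{\mu\nu}(t)=b_\nu(t) N^{\nu^\ast}_{\mu\lambda^\ast}(t)$ is again extracted from the Verlinde formula rather than read off from the vertex model; and the Yang--Baxter relation (\ref{ybe'}) for the infinite-dimensional $L'$-operator is proved from the full $q$-boson algebra relations, not from the quantum Knuth relations alone, so the claim that the check ``reduces entirely'' to (\ref{quantumKnuth}) is unsupported.
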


Thus, the set $\boldsymbol{B}_{n}\subset \mathcal{H}_{q}^{\otimes n}$
generates the canonical basis in each $S^{k}(V)$ when acting on the
respective highest weight vector. The polynomials $\boldsymbol{Q}_{\lambda
}^{\prime }$ in the generators $a_{i}$ exhibit a particularly nice structure
which allows one to identify them as noncommutative analogues of Macdonald
functions, where one parameter is set to zero. We refer to $\{\boldsymbol{Q}%
_{\lambda }^{\prime }\}$ as deformed fusion matrices since setting formally $%
q=0$ in (\ref{quantumKnuth}) one recovers the combinatorial ring in Theorem %
\ref{KSThm0}.

To put these findings further into perspective we recall that Lusztig's
geometric construction of the canonical basis $\mathfrak{B}%
\subset U_{q}\mathfrak{n}^{_{-}}$ focusses on polynomials in the $F_{i}$'s,
that is, one considers $U_{q}\mathfrak{n}^{_{-}}$ instead of $U_{q}\mathfrak{%
b}^{-}$. Certain special elements $X,Y$ in the \emph{dual} canonical basis $%
\mathfrak{B}^{\ast }$\ are known to quasi-commute, $XY=qYX$; this was
conjectured in \cite{BZ} and proven for semi-simple quantum algebras in \cite%
{Reineke} using Ringel's Hall algebra approach. For $U_{q}\widehat{\mathfrak{%
gl}}(n)$ with $q$ a root of unity the canonical basis is known to be linked
to the Ringel-Hall algebra of the cyclic quiver \cite{Schiffmann}.

\subsection{The deformed Verlinde algebra: Demazure characters}

Denote by $Q_{\lambda }(q,t)=b_{\lambda }(q,t)P_{\lambda }(q,t)$ the
celebrated Macdonald functions, where $b_{\lambda }(q,t)$ is some
normalisation factor; details will be provided in the text. Consider the
limit $P_{\lambda }:=P_{\lambda }(0,t)$, $Q_{\lambda }:=Q_{\lambda }(0,t)\in
\mathbb{C}(t)[x_{1},\ldots ,x_{k}]^{\mathfrak{S}_{k}}$ which are the
celebrated Hall-Littlewood functions. Then one has the product expansion
\begin{equation}
P_{\mu }P_{\nu }=\sum_{\lambda \in \mathcal{P}_{k}^{+}}f_{\mu \nu }^{\lambda
}(t)P_{\lambda },  \label{Hall}
\end{equation}%
where the $f_{\mu \nu }^{\lambda }(t)$ are the structure constants of Hall's
algebra or the spherical Hecke algebra; see \cite{Macdonald} and references therein. \smallskip

\noindent Define the two-sided ideal
\begin{equation*}
\mathcal{I}_{k}^{\prime }=\langle Q_{(n)}+t^{k}-1,Q_{(n+1)}+t^{k}\bar{Q}%
_{1},\ldots ,Q_{(n+k-1)}+t^{k}\bar{Q}_{(k-1)}\rangle ,
\end{equation*}%
where $\bar{Q}_{\lambda }=Q_{\lambda }(0,t^{-1})$ and let $\Bbbk =\mathbb{C}%
\{\!\{t\}\!\}$ be the field of formal Puiseux series. The extension of the
base field to $\Bbbk $ is required to construct the idempotents of $%
\mathfrak{F}_{n,k}$. The following statement is the analogue of Theorem \ref%
{KSThm1}.

\begin{theorem}[deformed Verlinde algebra]
For $t=q^{2}$ the map $|\lambda \rangle \mapsto \lbrack P_{\lambda ^{\prime
}}]$ is an algebra isomorphism $\mathfrak{F}_{n,k}\otimes \Bbbk \cong \Bbbk
\lbrack x_{1},\ldots ,x_{k}]^{\mathfrak{S}_{k}}/\mathcal{I}_{k}^{\prime }$.
\end{theorem}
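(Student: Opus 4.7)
The plan is to $t$-deform the proof of Theorem \ref{KSThm1}: on the noncommutative side, the affine plactic Schur polynomials $\boldsymbol{s}_\lambda$ are replaced by the Macdonald-type operators $\boldsymbol{Q}_\lambda'$, and on the commutative side Schur polynomials are replaced by Hall-Littlewood polynomials $P_{\lambda'}$. The engine of the argument is the classical Hall product expansion (\ref{Hall}), which one matches against the structure constants of $\mathfrak{F}_{n,k}$ once the ideal $\mathcal{I}_k'$ is shown to implement a $t$-deformed Kac-Walton truncation.

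First I would check that the map $\Phi\colon |\lambda\rangle \mapsto [P_{\lambda'}]$ is a well-defined linear bijection. Over $\Bbbk$ the polynomials $\{P_{\lambda'} : \lambda \in \mathcal{P}_{n,k}^+\}$ are indexed by partitions fitting into a $k\times(n-1)$ rectangle, and a triangularity argument between the $P$- and $Q$-bases, using $Q_\lambda = b_\lambda(0,t)P_\lambda$ together with the specific form of the generators of $\mathcal{I}_k'$, shows that the quotient ring is $\Bbbk$-free of rank $|\mathcal{P}_{n,k}^+|$ on this set. Next I would establish that the commutative image of $\boldsymbol{Q}_\lambda'|k^n\rangle$ under the canonical-basis identification agrees with $P_{\lambda'}$ modulo $\mathcal{I}_k'$; this is the content that justifies the slogan that $\boldsymbol{Q}_\lambda'$ is a noncommutative Hall-Littlewood. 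At $t=q^2$ the quantum Serre/Knuth-type relations (\ref{quantumKnuth}) for the $a_i$ are precisely those obeyed by Jing's vertex operators for Hall-Littlewood polynomials, so the noncommutative expansion $\boldsymbol{Q}_\mu'\boldsymbol{Q}_\nu'|k^n\rangle = \sum_\lambda f_{\mu\nu}^\lambda(t)\,\boldsymbol{Q}_\lambda'|k^n\rangle$ follows after passing to the quotient, and multiplicativity of $\Phi$ drops out of (\ref{Hall}).

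The main obstacle is pinning down the precise form of $\mathcal{I}_k'$, in particular the $t^k\bar{Q}_r$ correction terms, which encode a $t$-deformed level-rank duality and reflect the cylindric periodicity $(n,-k)$ underlying the paper's cylindric skew Macdonald functions. I would verify these relations inside $\mathfrak{F}_{n,k}$ by computing $\boldsymbol{Q}_{(n+r)}'|k^n\rangle$ through the cylindric tableau expansion introduced in the paper, with the $\bar{Q}_r$ piece arising as the contribution of tableaux that wrap once around the cylinder and the factor $t^k$ accounting for the statistic picked up by a full rotation on the $\widehat{\mathfrak{sl}}(n)$ Dynkin diagram; by the rank count from step one, no further relations can be needed. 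Finally, to genuinely identify the two algebras one extends scalars to $\Bbbk$, where the algebraic Bethe ansatz produces a basis of idempotents diagonalising $\mathfrak{F}_{n,k}$ and matches them with the common eigenbasis of multiplication by $P_{\lambda'}$ on the quotient; this promotes $\Phi$ from a rank-preserving ring homomorphism to the desired isomorphism.
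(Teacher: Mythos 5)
Your outline identifies the right ingredients (a rank count for the quotient, a multiplicativity statement for the $\boldsymbol{Q}'_{\lambda}$, the Bethe ansatz), but the step that carries the whole weight of the theorem --- multiplicativity of the map --- is not established by your argument. You claim that the quantum Knuth relations (\ref{quantumKnuth}) for the $a_i$ are ``precisely those obeyed by Jing's vertex operators'' and that therefore $\boldsymbol{Q}'_{\mu}\boldsymbol{Q}'_{\nu}|k^n\rangle=\sum_{\lambda}f_{\mu\nu}^{\lambda}(t)\boldsymbol{Q}'_{\lambda}|k^n\rangle$ with the \emph{ordinary} Hall coefficients, so that multiplicativity ``drops out of (\ref{Hall})''. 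This fails for two reasons. First, the $a_i$ satisfy local Serre-type relations indexed by the affine Dynkin diagram, not the quasi-commutation relations of Hall--Littlewood vertex operators, and no identification between the two algebras is available. Second, and more fundamentally, the affine generator $a_n$ makes the operators $\boldsymbol{g}'_r$, $\boldsymbol{Q}'_{\lambda}$ satisfy \emph{different} relations from their commutative counterparts already before any quotient is taken: the functional equation (\ref{TQ2}) acquires an extra $z$-dependent term, and the paper explicitly notes that $\boldsymbol{P}'_{\lambda}\neq b_{\lambda'}(t)\boldsymbol{Q}'_{\lambda}$ in general. Consequently the noncommutative product cannot reproduce the untruncated Hall expansion; its structure constants are the reduced coefficients $N_{\mu\nu}^{\lambda}(t)$, which are obtained from $f_{\mu\nu}^{\lambda}(t)$ by the extended affine straightening rules (terms like $P_{(7,5,2)}\mapsto tP_{(3,2,1)}$ in the paper's $n=4$, $k=3$ example), not by discarding out-of-alcove terms. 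Showing that these reduced coefficients are exactly the structure constants of $\Bbbk[x_1,\dots,x_k]^{\mathfrak{S}_k}/\mathcal{I}'_k$ \emph{is} the theorem, so it cannot be assumed at this stage.

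The paper closes this gap by running the Bethe ansatz first, not last. One rewrites the Bethe equations as the vanishing of the generators of $\mathcal{I}'_k$, proves the ideal is radical (Lemma \ref{nullstellen}) and that the quotient has the $[P_{\tilde\lambda}]$, $\tilde\lambda\in\mathcal{\tilde A}^+_{k,n}$, as a basis (this spanning step needs the affine straightening rules, not just $P$--$Q$ triangularity); a Hilbert-polynomial argument then gives exactly $|\mathcal{A}^+_{k,n}|$ Bethe roots, hence a complete eigenbasis (Theorem \ref{complete}) on which $\boldsymbol{Q}'_{\lambda'}$ acts by the scalar $P_{\lambda}(y_{\sigma};t)$. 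Evaluation at the points of this zero-dimensional variety identifies the operator algebra with the coordinate ring, and the equality $N_{\mu\nu}^{\lambda}(t)=\tilde N_{\mu\nu}^{\lambda}(t)$ follows from (\ref{Verlinde1}). Your final paragraph invokes the Bethe idempotents only to ``promote'' an already-established homomorphism to an isomorphism; in fact they are needed to establish the homomorphism in the first place. Likewise, verifying the ideal relations on the single vector $|k^n\rangle$ via wrapping tableaux would show the relations hold in one cyclic module but not that the kernel is exactly $\mathcal{I}'_k$, which is what the radicality and dimension arguments supply.
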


The analogue of Theorem \ref{GGW} for $\mathfrak{F}_{n,k}$ is currently
missing. However, it is more natural to consider the deformed fusion product
(\ref{tfusion0}) as a modification of the product
\begin{equation}
Q_{\mu }^{\prime }Q_{\nu }^{\prime }=\sum_{\lambda \in \mathcal{P}%
_{n}^{+}}f_{\mu ^{\prime }\nu ^{\prime }}^{\lambda ^{\prime }}(t)Q_{\lambda
^{\prime }}^{\prime },  \label{Demazureprod}
\end{equation}%
where $Q_{\lambda }^{\prime }:=Q_{\lambda }(t,0)\in \mathbb{C}%
(t)[x_{1},\ldots ,x_{n}]^{\mathfrak{S}_{n}}$ is now the complementary limit
of Macdonald functions and $\lambda ^{\prime }$ denotes the conjugate partition.
In the projective limit of $\infty$-many variables there exists a
bialgebra automorphism $\omega _{t}:P_{\lambda }(0,t)\mapsto Q_{\lambda
^{\prime }}(t,0)$ which is simply the known duality relation of Macdonald
functions \cite[VI.5, Equation (5.1), p327]{Macdonald} when one of the
parameters is set to zero. The dual Macdonald function $P_{\lambda }^{\prime
}$ can be identified with characters of so-called Demazure modules
related to $\widehat{\mathfrak{sl}}(n)$ \cite{Sanderson} (see also \cite%
{Ion} for other Kac-Moody algebras) and \cite{KMOTU}. Define a modified product%
\begin{equation}
Q_{\mu }^{\prime }\ast Q_{\nu }^{\prime }:=\sum_{\lambda \in \mathcal{P}%
_{n,k}^{+}}N_{\mu ^{\prime }\nu ^{\prime }}^{\lambda ^{\prime }}(t)Q_{\nu
}^{\prime },  \label{Demazurefusion}
\end{equation}%
where $N_{\mu ^{\prime }\nu ^{\prime }}^{\lambda ^{\prime }}(t)$ is defined through $[P_{\mu ^{\prime }}P_{\nu ^{\prime
}}]=\sum_{\lambda \in \mathcal{P}_{n,k}^{+}}N_{\mu ^{\prime }\nu ^{\prime
}}^{\lambda ^{\prime }}(t)[P_{\lambda ^{\prime }}]$ in the quotient $\Bbbk
\lbrack x_{1},\ldots ,x_{k}]^{\mathfrak{S}_{k}}/\mathcal{I}_{k}^{\prime }$.
The coproduct of the resulting Frobenius algebra leads to a cylindric
generalisation $P_{\lambda /d/\mu }^{\prime }=\sum_{\lambda \in \mathcal{P}%
_{n,k}^{+}}N_{\mu ^{\prime }\nu ^{\prime }}^{\lambda ^{\prime }}(t)P_{\nu
}^{\prime }$ of the skew Macdonald function $P_{\lambda /\mu }^{\prime
}:=P_{\lambda /\mu }(t,0)=\sum_{\nu}f^{\lambda'}_{\mu'\nu'}(t)P'_\nu$. We will define $P_{\lambda /d/\mu }^{\prime }$
explicitly as weighted sum over cylindric Young tableaux (also called
reversed cylindric plane partitions). The latter were first considered in
\cite{GesselKrattenthaler} and are maps $T:\lambda /d/\mu \rightarrow
\mathbb{N}$, where $\lambda /d/\mu $ denotes a cylindric skew diagram which
can be seen as set of points in $\mathbb{Z}^{2}$ obtained by a periodic
continuation of the ordinary skew diagram $(\lambda _{1}+d,\ldots ,\lambda
_{n}+d)/\mu $ with respect to the period vector $\Omega =(n,-k)$; see Figure %
\ref{fig:cyltabintro} for an example.
\begin{figure}[tbp]
\begin{equation*}
\includegraphics[scale=0.4]{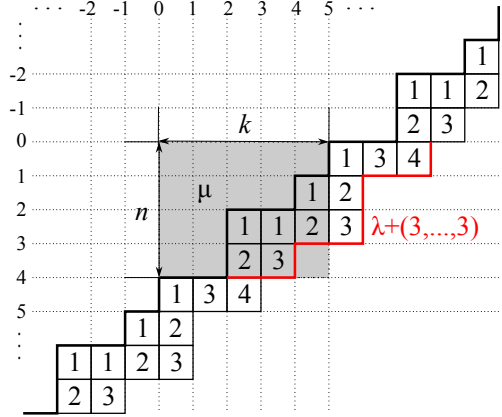}
\end{equation*}%
\caption{Example of a cylindric tableau for $n=4$ and $k=5$. Set $\protect%
\lambda =(5,3,3,1)$, $\protect\mu =(5,4,2,2)$ and $d=3$.}
\label{fig:cyltabintro}
\end{figure}
We believe the cylindric skew Macdonald functions $P_{\lambda /d/\mu
}^{\prime }$ to be of interest because of the following observations.\smallskip

\noindent Firstly, at $t=0$ they yield cylindric Schur functions $s_{\lambda /d/\mu }=\sum_{\lambda \in \mathcal{P}%
_{n,k}^{+}}N_{\mu\nu}^{\lambda}(0)s_{\nu
}$ where $N_{\mu\nu}^{\lambda}(0)$ are the WZNW fusion coefficients.
These cylindric Schur functions arise from the coproduct of the Verlinde algebra seen as Frobenius
algebra. Similar cylindric Schur functions appear in the context of the
small quantum cohomology ring of the Grassmannian \cite{Postnikov} (see also
\cite{McNamara}). Furthermore, cylindric Young diagrams and tableaux occur
in the representation theory of double affine Hecke algebras \cite{SV}.

\noindent Secondly, we have the following observation:

\begin{conjecture}
The polynomials $K_{\nu ^{\prime },\lambda ^{\prime }/d/\mu ^{\prime }}(t)$
defined through the expansion%
\begin{equation}
P_{\lambda /d/\mu }^{\prime }(x_{1},\ldots ,x_{n-1};t)=\sum_{\nu \in \mathcal{P%
}_{n,k}^{+}}K_{\nu ^{\prime },\lambda ^{\prime }/d/\mu ^{\prime }}(t)s_{\nu
}(x_{1},\ldots ,x_{n-1}),  \label{Conj0}
\end{equation}%
where $s_{\nu }$ is the Schur function, have always non-negative
coefficients.
\end{conjecture}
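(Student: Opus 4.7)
The approach is to establish the conjecture by extending the Lascoux--Sch\"utzenberger combinatorial formula for Kostka--Foulkes polynomials to the cylindric setting. I expect the proof to proceed in three stages.

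First, starting from the definition of $P'_{\lambda/d/\mu}$ as a weighted generating function over cylindric tableaux $T:\lambda/d/\mu\to\mathbb{N}$ (as introduced in the text), one obtains a monomial expansion
\begin{equation*}
P'_{\lambda/d/\mu}(x_1,\ldots,x_{n-1};t) \;=\; \sum_{T} \psi_T(t)\,x^T,
\end{equation*}
where each $\psi_T(t)\in\mathbb{Z}_{\geq 0}[t]$ is a product of local Hall--Littlewood factors attached to the horizontal strips of $T$ (the cylindric analogue of Macdonald's tableau formula for $P'_\lambda$). This expansion is monomial-positive in $t$ by inspection.

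Second, one reorganises this monomial sum into a Schur expansion via a cylindric analogue of the Lascoux--Sch\"utzenberger correspondence. Classically, charge is defined iteratively on the reading word of a tableau, and pairing each semistandard tableau with its RSK recording tableau yields $K_{\nu',\lambda'}(t)=\sum_T t^{\mathrm{charge}(T)}$. In the present setting the reading word lives on a cylinder, and the natural replacement for ordinary Knuth equivalence is the affine plactic algebra of Theorem \ref{KSThm0}, that is the crystal limit $q=0$ of the relations (\ref{quantumKnuth}). One proposes to define a cylindric charge $c(T)$ intrinsically via the Kirillov--Reshetikhin crystal $B^{1,k}$ of $U_q'\widehat{\mathfrak{sl}}(n)$ (or equivalently as an energy function on tensor products of KR-crystals with boundary data determined by $\mu$ and $d$) and to prove
\begin{equation*}
K_{\nu',\lambda'/d/\mu'}(t) \;=\; \sum_{T} t^{c(T)},
\end{equation*}
the sum being over cylindric semistandard tableaux of shape $\lambda/d/\mu$ and content $\nu$. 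Non-negativity of the Schur coefficients would then be immediate.

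An alternative, purely algebraic route uses the Frobenius structure of $\mathfrak{F}_{n,k}$: from (\ref{Demazurefusion}) one has $P'_{\lambda/d/\mu}=\sum_\sigma N^{\lambda'}_{\mu'\sigma'}(t)\,P'_\sigma$, hence
\begin{equation*}
K_{\nu',\lambda'/d/\mu'}(t) \;=\; \sum_\sigma N^{\lambda'}_{\mu'\sigma'}(t)\,K_{\nu',\sigma'}(t).
\end{equation*}
Since the ordinary Kostka--Foulkes polynomials $K_{\nu',\sigma'}(t)$ are non-negative by the classical Lascoux--Sch\"utzenberger theorem, the conjecture reduces to the $t$-positivity of the deformed fusion coefficients $N^{\lambda'}_{\mu'\sigma'}(t)$, a property that generalises the (known) non-negativity of the WZNW fusion coefficients recovered at $t=0$. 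The principal obstacle, shared by both routes, is the non-local, iterative nature of the charge statistic, which does not admit an obvious lift across the period vector $\Omega=(n,-k)$. I expect the hardest step to be producing an explicit cylindric RSK-type bijection compatible with the affine plactic relations (\ref{quantumKnuth}) at $q=0$, together with an energy-function computation on tensor products of $B^{1,k}$-crystals identifying it with $c(T)$; once the bijection is in place, both formulations above collapse to the same manifestly positive statement.
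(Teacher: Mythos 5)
You should note first that the statement you are addressing is explicitly left as a \emph{conjecture} in the paper: the author states that it rests on numerical computations, proves it only in the degenerate case $\mu=k^{n}$ (Proposition \ref{KFpolys}, where $K_{\nu',\lambda'/d/\mu'}(t)$ collapses to an ordinary Kostka--Foulkes polynomial), and writes that a representation-theoretic interpretation proving it ``would be desirable''. Your text is therefore correctly framed as a proposal rather than a proof, and nothing in it closes the gap: both of your routes terminate in an unproven key step. The first route (a cylindric charge statistic together with a cylindric RSK/energy-function bijection compatible with the affine plactic relations) is plausible and is in the spirit of what the paper itself suggests via the links to Demazure modules and Feigin--Loktev fusion products, but the entire difficulty of the conjecture is concentrated precisely in constructing that statistic and bijection, which you do not do. Your first stage, monomial positivity of $P'_{\lambda/d/\mu}$, is indeed immediate from (\ref{cylP'}) and (\ref{Psiprime}) since the weights $\Psi'_T$ are products of $t$-binomial coefficients, but monomial positivity says nothing about Schur positivity.

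More seriously, your ``alternative, purely algebraic route'' is based on a false premise. The identity $K_{\nu',\lambda'/d/\mu'}(t)=\sum_{\sigma}N^{\lambda'}_{\mu'\sigma'}(t)\,K_{\nu',\sigma'}(t)$ is correct (it is equation (\ref{S'matrix}) combined with (\ref{NQ'KS'})), but the deformed fusion coefficients $N^{\lambda}_{\mu\nu}(t)$ are \emph{not} polynomials with non-negative coefficients: the paper's own worked example in Section 7 (with $n=4$, $k=3$, $\lambda=(3,2,1)$, $\mu=(4,3,1)$) gives $N^{(3,2,1)}_{\lambda\mu}(t)=2+t-t^{2}$. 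So the reduction you propose cannot be carried out; positivity of $K_{\nu',\lambda'/d/\mu'}(t)$, if true, must arise from cancellation-free reorganisation of a sum whose individual terms are signed, exactly as in the final example of Section 7 where signed Hall polynomials $f^{\lambda}_{\rho\mu}(t)$ combine with positive $K_{\nu\rho}(t)$ to give a positive total. This is the concrete obstruction you would need to confront: any correct argument must work at the level of the full sum (or of the combinatorial model), not term by term in the $P'$-basis.
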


This conjecture is currently based on numerical computations, but as we will explain
in the text, for an appropriate choice of $\mu $ these polynomials
specialise to the celebrated Kostka-Foulkes polynomials $K_{\nu ^{\prime
},\lambda ^{\prime }}(t)$ and the known expansion $P_{\lambda }^{\prime
}(t)=\sum_{\mu }K_{\mu ^{\prime }\lambda ^{\prime }}(t)s_{\mu }$ has
representation theoretic interpretations: setting $t=1$ the Kostka numbers $K_{\nu ^{\prime
},\lambda ^{\prime }}(1)$ are multiplicities of finite-dimensional $%
\mathfrak{sl}(n)$-modules in the Demazure module corresponding to $%
P_{\lambda }^{\prime }$ \cite[Remark after Theorem 8]{Sanderson}. %
In certain cases this result can be generalised to arbitrary $t$: the function $P_{\lambda }^{\prime }(t)$ can be identified
as the graded character of Demazure modules of the current algebra $\mathfrak{sl}(n)\otimes\mathbb{C}[t]$ and
the coefficient of $t^r$ in the Kostka polynomial $K_{\lambda'\mu'}(t)$ provides the dimension of subspaces of degree $r$
in Feigin-Loktev fusion products. This has been conjectured in \cite{FeLo} and proved in \cite[Corollary 1.5.2]{CL}; see also \cite[Theorem 5.2]{KMOTU}, \cite{HKOTT} and \cite{SchillingTingley} for a computation of the graded characters using crystal bases. Results on Demazure modules and fusion products related to Lie algebras other than type $A$ can be found in
\cite{FoLi}. It would be desirable to find similar representation theoretic interpretations of (%
\ref{Conj0}) proving the above conjecture.

In this context we also mention that it has been shown in \cite{ArKe} that graded multiplicities in
Feigin-Loktev fusion products of Kirillov-Reshetikhin modules are related to the generalized Kostka-Foulkes
polynomials introduced in \cite{SchWa} and \cite{ANKS}. Related deformations of fusion
coefficients can be found in \cite{FLOT} and \cite{SS1,SS2}. However, the deformed fusion
coefficients in \emph{loc. cit.} specialise to the known fusion coefficients
at $q=1$ instead of $q=0$ and appear to be different from the structure
constants of the Frobenius algebra discussed in this article. Also, there
have been $q$-deformed versions of the Virasoro algebra suggested \cite%
{Shiraishietal}; at the moment the connection between these constructions
and the one in this article is unclear.

\subsection{Yang-Baxter algebras and quantum integrable systems}
One of the novel aspects of the combinatorial description of the Verlinde algebra in
\cite{KS,Korff,Korffproceedings} is its identification with the commutative algebra generated
by the Hamiltonians or integrals of motion of a quantum integrable system. Here
we show that these findings generalize from the simple combinatorial phase model considered in the case of the Verlinde algebra to a genuinely
strongly-correlated quantum many body system, the so-called q-boson model \cite{Bogoliubovetal}. The quantum Hamiltonians generate a
commutative Frobenius algebra and the latter are in one-to-one correspondence with
two-dimensional topological field theories; we will address this aspect in the conclusions and connect it with recent developments in four-dimensional supersymmetric $N=2$ gauge theories \cite{NS}.

Central starting point for the algebraic formulation of a quantum integrable model are solutions to
the quantum Yang-Baxter equation,
\begin{equation}
R_{12}(x,y)L_{13}(x)L_{23}(y)=L_{23}(y)L_{13}(x)R_{12}(x,y)\;.  \label{YBE0}
\end{equation}%
The latter is an identity in $\limfunc{End}(V(x)\otimes V(y)\otimes \mathcal{%
H}_{q})$ where $V$ is some complex vector space, $V(x):=\mathbb{C}%
[\![x]\!]\otimes V$ and it is understood that $R(x,y):V(x)\otimes
V(y)\rightarrow V(x)\otimes V(y)$ and $L(x):V(x)\otimes \mathcal{H}%
_{q}\rightarrow V(x)\otimes \mathcal{H}_{q}$ only act nontrivially in those
factors of $V(x)\otimes V(y)\otimes \mathcal{H}_{q}$ indicated by the lower
indices when numbering the spaces from left to right with $1,2,3$. Here $x,y$
are some formal variables which - when evaluated in the complex numbers -
are called \emph{spectral parameters}.

One can interpret the relation (\ref{YBE0}) as the definition of a subalgebra
in $\mathcal{H}_{q}$ which is called the \emph{Yang-Baxter algebra}. The
latter comes naturally equipped with a coproduct $\Delta
L(x)=L_{02}(x)L_{01}(x)$ and one is led to consider the monodromy matrix $%
T(x)=L_{0,n}(x)L_{0,n-1}(x)\cdots L_{0,1}(x)\in \limfunc{End}(V(x)\otimes
\mathcal{H}_{q}^{\otimes n})$, where the first index 0 now refers to the
factor $V(x)$ and the second numbers the copies of $\mathcal{H}_{q}$ in $%
\mathcal{H}_{q}^{\otimes n}$. Taking the formal partial trace over $V$ one
obtains the current operator $\mathcal{O}(x)=\sum_{r\geq 0}x^{r}\mathcal{O}%
_{r}=\limfunc{Tr}_{V}T(x)$ with $\mathcal{O}_{r}\in \mathcal{H}_{q}^{\otimes
n}$ and it follows from the Yang-Baxter equation that $\mathcal{O}(x)%
\mathcal{O}(y)=\mathcal{O}(y)\mathcal{O}(x),\;\forall x,y$. These are
Baxter's `commuting transfer matrices' and their matrix elements $\langle
\lambda |\mathcal{O}(x_{1})\cdots \mathcal{O}(x_{\ell })|\mu \rangle $ in
the Fock space $\mathcal{F}^{\otimes n}$ can be interpreted as partition
functions of an exactly-solvable lattice model in statistical mechanics with
periodic boundary conditions. Alternatively, one can interpret the
coefficients $\{\mathcal{O}_{r}\}$ as the commuting Hamiltonians of a
quantum integrable model. Both points of view are important.

\begin{description}
\item[\textbf{Statistical mechanics}.] We will discuss two solutions $(R,L)$
and $(R^{\prime },L^{\prime })$ to the Yang-Baxter equation (\ref{YBE0})
setting $V=\mathbb{C}^{2}$ and $V=\mathcal{F}$, the infinite-dimensional
Fock space of the $q$-Heisenberg algebra $\mathcal{H}_{q}$. The solution for
$V=\mathbb{C}^{2}$ has been obtained previously \cite{Bogoliubovetal}, the
other one is new. We will show that the resulting current operators $%
\mathcal{O}=\boldsymbol{E}$ and $\mathcal{O}^{\prime }=\boldsymbol{G}%
^{\prime }$ can be interpreted as noncommutative analogues of generating
functions for elementary symmetric, $\mathcal{O}_{r}=\boldsymbol{e}_{r}$, and
multivariate Rogers-Szeg\"{o} polynomials, $\mathcal{O}_{r}^{\prime }=%
\boldsymbol{Q}_{(r)}^{\prime }$, in the alphabet $\{a_{1},\ldots ,a_{n}\}$,
respectively. They satisfy an analogue of Baxter's famous $TQ$-relation for
the six and eight-vertex model \cite{Baxter}. The corresponding partition
functions $\langle \lambda |\mathcal{O}(x_{1})\cdots \mathcal{O}(x_{\ell
})|\mu \rangle $ are symmetric functions, since the current operators or
transfer matrices commute. Stating explicit bijections between lattice
configurations of the associated statistical mechanics models and cylindric
Young tableaux we show that they yield cylindric skew Macdonald functions $%
Q_{\lambda ^{\prime }/d/\mu ^{\prime }}$ and $P_{\lambda /d/\mu }^{\prime
}=\sum_{\lambda \in \mathcal{P}_{n,k}^{+}}N_{\mu ^{\prime }\nu ^{\prime
}}^{\lambda ^{\prime }}(t)P_{\nu }^{\prime }$ mentioned earlier. The
ordinary skew Macdonald functions $Q_{\lambda ^{\prime }/\mu ^{\prime
}}=Q_{\lambda ^{\prime }/\mu ^{\prime }}(0,t)$ and $P_{\lambda /\mu
}^{\prime }=P_{\lambda /\mu }(t,0)$ are obtained as special cases for $d=0$
which corresponds to open boundary conditions on the lattice. Another instance
where $P_{\lambda}(t,0)$ has occurred in the context of integrable systems
is in the case of the $q$-deformed Toda chain where they appear as eigenfunctions
of the quantum Hamiltonians \cite{Gerasimovetal}. Here the eigenfunctions of the
Hamiltonians are instead Hall-Littlewood functions; compare with \cite{Tsilevich}.

\item[\textbf{Quantum integrable models}.] The other perspective motivated
by physics is to interpret the coefficients $\{\mathcal{O}_{r}\}$ as quantum
Hamiltonians and the canonical basis vectors in $S^{k}(V)$ as quantum
particle configurations on a ring-shaped lattice, analogous to the
discussion of the Verlinde algebra above. The two solutions $(R,L)$ and $%
(R^{\prime },L^{\prime })$ will yield the Hamiltonians of the so-called $q$%
-boson model and the discrete Laplacians introduced in \cite{vanDiejen} for
a discrete version of the quantum nonlinear Schr\"{o}dinger model.
Mathematically, the $\{\mathcal{O}_{r}\}$ will generate the Frobenius
algebra $\mathfrak{F}_{n,k}$; they can be thought of as the basic building
blocks of the deformed fusion matrices in (\ref{tfusion0}). Computing the
eigenbasis $\{\mathfrak{e}_{\lambda }\}$ of the quantum Hamiltonians $\{%
\mathcal{O}_{r}\}$ in the $k$-particle space, via the so-called \emph{%
algebraic Bethe ansatz}, we will obtain the idempotents of the Frobenius
algebra, $\mathfrak{e}_{\lambda }\ast \mathfrak{e}_{\mu }=\delta _{\lambda
\mu }\mathfrak{e}_{\lambda }$ with $1=\sum_{\lambda }\mathfrak{e}_{\lambda }$%
. This is known as \emph{Peirce decomposition} in the literature \cite%
{Peirce} and requires a novel algebro-geometric proof of completeness of the Bethe ansatz for this model, which we state in Section 7.
\end{description}

\subsection{Outline of the article}

The following list summarises the discussion and results contained in each
section of this article.

\begin{description}
\item[\em Section 2.] We introduce the necessary conventions and algebraic
notions (the extended affine symmetric group, the weight lattice, the affine
Hecke algebra, the quantum affine algebra $U_{q}^{\prime }\widehat{\mathfrak{gl%
}}(n)$, Macdonald functions) which we need to keep this article
self-contained.

\item[\em Section 3.] We discuss the central algebraic object, the $q$-boson
algebra and its Fock space representation. We state the algebra homomorphism
with the quantum affine algebra $U_{q}^{\prime }\widehat{\mathfrak{gl}}(n)$
and prove the module isomorphism with the Kirillov-Reshetikhin module. In
particular, we identify the basis in the Fock space with Lusztig's canonical
basis in the quantum algebra module. Then we introduce several solutions to
the Yang-Baxter equation in terms of the $q$-boson algebra and use them to
introduce analogues of Macdonald polynomials in a noncommutative alphabet.

\item[\em Section 4.] Employing one of the solutions to the quantum Yang-Baxter
equation we define a statistical vertex model and show that its partition
functions on a square lattice with fixed boundary conditions yield ordinary
skew Hall-Littlewood functions.

\item[\em Section 5.] We generalise the discussion of the previous section to
periodic boundary conditions on the lattice and show that the associated
partition functions can be interpreted as cylindric Hall-Littlewood
functions. This section contains in particular the definition of cylindric
loops and cylindric skew tableaux adapted to the present discussion of the
Verlinde algebra. We will state explicit bijections between periodic lattice
configurations and cylindric tableaux. We relate the expansion coefficients
of the cylindric Hall-Littlewood functions in terms of monomial symmetric,
Schur and ordinary Hall-Littlewood functions to matrix elements of the
noncommutative Macdonald functions of Section 3. In particular, we express
the inverse Kostka-Foulkes matrix as a noncommutative analogue of a Schur
polynomial in the $q$-boson algebra.

\item[\em Section 6.] We use the second solution of the quantum Yang-Baxter
equation in Section 3 to define another statistical vertex model whose
partition functions lead to cylindric generalisations of the skew Macdonald
functions $P_{\lambda /\mu }(t,0)$. Similar as in the Hall-Littlewood case
of the previous section, we relate also their expansion coefficients in
various bases in the ring of symmetric functions to matrix elements of
noncommutative Macdonald polynomials in the $q$-boson algebra. We show that
the celebrated Kostka-Foulkes polynomials coincide with the matrix element
of such a polynomial which is dual to the Schur polynomial.

\item[\em Section 7.] Using the algebraic Bethe ansatz we compute the eigenbasis
of the noncommutative Macdonald polynomials in the Fock space. This leads to
a set of polynomial equations which define a discrete algebraic variety. We
discuss the related coordinate ring and, using invariance under the extended
affine symmetric group, identify it as a quotient of the spherical Hecke algebra,
which is closely related to the Ringel-Hall algebra of the Jordan quiver. Furthermore it is
equipped with the structure of a commutative Frobenius algebra and we show
that its product and coproduct are related to cylindric Hall-Littlewood and
Macdonald functions discussed in Sections 5 and 6. Choosing a distinguished
basis its structure constants are polynomials in an indeterminate $t=q^{2}$
whose constant terms equal the WZNW fusion coefficients, the structure
constants of the Verlinde algebra.

\item[\em Section 8.] We summarise our findings and set them into relation with
a recent observation in the context of four-dimensional $N=2$ supersymmetric
gauge theories which suggests a correspondence between two-dimensional
topological quantum field theories and integrable quantum many-body systems.
\end{description}

\section{Preliminaries}

\subsection{q-numbers}

Let $q$ be an indeterminate then we define the following standard $q$%
-numbers,%
\begin{equation*}
\lbrack m]_{q}:=\frac{q^{m}-q^{-m}}{q-q^{-1}},\qquad \lbrack
m]_{q}!:=[m]_{q}[m-1]_{q}\cdots \lbrack 2]_{q}[1]_{q}\;.
\end{equation*}%
In addition, we require the $q$-Pochhammer symbol%
\begin{equation}
(x;q)_{\infty }:=\prod_{r=0}^{\infty }(1-xq^{r}),\quad
(x;q)_{r}:=\prod_{s=0}^{r-1}(1-xq^{s})=\frac{(x;q)_{\infty }}{%
(xq^{r};q)_{\infty }}\;.
\end{equation}%
For any composition $\lambda =(\lambda _{1},\ldots ,\lambda _{r})\in \mathbb{%
Z}_{\geq 0}^{r}$ we will use the shorthand notations%
\begin{equation}
(x;q)_{\lambda }:=\prod_{i>0}(x;q)_{\lambda _{i}},\quad
(q)_{r}:=(q;q)_{r},\quad (q)_{\lambda }:=(q)_{\lambda _{1}}\cdots
(q)_{\lambda _{r}}\;.
\end{equation}%
We will also need the $q$-binomial coefficients
\begin{equation}
\QATOPD[ ] {m+n}{n}_{q}:=\left\{
\begin{array}{cc}
\frac{(q)_{m+n}}{(q)_{m}(q)_{n}}=\frac{(1-q^{m+n})\cdots (1-q^{n+1})}{%
(1-q)\cdots (1-q^{n})}, & m,n\geq 0 \\
0, & \text{else}%
\end{array}%
\right.
\end{equation}%
N.B. $q$ plays here the role of a dummy variable and we will apply the same
definitions for other indeterminates or powers of $q$ in particular we will
often use $t=q^{2}$ instead of $q$.

\subsection{The extended affine symmetric group}

We recall the definition of the extended affine symmetric group $\mathfrak{%
\hat{S}}_{r}$. The latter is generated by the elements $\{\sigma _{0},\sigma _{1},\ldots ,\sigma
_{r-1},\tau ^{\pm 1}\}$ subject to the relations%
\begin{eqnarray}
\sigma _{i}^{2} &=&1,\quad \tau \sigma _{i}\tau ^{-1}=\sigma _{i+1}  \notag
\\
\sigma _{i}\sigma _{i+1}\sigma _{i} &=&\sigma _{i+1}\sigma _{i}\sigma
_{i+1},\quad \sigma _{i}\sigma _{j}=\sigma _{j}\sigma
_{i},\;|i-j|>1,\;i,j\in \mathbb{Z}_{r}\ .  \label{affSk}
\end{eqnarray}%
Each $w\in \mathfrak{\hat{S}}_{r}$ can be written as $w=\tau ^{m}\sigma $
for some $m\in \mathbb{Z}$ and $\sigma \in \mathfrak{S}_{r}$ with $\mathfrak{%
S}_{r}\subset \mathfrak{\hat{S}}_{r}$ being the symmetric group on $r$%
-letters generated by $\{\sigma _{1},\ldots ,\sigma _{r-1}\}$. The Bruhat
order can be extended from $\mathfrak{S}_{r}$ to $\mathfrak{\hat{S}}_{r}$ by
setting $w<w^{\prime }$ if $w=\tau ^{m}\sigma ,w^{\prime }=\tau ^{m^{\prime
}}\sigma ^{\prime }$ with $m=m^{\prime }$ and $\sigma <\sigma ^{\prime }$.
Similarly, one can use the decomposition $w=\tau ^{m}\sigma $ to define the
length function as $\ell (w)=\ell (\sigma )$. We shall denote the longest
element in $\mathfrak{S}_{r}$ by $w_{r}$.

Note that an alternative set of generators for $\mathfrak{\hat{S}}_{r}$ is $%
\{\sigma _{1},\ldots ,\sigma _{r-1}\}\cup \{y_{1},\ldots ,y_{r}\}$ where $%
y_{i}y_{j}=y_{j}y_{i}$ for all $1\leq i,j\leq r$ and $\sigma _{i}y_{i}\sigma
_{i}=y_{i+1},~\sigma _{i}y_{j}=y_{j}\sigma _{i}$ for $j\neq i,i+1$. Both
definitions are related via $\sigma _{0}=\sigma _{r-1}\cdots \sigma
_{2}\sigma _{1}\sigma _{2}\cdots \sigma _{r-1}y_{1}^{-1}y_{r}$ and $\tau
=\sigma _{1}\sigma _{2}\cdots \sigma _{r-1}y_{r}$.

\subsection{Action on the weight lattice}

Let $\mathcal{P}_{r}=\tbigoplus_{i=1}^{r}\mathbb{Z}\epsilon _{i}$ be the $%
\mathfrak{gl}(r)$ weight lattice with standard basis $\epsilon _{1},\ldots
,\epsilon _{r}$ and inner product $(\epsilon _{i},\epsilon _{j})=\delta
_{i,j}$. We denote the simple roots by $\alpha _{i}=\epsilon _{i}-\epsilon
_{i+1}$, $i=1,\ldots ,r-1$ and the affine root by $\alpha _{r}=\epsilon
_{r}-\epsilon _{1}$. Let $\mathcal{P}_{r}^{\pm }$ denote the set of
(integral) \emph{dominant} and \emph{anti-dominant} weights. Recall the
following right level $s$ action of $\mathfrak{\hat{S}}_{r}$\ on $\mathcal{P}%
_{r}$ for $s\geq 1$:%
\begin{eqnarray}
\lambda \sigma _{i} &=&(\lambda _{1},\ldots ,\lambda _{i+1},\lambda
_{i},\ldots ,\lambda _{r}),\quad i=1,\ldots ,r-1,  \notag \\
\lambda \sigma _{0} &=&(\lambda _{r}+s,\lambda _{2},\ldots ,\lambda
_{r-1},\lambda _{1}-s),  \notag \\
\lambda \tau &=&(\lambda _{r}+s,\lambda _{1},\lambda _{2},\ldots ,\lambda
_{r-1}),  \notag \\
\lambda y_{i} &=&(\lambda _{1},\ldots ,\lambda _{i}+s,\ldots ,\lambda
_{r})\;.  \label{affaction}
\end{eqnarray}%
The subsets
\begin{equation}
\mathcal{A}_{r,s}^{+}:=\left\{ \left. (\lambda _{1},\lambda _{2},\ldots
,\lambda _{r})\in \mathcal{P}_{r}^{+}~\right\vert ~s\geq \lambda _{1}\geq
\lambda _{2}\geq \cdots \geq \lambda _{r}\geq 1\right\}  \label{alcove}
\end{equation}%
and $\mathcal{A}_{r,s}^{-}=w_{r}\mathcal{A}_{r,s}^{+}$ are both fundamental
domains with respect to the level $s$ action of $\mathfrak{\hat{S}}_{r}$\ on
$\mathcal{P}_{r}$. For each $\lambda \in \mathcal{A}_{r,s}^{-}$ denote by $%
\mathfrak{S}_{\lambda }\subset \mathfrak{S}_{r}$ the stabilizer subgroup of $%
\lambda $ and by $\mathfrak{S}^{\lambda }$ the set of minimal length
representatives of the cosets $\mathfrak{S}_{\lambda }\backslash \mathfrak{%
\hat{S}}_{r}$. We shall use the symbol $w_{\lambda }$ for the longest
element in $\mathfrak{S}_{\lambda }$.

Throughout this article we will make use of the following bijections.

\begin{description}
\item[\em Reduction.] For practical reasons we will also need the set
\begin{equation*}
\mathcal{\tilde{A}}_{r,s}^{+}:=\left\{ \left. (\lambda _{1},\lambda
_{2},\ldots ,\lambda _{r})\in \mathcal{P}_{r}^{+}~\right\vert ~s>\lambda
_{1}\geq \lambda _{2}\geq \cdots \geq \lambda _{r}\geq 0\right\}
\end{equation*}%
which again is a fundamental domain. We note that there exists a bijection $%
\symbol{126}:\mathcal{A}_{r,s}^{+}\rightarrow \mathcal{\tilde{A}}_{r,s}^{+}$
by sending $\lambda $ to $\tilde{\lambda}$, the partition obtained from $%
\lambda $ by deleting all parts of size $s$. We shall make frequently use of
this map.

\item[\em $\ast $-involution.] In addition, we will require the following $\ast $%
-involution on $\mathcal{A}_{r,s}^{+}$: given $\lambda \in \mathcal{A}%
_{r,s}^{+}$ define $\lambda ^{\ast }$ to be the unique element which is the
inverse image of $(s-\lambda _{r},\ldots ,s-\lambda _{2},s-\lambda _{1})\in
\mathcal{\tilde{A}}_{r,s}^{+}$ under the above bijection $\symbol{126}:%
\mathcal{A}_{r,s}^{+}\rightarrow \mathcal{\tilde{A}}_{r,s}^{+}$. Note that
when identifying partitions with weights, $\lambda^\ast$ is simply the
contragredient weight of $\lambda$.

\item[\em Rotation.] The $\widehat{\mathfrak{%
gl}}(n)$ Dynkin diagram automorphism induces a bijection $\limfunc{rot}:\mathcal{A}%
_{k,n}^{+}\rightarrow \mathcal{A}_{k,n}^{+}$ given by $%
\lambda \mapsto \limfunc{rot}(\lambda ):=\mu $ with $m_{i}(\mu
):=m_{i+1}(\lambda )$, $i\in \mathbb{Z}_{n}$.
\end{description}

\subsection{The affine Hecke algebra}

The affine Hecke algebra $\hat{H}_{r}$ is the $\mathbb{C}[q,q^{-1}]$\
algebra generated by $\{T_{0},T_{1},\ldots ,T_{r-1}\}$ and an invertible
element $T_{\tau }\equiv \tau $ subject to the relations%
\begin{gather*}
(T_{i}-q^{-1})(T_{i}+q)=0,\quad \quad \tau T_{i}\tau ^{-1}=T_{i+1} \\
T_{i}T_{i+1}T_{i}=T_{i+1}T_{i}T_{i+1},\quad
T_{i}T_{j}=T_{j}T_{i},\quad |i-j|>1,\quad i,j\in \mathbb{Z}_{r}
\end{gather*}%
A basis $\{T_{w}\}_{w\in \mathfrak{\hat{S}}_{r}}$ is constructed as follows:
for any $w,w^{\prime }\in \mathfrak{\hat{S}}_{r}$ set $T_{ww^{\prime
}}:=T_{w}T_{w^{\prime }}$ if $\ell (w)+\ell (w^{\prime })=\ell (ww^{\prime
}) $ with $T_{\sigma _{i}}\equiv T_{i}$. Alternatively, $\hat{H}_{r}$ is in
the \emph{Bernstein presentation} generated by $\{T_{1},\ldots ,T_{r-1}\}$ and
a set of commuting, invertible elements $\{Y_{1},\ldots ,Y_{r}\}$ obeying%
\begin{equation*}
T_{i}Y_{i}T_{i}=Y_{i+1},\qquad T_{i}Y_{j}=Y_{j}T_{i}\qquad\text{for}\quad j\neq
i,i+1\;.
\end{equation*}%
To relate this with the previous presentation use the formulae $%
T_{0}=T_{r-1}^{-1}\cdots T_{2}^{-1}T_{1}^{-1}T_{2}^{-1}\cdots
T_{r-1}^{-1}Y_{1}^{-1}Y_{r}$\ and $\tau =T_{1}^{-1}T_{2}^{-1}\cdots
T_{r-1}^{-1}Y_{r}$. Conversely,\ $%
Y_{i}=T_{i-1}T_{i-2}\cdots T_{1}T_{0}^{-1}T_{r-1}^{-1}T_{r-2}^{-1}\cdots
T_{i}$ and $Y^{\lambda }=T_{y^{\lambda }}^{-1}$ for $\lambda \in \mathcal{P}%
_{r}^{+}$. There exists a canonical bar involution $\overline{\;\;}:\hat{H}%
_{r}\rightarrow \hat{H}_{r}$ defined by $\bar{q}=q^{-1}$ and $\bar{T}%
_{w}=(T_{w^{-1}})^{-1}$; in particular $\bar{T}_{i}=T_{i}-(q-q^{-1})$.

Define $\boldsymbol{1}_r:=(1/c_r)\sum_{w\in \mathfrak{S}_r}q^{-\ell(w)}T_w$ with $c_r=\sum_w q^{-\ell(w)}$ and denote by
$\mathcal{Z}(\hat{H}_r)\cong\mathbb{C}[q,q^{-1}][Y_1,\ldots,Y_r]^{\mathfrak{S}_r}$ the centre of the affine Hecke algebra. Then the map
$\Phi:\mathcal{Z}(\hat{H}_r)\rightarrow\boldsymbol{1}_r\hat{H}_r\boldsymbol{1}_r$ given by
$\frac{c_\lambda}{c_r}P_\lambda(Y_1,\ldots,Y_r;0,t=q^2)\mapsto \boldsymbol{1}_rY^\lambda
\boldsymbol{1}_r
$, where $P_\lambda(0,t)$ are the Hall-Littlewood polynomials (see below),
is known as the {\em Satake isomorphism} and $\boldsymbol{1}_r\hat{H}_r\boldsymbol{1}_r$ as the {\em spherical Hecke algebra}. Recall that $\{\boldsymbol{1}_rY^\lambda
\boldsymbol{1}_r:\lambda_1\geq\ldots\geq\lambda_r\geq 0\}$ is a basis of the spherical Hecke algebra;
see e.g. \cite{NelsenRam} for details and references.

\subsection{The quantum enveloping algebra of $\widehat{\mathfrak{gl}}(n)$}

Let $\hat{U}_{n}=U_{q}^{\prime }\widehat{\mathfrak{gl}}(n)$ be the unital
associative $\mathbb{C}(q)$-algebra generated by $%
\{E_{i},F_{i},K_{i}^{\pm 1}\}_{i=1,\ldots ,n}$ and subject to the relations:

\begin{itemize}
\item[(R1)] The $K_{i}^{\pm 1}$'s commute with one another and $%
K_{i}K_{i}^{-1}=K_{i}^{-1}K_{i}=1$.

\item[(R2)] $K_{i}E_{j}=q^{\delta _{ij}-\delta
_{ij+1}}E_{j}K_{i},\;K_{i}F_{j}=q^{-\delta _{ij}+\delta _{ij+1}}F_{j}K_{i}$
and
\begin{equation*}
E_{i}F_{j}-F_{j}E_{i}=\delta _{ij}\dfrac{K_{i,i+1}-K_{i+1,i}}{q-q^{-1}}
\end{equation*}

\item[(R3)] For $X=E,F$ we have $X_{i}X_{j}=X_{j}X_{i}$ for $|i-j|>1$ and
else%
\begin{eqnarray}
X_{i}^{2}X_{i+1}-(q+q^{-1})X_{i}X_{i+1}X_{i}+X_{i+1}X_{i}^{2} &=&0,  \notag
\\
X_{i}X_{i+1}^{2}-(q+q^{-1})X_{i+1}X_{i}X_{i+1}+X_{i+1}^{2}X_{i} &=&0\;.
\label{qSerre}
\end{eqnarray}
\end{itemize}

\noindent Here $K_{i,j}:=K_{i}K_{j}^{-1}$ and all indices are understood
modulo $n$.

We will denote by $\mathbf{\hat{U}}_{n}=U_{q}^{\prime }\widehat{\mathfrak{sl}%
}(n)$ the subalgebra generated by $E_{i},F_{i}$ and $K_{i,i+1},K_{i+1,i}$
and by $U_{n},\mathbf{U}_{n}$ the finite-dimensional subalgebras obtained
when restricting the index to $i=1,\ldots ,n-1$. We choose to work with the
coproduct defined via $\Delta (K_{i})=K_{i}\otimes K_{i}$ and%
\begin{equation}
\Delta (E_{i})=E_{i}\otimes K_{i+1,i}+1\otimes E_{i},\quad \Delta
(F_{i})=F_{i}\otimes 1+K_{i,i+1}\otimes F_{i}\;.  \label{coprod}
\end{equation}%
This will allow us to make contact with the discussion in \cite[Section 3,
page 7]{Brundan}. The corresponding co-unit and antipode are respectively
given by
\begin{equation}
\varepsilon (E_{i})=\varepsilon (F_{i})=0,\qquad \varepsilon (K_{i}^{\pm
1})=1  \label{counit}
\end{equation}%
and%
\begin{equation}
S(E_{i})=-E_{i}K_{i,i+1},\quad S(F_{i})=-K_{i+1,i}F_{i},\quad S(K_{i}^{\pm
1})=K_{i}^{\mp 1}\;.  \label{antipode}
\end{equation}%
Furthermore, for discussing the canonical basis below we will require the
\emph{bar involution}; this is the unique antilinear automorphism defined
via $\bar{E}_{i}=E_{i},$ $\bar{F}_{i}=F_{i}$ and $\bar{K}_{i}=K_{i}^{-1}$. A
$U_{n}$-module $V$ is said to possess a compatible bar involution $%
V\rightarrow V$ if $\overline{uv}=\bar{u}\bar{v}$ for all $u\in U_{n}$ and $%
v\in V$.

\subsection{Macdonald functions}

Let $q,t$ be indeterminates and consider the following extension of the ring
of symmetric functions $\Lambda (q,t)=\Lambda \otimes _{\mathbb{Z}}\mathbb{C}%
(q,t)$ where $\Lambda =\lim\limits_{\longleftarrow }\Lambda _{n}$ is the
projective limit of the projective system of symmetric polynomials in $n$
variables, $\Lambda _{n}=\mathbb{Z}[x_{1},\ldots ,x_{n}]^{\mathfrak{S}_{n}},$
$n\geq 1$, with the canonical map $\Lambda _{n+1}\twoheadrightarrow \Lambda
_{n}$ which sends $x_{n+1}$ to zero. We now review the definition of a
special basis in $\Lambda (q,t)$, known as \emph{symmetric Macdonald
functions}; we shall follow the conventions used in \cite{Macdonald}.

For a given partition $\lambda $ define
\begin{equation}
b_{\lambda }(q,t)=\prod_{s\in \lambda }b_{\lambda }(s;q,t),\quad \quad
b_{\lambda }(s;q,t):=\frac{1-q^{a_{\lambda }(s)}t^{l_{\lambda }(s)+1}}{%
1-q^{a_{\lambda }(s)+1}t^{l_{\lambda }(s)}},  \label{Macb}
\end{equation}%
where the product runs over all squares $s=(i,j)$ in the Young diagram of $%
\lambda $ and $a_{\lambda }(s)=\lambda _{i}-j$ is the arm-length (number of
squares to the east) and $l_{\lambda }(s)=\lambda _{j}^{\prime }-i$ the
leg-length (number of squares to the south). Given a skew diagram $\lambda
/\mu $, define two functions $\varphi _{\lambda /\mu },\psi _{\lambda /\mu }$
which are zero unless $\lambda -\mu $ is a horizontal $r$-strip in which case%
\begin{equation}
\varphi _{\lambda /\mu }(q,t)=\prod_{s\in C_{\lambda /\mu }}\frac{b_{\lambda
}(s;q,t)}{b_{\mu }(s;q,t)}\quad \text{and}\quad \psi _{\lambda /\mu
}(q,t)=\prod_{s\in R_{\lambda /\mu }-C_{\lambda /\mu }}\frac{b_{\mu }(s;q,t)%
}{b_{\lambda }(s;q,t)}  \label{Macphipsi}
\end{equation}%
with $C_{\lambda /\mu }$ (respectively $R_{\lambda /\mu }$) being the union
of all columns (respectively rows) which intersect $\lambda /\mu $.

Let $\lambda $, $\mu $ be partitions with $\mu \subset \lambda $ and denote
by $\lambda /\mu $ the associated skew diagram. Given a (semi-standard%
\footnote{%
In this article we will only consider semi-standard tableaux and,
henceforth, simply call them \textquotedblleft tableaux\textquotedblright .}%
) tableau $T$ of shape $\lambda /\mu $ decompose it into a sequence of
partitions $\mu =\lambda ^{(0)}\subset \lambda ^{(1)}\subset \ldots \subset
\lambda ^{(r)}=\lambda $ such that $\lambda ^{(i+1)}/\lambda ^{(i)}$ is a
horizontal strip and set $\varphi _{T}:=\prod_{i\geq 0}\varphi _{\lambda
^{(i+1)}/\lambda ^{(i)}}$, $\psi _{T}:=\prod_{i\geq 0}\psi _{\lambda
^{(i+1)}/\lambda ^{(i)}}$. N.B. we have the identity $b_{\mu }\varphi
_{T}=b_{\lambda }\psi _{T}$ for any tableau $T$.

\begin{example}
Let $\lambda =(3,2,2,1),$ $\mu =\emptyset $, the empty partition, and
consider all tableaux of weight $(2,2,2,2)$,%
\begin{equation*}
\begin{tabular}{ccc}
{\small \young(112,23,34,4) } & {\small \young(113,22,34,4) } & {\small %
\young(114,22,33,4)}%
\end{tabular}%
\;.
\end{equation*}%
Then one finds the weights%
\begin{equation*}
\begin{tabular}{cccc}
$\psi _{T}=$ & $\dfrac{(1+q)(1-t)}{1-qt},$ & $\dfrac{(1-q^{2}t)(1-t^{2})}{%
(1-qt)(1-qt^{2})},$ & $\dfrac{(1-q^{2}t^{2})(1-t^{3})}{(1-qt^{2})(1-qt^{3})}%
, $%
\end{tabular}%
\end{equation*}%
and $\varphi _{T}=b_{\lambda }\psi _{T}$ with%
\begin{equation*}
b_{\lambda }(q,t)=\frac{%
(1-t)^{3}(1-t^{2})(1-qt^{2})(1-qt^{3})^{2}(1-q^{2}t^{4})}{%
(1-t)^{3}(1-t^{2})(1-qt^{2})(1-qt^{3})^{2}(1-q^{2}t^{4})}\;.
\end{equation*}
\end{example}

\noindent\emph{Skew Macdonald functions} $Q_{\lambda /\mu }(x;q,t)=b_{\lambda
}(q,t)P_{\lambda }(x;q,t)/b_{\mu }(q,t)$ can be defined as the following
weighted sums over semi-standard tableaux $T$,%
\begin{equation}
Q_{\lambda /\mu }(x;q,t)=\sum_{|T|=\lambda/\mu }\varphi _{T}(q,t)x^{T}\qquad
\text{and}\qquad P_{\lambda /\mu }(x;q,t)=\sum_{|T|=\lambda/\mu }\psi
_{T}(q,t)x^{T}\,.  \label{MacdonaldP&Q}
\end{equation}%
Specialising to $\mu =\emptyset $, the empty partition, one obtains
ordinary, non-skew, Macdonald functions which are simply denoted by $%
Q_{\lambda }$, $P_{\lambda }$.

\begin{theorem}[Macdonald]
The family $\{Q_{\lambda }(q,t):\lambda $ \emph{partition~}$\}$ forms a
basis of the ring of symmetric functions $\Lambda (q,t)$.
\end{theorem}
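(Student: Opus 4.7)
The plan is to deduce the theorem from a strict unitriangular expansion of $P_\lambda(q,t)$ in the monomial symmetric function basis, combined with the scalar relation $Q_\lambda = b_\lambda(q,t)\,P_\lambda$. Each factor in (\ref{Macb}) is manifestly a nonzero rational function in $q,t$, so $b_\lambda(q,t)$ is a unit in $\mathbb{C}(q,t)$ and it suffices to prove that $\{P_\lambda : \lambda \text{ a partition}\}$ is a $\mathbb{C}(q,t)$-basis of $\Lambda(q,t)$. The tableau formula (\ref{MacdonaldP&Q}) also shows $P_\lambda$ to be homogeneous of degree $|\lambda|$, so the problem decomposes into checking that $\{P_\lambda : \lambda \vdash n\}$ is a basis of each graded component $\Lambda^n(q,t)$, whose monomial symmetric functions $\{m_\mu : \mu \vdash n\}$ are already known to be a basis.

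The main step is to establish the expansion
\begin{equation*}
P_\lambda(x;q,t) \;=\; m_\lambda(x) \;+\; \sum_{\mu < \lambda} u_{\lambda\mu}(q,t)\,m_\mu(x),
\end{equation*}
where $<$ is the strict dominance order on partitions of $n$. For this I would first observe that any semi-standard tableau $T$ of shape $\lambda$ has weight $\mu \leq \lambda$: strict increase down columns forces the cell $(i,j)$ to carry an entry $\geq i$, so the cells with entries in $\{1,\ldots,k\}$ all lie in the first $k$ rows, giving $\mu_1 + \cdots + \mu_k \leq \lambda_1 + \cdots + \lambda_k$. Moreover, equality $\mu = \lambda$ forces $T$ to be the \emph{superstandard} tableau $T_0$ with row $i$ filled entirely by $i$'s, since the $\lambda_1$ ones must fill row $1$, the $\lambda_2$ twos must then fill row $2$, and so on. For this $T_0$, each successive strip $\lambda^{(i)}/\lambda^{(i-1)}$ equals the $i$-th row of $\lambda$, so the set $R_{\lambda^{(i)}/\lambda^{(i-1)}}$ of rows meeting the strip is contained in the set $C_{\lambda^{(i)}/\lambda^{(i-1)}}$ of columns meeting the strip (these are columns $1,\ldots,\lambda_i$, which sweep out the entire strip); hence $R \setminus C = \emptyset$ in (\ref{Macphipsi}), the product is empty, and $\psi_{T_0}(q,t) = 1$.

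Regrouping the monomials in (\ref{MacdonaldP&Q}) by their sorted weight, and using the symmetry of $P_\lambda$ to equate $[m_\mu] P_\lambda$ with $[x^\mu]P_\lambda = \sum_{T : \mathrm{wt}(T)=\mu} \psi_T(q,t)$, yields the displayed expansion. The transition matrix between $\{P_\lambda : \lambda \vdash n\}$ and $\{m_\mu : \mu \vdash n\}$ is then upper unitriangular with respect to any total order refining the dominance order, hence invertible over $\mathbb{C}(q,t)$. This gives the basis property for the $P_\lambda$, and multiplication by the units $b_\lambda(q,t)$ transfers it to the $Q_\lambda$. I expect the most delicate point in the plan to be the verification $\psi_{T_0}(q,t) = 1$: the sets $R_{\nu/\rho}$ and $C_{\nu/\rho}$ in (\ref{Macphipsi}) can be interpreted as subsets of $\nu$, of $\rho$, or of the ambient quadrant, and one must check that in every reasonable convention the squares of row $i$ swept out by the superstandard strip indeed lie in the columns swept out by the same strip, so that the relevant product collapses to $1$.
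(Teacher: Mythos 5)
The paper states this result as a citation from Macdonald's book and gives no proof of its own, so there is nothing to match your argument against line by line; judged on its own, your proposal is correct and is essentially the classical argument, adapted to the tableau definition (\ref{MacdonaldP&Q}) that the paper actually uses. In Macdonald's development the unitriangularity $P_{\lambda}=m_{\lambda}+\sum_{\mu<\lambda}u_{\lambda\mu}m_{\mu}$ is part of the \emph{definition} of $P_{\lambda}$ (together with orthogonality), so the basis property is immediate and the tableau formula is a downstream theorem; you run the logic the other way, deriving the triangularity from the tableau sum via the standard facts that a semi-standard tableau of shape $\lambda$ has weight dominated by $\lambda$, with equality only for the superstandard tableau, and that $\psi_{T_{0}}=1$ because $R_{\lambda^{(i)}/\lambda^{(i-1)}}\subseteq C_{\lambda^{(i)}/\lambda^{(i-1)}}$ for each row strip. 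Those verifications, and the observation that each $b_{\lambda}(q,t)$ is a unit in $\mathbb{C}(q,t)$ since every factor is a ratio of two nonzero binomials $1-q^{a}t^{l+1}$ and $1-q^{a+1}t^{l}$, are all sound. The one external input you lean on, and should flag explicitly, is the symmetry of the tableau sum: starting from (\ref{MacdonaldP&Q}) alone, the identity $[m_{\mu}]P_{\lambda}=\sum_{\mathrm{wt}(T)=\mu}\psi_{T}$ requires knowing that $P_{\lambda}$ is a symmetric function, which is itself a nontrivial theorem in this formulation (it is automatic in Macdonald's Gram--Schmidt definition). Since the paper already places $P_{\lambda}$ in $\Lambda(q,t)$ and calls it a symmetric Macdonald function, taking symmetry as given is consistent with the surrounding text, but it is the genuine hypothesis on which your triangularity-from-tableaux route rests.
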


Fix a bilinear form $\Lambda (q,t)\times \Lambda (q,t)\rightarrow \mathbb{C}%
(q,t)$ (antilinear in the first factor) by setting%
\begin{equation}
(P_{\lambda },Q_{\mu })\mapsto \langle P_{\lambda },Q_{\mu }\rangle
_{q,t}:=\delta _{\lambda \mu }\;.  \label{Macprod}
\end{equation}%
Macdonald functions interpolate between various other bases in the ring of
symmetric functions and we shall make repeated use of the following special
cases.

\subsubsection{Special cases of Macdonald functions}

\label{specialMac}

\begin{description}
\item[\emph{Elementary symmetric functions.}] Suppose $\lambda =(1^{r})$ is
a vertical strip, then $P_{(1^{r})}=e_{r}$ with%
\begin{equation}
E(u)=\prod_{i>0}(1+ux_{i})=\sum_{r\geq 0}e_{r}u^{r},\qquad
e_{r}=\sum_{i_{1}<...<i_{r}}x_{i_{1}}\cdots x_{i_{r}}\;.  \label{E}
\end{equation}%
The set $\{e_{\lambda }\}$ where $\lambda $ ranges over the partitions and $%
e_{\lambda }:=e_{\lambda _{1}}e_{\lambda _{2}}\cdots $ is a $\mathbb{Z}$%
-basis of the ring of symmetric functions $\Lambda $.\smallskip

\item[\emph{Basic Macdonald functions.}] Suppose $\lambda =(r)$ is a
horizontal $r$-strip, then $Q_{(r)}=g_{r}$ with%
\begin{equation}
G(u)=\prod_{i\geq 1}\frac{(t~x_{i}u;q)_{\infty }}{(x_{i}u;q)_{\infty }}%
=\sum_{r\geq 0}g_{r}(x;q,t)u^{r},\quad g_{r}(q,t)=\sum_{|\mu |=r}\frac{%
(t;q)_{\mu }}{(q;q)_{\mu }}~m_{\mu },  \label{G}
\end{equation}%
where $m_{\mu }(x)=P_{\lambda }(x;q,1)$\emph{\ }are the \emph{monomial
symmetric functions}. The set $\{g_{\lambda }(q,t)\}$ where $\lambda $
ranges over the partitions and $g_{\lambda }(q,t):=g_{\lambda
_{1}}(q,t)g_{\lambda _{2}}(q,t)\cdots $ is a basis in $\Lambda (q,t)$. Below
we will consider the special limits $g_{\lambda }:=g_{\lambda }(0,t)$ and $%
g_{\lambda }^{\prime }:=g_{\lambda }(q,0)$. In particular, we have that%
\begin{equation}
g_{r}^{\prime }(q):=g_{r}(q,0)=\sum_{\mu \vdash r}\frac{m_{\mu }}{(q)_{\mu }}%
,\qquad (q)_{\mu }:=(q)_{\mu _{1}}(q)_{\mu _{2}}\cdots  \label{g'}
\end{equation}%
are multivariate Rogers-Szeg\"{o} polynomials \cite{Rogers} $h_{r}^{\prime
}=(q)_{r}g_{r}^{\prime }$ with generating function \cite[Chapter 3, Example
17]{Andrews}%
\begin{equation}
G^{\prime }(u;q)=\prod_{i>0}\frac{1}{(ux_{i};q)_{\infty }}=\sum_{r\geq
0}h_{r}^{\prime }\frac{u^{r}}{(q)_{r}}\;.  \label{G'}
\end{equation}%
\smallskip

\item[\emph{Hall-Littlewood functions.}] Specialising $q=0$ the Macdonald
functions become \emph{Hall-Littlewood (HL) functions} sometimes also called
\emph{spherical Macdonald functions} because of their relation to the spherical Hecke algebra,%
\begin{equation}
Q_{\lambda }(x;0,t)=\sum_{|T|=\lambda }\varphi _{T}(0,t)x^{T}\quad \text{%
and\quad }P_{\lambda }(x;0,t)=\sum_{|T|=\lambda }\psi _{T}(0,t)x^{T}\;,
\label{HLdef}
\end{equation}%
where the coefficients $\varphi _{\lambda /\mu }(0,t)=:\varphi _{\lambda
/\mu }(t),~\psi _{\lambda /\mu }(0,t)=:\psi _{\lambda /\mu }(t)$ now have
the simpler form%
\begin{equation}
\varphi _{\lambda /\mu }(t)=\left\{
\begin{array}{cc}
\prod_{i\in I_{\lambda /\mu }}(1-t^{\lambda _{i}^{\prime }-\lambda
_{i+1}^{\prime }}), & \text{if }\lambda -\mu \text{ is a horizontal strip}
\\
0, & \text{otherwise}%
\end{array}%
\right.  \label{phi}
\end{equation}%
and%
\begin{equation}
\psi _{\lambda /\mu }(t)=\left\{
\begin{array}{cc}
\prod_{i\in J_{\lambda /\mu }}(1-t^{\mu _{i}^{\prime }-\mu _{i+1}^{\prime
}}), & \text{if }\lambda /\mu \text{ is a horizontal strip} \\
0, & \text{otherwise}%
\end{array}%
\right. \ .  \label{psi}
\end{equation}%
Here the index set $I_{\lambda /\mu }$ contains all integers $i$ for which $%
\theta _{i}^{\prime }=1$ and $\theta _{i+1}^{\prime }=0$ with $\theta
^{\prime }=\lambda ^{\prime }-\mu ^{\prime }$ being the transposed
skew-diagram. In contrast, $J_{\lambda /\mu }$ consists of the integers $i$
for which $\theta _{i}^{\prime }=0$ and $\theta _{i+1}^{\prime }=1$. The
integers $m_{i}(\lambda)=\lambda _{i}^{\prime }-\lambda
_{i+1}^{\prime }$ and $m_{i}(\mu)=\mu _{i}^{\prime }-\mu
_{i+1}^{\prime }$ are the multiplicities of the part $i$ in the partitions $%
\lambda $ and $\mu $, respectively. Note that
\begin{equation}
b_{\mu }(t)\varphi _{\lambda /\mu }(t)=b_{\lambda }(t)\psi _{\lambda /\mu
}(t),\qquad b_{\lambda }(t)=\prod_{i>0}(t)_{m_{i}(\lambda )}\;.
\label{bphi=bpsi}
\end{equation}%
In what follows we will omit the dependence on the second parameter in the
notation and simply write $Q_{\lambda }(x;t)=Q_{\lambda }(x;0,t)$ and $%
P_{\lambda }(x;t)=P_{\lambda }(x;0,t)$. Denote by $R_{ij}$ the familiar
raising and lowering operators of the ring of symmetric functions, $%
R_{ij}\lambda =(\lambda _{1},\ldots ,\lambda _{i}+1,\ldots ,\lambda
_{j}-1,\ldots )$. Then we have the expression
\begin{equation}
Q_{\lambda }=\prod_{i<j}\frac{1-R_{ij}}{1-tR_{ij}}~g_{\lambda },\qquad
g_{\lambda }=g_{\lambda _{1}}g_{\lambda _{2}}\ldots ,  \label{Q2g}
\end{equation}%
which expresses the Hall-Littlewood $Q$-function as polynomial in the $%
g_{r}=g_{r}(0,t)$.\smallskip

\item[\emph{Demazure Characters and $q$-Whittaker functions.}] In light of the definition (\ref{HLdef})
it is natural to consider also the complementary limit of Macdonald
polynomials setting now $t=0$. Define $\psi _{\lambda /\mu }^{\prime
}(q):=\psi _{\lambda /\mu }(q,0)$ then%
\begin{equation}
\psi _{\lambda /\mu }^{\prime }(q)=\left\{
\begin{array}{cc}
\prod_{i>0}\QATOPD[ ] {\lambda _{i}-\lambda _{i+1}}{\lambda _{i}-\mu _{i}}%
_{q}, & \text{if }\lambda /\mu \text{ is a horizontal strip} \\
0, & \text{otherwise}%
\end{array}%
\right.  \label{psiprime}
\end{equation}%
and, similarly, setting $\varphi _{\lambda /\mu }^{\prime }(q):=\varphi
_{\lambda /\mu }(q,0)$ one finds%
\begin{equation}
\varphi _{\lambda /\mu }^{\prime }(q)=\left\{
\begin{array}{cc}
\frac{1}{(q;q)_{\lambda _{1}-\mu _{1}}}\prod_{i>0}\QATOPD[ ] {\mu _{i}-\mu
_{i+1}}{\lambda _{i+1}-\mu _{i+1}}_{q}, & \text{if }\lambda /\mu \text{ is a
horizontal strip} \\
0, & \text{otherwise}%
\end{array}%
\right. \;.  \label{phiprime}
\end{equation}%
N.B. the identities $\varphi _{\lambda /\mu }^{\prime }(q)=\frac{b_{\lambda
}(q,0)}{b_{\mu }(q,0)}~\psi _{\lambda /\mu }(q,0)=\frac{b_{\mu ^{\prime
}}(0,q)}{b_{\lambda ^{\prime }}(0,q)}~\psi _{\lambda /\mu }^{\prime }(q)$
hold with $\mu ^{\prime },\lambda ^{\prime }$ denoting the conjugate
partitions of $\lambda ,\mu $. We shall denote the resulting skew functions
by%
\begin{equation}
P_{\lambda /\mu }^{\prime }(x;q)=\sum_{|T|=\lambda /\mu }\psi _{T}^{\prime
}(q)x^{T}\quad \text{and}\quad Q_{\lambda /\mu }^{\prime
}(x;q)=\sum_{|T|=\lambda /\mu }\varphi _{T}^{\prime }(q)x^{T}\,.  \label{cHLdef}
\end{equation}%
As mentioned in the introduction for $\mu=\emptyset$ these specialisations of Macdonald functions coincide with certain Demazure characters \cite{Sanderson}. In \cite{Gerasimovetal} they have been named {\em $q$-deformed} or simply {\em $q$-Whittaker functions}. For general $\mu$ these links have not been established and, therefore, I shall refer to them simply as skew Macdonald functions, although it will always be understood that $t=0$.
Applying the involution $%
\omega :\Lambda \rightarrow \Lambda $ defined via $s_{\lambda }\mapsto
s_{\lambda ^{\prime }},$ where $s_{\lambda }=\det (e_{\lambda _{i}^{\prime
}-i+j})$ is the Schur function, we obtain from the Macdonald functions $%
P_{\lambda }^{\prime },Q_{\lambda }^{\prime }$ the so-called \emph{modified
Hall-Littlewood functions}. The latter form dual bases of the ordinary HL
functions with respect to the standard inner product $\langle s_{\lambda
},s_{\mu }\rangle =\delta _{\lambda \mu }$. That is, we have $\tilde{Q}%
_{\lambda ^{\prime }}=\omega P_{\lambda }^{\prime }$ and $\tilde{P}_{\lambda
^{\prime }}=\omega Q_{\lambda }^{\prime }$with $\langle \tilde{Q}_{\lambda
},P_{\mu }\rangle =\langle \tilde{P}_{\lambda },Q_{\mu }\rangle =\delta
_{\lambda \mu }$.
\end{description}

\subsubsection{Hall algebra, coproduct and skew Hall-Littlewood functions}

One of the reasons for the prominence of Hall-Littlewood functions is their
connection with the Hall algebra. When the variable $t$ is evaluated as the
cardinality of a finite field it is well-known that the coefficients in the
product expansion (\ref{Hall}) of Hall-Littlewood $P$-functions are related
to Hall polynomials, the structure constants of Steinitz's Hall algebra. For
generic $t$ the expansion coefficients $f_{\lambda \mu }^{\nu }(t)$ are
polynomials which vanish identically unless the Littlewood-Richardson
coefficient $f_{\lambda \mu }^{\nu }(0)=c_{\lambda ^{\prime }\mu ^{\prime
}}^{\nu ^{\prime }}=c_{\lambda \mu }^{\nu }$ is nonzero.

Skew Hall-Littlewood functions are intimately linked to the product
expansion (\ref{Hall}) through the following construction: endow $\Lambda
(t)=\Lambda (0,t)$ with the coproduct $\Delta :\Lambda (t)\rightarrow
\Lambda (t)\otimes \Lambda (t)$ which is the projective limit of the natural
embedding $\mathbb{C}[x_{1},\ldots ,x_{2n}]^{\mathfrak{S}_{2n}}%
\hookrightarrow \mathbb{C}[x_{1},\ldots ,x_{n}]^{\mathfrak{S}_{n}}\otimes
\mathbb{C}[x_{n+1},\ldots ,x_{2n}]^{\mathfrak{S}_{n}}$. The specialisation
of the bilinear form (\ref{Macprod}) at $q=0$ yields the unique inner
product $\left\langle ~\cdot ~,~\cdot ~\right\rangle _{t}$ on $\Lambda (t)$
such that%
\begin{equation}
\left\langle f,gh\right\rangle _{t}=\left\langle \Delta f,g\otimes
h\right\rangle _{t}\qquad \text{and\qquad }\left\langle
p_{m},p_{n}\right\rangle _{t}=\delta _{m,n}\frac{m}{t^{m}-1}\;,
\label{HLproduct}
\end{equation}%
where $p_{m}=\sum_{i}x_{i}^{m}$ is the $m^{\text{th}}$ power sum, the latter
generate the $\mathbb{Q}$-algebra $\Lambda ^{\mathbb{Q}}=\Lambda \otimes _{%
\mathbb{Z}}\mathbb{Q}$ of symmetric functions. The co-algebra of symmetric
functions $(\Lambda (t),\Delta )$ can be turned into a bi-algebra with
respect to the co-unit $\varepsilon (f):=f(0,0,\ldots )$ and the first inner
product identity in (\ref{HLproduct}) ensures that this bi-algebra is
self-dual. In particular, one has the identities%
\begin{equation}
\Delta Q_{\lambda }=\sum_{\mu }Q_{\lambda /\mu }\otimes Q_{\mu }\qquad \text{%
and\qquad }Q_{\lambda /\mu }=\sum_{\nu }f_{\mu \nu }^{\lambda }Q_{\nu },
\label{HLcop}
\end{equation}%
where $\left\langle Q_{\lambda /\mu },P_{\nu }\right\rangle
_{t}=\left\langle Q_{\lambda },P_{\mu }P_{\nu }\right\rangle _{t}=f_{\mu \nu
}^{\lambda }(t)$ are the coefficients in (\ref{Hall}).

Using the well-known duality relation of Macdonald functions \cite[VI.5,
Equation (5.1), p327]{Macdonald} the analogous Hopf algebra structure can be
defined on the Macdonald $P^{\prime },Q^{\prime }$-functions: set $q=t$ and
define an automorphism $\omega _{t}:\Lambda (t,0)\rightarrow \Lambda (0,t)$
through the following table%
\begin{equation}
\begin{tabular}{||c||c|c|c|c|c|c|}
\hline
$F$ & $Q_{\lambda }^{\prime }$ & $P_{\lambda }^{^{\prime }}$ & $e_{\lambda }$
& $g_{\lambda }^{\prime }$ & $s_{\lambda }$ & $S_{\lambda }^{\prime }$ \\
\hline
$\omega _{t}F$ & $P_{\lambda ^{\prime }}$ & $Q_{\lambda ^{\prime }}$ & $%
g_{\lambda }$ & $e_{\lambda }$ & $S_{\lambda ^{\prime }}$ & $s_{\lambda
^{\prime }}$ \\ \hline
\end{tabular}
\label{omega}
\end{equation}%
where $F_{\lambda }^{\prime }:=F_{\lambda }(t,0)$ and $F_{\lambda ^{\prime
}}:=F_{\lambda ^{\prime }}(0,t)$ for $F=Q,P,g$. Each single column in the
table fixes $\omega _{t}$ uniquely as all the displayed functions are bases
in $\Lambda (t,0)$ and $\Lambda (t,0)$ respectively. Here we have introduced
the dual functions
\begin{equation}
S_{\lambda }=\det (g_{\lambda _{i}-i+j})\qquad \text{and}\qquad S_{\lambda
}^{\prime }=\det (g_{\lambda _{i}-i+j}^{\prime })\;.  \label{SS'}
\end{equation}%
of the Schur function with respect to the two inner products obtained from (%
\ref{Macprod}), $\left\langle S_{\lambda },s_{\mu }\right\rangle
_{0,t}=\langle s_{\lambda ^{\prime }},S_{\mu ^{\prime }}^{\prime }\rangle
_{q,0}=\delta _{\lambda \mu }$.

\begin{theorem}[Macdonald]
The map $\omega _{t}:\Lambda (t,0)\rightarrow \Lambda (0,t)$ defined via $%
Q_{\lambda }(t,0)\mapsto P_{\lambda ^{\prime }}(0,t)$ (resp. $P_{\lambda
}(t,0)\mapsto Q_{\lambda ^{\prime }}(0,t)$) is a $\mathbb{C}(t)$-bialgebra
isomorphism which preserves the inner product (\ref{HLproduct}). Thus, in
particular, we have that%
\begin{equation}
Q_{\lambda }^{\prime }Q_{\mu }^{\prime }=\sum_{\nu }f_{\lambda ^{\prime }\mu
^{\prime }}^{\nu ^{\prime }}(t)Q_{\nu }^{\prime }\quad \text{and\quad }%
P_{\lambda /\mu }^{\prime }=\sum_{\nu }f_{\mu ^{\prime }\nu ^{\prime
}}^{\lambda ^{\prime }}(t)P_{\nu }^{\prime }\;,  \label{dualHall}
\end{equation}%
where $\lambda ^{\prime },\mu ^{\prime },\nu ^{\prime }$ are the conjugate
partitions of $\lambda ,\mu ,\nu $.
\end{theorem}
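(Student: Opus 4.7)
My approach is to derive the theorem as a direct specialization of Macdonald's general two-parameter duality automorphism on $\Lambda(q,t)$, and then transfer the Hall-algebra identities (\ref{Hall}) and (\ref{HLcop}) from the $(0,t)$-world over to the $(t,0)$-world. The first step is to invoke the $\mathbb{Q}(q,t)$-algebra automorphism $\omega_{q,t}:\Lambda(q,t)\to\Lambda(q,t)$ of \cite[VI.5, (5.1)]{Macdonald} which sends $P_\lambda(x;q,t)\mapsto Q_{\lambda'}(x;t,q)$, thereby interchanging the two parameters and conjugating partitions. This automorphism is a bialgebra map (the coproduct $\Delta p_r=p_r\otimes 1+1\otimes p_r$ is parameter-free and $\omega_{q,t}$ simply rescales each $p_r$) and satisfies the isometry property $\langle f,g\rangle_{q,t}=\langle\omega_{q,t}f,\omega_{q,t}g\rangle_{t,q}$ with respect to the Macdonald inner product.

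Next I would specialize $\omega_{q,t}$ to obtain $\omega_t$: renaming parameters appropriately and specializing one of them to $0$ on each side yields a $\mathbb{C}(t)$-linear map $\omega_t:\Lambda(t,0)\to\Lambda(0,t)$ with $P_\lambda(t,0)\mapsto Q_{\lambda'}(0,t)$ and $Q_\lambda(t,0)\mapsto P_{\lambda'}(0,t)$. Both source and target are isomorphic to $\Lambda\otimes\mathbb{C}(t)$ with the relevant Macdonald families being genuine $\mathbb{C}(t)$-bases, so the specialization is well-defined. Each further entry of the table (\ref{omega}) is checked by specializing the corresponding known image under $\omega_{q,t}$; for example $\omega_{q,t}e_r=\omega_{q,t}P_{(1^r)}(q,t)=Q_{(r)}(t,q)=g_r(x;t,q)$ becomes $g_r=g_r(0,t)$ at $q=0$, while $\omega_{q,t}s_\lambda=S_{\lambda'}$ on the appropriate side follows from the definition (\ref{SS'}). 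The bialgebra property and inner-product isometry descend from $\omega_{q,t}$ to $\omega_t$ because the Macdonald inner product at the two relevant parameter specializations both reduce to (\ref{HLproduct}).

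The two displayed identities are then obtained by pushing the known Hall-Littlewood formulas through $\omega_t^{-1}:\Lambda(0,t)\to\Lambda(t,0)$, which sends $P_\kappa(0,t)\mapsto Q_{\kappa'}(t,0)=Q'_{\kappa'}$ and intertwines both products and coproducts. Applying $\omega_t^{-1}$ to the Hall expansion $P_\mu P_\nu=\sum_\lambda f^\lambda_{\mu\nu}(t)P_\lambda$ from (\ref{Hall}) yields $Q'_{\mu'}Q'_{\nu'}=\sum_\lambda f^\lambda_{\mu\nu}(t)Q'_{\lambda'}$, and the relabeling $\alpha=\mu'$, $\beta=\nu'$, $\gamma=\lambda'$ produces exactly the first claim $Q'_\alpha Q'_\beta=\sum_\gamma f^{\gamma'}_{\alpha'\beta'}(t)Q'_\gamma$. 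Similarly, pushing the skew identity $Q_{\lambda/\mu}(0,t)=\sum_\nu f^\lambda_{\mu\nu}(t)Q_\nu(0,t)$ from (\ref{HLcop}) through $\omega_t^{-1}$ gives $P'_{\lambda'/\mu'}=\sum_\nu f^\lambda_{\mu\nu}(t)P'_{\nu'}$, which after conjugating the three indices is the second claim $P'_{\lambda/\mu}=\sum_\nu f^{\lambda'}_{\mu'\nu'}(t)P'_\nu$.

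The main obstacle is the careful bookkeeping of the simultaneous double specialization of $\omega_{q,t}$, together with the verification that the coproduct-based definition of the skew function $P'_{\lambda/\mu}$ produced by $\omega_t^{-1}$ coincides with the tableau-sum definition (\ref{cHLdef}) used elsewhere in the paper. This compatibility can be confirmed either combinatorially, by comparing $\psi'_T$ in (\ref{psiprime}) with the conjugate of $\varphi_T$ appearing in the skew $Q$-tableau expansion of (\ref{MacdonaldP&Q}) at $q=0$, or algebraically, by exploiting the isometry of $\omega_t$ to match the two dual-pairing characterizations of $P'_{\lambda/\mu}$. Once this identification is in hand, the remainder of the proof reduces to the two one-line specialization arguments outlined above.
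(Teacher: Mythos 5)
Your proposal is correct and follows exactly the route the paper intends: the paper states this as a cited result of Macdonald, defining $\omega_t$ by specialising the two-parameter duality automorphism of \cite[VI.5, (5.1)]{Macdonald}, which is what you carry out in detail (including the transfer of (\ref{Hall}) and (\ref{HLcop}) through $\omega_t^{-1}$ and the identification of the skew functions, which is Macdonald's skew duality $\omega_{q,t}Q_{\lambda/\mu}(q,t)=P_{\lambda'/\mu'}(t,q)$). No gaps.
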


\subsubsection{Generalised Cauchy identities}

We recall the following well known generalisation of Cauchy's identity to
Hall-Littlewood functions \cite[III.4]{Macdonald},%
\begin{eqnarray*}
\prod_{i,j}\frac{1-tx_{i}y_{j}}{1-x_{i}y_{j}} &=&\sum_{\lambda }g_{\lambda
}(x;t)m_{\lambda }(y) \\
&=&\sum_{\lambda }s_{\lambda }(x)S_{\lambda }(y;t)=\sum_{\lambda }Q_{\lambda
}(x;t)P_{\lambda }(y;t)\;.
\end{eqnarray*}%
Applying the inverse of the automorphism $\omega _{t}$ to the functions in
the $x$ variables once we obtain%
\begin{eqnarray}
\prod_{i,j}(1+x_{i}y_{j}) &=&\sum_{\lambda }e_{\lambda }(x)m_{\lambda }(y)
\notag \\
&=&\sum_{\lambda }s_{\lambda ^{\prime }}(x)s_{\lambda }(y)=\sum_{\lambda
}P_{\lambda ^{\prime }}^{\prime }(x;t)P_{\lambda }(y;t)  \label{C1}
\end{eqnarray}%
and doing so for a second time yields%
\begin{eqnarray}
\prod_{i,j}\frac{1}{(x_{i}y_{j};t)_{\infty }} &=&\sum_{\lambda }g_{\lambda
}^{\prime }(x;t)m_{\lambda }(y)  \notag \\
&=&\sum_{\lambda }S_{\lambda }^{\prime }(x;t)s_{\lambda }(y)=\sum_{\lambda
}Q_{\lambda ^{\prime }}^{\prime }(x;t)P_{\lambda ^{\prime }}^{\prime }(x;t)
\label{C2}
\end{eqnarray}%
where in order to arrive at the last relation we have swapped $x$ and $y$%
-variables.

\section{q-bosons and Yang-Baxter algebras}

In this section we introduce the basic noncommutative algebraic structure,
the $q$-oscillator or boson algebra, from which we will construct
step-by-step all the other relevant algebraic objects in the following
order: a $q$-Schur algebra of $\hat{U}_{n}$, solutions to the Yang-Baxter
equation and their associated Yang-Baxter algebras as well as noncommutative
analogues of symmetric polynomials.

\subsection{q-boson algebra}

There exist different versions of the $q$-boson algebra, also called the $q$%
-oscillator or Heisenberg algebra in the literature; see for instance
Chapter 5 in \cite{KlimykSchmuedgen} as well as references therein. We shall
work with the following version (we assume henceforth that $q^{\pm 1}$
exist); compare with the symmetric $q$-oscillator algebra in \cite[5.1.1,
Definition 2 and 5.1.2 ]{KlimykSchmuedgen}.

\begin{definition}[$q$-deformed boson algebra]
The $q$-boson algebra $\mathcal{H}_{q}$ is the unital, associative $%
\mathbb{C}(q)$-algebra defined in terms of the generators $\{q^{\pm N},\beta
,\beta ^{\ast }\}$ and the algebraic relations,%
\begin{eqnarray}
q^{N}q^{-N} &=&q^{-N}q^{N}=1,\quad q^{N}\beta =\beta q^{N-1},\qquad
q^{N}\beta ^{\ast }=\beta ^{\ast }q^{N+1},  \label{H1} \\
\beta \beta ^{\ast }-\beta ^{\ast }\beta &=&(1-q^{2})q^{2N},\qquad \beta
\beta ^{\ast }-q^{2}\beta ^{\ast }\beta =1-q^{2},  \label{H2}
\end{eqnarray}%
where $q^{\pm N}$ denote generators and $q^{pN+x}$ is shorthand for $(q^{\pm
N})^{p}q^{x}$.
\end{definition}

Note that (\ref{H2}) implies the relations
\begin{equation}
\beta ^{\ast }\beta =1-q^{2N}\qquad \text{and}\qquad \beta \beta ^{\ast
}=1-q^{2(N+1)}\;.  \label{crucial}
\end{equation}%
The proof of the following proposition is contained in \cite[5.1.1,
Proposition 1]{KlimykSchmuedgen}.

\begin{proposition}[basis of the $q$-boson algebra]
The set $\{(\beta ^{\ast })^{p}q^{rN},q^{rN}\beta ^{s}:p,s\in \mathbb{N}%
,~r\in \mathbb{Z},\mathbb{~}(r,s)\neq (-1,0)\}$ forms a basis of $\mathcal{H}%
_{q}$.
\end{proposition}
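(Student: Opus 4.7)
The plan is to establish the two halves of the basis claim separately: spanning and linear independence.

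For spanning, I would argue by induction on the length of an arbitrary word in the generators $\{\beta,\beta^{\ast},q^{\pm N}\}$. The commutation rules (\ref{H1}) first let one push every $q^{\pm N}$ to the right end of any monomial, at the cost of an integer shift in its exponent, so that the word becomes a $\mathbb{C}(q)$-multiple of something of the shape $(\beta^{\ast})^{p_{1}}\beta^{s_{1}}(\beta^{\ast})^{p_{2}}\beta^{s_{2}}\cdots q^{rN}$. The key reduction step is then to apply the identities (\ref{crucial}) to eliminate any adjacent occurrence $\beta\beta^{\ast}$ or $\beta^{\ast}\beta$ inside such a block: both collapse to polynomials in $q^{\pm N}$, so each application strictly decreases the total count of $\beta$- and $\beta^{\ast}$-symbols (after again sweeping up $q^{\pm N}$'s). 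What is left after finitely many steps is a linear combination of the pure normal forms $(\beta^{\ast})^{p}q^{rN}$ and $q^{rN}\beta^{s}$, with the degree-zero monomials $q^{rN}$ accounted for once under the stated bookkeeping convention.

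For linear independence I would invoke a faithful representation $\pi:\mathcal{H}_{q}\rightarrow\func{End}\mathcal{F}$ on the Fock space $\mathcal{F}=\tbigoplus_{k\geq 0}\mathbb{C}(q)|k\rangle$, determined by
\begin{equation*}
q^{\pm N}|k\rangle=q^{\pm k}|k\rangle,\qquad \beta^{\ast}|k\rangle=|k+1\rangle,\qquad \beta|k\rangle=(1-q^{2k})|k-1\rangle,\qquad \beta|0\rangle=0.
\end{equation*}
The defining relations are verified by direct computation. On $\mathcal{F}$ the two proposed normal forms act by
\begin{equation*}
(\beta^{\ast})^{p}q^{rN}|k\rangle=q^{rk}|k+p\rangle,\qquad q^{rN}\beta^{s}|k\rangle=q^{r(k-s)}\prod_{j=0}^{s-1}(1-q^{2(k-j)})|k-s\rangle.
\end{equation*}
Because the index shift $+p$ or $-s$ is observable on $\mathcal{F}$, any hypothetical $\mathbb{C}(q)$-linear relation among the basis elements decomposes into a separate relation within each fixed-shift piece. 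In a fixed shift the relation reduces to a family of identities $\sum_{r}c_{r}q^{rk}=0$ valid for all sufficiently large $k\in\mathbb{Z}_{\geq 0}$, which forces every $c_{r}$ to vanish by a Vandermonde evaluation at as many consecutive values of $k$ as there are summands.

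The main obstacle is the final Vandermonde step, namely that the functions $k\mapsto q^{rk}$ remain $\mathbb{C}(q)$-linearly independent as $r$ ranges over $\mathbb{Z}$. This rests on the fact that $q$ is a formal indeterminate, so the quantities $q^{r}$ are pairwise distinct elements of the function field $\mathbb{C}(q)$ and the resulting Vandermonde determinant is a nonzero element of $\mathbb{C}(q)$. The remaining work is routine bookkeeping: one must verify that the slightly awkward exclusion $(r,s)\neq(-1,0)$ in the statement records exactly the overlap between the two normal forms at degree zero, so that the spanning set produced by the reduction procedure coincides exactly with the indexed family stated in the proposition.
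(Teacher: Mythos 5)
Your argument is sound and is the standard PBW-style proof: normal ordering via (\ref{H1}) and (\ref{crucial}) for the spanning half (the letter-count induction does terminate, since any word in $\beta,\beta^{\ast}$ with no adjacent mixed pair is a pure power), and the Fock representation together with a Vandermonde argument for linear independence — the latter valid precisely because $q$ is an indeterminate, so the $q^{r}$, $r\in\mathbb{Z}$, are pairwise distinct in $\mathbb{C}(q)$. The paper itself gives no proof here; it defers entirely to \cite[5.1.1, Proposition 1]{KlimykSchmuedgen}, whose argument is the same in substance, so there is nothing to contrast. One caveat on your final bookkeeping step: the excluded pair $(r,s)=(-1,0)$ does \emph{not} record the overlap of the two families, since $(\beta^{\ast})^{0}q^{rN}$ and $q^{rN}\beta^{0}$ coincide for \emph{every} $r\in\mathbb{Z}$, not only $r=-1$. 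Read as a set these duplicates collapse harmlessly and your independence argument (which at shift zero only determines the sum $c_{0,r}+d_{0,r}$, i.e.\ the single coefficient of the single element $q^{rN}$) goes through verbatim; the exclusion as printed is best treated as an artefact inherited from the cited reference rather than something your proof needs to reproduce.
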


In what follows we will consider the $n$-fold tensor product of the $q$%
-oscillator algebra and denote by $\beta _{i}$, $\beta _{i}^{\ast }$, $%
q^{\pm N_{i}},$ $i=1,2,...,n$ the generators which belong to the $i^{\text{th%
}}$ factor of the tensor product $\mathcal{H}_{q}^{\otimes n}$. The set of
generators $\{\beta _{i},\beta _{i}^{\ast },q^{\pm N_{i}}\}_{i=1}^{n}$ then
obeys the following relations:%
\begin{eqnarray}
\beta _{i}\beta _{j}-\beta _{j}\beta _{i} &=&\beta _{i}^{\ast }\beta
_{j}^{\ast }-\beta _{j}^{\ast }\beta _{i}^{\ast
}=q^{N_{i}}q^{N_{j}}-q^{N_{j}}q^{N_{i}}=0  \label{qboson1} \\
q^{N_{i}}\beta _{j} &=&\beta _{j}q^{N_{i}-\delta _{ij}},\qquad
q^{N_{i}}\beta _{j}^{\ast }=\beta _{j}^{\ast }q^{N_{i}+\delta _{ij}},
\label{qboson2} \\
\beta _{i}\beta _{j}^{\ast }-\beta _{j}^{\ast }\beta _{i} &=&\delta
_{ij}(1-q^{2})q^{2N_{i}},\qquad \beta _{i}\beta _{i}^{\ast }-q^{2}\beta
_{i}^{\ast }\beta _{i}=1-q^{2}\;.  \label{qboson3}
\end{eqnarray}

The following proposition is a generalisation of the case $U_{q}\mathfrak{sl}%
(2)$ discussed in \cite[5.1.1, Proposition 3]{KlimykSchmuedgen}; compare
also with \cite{Hayashi} where closely related homomorphisms are discussed
for general affine Lie algebras.

\begin{proposition}[Jordan-Schwinger realisation]
Let $n>2$ and $z$ be an indeterminate with $\bar{z}=z^{-1}$. There exists a
homomorphism $h:U_{n}\rightarrow \mathcal{H}_{q}^{\otimes n}$ such that
\begin{equation}
E_{i}\mapsto -\frac{q^{-N_{i}}\beta _{i}^{\ast }\beta _{i+1}}{q-q^{-1}}%
,\qquad F_{i}\mapsto -\frac{q^{-N_{i+1}}\beta _{i}\beta _{i+1}^{\ast }}{%
q-q^{-1}},\qquad K_{i}^{\pm 1}\mapsto q^{\pm N_{i}}  \label{h1}
\end{equation}%
where $i=1,2,...,n-1$. This homomorphism can be extended to the affine
algebra $h_{z}:\hat{U}_{n}\rightarrow \mathcal{H}_{q}^{\otimes n}\otimes
\mathbb{C}[z,z^{-1}]$ by setting
\begin{equation}
E_{n}\mapsto -z\frac{q^{-N_{n}}\beta _{n}^{\ast }\beta _{1}}{q-q^{-1}}%
,\qquad F_{n}\mapsto -z^{-1}\frac{q^{-N_{1}}\beta _{n}\beta _{1}^{\ast }}{%
q-q^{-1}},\qquad K_{n}^{\pm 1}\mapsto q^{\mp N_{n}}\;.  \label{h2}
\end{equation}
\end{proposition}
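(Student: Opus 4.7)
The proof is an application of the universal property: we need to check that the prescribed images of $E_i,F_i,K_i^{\pm 1}$ satisfy the defining relations (R1)--(R3) of $U_n$ (respectively $\hat U_n$ for the affine extension). The calculation simplifies drastically because $E_i$ and $F_i$ involve only the two adjacent tensor factors labelled $i$ and $i+1$, so the verification of any given relation involves at most three consecutive tensor factors. For pieces outside the overlap all terms commute, and inside the overlap the rewriting is controlled by the two identities $\beta^{\ast}\beta = 1-q^{2N}$ and $\beta\beta^{\ast} = 1-q^{2(N+1)}$ from (\ref{crucial}) together with (\ref{qboson2}).

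Relation (R1) is immediate since the images $q^{\pm N_i}$ live in distinct, pairwise commuting tensor factors and are inverse to each other by (\ref{H1}). The weight relation $K_iE_j = q^{\delta_{ij}-\delta_{i,j+1}}E_jK_i$ of (R2) follows by pushing $q^{N_i}$ past $\beta_j^{\ast}$ and $\beta_{j+1}$ using (\ref{qboson2}); this produces precisely the Kronecker-delta exponents on the right-hand side, and an identical computation handles $K_iF_j$. For the bracket $[E_i,F_j]$, the case $|i-j|\geq 2$ is trivial (disjoint factors), and when $|i-j|=1$ the overlap is confined to a single factor where both $E_i$ and $F_j$ contribute the same $\beta$-letter in both orderings, so the two products coincide and the commutator vanishes, consistent with $\delta_{ij}=0$.

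The main calculation is the diagonal case $i=j$ of the $[E,F]$ relation together with the quantum Serre relations (R3). For $E_iF_i$ and $F_iE_i$ I would collect all $q^{-N_{\bullet}}$ factors on the left using (\ref{qboson2}), which introduces a common factor of $q^{-1}$ from rearranging $\beta_{i+1}$ past $q^{-N_{i+1}}$; the remaining $\beta$-monomials then simplify via (\ref{crucial}) to $(1-q^{2N_i})(1-q^{2N_{i+1}+2})$ and $(1-q^{2N_i+2})(1-q^{2N_{i+1}})$ respectively. Their difference collapses to $(1-q^2)(q^{2N_{i+1}}-q^{2N_i})$, and after dividing by $(q-q^{-1})^2$ one recovers $(q^{N_i-N_{i+1}}-q^{N_{i+1}-N_i})/(q-q^{-1})$, which is precisely $h((K_{i,i+1}-K_{i+1,i})/(q-q^{-1}))$. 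The Serre relations are the principal obstacle: after extracting the $q^{-N_{\bullet}}$ prefactors each cubic expression in the $E$'s reduces to a word in $\beta_i^{\ast},\beta_{i+1}^{\ast},\beta_{i+1},\beta_{i+2}$, and the required identity is equivalent to the commutation of $\beta^{\ast}$ with $\beta$ governed by (\ref{H2}) in the middle factor $i+1$, which produces exactly the coefficient $(q+q^{-1})$ in front of the mixed term. This is essentially the content of (\ref{quantumKnuth}) once translated through the dictionary $K_iF_i \leftrightarrow a_i$ already indicated in the text; an entirely parallel argument handles the $F$-side Serre relations.

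Finally, for the extension $h_z$ to $\hat U_n$ with $n>2$, the same arguments apply cyclically because the wrapping indices $n$ and $1$ label distinct tensor factors (this is where the hypothesis $n>2$ is essential). The spectral parameter enters only through $E_n\mapsto z\cdot E_n^{\mathrm{loc}}$ and $F_n\mapsto z^{-1}\cdot F_n^{\mathrm{loc}}$: in the diagonal commutator $[E_n,F_n]$ the factor $z\cdot z^{-1}=1$ is absorbed, and in every other relation involving the index $n$ all terms on both sides of the relation carry the same net power of $z^{\pm 1}$, so after factoring it out we are reduced to identities already checked in the finite case. This produces the required algebra homomorphism $h_z:\hat U_n\to \mathcal{H}_q^{\otimes n}\otimes\mathbb{C}[z,z^{-1}]$.
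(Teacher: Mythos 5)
Your proposal is correct and is exactly the argument the paper intends: the paper's own proof consists of the single sentence that the result follows by ``a straightforward computation exploiting repeatedly the relations (\ref{qboson1}), (\ref{qboson2}), (\ref{qboson3}) and (\ref{crucial})'', and you have simply carried out that computation, correctly reducing each defining relation to an identity supported on at most the overlapping tensor factor and verifying it via $\beta^{\ast}\beta=1-q^{2N}$, $\beta\beta^{\ast}=1-q^{2(N+1)}$ and $\beta\beta^{\ast}-q^{2}\beta^{\ast}\beta=1-q^{2}$. In particular your intermediate expressions $(1-q^{2N_i})(1-q^{2N_{i+1}+2})$ and $(1-q^{2N_i+2})(1-q^{2N_{i+1}})$ for $E_iF_i$ and $F_iE_i$, the collapse of their difference to $(1-q^{2})(q^{2N_{i+1}}-q^{2N_i})$, and the reduction of the Serre relations to a single-factor identity check out.
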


\begin{proof}
The proof is a straightforward computation exploiting repeatedly the
relations (\ref{qboson1}), (\ref{qboson2}), (\ref{qboson3}) and (\ref%
{crucial}).
\end{proof}

The following infinite-dimensional highest-weight module, the Fock space, is
for instance discussed in \cite[Chapter 5, Section 5.2]{KlimykSchmuedgen} we
refer the reader to \emph{loc. cit.} for a proof. We will be using for
convenience the bra-ket notation from physics.

\begin{proposition}[Fock space]
Let $\mathcal{I}\subset \mathcal{H}_{q}$ be the two-sided ideal generated by
$\beta $ and $q^{N}-1$ and set $\mathcal{F}=\mathcal{H}_{q}/\mathcal{I}$.
(i) $\mathcal{F}$ has highest weight vector $|0\rangle =1+\mathcal{I}$ and
the set $\{~|m\rangle :=(\beta ^{\ast })^{m}/(q^{2})_{m}|0\rangle ~|\ m\in
\mathbb{Z}_{\geq 0}\}$ forms a basis. The following relations hold,%
\begin{equation}
q^{N}|m\rangle =q^{m}|m\rangle ,\quad \beta ^{\ast }|m\rangle
=(1-q^{2m+2})|m+1\rangle ,\quad \beta |m\rangle =|m-1\rangle \;.
\label{Fock}
\end{equation}%
(ii) The module $\mathcal{F}$ is simple as long as $q$ is not evaluated at a
root of unity.
\end{proposition}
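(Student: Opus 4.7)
My plan is to treat (i) and (ii) separately: (i) reduces to a basis computation together with three direct verifications, while (ii) is a standard ladder argument exploiting the distinct $q^N$-eigenvalues.

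For (i), I would first show that $\{(\beta^{\ast})^{m}+\mathcal{I}:m\geq 0\}$ spans $\mathcal{F}$. By the PBW-type basis stated in the preceding proposition, every element of $\mathcal{H}_{q}$ is a $\mathbb{C}(q)$-linear combination of $(\beta^{\ast})^{p}q^{rN}$ and $q^{rN}\beta^{s}$ with $s>0$. Modulo $\mathcal{I}$ we may replace $q^{rN}$ by $1$ (since $q^{N}-1\in\mathcal{I}$), and every $\beta^{s}=\beta^{s-1}\cdot\beta$ with $s\geq 1$ lies in $\mathcal{I}$; thus only the powers of $\beta^{\ast}$ survive. I would then define $|m\rangle$ by the stated normalization and verify the three action formulas directly from the relations (\ref{H1})--(\ref{H2}) and the consequence (\ref{crucial}). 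Concretely, $q^{N}|m\rangle=q^{m}|m\rangle$ follows from commuting $q^{N}$ past $(\beta^{\ast})^{m}$ and using $q^{N}|0\rangle=|0\rangle$; the formula for $\beta^{\ast}|m\rangle$ is immediate from $(q^{2})_{m+1}/(q^{2})_{m}=1-q^{2(m+1)}$; and for $\beta|m\rangle$ I would use $\beta\beta^{\ast}=1-q^{2(N+1)}$ and the commutation $q^{N}(\beta^{\ast})^{p}=(\beta^{\ast})^{p}q^{N+p}$ to obtain $\beta(\beta^{\ast})^{m}=(\beta^{\ast})^{m-1}(1-q^{2(N+m)})$, from which $\beta|m\rangle=|m-1\rangle$ drops out after applying to $|0\rangle$.

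Linear independence of $\{|m\rangle\}$ in $\mathcal{F}$ then falls out of the action itself: $q^{N}$ acts diagonally with eigenvalues $q^{m}$ that are pairwise distinct in $\mathbb{C}(q)$, so a nontrivial relation $\sum c_{m}|m\rangle=0$ would force all $c_{m}=0$ by the usual Vandermonde argument. Together with the spanning statement this gives (i). An alternative, equivalent route is to construct an abstract module $V=\bigoplus_{m\geq 0}\mathbb{C}(q)|m\rangle$ with the proposed formulas, verify that the defining relations (\ref{H1})--(\ref{H2}) are satisfied, note that $|0\rangle$ is annihilated by $\beta$ and fixed by $q^{N}$ so that the universal property yields a surjection $\mathcal{F}\twoheadrightarrow V$, and combine with the spanning bound to deduce that this surjection is an isomorphism.

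For (ii), let $V\subset\mathcal{F}$ be a nonzero submodule and pick $0\neq v=\sum_{m=0}^{M}c_{m}|m\rangle\in V$ with $c_{M}\neq 0$. Iterating $\beta|m\rangle=|m-1\rangle$ for $m\geq 1$ together with $\beta|0\rangle=0$ (which holds because $\beta\in\mathcal{I}$, or equivalently from part~(i) by inspection) yields $\beta^{M}|m\rangle=0$ for $m<M$ and $\beta^{M}|M\rangle=|0\rangle$, hence $\beta^{M}v=c_{M}|0\rangle\in V$ and so $|0\rangle\in V$. Acting by $(\beta^{\ast})^{m}/(q^{2})_{m}$ then produces every $|m\rangle$ inside $V$, giving $V=\mathcal{F}$. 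The only delicate point in this final step is that the denominator $(q^{2})_{m}=\prod_{k=1}^{m}(1-q^{2k})$ must be invertible; this is precisely the hypothesis that $q$ is not a root of unity, and it is also where the argument would break down at a root of unity (where genuine submodules appear). I expect the bookkeeping for the basis reduction in (i), rather than any conceptual difficulty, to be the main point requiring care; the simplicity claim in (ii) is then a clean corollary of the ladder action and the non-vanishing of $(q^{2})_{m}$.
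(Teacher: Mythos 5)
Your proposal is correct and complete. Note that the paper does not actually prove this proposition: it states explicitly that the Fock space ``is for instance discussed in [Klimyk--Schm\"udgen, Chapter 5, Section 5.2]'' and refers the reader there, so there is no in-paper argument to compare against; your write-up supplies the standard proof. The spanning argument via the PBW-type basis and the ideal $\mathcal{I}=\langle \beta, q^{N}-1\rangle$ is right, and all three action formulas check out (in particular $\beta(\beta^{\ast})^{m}=(\beta^{\ast})^{m-1}(1-q^{2(N+m)})$ is the correct consequence of $\beta\beta^{\ast}=1-q^{2(N+1)}$ and the $q^{N}$-commutation). One small point of care: the Vandermonde argument from the distinct $q^{N}$-eigenvalues only yields $c_{m}|m\rangle=0$ for each $m$, which does not give $c_{m}=0$ unless you already know $|m\rangle\neq 0$ in the quotient (equivalently, that $\mathcal{I}$ is proper and $(\beta^{\ast})^{m}\notin\mathcal{I}$). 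Your ``alternative route'' --- building the abstract module $V=\bigoplus_{m}\mathbb{C}(q)|m\rangle$ with the stated formulas, checking the relations, and factoring the resulting surjection $\mathcal{H}_{q}\twoheadrightarrow V$ through $\mathcal{F}$ --- is in fact the step that closes this gap, so it should be regarded as part of the proof rather than an optional variant. The simplicity argument in (ii), including the observation that invertibility of $(q^{2})_{m}$ is exactly where the root-of-unity hypothesis enters, is correct.
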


In what follows we will also make use of the dual basis $\{\langle
m|\}_{m\in \mathbb{Z}_{\geq 0}}\subset \mathcal{\tilde{F}}$, i.e. $\langle
m|m^{\prime }\rangle =\delta _{m,m^{\prime }}$ and%
\begin{equation}
\langle m|q^{N}=q^{m}\langle m|,\quad \langle m|\beta ^{\ast
}=(1-q^{2m})\langle m-1|,\quad \langle m|\beta =\langle m+1|\;.
\label{dualFock}
\end{equation}%
It will be sometimes convenient to employ the following vector space
isomorphism $\imath :\mathcal{\tilde{F}}\rightarrow \mathcal{F}$ between the
Fock space and its dual: map the bra-vector $\langle m|$ onto the ket-vector
$(q^{2})_{m}|m\rangle $. This induces a scalar product on $\mathcal{F}$,
which by abuse of notation we also denote by $\langle \;|\;\rangle $ and
which we assume to be antilinear in the first factor. With respect to this
inner product $\beta $, $\beta ^{\ast }$ are adjoints of each other and $%
(q^{\pm N})^{\ast }=q^{\pm N}$.

Considering the $n$-fold tensor product $\mathcal{F}^{\otimes n}$ we can
parametrize the standard basis $\{|m_{1},\ldots ,m_{n}\rangle
:=|m_{1}\rangle \otimes \cdots \otimes |m_{n}\rangle :m_{i}\in \mathbb{Z}%
_{\geq 0}\}\subset \mathcal{F}^{\otimes n}$ in terms of partitions $\lambda
\in \mathcal{A}_{k,n}^{+}$: denote by $\mathcal{F}_{k}^{\otimes n}\subset
\mathcal{F}^{\otimes n}$ the subspace spanned by $\{~|\lambda \rangle
:\lambda \in \mathcal{A}_{k,n}^{+}\}$, where $m_{i}(\lambda )$ is the
multiplicity of the part $i$ in $\lambda $ and $|\lambda \rangle
:=|m_{1}(\lambda )\rangle \otimes \cdots \otimes |m_{n}(\lambda )\rangle $.
Obviously, we have $\mathcal{F}^{\otimes n}=\tbigoplus_{k\geq 0}\mathcal{F}%
_{k}^{\otimes n}$ with $\mathcal{F}_{0}^{\otimes n}=\mathbb{C}%
(q)|\varnothing \rangle \cong \mathbb{C}(q)$. We denote by $\cup _{k\geq
0}\{\langle \lambda |~:\lambda \in \mathcal{A}_{k,n}^{+}\}$\ the
corresponding dual basis with $\langle \lambda |\mu \rangle =\delta
_{\lambda \mu }$. N.B. the above vector space isomorphism $\imath :\mathcal{%
\tilde{F}}\rightarrow \mathcal{F}$ generalises trivially to $\imath :%
\mathcal{\tilde{F}}^{\otimes n}\rightarrow \mathcal{F}^{\otimes n}$ by
setting $\langle \lambda |~\mapsto b_{\lambda }(q^{2})|\lambda \rangle $.

\begin{remark}
The physical interpretation of the operators $\beta $, $\beta ^{\ast }$ is
that they annihilate and create a $q$-boson, respectively. The tensor
product $\mathcal{F}^{\otimes n}$ then describes highly
correlated quantum particles on a one-dimensional lattice with $n$-sites
with $m_{i}$ being the occupation number at site $i$.
\end{remark}

Because of the homomorphism (\ref{h1}) we can view $\mathcal{F}^{\otimes n}$
as a $U_{n}$-module. This module is reducible and there is a natural
decomposition into irreducible submodules. To describe the latter we recall
some known facts first; compare with \cite[Section 3, page 7]{Brundan}.

Recall that the vector representation $V=\mathbb{C}\{v_{1},...,v_{n}\}$ of $%
U_{n}$ associated with the fundamental weight $\omega _{1}$ is given by%
\begin{equation}
E_{i}v_{r}=\delta _{i,r-1}v_{r-1},\qquad F_{i}v_{r}=\delta
_{r,i}v_{r+1},\qquad K_{i}v_{r}=q^{\delta _{i,r}}v_{r},  \label{Vrep}
\end{equation}%
with$\;i=1,2,\ldots ,n-1$. One easily verifies, that $\bar{v}_{r}=v_{r}$ is
a compatible bar involution; see the discussion in \cite[Section 3]{Brundan}
on how to induce compatible bar involutions on tensor products of $V$%
.\smallskip

Define a right action of the Hecke algebra $H_{k}$ on $V^{\otimes k}$ with $%
M_{\mu }.T_{j}:=\mathcal{R}_{j,j+1}^{-1}M_{\mu }$ where $\{M_{\mu }:=v_{\mu
_{k}}\otimes \cdots \otimes v_{\mu _{2}}\otimes v_{\mu _{1}}:1\leq \mu
_{i}\leq n\}$ is the standard basis %
in $V^{\otimes k}$ and $%
\mathcal{R}^{-1}:V\otimes V\rightarrow V\otimes V$ is given by
\begin{equation}
\mathcal{R}^{-1}(v_{r}\otimes v_{s})=\left\{
\begin{array}{cc}
v_{s}\otimes v_{r}, & \text{if }r<s \\
q^{-1}v_{r}\otimes v_{r}, & \text{if }r=s \\
v_{s}\otimes v_{r}-(q-q^{-1})v_{r}\otimes v_{s}, & \text{if }r>s%
\end{array}%
\right.
\end{equation}%
for $r,s=1,\ldots ,n$ and we have $T_{i}^{2}=(q^{-1}-q)T_{i}+1$ as well as $%
T_{i}T_{i+1}T_{i}=T_{i+1}T_{i}T_{i+1}$. Given a permutation $\sigma \in
\mathfrak{S}_{k}$ set as usual $T_{\sigma }=T_{i_{1}}\cdots T_{i_{r}}$ where
$\sigma _{i_{1}}\cdots \sigma _{i_{r}}$ is the reduced expression of $\sigma
$ into elementary transpositions. Employing this action of the Hecke algebra
we now discuss two different versions of $q$-analogues of the symmetric
tensor algebra which we will then identify with $\mathcal{F}^{\otimes n}$
and its dual $\mathcal{\tilde{F}}^{\otimes n}$.

The $k^{\,\text{th}}$\emph{\ divided symmetric power} $S^{k}(V)$ is defined as $%
S^{k}(V)=V^{\otimes k}.X_{k}$ where
\begin{equation}
X_{k}=\sum_{w\in \mathfrak{S}_{k}}q^{\ell (w_{k})-\ell (w)}T_{w}  \label{Xk}
\end{equation}%
is bar-invariant and $w_{k}$ is the longest element in $\mathfrak{S}_{k}$
with $\ell (w_{k})=k(k-1)/2$. Note that $T_{i}X_{k}=X_{k}T_{i}=q^{-1}X_{k}$.
Set $S(V):=\tbigoplus_{k\in \mathbb{Z}_{\geq 0}}S^{k}(V)$.

In contrast the \emph{quantum symmetric tensor algebra} $\tilde{S}(V)$ of $V$
is the tensor algebra $T(V):=\tbigoplus_{k\geq 0}V^{\otimes k}$ divided by
the two sided ideal $I$ generated from the elements $\{v_{j}\otimes
v_{i}-q^{-1}v_{i}\otimes v_{j}:1\leq i<j\leq n\}$. Denote by $\tilde{S}%
^{k}(V)$ the $k$-th homogeneous component, that is the invariant subspace
under the natural action of the Hecke algebra on $V^{\otimes k}$.

\begin{proposition}[quantum symmetric tensor algebra]
There exist $U_{n}$-module isomorphisms $\mathcal{F}^{\otimes n}\cong S(V)$
and $\mathcal{\tilde{F}}^{\otimes n}\cong \tilde{S}(V)$ such that $\mathcal{F%
}_{k}^{\otimes n}$ and $\mathcal{\tilde{F}}_{k}^{\otimes n}$ are mapped onto
$S^{k}(V)$ and $\tilde{S}^{k}(V)$, respectively.
\end{proposition}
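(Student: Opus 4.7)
The plan is to construct explicit linear bijections on bases indexed by compositions and then verify $U_n$-equivariance by direct computation using the Jordan--Schwinger formulas~(\ref{h1}) together with the iterated coproduct of the Chevalley generators.

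Both $\mathcal F_k^{\otimes n}$ and $\tilde{\mathcal F}_k^{\otimes n}$ carry natural bases indexed by compositions $m=(m_1,\ldots,m_n)\in\mathbb Z_{\geq 0}^n$ with $\sum m_i=k$. For $S^k(V)=V^{\otimes k}\cdot X_k$ a basis is furnished by the symmetrised tensors $(v_1^{\otimes m_1}\otimes\cdots\otimes v_n^{\otimes m_n})\cdot X_k$, since $v_I\cdot X_k$ depends only on the weakly ordered rearrangement of the index sequence $I$. For $\tilde S^k(V)=V^{\otimes k}/I$, a standard normal-form argument for the rewriting rule $v_jv_i\mapsto q^{-1}v_iv_j$ ($i<j$) produces the basis of ordered monomials $\{v_1^{m_1}\cdots v_n^{m_n}\}$. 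All four bases have cardinality $\binom{n+k-1}{k}$, and matching them up defines a preliminary linear bijection in each case, with a scalar normalisation $c_m\in\mathbb C(q)$ to be pinned down by equivariance.

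For $U_n$-equivariance, the $K_i$-action is immediate: both $q^{N_i}$ under $h$ and the iterated coproduct of $K_i$ on $V^{\otimes k}$ act as multiplication by $q^{m_i}$ on a type-$m$ vector. For $F_i$, I would iterate $\Delta F_i=F_i\otimes 1+K_{i,i+1}\otimes F_i$ to obtain
\[
\Delta^{(k-1)}F_i=\sum_{a=1}^{k}K_{i,i+1}^{\otimes(a-1)}\otimes F_i\otimes 1^{\otimes(k-a)},
\]
and compute its action on the symmetrised tensor of type $m$. Only positions occupied by $v_i$ contribute, and a $q$-weighted sum over such positions collapses, upon resymmetrising by $X_k$, to a single symmetric tensor of type $(m_1,\ldots,m_i-1,m_{i+1}+1,\ldots,m_n)$. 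Matching with $h(F_i)|m_1,\ldots,m_n\rangle=-q^{-N_{i+1}}\beta_i\beta_{i+1}^*/(q-q^{-1})\,|m_1,\ldots,m_n\rangle$ via the Fock formulas~(\ref{Fock}) then determines $c_m$ uniquely; the $E_i$ check is analogous. The case of $\tilde\phi$ is even more transparent, since $\tilde S(V)$ is the quantum polynomial algebra in $v_1,\ldots,v_n$, and multiplication by $v_i$, a natural $q$-derivation $\partial_i$, and the grading operator $K_i$ realise a copy of $\mathcal H_q$ at each slot satisfying exactly the relations (\ref{qboson1})--(\ref{qboson3}).

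The main technical obstacle is the $q$-power bookkeeping: summing over contributing positions in the iterated coproduct produces a geometric-type sum whose reconciliation with the single-term boson action requires a $q$-binomial identity that ultimately follows from~(\ref{crucial}), namely $\beta\beta^*=1-q^{2(N+1)}$. As an abstract sanity check one can argue that both $\mathcal F_k^{\otimes n}$ and $S^k(V)$ are finite-dimensional highest weight $U_n$-modules of weight $k\omega_1$ (with highest weight vectors $|k,0,\ldots,0\rangle$ and $v_1^{\otimes k}\cdot X_k$) and of dimension $\binom{n+k-1}{k}$, so at generic $q$ both must be isomorphic to the unique irreducible of that highest weight; the construction above makes this isomorphism explicit, and the parallel story holds for $\tilde{\mathcal F}^{\otimes n}$ and $\tilde S(V)$.
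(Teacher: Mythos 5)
Your proposal is correct and follows essentially the same route as the paper: both work with the normalised symmetrised monomial basis $M_\lambda X_k$ (resp.\ the ordered monomials in the quotient $\tilde S^k(V)$), iterate the coproduct of the Chevalley generators, collapse the resulting $q$-weighted sum over positions into a single $[m]_q$-multiple of a basis vector, and match this against the Jordan--Schwinger/Fock-space action to fix the normalisation. The one element not in the paper --- your closing dimension-plus-highest-weight argument identifying both sides with the irreducible of highest weight $k\omega_1$ at generic $q$ --- is a legitimate shortcut that already suffices for the bare existence statement, though the explicit basis matching is what the paper needs later for the canonical-basis identification.
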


\begin{proof}
Following \cite[Section 5]{Brundan} define two bases%
\begin{equation}
X_{\lambda }=\frac{1}{[m_{1}]_q!\cdots \lbrack m_{n}]_q!}M_{\lambda }X_{k}\quad
\quad \text{and}\quad \text{\quad }\tilde{X}_{\lambda }:=\pi _{k}(M_{\lambda
})  \label{canbasis}
\end{equation}%
in $S^{k}(V)$ and $\tilde{S}^{k}(V)$, respectively, where $\lambda \in
\mathcal{A}_{k,n}^{+},$ $m_{i}=m_{i}(\lambda )$, $[m]_q:=(q^m-q^{-m})/(q-q^{-1})$ and $\pi _{k}:V^{\otimes
k}\twoheadrightarrow \tilde{S}^{k}(V)$ is the quotient map. We claim that
the maps $\mathcal{F}_{k}^{\otimes n}\ni |\lambda \rangle \mapsto X_{\lambda
}$ and $\mathcal{\tilde{F}}_{k}^{\otimes n}\ni \langle \lambda |\mapsto
\tilde{X}_{\lambda }$ are $U_{n}$-module isomorphisms with $U_{n}$ acting on
$\mathcal{F}_{k}^{\otimes n}$ via the homomorphism $h$ in (\ref{h1}) and on $%
\mathcal{\tilde{F}}_{k}^{\otimes n}$ via $h^{\ast }\circ \Theta $, where $%
\Theta $ is the algebra anti-automorphism $\Theta (E_{i})=F_{i},\;\Theta
(F_{i})=E_{i},\;\Theta (K_{i})=K_{i}$ and $h^{\ast }$ is the map obtained by
taking the adjoint of the image under $h$.

Exploiting the coproduct (\ref{coprod}) we compute the following action on a
monomial basis vector%
\begin{multline*}
\Delta ^{k}(E_{i})M_{\lambda ^{\prime }}=q^{m_{i+1}-1}M_{(\ldots ,\underset{%
m_{i}+1}{\underbrace{\tiny i,\ldots ,i}},\underset{m_{i+1}-1}{\underbrace{\small %
i+1,\ldots ,i+1}},\ldots )} \\
+q^{m_{i+1}-2}M_{(\ldots ,\underset{m_{i}}{\underbrace{\small i,\ldots ,i}},i+1,i,%
\underset{m_{i+1}-2}{\underbrace{\small i+1,\ldots ,i+1}},\ldots )}+\cdots \\
+M_{(\ldots ,\underset{m_{i}}{\underbrace{\small i,\ldots ,i}},\underset{m_{i+1}-1}{%
\underbrace{i+1,\ldots ,i+1}},i,\ldots )}
\end{multline*}%
Employing that for any permutation $\mu $ of $\lambda $ we have $M_{\mu
}X_{k}=q^{-\ell (\mu ,\lambda )}X_{\lambda }$ where $\ell (\mu ,\lambda )$
is the length of the shortest permutation which brings $\mu $ into $\lambda $
(see \cite[Section 5, eqn (5.3)]{Brundan}) we find that $\Delta
^{k}(E_{i})M_{\lambda }.X_{k}=[m_{i+1}]_{q}M_{\nu }.X_{k}$ as well as $%
\Delta ^{k}(E_{i})\pi _{k}(M_{\lambda })=[m_{i+1}]_{q}\pi _{k}(M_{\nu })$,
where $\nu \in \mathcal{A}_{k,n}^{+}$ is the partition obtained by removing
a part $(i+1)$ and adding a part $i$. Thus, $\Delta ^{k}(E_{i})X_{\lambda
}=[m_{i}+1]_{q}X_{\nu }$ and $\Delta ^{k}(E_{i})\tilde{X}_{\lambda
}=[m_{i+1}]_{q}\tilde{X}_{\nu }$. In comparison we have according to the
homomorphism (\ref{h1}) that%
\begin{equation*}
\frac{\beta _{i}^{\ast }\beta _{i+1}q^{-N_{i}}}{1-q^{2}}|\lambda \rangle
=[m_{i}+1]_{q}|\nu \rangle \quad \quad \text{and}\quad \quad \langle \lambda
|\frac{\beta _{i}\beta _{i+1}^{\ast }q^{-N_{i+1}}}{1-q^{2}}=\langle \nu
|[m_{i+1}]_{q}\;.
\end{equation*}%
The computation for the generator $F_{i}$ is similar and it is trivial for $%
K_{i}$.
\end{proof}

\begin{remark}
Let $\iota :S^{k}(V)\hookrightarrow V^{\otimes k}$ be the inclusion map and $%
\pi _{k}:V^{\otimes k}\twoheadrightarrow \tilde{S}^{k}(V)$ the quotient map.
There exists a bilinear form $\langle ~\cdot ~|~\cdot ~\rangle $ on $%
V^{\otimes k}$ which induces a pairing $\langle ~\cdot ~|~\cdot ~\rangle :%
\tilde{S}^{k}(V)\times S^{k}(V)\rightarrow \mathbb{C}(q)$ by setting $%
\langle Y|X\rangle :=\langle Y|\iota (X)\rangle =\langle \pi
_{k}(Y),X\rangle $ for $X\in S^{k}(V),Y\in V^{\otimes k}$; see \cite[Section
5, para after eqn (5.2)]{Brundan} and references therein. This pairing
coincides with bracket of the Fock space and its dual, i.e. $\langle
M_{\lambda }|\iota (X_{\mu })\rangle =\langle \pi _{k}(M_{\lambda })|X_{\mu
}\rangle =\left\langle \lambda |\mu \right\rangle =\delta _{\lambda \mu }$.
\end{remark}

We are now turning to the affine algebra and consider for simplicity $%
\mathbf{\hat{U}}_{n}$ instead of $\hat{U}_{n}$. First we remind the reader
that $V$ can be turned into a so-called evaluation module $V(a)=V\otimes
\mathbb{C}[a,a^{-1}]$ for $\mathbf{\hat{U}}_{n}$ setting%
\begin{equation}
E_{n}v_{r}=a~\delta _{r,1}v_{n},\qquad F_{n}v_{r}=a^{-1}\delta
_{r,n}v_{1},\qquad K_{n,1}v_{r}=q^{\delta _{r,n}-\delta _{r,1}}v_{r}\;.
\end{equation}%
We now have the following:

\begin{proposition}[Kirillov-Reshetikhin module isomorphism]
\label{KRmodule}The vector space $\mathcal{F}_{k}^{\otimes n}\otimes \mathbb{%
C}[z,z^{-1}]$ viewed as $\mathbf{\hat{U}}_{n}$-module\ is isomorphic to the
Kirillov-Reshetikhin module $W^{1,k}=W(k\omega _{1})$. That is, it coincides
with the irreducible submodule of highest weight $k\omega _{1}$ of the
following (reducible) tensor product of $\mathbf{\hat{U}}_{n}$ evaluation
modules,
\begin{equation}
V(zq^{-k+1})\otimes V(zq^{-k+3})\otimes \cdots \otimes V(zq^{k-1})\;.
\label{KR}
\end{equation}
\end{proposition}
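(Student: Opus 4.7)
The plan is to combine the $U_n$-module isomorphism $\mathcal{F}_k^{\otimes n}\cong S^k(V)$ from the previous proposition with the standard realisation of $W^{1,k}$ as the submodule of $V(a_1)\otimes\cdots\otimes V(a_k)$ generated by a $U_n$-highest weight vector, where the evaluation parameters form the $q$-string $a_i=zq^{2i-k-1}$. Under the previous isomorphism, the Fock vector $|m_1=k,m_2=\ldots=m_n=0\rangle$ is mapped to $v_1^{\otimes k}\cdot X_k$ (up to a scalar), which has $U_n$-weight $k\omega_1$ and is annihilated by $E_1,\ldots,E_{n-1}$; via the inclusion $S^k(V)\hookrightarrow V^{\otimes k}$ it coincides with the $U_n$-highest weight vector $v_1^{\otimes k}$. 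Extending scalars to $\mathbb{C}[z,z^{-1}]$ then promotes everything to a $\mathbf{\hat{U}}_n$-module, with the action on the left-hand side coming from $h_z$.

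The central calculation is to match two actions of $E_n$, $F_n$, $K_n^{\pm 1}$ on $S^k(V)\otimes\mathbb{C}[z,z^{-1}]$: on one side, the action pulled back through $h_z$; on the other, the action induced from the tensor product of evaluation modules $V(a_i)$ via the $k$-fold coproduct. I would expand $\Delta^{(k-1)}(E_n)$ on the basis $X_\lambda$ using the coproduct (\ref{coprod}) and the evaluation formulas for $V(a_i)$, restrict the resulting vector to the symmetrizer image, and compare with the $q$-boson expression $-zq^{-N_n}\beta_n^*\beta_1/(q-q^{-1})$ acting on $|\lambda\rangle$ term by term. The particular choice of the $q$-string is forced here: the shifts $q^{2i-k-1}$ are exactly what is needed to absorb the $q$-factors produced when the $K_{n,1}$ components of the coproduct pass over the intermediate tensor factors, and they combine into a single prefactor of $z$ that matches $h_z(E_n)$. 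The check for $F_n$ is symmetric, while $K_n^{\pm 1}\mapsto q^{\mp N_n}$ follows directly from the coproduct.

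Once both actions agree, the identification $\mathcal{F}_k^{\otimes n}\otimes\mathbb{C}[z,z^{-1}]\cong W^{1,k}$ follows from the standard type-$A$ fact (Chari--Pressley) that the $\mathbf{\hat{U}}_n$-submodule of $V(zq^{-k+1})\otimes\cdots\otimes V(zq^{k-1})$ generated by $v_1^{\otimes k}$ is precisely $W^{1,k}$, together with the observation that this submodule coincides as a vector space with $S^k(V)\otimes\mathbb{C}[z,z^{-1}]$, since $v_1^{\otimes k}$ already generates $S^k(V)$ as an irreducible $U_n$-module. The main obstacle is the combinatorial bookkeeping in the central calculation: tracking the powers of $q$ introduced by the $q$-symmetrizer $X_k$, the canonical basis normalisation $X_\lambda=M_\lambda X_k/([m_1]_q!\cdots[m_n]_q!)$, and the coproduct of $K_{n,1}$, and verifying that they combine exactly to produce the powers of $z$ in $h_z(E_n)$ and $h_z(F_n)$.
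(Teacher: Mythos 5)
Your proposal is correct in outline and shares the paper's skeleton: both arguments start from the $U_{n}$-module isomorphism $\mathcal{F}_{k}^{\otimes n}\cong S^{k}(V)$, both invoke Chari--Pressley, and both pin down the $q$-string of evaluation parameters by comparing $h_{z}(E_{n})$, $h_{z}(F_{n})$ with the coproduct action on $V(z_{1})\otimes\cdots\otimes V(z_{k})$. The genuine difference is in how much must be computed. You propose to match the two $\mathbf{\hat{U}}_{n}$-actions term by term on every canonical basis vector $X_{\lambda}$ and then identify the result with the submodule generated by $v_{1}^{\otimes k}$; this yields an explicit intertwiner but forces you to verify in addition that the symmetrizer image $V^{\otimes k}X_{k}$ is actually stable under $\Delta^{k}(E_{n})$ and $\Delta^{k}(F_{n})$ for your chosen ordering of the parameters (the sub-versus-quotient issue for $q$-strings), and the bookkeeping is substantially heavier than your last paragraph suggests. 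The paper takes a shortcut: since the restriction to $U_{n}$ is already irreducible, the $\mathbf{\hat{U}}_{n}$-module is determined up to isomorphism by its Drinfeld polynomial, so it suffices to evaluate $h_{z}(E_{0})$ and $h_{z}(F_{0})$ on just two vectors --- $|n^{k}\rangle\mapsto v_{n}^{\otimes k}$ and the adjacent weight vector $|(n,\dots,n,1)\rangle$ --- compare with the evaluation-module action to read off $z_{i}=zq^{-k+2i-1}$, and then let the classification theorem finish the argument. Your route avoids relying on the Drinfeld-polynomial classification of the irreducible module but costs a full action-matching computation; the paper's route is shorter at the price of leaning on that classification. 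Both are valid.
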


\begin{proof}
Employing results in \cite{CP95,CP96,CP98} on the classification of
finite-dimensional type 1 representations of $\mathbf{\hat{U}}_{n}$ ($\tilde{%
K}=\tilde{K}_{1}^{-1}\tilde{K}_{2}^{-1}\cdots \tilde{K}_{n-1}^{-1}$) and the
previous result $\mathcal{F}_{k}^{\otimes n}\cong S^{k}(V),$ it suffices to
compute the evaluation parameters. Consider in $\mathcal{A}_{k,n}^{+}$ the
partitions $\lambda =n^{k}=(n,n,\ldots ,n)$ and $\mu =(n,n,\ldots ,n,1)$.
Under the previously stated isomorphism of $U_{n}$-modules $|\lambda \rangle
$ and $|\mu \rangle $ are mapped to the following vectors in $S^{k}(V)$%
\begin{equation*}
|\lambda \rangle \mapsto \frac{1}{[k]!}v_{n}\otimes \cdots \otimes
v_{n}\quad \text{and\quad }|\mu \rangle ~\mapsto \frac{1}{[k-1]!}%
\sum_{i=1}^{k}q^{i-1}v_{n}\otimes \cdots \otimes \underset{i}{v_{1}}\otimes
\cdots \otimes v_{n}\;.
\end{equation*}%
The $\mathbf{\hat{U}}_{n}$-action on $\mathcal{F}_{k}^{\otimes n}\otimes
\mathbb{C}[z,z^{-1}]$ yields (compare with (\ref{h2}))%
\begin{equation*}
h_{z}(E_{0})|\mu \rangle =z[k]_{q}|\lambda \rangle \quad \text{and\quad }%
h_{z}(F_{0})|\lambda \rangle =z^{-1}|\mu \rangle \;.
\end{equation*}%
In comparison the $\mathbf{\hat{U}}_{n}$-action on $V(z_{1})\otimes
V(z_{2})\otimes \cdots \otimes V(z_{k})$ gives
\begin{gather*}
\Delta ^{k}(E_{0})v_{n}\otimes \cdots \otimes \underset{i}{v_{1}}\otimes
\cdots \otimes v_{n}=z_{i}q^{k-i}v_{n}\otimes \cdots \otimes v_{n}, \\
\Delta ^{k}(F_{0})v_{n}\otimes \cdots \otimes
v_{n}=\sum_{i=1}^{k}z_{i}^{-1}q^{i-1}v_{n}\otimes \cdots \otimes \underset{i}%
{v_{1}}\otimes \cdots \otimes v_{n}\;.
\end{gather*}%
Hence, acting with $X_{k}$ from the right we must have that $%
z_{i}=zq^{-k+2i-1}$ as asserted. From this it follows that the Drinfeld
polynomials \cite{CP95,CP96,CP98} are given by $%
P_{1}(z)=(1-zq^{-k+1})(1-zq^{-k+3})\cdots (1-zq^{k-1})$ and $P_{l}=1$ for $%
l\neq 1$ which fixes the Kirillov-Reshetikhin module up to isomorphism.
\end{proof}

\subsection{Solutions to the Yang-Baxter equation}

We now discuss three particular solutions to the Yang-Baxter equation in
terms of the $q$-boson algebra. The first one has been previously obtained
by Bogoliubov, Izergin and Kitanine \cite{Bogoliubovetal}, the others are
new. They are special limits of solutions related to the $U_{q}\widehat{%
\mathfrak{sl}}(2)$ R-matrix \cite{Korff}.

Let $u$ be an invertible variable, called the spectral variable. Define \cite%
{Bogoliubovetal}%
\begin{equation}
L(u)=\left(
\begin{array}{cc}
1 & u~\beta ^{\ast } \\
\beta & u~1%
\end{array}%
\right) \in \limfunc{End}[\mathbb{C}^{2}(u)]\otimes \mathcal{H}_{q}\ ,
\label{L}
\end{equation}%
where the notation is shorthand for $L(u)=\bigl(%
\begin{smallmatrix}
1 & 0 \\
0 & 0%
\end{smallmatrix}%
\bigr)\otimes 1+u\bigl(%
\begin{smallmatrix}
0 & 1 \\
0 & 0%
\end{smallmatrix}%
\bigr)\otimes \beta ^{\ast }+\bigl(%
\begin{smallmatrix}
0 & 0 \\
1 & 0%
\end{smallmatrix}%
\bigr)\otimes \beta +u\bigl(%
\begin{smallmatrix}
0 & 0 \\
0 & 1%
\end{smallmatrix}%
\bigr)\otimes 1$.

\begin{proposition}[Bogoliubov, Izergin, Kitanine]
The L-operator satisfies the Yang-Baxter equation%
\begin{equation}
R_{12}(u,v)L_{1}(u)L_{2}(v)=L_{2}(v)L_{1}(u)R_{12}(u,v),  \label{ybe}
\end{equation}%
where $R\in \limfunc{End}(\mathbb{C}(u)^{2}\otimes \mathbb{C}(v)^{2})$ $%
\cong \limfunc{End}\mathbb{C}(u,v)^{4}$ is given by
\begin{equation}
R(u,v)=\left(
\begin{array}{cccc}
\frac{u-tv}{u-v} & 0 & 0 & 0 \\
0 & t & \frac{1-t}{u-v}u & 0 \\
0 & \frac{1-t}{u-v}v & 1 & 0 \\
0 & 0 & 0 & \frac{u-tv}{u-v}%
\end{array}%
\right) \   \label{R}
\end{equation}%
with respect to the basis $\{v_{0}\otimes v_{0},v_{0}\otimes
v_{1},v_{1}\otimes v_{0},v_{1}\otimes v_{1}\}$ and we have set $t=q^{2}$.
\end{proposition}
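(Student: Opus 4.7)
My plan is a direct matrix computation. Both sides of (\ref{ybe}) live in $\limfunc{End}(\mathbb{C}^{2}(u)\otimes\mathbb{C}^{2}(v))\otimes\mathcal{H}_{q}$, and since $L(u)$ has only four nonzero entries (namely $1,\,u\beta^{\ast},\,\beta,\,u$), the equation reduces to a finite set of identities in the $q$-boson algebra $\mathcal{H}_{q}$. I would first expand $L_{1}(u)L_{2}(v)$ and $L_{2}(v)L_{1}(u)$ in the basis $\{v_{0}\otimes v_{0},v_{0}\otimes v_{1},v_{1}\otimes v_{0},v_{1}\otimes v_{1}\}$ of $\mathbb{C}^{2}\otimes\mathbb{C}^{2}$, using the tensor notation $L_{1}(u)=\sum_{i,j}E_{ij}\otimes\mathbf{1}\otimes L(u)_{ij}$ and similarly for $L_{2}(v)$, so that each side becomes a $4\times 4$ matrix with entries in $\mathcal{H}_{q}$.

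Once both sides are written out, I would group the entries by the four diagonal blocks of $R(u,v)$. The top-left and bottom-right entries $(v_{0}\!\otimes\! v_{0}\mapsto v_{0}\!\otimes\! v_{0}$ and $v_{1}\!\otimes\! v_{1}\mapsto v_{1}\!\otimes\! v_{1})$ are immediate, since they only involve scalar factors $1$ and $uv$ respectively, multiplied by the common factor $(u-tv)/(u-v)$. The genuinely nontrivial entries come from the middle $2\times 2$ block corresponding to $v_{0}\otimes v_{1}$ and $v_{1}\otimes v_{0}$. Here one has to verify four equations of the form
\begin{equation*}
\text{l.h.s.}\;=\;t\cdot(\cdots) + \tfrac{(1-t)u}{u-v}\cdot(\cdots),\qquad
\text{l.h.s.}\;=\;\tfrac{(1-t)v}{u-v}\cdot(\cdots) + 1\cdot(\cdots).
\end{equation*}
After clearing the denominator $(u-v)$, each of these collapses, modulo polynomial identities in $u,v$, to the single relation $\beta\beta^{\ast}-t\,\beta^{\ast}\beta=1-t$ (with $t=q^{2}$), which is precisely the second identity of (\ref{H2}). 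The remaining off-diagonal entries vanish on both sides: on the left because $R(u,v)$ has no matrix elements mixing the two pairs $(v_{0}\!\otimes\! v_{0},v_{1}\!\otimes\! v_{1})$ with $(v_{0}\!\otimes\! v_{1},v_{1}\!\otimes\! v_{0})$, and on the right for the same structural reason.

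Concretely, a key line will be the comparison on $v_{0}\otimes v_{1}$. The left-hand side gives $L_{1}(u)L_{2}(v)(v_{0}\otimes v_{1})$ contributions whose $\mathcal{H}_{q}$-part involves the product $\beta\,\beta^{\ast}$ (when the second factor creates and the first annihilates), while the right-hand side gives the product $\beta^{\ast}\,\beta$ in the opposite order, multiplied by the coefficient $t$ coming from $R(u,v)$. Matching coefficients of $v_{0}\otimes v_{1}$ and $v_{1}\otimes v_{0}$ on the two sides reduces precisely to $\beta\beta^{\ast}=t\,\beta^{\ast}\beta+(1-t)$, and the rational factors $(u-tv)/(u-v)$, $t$, $(1-t)u/(u-v)$, $(1-t)v/(u-v)$ in $R(u,v)$ are then exactly what is needed to cancel the $u,v$-dependence.

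The proof is therefore a finite, algorithmic verification; there is no real obstacle of a conceptual kind. The main nuisance is purely bookkeeping: keeping track of in which tensor slot each generator $\beta,\beta^{\ast}$ acts, and which powers of $u,v$ are carried along. I would organise the verification as a table of the $16$ matrix entries, noting that $8$ are trivial, $4$ follow from a single use of the second relation in (\ref{H2}), and $4$ follow by symmetry (swap $u\leftrightarrow v$ and the roles of the two auxiliary spaces). The fact that the same R-matrix will later be used to build commuting transfer matrices makes this computation the linchpin for everything that follows, so doing it cleanly is worthwhile even though it is elementary.
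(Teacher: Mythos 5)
Your proposal is correct and follows essentially the same route as the paper: the paper also reduces (\ref{ybe}) to a case-by-case check of the matrix entries, invoking repeatedly the single relation $\beta\beta^{\ast}-t\beta^{\ast}\beta=1-t$ from (\ref{H2}), merely packaging the entries of $L$ in the uniform formula $\langle\sigma'|L(u)|\sigma\rangle=u^{\sigma}\bigl(\beta^{\sigma'}(\beta^{\ast})^{\sigma}-t(\beta^{\ast})^{\sigma}\beta^{\sigma'}\bigr)/(1-t)$ before making the same finite verification.
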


\begin{proof}
Observe that $L$ is a $2\times 2$ matrix with entries
\begin{equation*}
\langle \sigma ^{\prime }|L(u)|\sigma \rangle =u^{\sigma }\frac{\beta
^{\sigma ^{\prime }}(\beta ^{\ast })^{\sigma }-t(\beta ^{\ast })^{\sigma
}\beta ^{\sigma ^{\prime }}}{1-t},\quad \sigma ^{\prime }\sigma =0,1\;.
\end{equation*}%
Making a case-by-case distinction and using repeatedly that $\beta \beta
^{\ast }-t\beta ^{\ast }\beta =1-t$ the proof is a straightforward
computation.
\end{proof}

We now define a second solution to the Yang-Baxter equation, which we then
relate to the first, in terms of an \textquotedblleft infinite-dimensional
matrix\textquotedblright . Setting $\mathcal{F}((u))=\mathbb{C}((u))\otimes
\mathcal{F}$ let $L^{\prime }(u)\in \limfunc{End}[\mathcal{F}((u))]\otimes
\mathcal{H}_{q}$ be given by ($t=q^{2}$)%
\begin{equation}
\langle m^{\prime }|L^{\prime }(u)|m\rangle =\frac{u^{m}(\beta ^{\ast
})^{m}\beta ^{m^{\prime }}}{(t)_{m}}\;,  \label{L'}
\end{equation}%
where $|m\rangle $ respectively $\langle m^{\prime }|$ label the basis and
dual basis in the Fock space $\mathcal{F}$ of the $q$-boson algebra $%
\mathcal{H}_{q}$. Define another operator $R^{\prime }\in \limfunc{End}[%
\mathcal{F}((u))\otimes \mathcal{F}((v))]$ via%
\begin{equation}
\langle m_{1}^{\prime },m_{2}^{\prime }|R^{\prime }(u,v)|m_{1},m_{2}\rangle
=\left( \frac{u}{v}\right) ^{m_{1}}\QATOPD[ ] {m_{2}^{\prime }}{m_{1}}%
_{t}(u/v;t)_{m_{2}^{\prime }-m_{1}}~\delta _{m_{1}+m_{2},m_{1}^{\prime
}+m_{2}^{\prime }}  \label{R'}
\end{equation}%
where the $q$-deformed binomial coefficient is zero if $m_{2}^{\prime
}<m_{1} $, hence only terms survive for which $m_{2}^{\prime }\geq m_{1}$.

\begin{proposition}
We have the identity%
\begin{equation}
R_{12}^{\prime }(u,v)L_{1}^{\prime }(u)L_{2}^{\prime }(v)=L_{2}^{\prime
}(v)L_{1}^{\prime }(u)R_{12}^{\prime }(u,v)\;.  \label{ybe'}
\end{equation}%
Moreover, the $R^{\prime }$-operator is invertible: let $P:\mathcal{F}%
\otimes \mathcal{F}\rightarrow \mathcal{F}\otimes \mathcal{F}$ be the flip
operator $P|m_{1},m_{2}\rangle =|m_{2},m_{1}\rangle $, then $R^{\prime
}(u,v)PR^{\prime }(v,u)P=1$.
\end{proposition}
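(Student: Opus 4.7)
The plan is to establish both the Yang--Baxter identity (\ref{ybe'}) and the unitarity relation $R'(u,v)PR'(v,u)P=1$ by direct matrix-element computation in the tensor square of the auxiliary Fock space, reducing each statement to a standard $q$-hypergeometric identity.

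For the Yang--Baxter equation, I would fix input states $|n_1,n_2\rangle$ and output labels $m_1',m_2'$ in $\mathcal{F}((u))\otimes\mathcal{F}((v))$. Because the entries of $L'(u)$ given by (\ref{L'}) are normal-ordered monomials $u^{m}(\beta^{\ast})^{m}\beta^{m'}/(t)_{m}$ in the $q$-boson generators, and the matrix elements of $R'(u,v)$ are scalars supported on $k_1+k_2=m_1'+m_2'$ with $k_1\le m_2'$, each side of (\ref{ybe'}) collapses to a \emph{finite} sum of operator monomials in $\beta,\beta^{\ast}$ acting on the quantum space $\mathcal{H}_{q}$. Using repeatedly the identities $\beta\beta^{\ast}=1-q^{2(N+1)}$ and $\beta^{\ast}\beta=1-q^{2N}$ recorded in (\ref{crucial}) and (\ref{H2}), each side can be normal-ordered into the form $c(q,u,v)(\beta^{\ast})^{a}\beta^{b}$ with $(a,b)$ determined by the external data. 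Since $\mathcal{F}$ is a simple faithful $\mathcal{H}_{q}$-module, the resulting operator identity reduces to the matching of scalar coefficients, which after expansion is a $q$-hypergeometric sum in $x=u/v$ and $t=q^{2}$.

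A more conceptual alternative is to recognise $L'(u)$ and $R'(u,v)$ as limits of the $q$-symmetric fusion operators $L^{(k)}(u)$ and $R^{(k)}(u,v)$ constructed from $k$ copies of (\ref{L}) and (\ref{R}) at the shifted spectral parameters $uq^{-k+1},uq^{-k+3},\ldots,uq^{k-1}$, projected onto $S^{k}(\mathbb{C}^{2})$ via the symmetriser $X_{k}$ of (\ref{Xk}); this is in the same spirit as the Kirillov--Reshetikhin realisation of Proposition~\ref{KRmodule}. Under this interpretation, (\ref{ybe'}) descends from iterated applications of (\ref{ybe}) together with the intertwining property $T_{i}X_{k}=q^{-1}X_{k}$, and the $k\to\infty$ limit replaces the symmetric auxiliary space by the Fock space.

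For the unitarity statement, I would compute $\langle m_1',m_2'|R'(u,v)PR'(v,u)P|n_1,n_2\rangle$ directly. The $\delta$-functions in the entries of $R'$ force $m_1'+m_2'=n_1+n_2$, and the matrix element reduces to a single sum over an intermediate index $k$ with $n_2\le k\le m_2'$. After applying the $q$-binomial rearrangement
\[
\tbinom{k}{n_2}_{t}\tbinom{m_2'}{k}_{t}=\tbinom{m_2'}{n_2}_{t}\tbinom{m_2'-n_2}{k-n_2}_{t},
\]
the sum becomes
\[
\tbinom{m_2'}{n_2}_{t}\sum_{j=0}^{m_2'-n_2}(u/v)^{j}\tbinom{m_2'-n_2}{j}_{t}(v/u;t)_{j}(u/v;t)_{m_2'-n_2-j},
\]
which vanishes for $m_2'>n_2$ and equals $1$ for $m_2'=n_2$ by a specialisation of the $q$-Chu--Vandermonde identity (easily checked by hand in low degree), so the overall matrix element is $\delta_{m_1',n_1}\delta_{m_2',n_2}$. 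The principal technical difficulty in both parts is the bookkeeping of $q$-Pochhammer prefactors and the identification of the resulting sums with their closed forms; I expect the harder step to be the explicit normal-ordering in the Yang--Baxter calculation, which the fusion viewpoint in the second paragraph is designed to bypass.
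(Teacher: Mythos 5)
Your proposal is correct and follows essentially the same route as the paper: the paper also proves (\ref{ybe'}) by a direct normal-ordering computation in $\mathcal{H}_{q}$, its one explicitly stated ingredient being the reordering formula $\beta^{a}\tilde{\beta}^{b}=\sum_{r=0}^{\min(a,b)}t^{(a-r)(b-r)}\QATOPD[ ] {b}{r}\frac{[a]!}{[a-r]!}\,\tilde{\beta}^{b-r}\beta^{a-r}$ with $\tilde{\beta}=\beta^{\ast}/(1-t)$, which is exactly the tool your first paragraph invokes implicitly (note only that each side normal-orders to a \emph{sum} of monomials $(\beta^{\ast})^{a}\beta^{b}$ with $a-b$ fixed by the external data, and one matches coefficients of each basis monomial). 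Your unitarity computation via the $q$-Vandermonde-type sum supplies details the paper omits entirely, and is correct.
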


\begin{proof}
Exploiting (\ref{H1}), (\ref{H2}) and (\ref{crucial}) one first proves via
induction the relation%
\begin{equation*}
\beta ^{a}\tilde{\beta}^{b}=\sum_{r=0}^{\min (a,b)}t^{(a-r)(b-r)}\QATOPD[ ] {%
b}{r}\frac{[a]!}{[a-r]!}~\tilde{\beta}^{b-r}\beta ^{a-r},\qquad \tilde{\beta}%
:=\frac{\beta ^{\ast }}{1-t}
\end{equation*}%
with $[x]:=(1-t^{x})/(1-t)$ and then verifies the assertion after a somewhat
tedious but straightforward computation whose details we omit.
\end{proof}

To relate the two solutions, $L$ and $L^{\prime }$, we need yet another
operator $R^{\prime \prime }\in \limfunc{End}[\mathbb{C}(u)^{2}\otimes
\mathcal{F}(v)]$ defined as%
\begin{equation}
R^{\prime \prime }(u,v)=L(u/v)+%
\begin{pmatrix}
u/v~q^{2N} & 0 \\
0 & 0%
\end{pmatrix}%
=%
\begin{pmatrix}
1+u/v~q^{2N} & u/v~\beta ^{\ast } \\
\beta & u/v\cdot 1%
\end{pmatrix}%
~.  \label{R''}
\end{equation}

\begin{proposition}
The $R^{\prime \prime }$ operator satisfies the identity%
\begin{equation}
R_{12}^{\prime \prime }(u,v)L_{1}(u)L_{2}^{\prime }(v)=L_{2}^{\prime
}(v)L_{1}(u)R_{12}^{\prime \prime }(u,v)  \label{ybe''}
\end{equation}%
and possesses the inverse%
\begin{equation}
(R^{\prime \prime })^{-1}(u,v)=\frac{1}{1+u/v}%
\begin{pmatrix}
q^{-2N} & -q^{-2N}\beta ^{\ast } \\
-v/u~\beta q^{-2N} & 1+v/u~q^{-2N-2}%
\end{pmatrix}%
\;.  \label{Rtilde}
\end{equation}
\end{proposition}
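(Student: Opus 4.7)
The proposition makes two independent claims: the intertwining relation (\ref{ybe''}) and the inverse formula (\ref{Rtilde}). My plan is to verify both by a direct computation of matrix elements, the first in the tensor product of auxiliary spaces $\mathbb{C}^{2}\otimes \mathcal{F}$ and the second in $\mathbb{C}^{2}$ alone.

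\emph{Invertibility.} I would dispatch this easier half first. Viewing $R''(u,v)$ and $(R'')^{-1}(u,v)$ as $2\times 2$ matrices with entries in $\limfunc{End}(\mathcal{F})$, the product $R''\cdot (R'')^{-1}$ produces four entries that must match the identity matrix. Each entry contains at most two summands of the form $q^{\pm 2N}$, $\beta^{*}\beta\, q^{-2N}$, or $\beta\beta^{*}\, q^{-2N-2}$. The relations $\beta^{*}\beta=1-q^{2N}$ and $\beta\beta^{*}=1-q^{2N+2}$ of (\ref{crucial}), together with the commutation rules (\ref{H1}), collapse each entry immediately; for example the $(1,1)$-entry equals $\frac{1}{1+u/v}\bigl[(1+(u/v)q^{2N})q^{-2N}-\beta^{*}\beta\, q^{-2N}\bigr]$, which after substituting $\beta^{*}\beta=1-q^{2N}$ reduces to $\frac{1}{1+u/v}(1+u/v)=1$. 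The $(1,2)$, $(2,1)$ and $(2,2)$ entries collapse by identical manipulations to $0$, $0$ and $1$ respectively.

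\emph{Intertwining relation.} The identity (\ref{ybe''}) lives in $\limfunc{End}(\mathbb{C}^{2}\otimes \mathcal{F})\otimes \mathcal{H}_{q}$. My approach is to decompose $R''(u,v)=L(u/v)+(u/v)\,E_{11}\otimes q^{2N}$ and to exploit the close resemblance between $R''$ and $L$: absent the correction term, (\ref{ybe''}) would be an ``$LLL'$'' identity paralleling the standard ``$RLL$'' form of (\ref{ybe}). I would therefore verify the identity sector by sector, taking matrix elements $\langle \sigma',m'|\cdot|\sigma,m\rangle$ with $\sigma,\sigma'\in\{0,1\}$ and $m,m'\in\mathbb{Z}_{\geq 0}$. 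This produces four families of identities in $\mathcal{H}_{q}$, one per sector $(\sigma,\sigma')$. Within each sector the explicit form $\langle m''|L'(v)|m\rangle=v^{m}(\beta^{*})^{m}\beta^{m''}/(t)_{m}$ from (\ref{L'}), combined with the four simple entries of $L(u)$ in (\ref{L}), reduces the identity---after using $q^{N}\beta^{(*)}=\beta^{(*)}q^{N\mp 1}$ together with $\beta\beta^{*}-q^{2}\beta^{*}\beta=1-q^{2}$ to push all $q^{2N}$-factors to the right---to a balance among at most two or three terms in $\mathcal{H}_{q}$. As a sanity check, in the $(0,0)$-sector both sides evaluate to $(1+u/v)\cdot v^{m}(\beta^{*})^{m}\beta^{m'}/(t)_{m}$, the factor $1+u/v$ arising on the left from the sum of the two contributions $R''_{00}=1+(u/v)q^{2N}$ and $(u/v)\beta^{*}\cdot\beta$, and on the right from $(u/v)q^{2m}$ together with the telescoping $(t)_{m}/(t)_{m-1}=1-q^{2m}$.

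\emph{Main obstacle.} The delicate point is keeping track of how the $q^{2N}$-factors acting in the $\mathcal{F}$-auxiliary slot pass through the arbitrary strings $(\beta^{*})^{m}\beta^{m''}$ arising from $L'(v)$, and in particular checking that the correction $(u/v)\,E_{11}\otimes q^{2N}$ in $R''$ is precisely what is needed to compensate the discrepancy between the $(1,1)$-sector matrix elements of $L_{1}(u)L'_{2}(v)$ and $L'_{2}(v)L_{1}(u)$. A more conceptual alternative would be to derive (\ref{ybe''}) as the $s\to\infty$ degeneration of the known $U_{q}\widehat{\mathfrak{sl}}(2)$ R-matrix intertwining a spin-$1/2$ with a spin-$s$ evaluation representation, since the latter collapses to the Fock module $\mathcal{F}$ in this limit and should reproduce the form (\ref{R''}); I would fall back on this route only if the direct sector-by-sector check turns out to be impractical.
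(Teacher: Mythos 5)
Your overall strategy coincides with the paper's proof, which simply records that both claims are a direct consequence of the $q$-boson relations (\ref{H1}), (\ref{H2}) and (\ref{crucial}); your entrywise verification of (\ref{Rtilde}) is correct as written (only note that on the infinite-dimensional space $\mathbb{C}^{2}\otimes\mathcal{F}$ the identity $R''(R'')^{-1}=1$ does not by itself imply $(R'')^{-1}R''=1$, so you should run the same one-line check for the product in the other order; it works).

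The one concrete problem is your $(0,0)$-sector sanity check, and the error sits exactly at the bookkeeping issue you yourself identify as the main obstacle. On the left-hand side the off-diagonal entry $R''_{01}=(u/v)\beta^{*}$ acts on the \emph{auxiliary} Fock space $\mathcal{F}(v)$ (it forces $m''=m'-1$ and contributes the scalar $(u/v)(1-q^{2m'})$), whereas the accompanying $\beta=\langle 1|L(u)|0\rangle$ acts on the \emph{quantum} space $\mathcal{H}_{q}$ and multiplies the string $(\beta^{*})^{m}\beta^{m'-1}$ coming from $L'(v)$ from the left; these two operators live in different tensor factors and cannot be contracted to $\beta^{*}\beta=1-q^{2N}$ as your phrase ``$(u/v)\beta^{*}\cdot\beta$'' suggests. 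Carrying out the contraction correctly, and using $\beta(\beta^{*})^{m}=q^{2m}(\beta^{*})^{m}\beta+(1-q^{2m})(\beta^{*})^{m-1}$ together with $\beta^{m'}\beta^{*}=q^{2m'}\beta^{*}\beta^{m'}+(1-q^{2m'})\beta^{m'-1}$ (both following from (\ref{H2}) by induction), one finds that the two sides of the $(0,0)$-sector identity are indeed equal, but their common value is
\begin{equation*}
\frac{v^{m}}{(t)_{m}}\Bigl[\bigl(1+\tfrac{u}{v}(q^{2m}+q^{2m'}-q^{2(m+m')})\bigr)(\beta^{*})^{m}\beta^{m'}+\tfrac{u}{v}(1-q^{2m})(1-q^{2m'})(\beta^{*})^{m-1}\beta^{m'-1}\Bigr],
\end{equation*}
which reduces to your claimed $(1+u/v)\,v^{m}(\beta^{*})^{m}\beta^{m'}/(t)_{m}$ only when $m=0$ or $m'=0$. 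So the sector-by-sector method does go through and proves (\ref{ybe''}), but you should not expect the simple value you predicted to emerge; otherwise you risk concluding mid-computation that the proposition fails.
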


\begin{proof}
Once more the claimed identities are a direct consequence of the $q$%
-boson algebra relations (\ref{H1}), (\ref{H2}) and (\ref{crucial}).
\end{proof}

\subsection{The Yang-Baxter algebra}

The Yang-Baxter equation is naturally endowed with a coproduct: one easily
verifies that given the solutions $L,L^{\prime }$ the operators $\Delta
L=L_{2}L_{1}\in \limfunc{End}[\mathbb{C}^{2}(u)]\otimes \mathcal{H}%
_{q}^{\otimes 2}$ and $\Delta L^{\prime }=L_{2}^{\prime }L_{1}^{\prime }\in
\limfunc{End}[\mathcal{F}((u))]\otimes \mathcal{H}_{q}^{\otimes 2}$ are also
solutions to the quantum Yang-Baxter equation. Repeating the argument it is
therefore natural to consider the so-called monodromy matrices
\begin{equation}
T(u)=\Delta ^{n}L(u)=L_{n}(u)L_{n-1}(u)\cdots L_{1}(u)\in \limfunc{End}[%
\mathbb{C}^{2}(u)]\otimes \mathcal{H}_{q}^{\otimes n}  \label{T}
\end{equation}%
and%
\begin{equation}
T^{\prime }(u)=\Delta ^{n}L^{\prime }(u)=L_{n}^{\prime }(u)L_{n-1}^{\prime
}(u)\cdots L_{1}^{\prime }(u)\in \limfunc{End}[\mathcal{F}((u))]\otimes
\mathcal{H}_{q}^{\otimes n}\;.  \label{T'}
\end{equation}%
Much of the discussion which is to follow will focus on the matrix elements
of these two operators. We start the discussion with (\ref{T}).

Rewrite $T(u)=\left(
\begin{array}{cc}
A(u) & B(u) \\
C(u) & D(u)%
\end{array}%
\right) $ and for $\mathcal{O}=A,B,C,D\in \mathbb{C}[u]\otimes \mathcal{H}%
_{q}^{\otimes n}$ introduce the series expansions $\mathcal{O}%
(u)=\sum_{r\geq 0}u^{r}\mathcal{O}_{r}$. Note that the latter terminate for $%
r>n$ according to the definition (\ref{T}) and the $L$-operator (\ref{L}).

\begin{definition}
The Yang-Baxter algebra $\mathfrak{A}\subset \mathcal{H}_{q}^{\otimes n}$\
is the algebra generated by $\{A_{r},B_{r},C_{r},D_{r}\}_{r\geq 0}$ subject
to the commutation relations imposed by the Yang-Baxter equation (\ref{ybe}).
\end{definition}

Let $u,v$ be two independent variables then one easily checks that (\ref{ybe}%
) entails the relations
\begin{equation}
\mathcal{O}(u)\mathcal{O}(v)=\mathcal{O}(v)\mathcal{O}(u),\qquad \mathcal{O}%
=A,B,C,D  \label{aba0}
\end{equation}%
and setting ($t=q^{2}$)
\begin{eqnarray}
(u-v)A(u)B(v) &=&(ut-v)B(v)A(u)+(1-t)vB(u)A(v),  \label{aba1} \\
\left( u-v\right) D(u)B(v) &=&(u-tv)B(v)D(u)-(1-t)vB(u)D(v),  \label{aba2} \\
C(u)B(v)-tB(v)C(u) &=&\frac{(1-t)v}{u-v}~[A(v)D(u)-A(u)D(v)]\ .  \label{aba3}
\end{eqnarray}

\begin{proposition}
Introduce the co-product $\Delta :\mathfrak{A}\rightarrow \mathfrak{A}\times
\mathfrak{A}$,%
\begin{eqnarray}
\Delta A(x) &=&A(x)\otimes A(x)+C(x)\otimes B(x),  \notag \\
\Delta B(x) &=&B(x)\otimes A(x)+D(x)\otimes B(x),  \notag \\
\Delta C(x) &=&A(x)\otimes C(x)+C(x)\otimes D(x),  \notag \\
\Delta D(x) &=&B(x)\otimes C(x)+D(x)\otimes D(x)  \label{YBcop}
\end{eqnarray}%
and the co-unit $\varepsilon :\mathfrak{A}\rightarrow \mathbb{C}$,
\begin{equation*}
\varepsilon (A)=\varepsilon (D)=1\quad \text{and\quad }\varepsilon
(C)=\varepsilon (B)=0\,.
\end{equation*}%
Then $(\mathfrak{A},\Delta ,\varepsilon )$ is a bialgebra. That is, we have
the identities
\begin{equation*}
(\Delta \otimes 1)\Delta =(1\otimes \Delta )\Delta \qquad \text{and\qquad }%
\left( \varepsilon \otimes 1\right) \Delta =(1\otimes \varepsilon )\Delta
\cong 1\ .
\end{equation*}
\end{proposition}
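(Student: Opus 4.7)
The idea is that the four coproduct formulas in (\ref{YBcop}) are precisely the entries of a single matrix identity. Arrange the generators into the $2\times 2$ matrix $T(x)=\bigl(\begin{smallmatrix}A(x)&B(x)\\ C(x)&D(x)\end{smallmatrix}\bigr)$, and for $i=1,2$ write $T^{(i)}(x)$ for the same matrix but with its $\mathcal{H}_q^{\otimes n}$-entries interpreted as operators in the $i$-th factor of $\mathcal{H}_q^{\otimes n}\otimes\mathcal{H}_q^{\otimes n}$. Then
\[
\bigl(T^{(2)}(x)\,T^{(1)}(x)\bigr)_{pq}=\sum_{m}T^{(2)}_{pm}T^{(1)}_{mq}=\sum_{m}T_{mq}\otimes T_{pm},
\]
and a direct case check for $p,q\in\{1,2\}$ reproduces the four formulas in (\ref{YBcop}). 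Thus $\Delta T(x)=T^{(2)}(x)\,T^{(1)}(x)$, i.e.\ matrix multiplication in auxiliary space combined with entrywise tensor product on the quantum side.

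Once this reformulation is in place, the two bialgebra identities are essentially formal. For coassociativity, both $(\Delta\otimes\mathrm{id})\Delta T(x)$ and $(\mathrm{id}\otimes\Delta)\Delta T(x)$ unfold to the same triple product $T^{(3)}(x)\,T^{(2)}(x)\,T^{(1)}(x)$ in $(\mathcal{H}_q^{\otimes n})^{\otimes 3}$; one uses only that entries $T^{(i)}_{ab}$ and $T^{(j)}_{cd}$ commute whenever $i\neq j$, because they act on disjoint quantum tensor factors. Associativity of the $2\times 2$ matrix product therefore yields equality entry-by-entry on the generators $A_r,B_r,C_r,D_r$, and extends to all of $\mathfrak{A}$ by multiplicativity. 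The counit identity is equally immediate: $\varepsilon(A)=\varepsilon(D)=1$ and $\varepsilon(B)=\varepsilon(C)=0$ say exactly that $\varepsilon\bigl(T(x)\bigr)=\mathrm{I}$, the $2\times 2$ identity matrix, hence $(\varepsilon\otimes\mathrm{id})\bigl(T^{(2)}T^{(1)}\bigr)=\mathrm{I}\cdot T=T$ and dually $(\mathrm{id}\otimes\varepsilon)\bigl(T^{(2)}T^{(1)}\bigr)=T\cdot\mathrm{I}=T$.

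The genuine technical content, implicit in calling $(\mathfrak{A},\Delta,\varepsilon)$ a bialgebra, is well-definedness: $\Delta$ and $\varepsilon$ must respect the defining RTT relations (\ref{aba0})--(\ref{aba3}). For $\varepsilon$ this is trivial since $R(u,v)\mathrm{I}_1\mathrm{I}_2=\mathrm{I}_2\mathrm{I}_1 R(u,v)$. For $\Delta$ one needs to verify $R_{12}(u,v)\,(\Delta T)_1(u)(\Delta T)_2(v)=(\Delta T)_2(v)(\Delta T)_1(u)\,R_{12}(u,v)$, where the auxiliary-space subscripts $1,2$ now refer to two independent copies of $\mathbb{C}^2$. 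I would prove this by the standard QISM multiplicativity argument: expand
\[
(\Delta T)_1(u)(\Delta T)_2(v)=T_1^{(2)}(u)T_1^{(1)}(u)\,T_2^{(2)}(v)T_2^{(1)}(v)=T_1^{(2)}(u)T_2^{(2)}(v)\,T_1^{(1)}(u)T_2^{(1)}(v)
\]
using that operators with different quantum-factor superscripts commute, then apply (\ref{ybe}) once in quantum tensor factor $2$ and once in quantum tensor factor $1$ to push $R_{12}(u,v)$ through to the right, and finally re-commute to obtain $(\Delta T)_2(v)(\Delta T)_1(u)\,R_{12}(u,v)$. The only real obstacle is careful bookkeeping of the two distinct index sets (auxiliary $1,2$ versus quantum $(1),(2)$); no genuinely new idea beyond the Yang-Baxter equation itself is required.
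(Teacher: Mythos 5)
Your proof is correct and takes essentially the same route the paper relies on: the identity $\Delta T(x)=T^{(2)}(x)T^{(1)}(x)$ is exactly the matrix form of the coproduct $\Delta L(x)=L_{02}(x)L_{01}(x)$ introduced earlier in the text, and coassociativity/counitality then follow from associativity of the $2\times 2$ matrix product, which is the "straightforward computation" the paper leaves to the reader. You in fact supply more than the paper does by explicitly checking that $\Delta$ and $\varepsilon$ are compatible with the RTT relations via the standard quantum-inverse-scattering commutation argument; that verification is sound.
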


\begin{proof}
All stated bi-algebra axioms are easily checked via a straightforward
computation.
\end{proof}

\begin{lemma}
We have the dependencies%
\begin{eqnarray}
(1-t)B(u) &=&u~A(u)\beta _{1}^{\ast }-ut\beta _{1}^{\ast }A(u)=D(u)\beta
_{n}^{\ast }-t\beta _{n}^{\ast }D(u)  \label{BAD} \\
(1-t)uC(u) &=&u\beta _{n}A(u)-utA(u)\beta _{n}=\beta _{1}D(u)-tD(u)\beta _{1}
\label{CAD}
\end{eqnarray}
\end{lemma}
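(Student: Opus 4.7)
My strategy is to exploit the two natural recursive decompositions of the monodromy matrix $T(u)=L_n(u)\cdots L_1(u)$: namely $T(u)=\tilde T(u)L_1(u)$ with $\tilde T(u):=L_n(u)\cdots L_2(u)$ (isolating the first site of the tensor product), and $T(u)=L_n(u)T_{n-1}(u)$ with $T_{n-1}(u):=L_{n-1}(u)\cdots L_1(u)$ (isolating the last site). Carrying out the $2\times 2$ matrix product with the explicit form of $L_1$ and $L_n$ from (\ref{L}) gives simple linear expressions for the entries $A,B,C,D$ in terms of the reduced entries of $\tilde T$ (or $T_{n-1}$) and the boson generators $\beta_1,\beta_1^{\ast}$ (respectively $\beta_n,\beta_n^{\ast}$). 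The first factorization will handle the two identities involving $\beta_1,\beta_1^{\ast}$, the second those involving $\beta_n,\beta_n^{\ast}$.

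Next I substitute these expressions into each side of the four claimed equalities. The reduced monodromy $\tilde T$ acts only on sites $2,\dots,n$, so its entries commute with $\beta_1,\beta_1^{\ast}$ by (\ref{qboson1})--(\ref{qboson2}); likewise $T_{n-1}$ commutes with $\beta_n,\beta_n^{\ast}$. All commutators therefore telescope to a single occurrence of the $q$-boson commutator, which by (\ref{H2}) evaluates as $\beta_i\beta_i^{\ast}-t\beta_i^{\ast}\beta_i=1-t$. For instance, writing $A=\tilde A+\tilde B\beta_1$ and $B=u\tilde A\beta_1^{\ast}+u\tilde B$ (the notation $\tilde A,\tilde B$ denoting entries of $\tilde T$), and using the vanishing commutators $[\tilde A,\beta_1^{\ast}]=[\tilde B,\beta_1^{\ast}]=[\tilde B,\beta_1]=0$, one finds
\begin{equation*}
uA\beta_{1}^{\ast}-ut\beta_{1}^{\ast}A=u(1-t)\tilde A\beta_{1}^{\ast}+u\tilde B\bigl(\beta_{1}\beta_{1}^{\ast}-t\beta_{1}^{\ast}\beta_{1}\bigr)=(1-t)B(u).
\end{equation*}
The second equality in (\ref{BAD}) follows by the mirror manipulation applied to $T=L_nT_{n-1}$, and the two equalities in (\ref{CAD}) for $C$ come from the same mechanism with the roles of $\beta^{\ast}$ and $\beta$ interchanged.

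The feature to emphasise, rather than a genuine obstacle, is that the argument uses only the quadratic relation (\ref{H2}) together with the commutativity of generators on different sites; the full $RLL$ relation (\ref{ybe}) is \emph{not} invoked. Consequently (\ref{BAD})--(\ref{CAD}) are local consequences of the definition of the monodromy, expressing that the off-diagonal entries $B$ and $C$ can be recovered as single $t$-commutators of the diagonal entries with a boundary boson at site $1$ or $n$. The only bookkeeping point is to choose, for each of the four identities, the factorization that places the relevant boson on the correct boundary.
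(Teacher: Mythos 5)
Your proof is correct and follows essentially the same route as the paper, which disposes of the lemma by ``induction employing the coproduct'' $\Delta L=L_{2}L_{1}$ --- precisely your one-site factorizations $T=\tilde T L_{1}$ and $T=L_{n}T_{n-1}$. Reducing each identity to the single boundary relation $\beta\beta^{\ast}-t\beta^{\ast}\beta=1-t$ together with the commutativity of generators on distinct sites is exactly the intended inductive step made explicit, and your observation that the full $RLL$ relation is never needed is accurate.
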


\begin{proof}
The assertion is easily proved via induction employing the coproduct (\ref%
{YBcop}).
\end{proof}

In analogy with (\ref{T}) decompose the monodromy matrix as the sum of the
form $T^{\prime }(u)=\sum_{X,Y}X(u)\otimes Y$ with $X(u)\in \limfunc{End}%
\mathcal{F}((u)),$ $Y\in \mathcal{H}_{q}^{\otimes n}$ and consider the
matrix elements of the monodromy matrix, $T_{m^{\prime }m}^{\prime
}(u):=\sum_{X,Y}\langle m^{\prime }|X(u)|m\rangle Y$. Then as a direct
consequence of (\ref{ybe''}) we have the following identities.

\begin{corollary}
\label{YBQ}One deduces the following commutation relations of $T_{m^{\prime
}m}^{\prime }(u)$ with the Yang-Baxter algebra generators,%
\begin{multline}
(ut^{m^{\prime }}+v)A(u)T_{m^{\prime }m}^{\prime }(v)-(ut^{m}+v)T_{m^{\prime
}m}^{\prime }(v)A(u)= \\
v(1-t^{m})T_{m^{\prime },m-1}^{\prime }(v)B(u)-uC(u)T_{m^{\prime
}-1,m}^{\prime }(v)  \label{AT'}
\end{multline}%
\begin{equation}
T_{m^{\prime }m-1}^{\prime }(v)B(u)=(ut^{m^{\prime }}+v)B(u)T_{m^{\prime
}m-1}^{\prime }(v)
-uT_{m^{\prime }m}^{\prime }(v)A(u)+uD(u)T_{m^{\prime }-1,m-1}^{\prime }(v)
\label{BT'}
\end{equation}%
\begin{multline}
uC(u)T_{m^{\prime }-1,m}^{\prime }(v)=(ut^{m}+v)T_{m^{\prime }-1,m}^{\prime
}(v)C(u) \\
-v(1-t^{m})T_{m^{\prime }-1,m-1}^{\prime }(v)D(u)+v(1-t^{m^{\prime
}})A(u)T_{m^{\prime }m}^{\prime }(v)  \label{CT'}
\end{multline}%
\begin{equation}
uD(u)T_{m^{\prime }m}^{\prime }(v)-uT_{m^{\prime }m}^{\prime }(v)D(u)=
uT_{m^{\prime }m+1}^{\prime }(v)C(u)-v(1-t^{m^{\prime }})B(u)T_{m^{\prime
}+1,m}^{\prime }(v)  \label{DT'}
\end{equation}
\end{corollary}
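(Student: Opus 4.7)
The plan is to lift the mixed Yang--Baxter identity (\ref{ybe''}) from the local $L$-operators to the two monodromy matrices, yielding
\[
R''_{12}(u,v)\,T_1(u)\,T_2'(v) \;=\; T_2'(v)\,T_1(u)\,R''_{12}(u,v)
\]
as an identity in $\mathrm{End}[\mathbb{C}^2(u)\otimes\mathcal{F}(v)]\otimes\mathcal{H}_q^{\otimes n}$. This follows by the standard train argument: since $L_i(u)$ and $L_j'(v)$ act on distinct tensor factors of $\mathcal{H}_q^{\otimes n}$ whenever $i\neq j$, they commute with each other, and one can therefore move $R''_{12}$ step by step across the product $L_n(u)\cdots L_1(u)\cdot L_n'(v)\cdots L_1'(v)$, applying (\ref{ybe''}) at each site $i$ to the local pair $L_i(u)L_i'(v)$.

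Once this global intertwining relation is in hand, the four commutation relations (\ref{AT'})--(\ref{DT'}) are obtained simply by taking the matrix element $\langle\sigma',m'|\cdot|\sigma,m\rangle$ in the tensor product auxiliary space $\mathbb{C}^2(u)\otimes\mathcal{F}(v)$ for each of the four choices $(\sigma',\sigma)\in\{0,1\}^2$. The $\mathbb{C}^2$-indices determine which of $A,B,C,D$ appear, so $(0,0)$ yields (\ref{AT'}), $(0,1)$ yields (\ref{BT'}), $(1,0)$ yields (\ref{CT'}), and $(1,1)$ yields (\ref{DT'}). The extraction is made tractable by the fact that $R''$ is extremely sparse on the Fock basis: from (\ref{R''}) one reads off that $\langle 0,m'|R''|0,m\rangle\propto\delta_{m',m}$, $\langle 0,m'|R''|1,m\rangle\propto\delta_{m',m+1}$, $\langle 1,m'|R''|0,m\rangle\propto\delta_{m',m-1}$ and $\langle 1,m'|R''|1,m\rangle\propto\delta_{m',m}$, with coefficients built only from $1$, $u/v$, $t^m$ and $1-t^{m+1}$. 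Consequently, for each fixed $(\sigma',\sigma)$ the sums over the internal auxiliary index $(\sigma_1,m_1)$ on both sides each collapse to exactly two contributions, matching the four-term structure displayed in the corollary.

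The only genuine work is the bookkeeping: keeping careful track of the ordering of $A(u),B(u),C(u),D(u)$ and $T'_{m',m}(v)$ within $\mathcal{H}_q^{\otimes n}$, of the $q$-boson coefficients $(1-t^{m+1})$, and of the factor $u/v$ arising from the entries of $R''$. After multiplying through by $v$ to clear denominators and reorganising, the four identities emerge in the form stated. No conceptual input is required beyond the mixed Yang--Baxter equation (\ref{ybe''}) and the basic $q$-boson algebra relations (\ref{H1})--(\ref{H2}); in spirit the argument is identical to the derivation of (\ref{aba0})--(\ref{aba3}) from (\ref{ybe}), merely with one of the two two-dimensional auxiliary spaces replaced by the infinite-dimensional Fock space $\mathcal{F}(v)$.
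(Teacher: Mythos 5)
Your approach is exactly the one the paper intends: the corollary is presented as a "direct consequence" of the mixed Yang--Baxter equation (\ref{ybe''}), which means precisely lifting that local relation to the monodromy matrices by the train argument, $R''_{12}(u,v)T_1(u)T'_2(v)=T'_2(v)T_1(u)R''_{12}(u,v)$, and then extracting matrix elements in $\mathbb{C}^2(u)\otimes\mathcal{F}(v)$ using the sparsity of $R''$ on the Fock basis exactly as you describe. Your writeup is in fact more explicit than the paper's, which omits the derivation entirely.
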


\subsection{Affine quantum plactic polynomials}

We now state simple polynomial expressions for the matrix elements of both
monodromy matrices, (\ref{T}) and (\ref{T'}), in the generators of an
algebra $\mathcal{\hat{U}}_{n}^{-}\subset h_{z}(\hat{U}_{q}\mathfrak{b}^{-})$
which is contained in the image of the lower Borel algebra $\hat{U}_{q}%
\mathfrak{b}^{-}\subset \hat{U}_{n}$ under the homomorphism (\ref{h2}).

\begin{corollary}
Denote by $\mathcal{\hat{U}}_{n}^{-}\subset \mathcal{H}_{q}^{\otimes n}$ the
subalgebra generated by the letters
\begin{equation}
a_{i}=\beta _{i+1}^{\ast }\beta _{i},\quad i=1,\dots,n-1\qquad \text{and\qquad }a_{n}=z\beta
_{1}^{\ast }\beta _{n}\;.  \label{qplacticrep}
\end{equation}%
Then we have non-local commutativity,%
\begin{equation}
a_{i}a_{j}=a_{j}a_{i}\quad \text{for}\quad |i-j|\func{mod}n>1\;,
\label{nonlocal}
\end{equation}%
and what we call the \textquotedblleft quantum Knuth
relations\textquotedblright\ ($t=q^{2}$)%
\begin{eqnarray}
a_{i+1}a_{i}^{2}+ta_{i}^{2}a_{i+1} &=&(1+t)a_{i}a_{i+1}a_{i},  \notag \\
a_{i+1}^{2}a_{i}+ta_{i}a_{i+1}^{2} &=&(1+t)a_{i+1}a_{i}a_{i+1}\;,
\label{qKnuth}
\end{eqnarray}%
where all indices are understood modulo $n$. Denote by $\mathcal{U}_{n}^{-}$
the non-affine algebra generated by $\{a_{1},\ldots ,a_{n-1}\}$.
\end{corollary}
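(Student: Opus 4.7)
My plan is to verify both families of relations by a direct calculation inside $\mathcal{H}_q^{\otimes n}$, using only the defining $q$-boson relations (\ref{qboson1})--(\ref{qboson3}); no representation-theoretic input or invocation of the homomorphism $h_z$ is required, although conceptually one can view the identities as the quantum Serre relations of $\hat U_n$ rewritten in the generators $K_{i+1}F_i$.

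First I would dispose of the non-local commutativity (\ref{nonlocal}). Each letter $a_i$ involves only the $i$-th and $(i{+}1)$-th tensor factors of $\mathcal{H}_q^{\otimes n}$, with indices taken modulo $n$ and an extra scalar $z$ when $i=n$. If $|i-j|\bmod n>1$ the two index sets $\{i,i{+}1\}$ and $\{j,j{+}1\}$ are disjoint, so the four $q$-boson generators involved pairwise commute by (\ref{qboson1})--(\ref{qboson2}); scalar factors and generators on disjoint sites can be freely rearranged.

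The core step is the auxiliary identity
\begin{equation*}
a_{i+1}a_i - t\,a_i a_{i+1} \;=\; (1-t)\,Z_i,
\end{equation*}
valid for every $i\in\mathbb{Z}_n$, where $Z_i := \beta^*_{i+2}\beta_i$ with the natural insertion of the scalar $z$ in the two wrap-around cases, i.e.\ $Z_{n-1}=z\beta_1^*\beta_{n-1}$ and $Z_n=z\beta_2^*\beta_n$. This reduces to pushing $\beta_{i+1}$ past $\beta_{i+1}^*$ once inside $a_{i+1}a_i=\beta^*_{i+2}\beta_{i+1}\beta^*_{i+1}\beta_i$ using the single-site relation $\beta\beta^*=(1-t)+t\beta^*\beta$ from (\ref{qboson3}); every other pair of factors commutes by (\ref{qboson1})--(\ref{qboson2}). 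A short inspection then shows that $Z_i$ commutes with both $a_i$ and $a_{i+1}$: any site shared between $Z_i$ and one of these two letters carries the same type of generator in each (two $\beta^*$'s or two $\beta$'s), which commute trivially.

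With the auxiliary identity and the commutation of $Z_i$ in hand, the Knuth relations (\ref{qKnuth}) follow by a one-line elimination. Writing $X=a_i$, $Y=a_{i+1}$, $Z=Z_i$, left-multiplying $YX=tXY+(1-t)Z$ by $X$ gives $XYX=tX^2Y+(1-t)XZ$, while right-multiplying and using $ZX=XZ$ gives $YX^2=tXYX+(1-t)XZ$; eliminating the common term $(1-t)XZ$ yields $YX^2+tX^2Y=(1+t)XYX$, which is the first relation. The second relation $Y^2X+tXY^2=(1+t)YXY$ follows by the mirror manipulation, multiplying the auxiliary identity on the two sides by $Y$ and using $ZY=YZ$. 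The only real chore is the bookkeeping of the scalar $z$ and the cyclic index shift in the affine cases $i\in\{n-1,n\}$, but this is routine since the whole argument only ever uses (\ref{qboson1})--(\ref{qboson3}), so there is no genuine obstacle.
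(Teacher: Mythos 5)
Your proof is correct, but it follows a genuinely different route from the paper's. The paper disposes of this corollary in one line: the letters $a_{i}$ are, up to scalars, the images under the Jordan--Schwinger homomorphisms (\ref{h1}), (\ref{h2}) of products of the Chevalley generators $K$ and $F$ of $\hat{U}_{n}$, so (\ref{nonlocal}) and (\ref{qKnuth}) are just the quantum Serre relations (\ref{qSerre}) transported through $h_{z}$ and rewritten (the asymmetric weighting by $1$ and $t$ arises from commuting the $K$'s past the $F$'s). You instead stay entirely inside $\mathcal{H}_{q}^{\otimes n}$: your auxiliary identity $a_{i+1}a_{i}-t\,a_{i}a_{i+1}=(1-t)Z_{i}$ with $Z_{i}=\beta _{i+2}^{\ast }\beta _{i}$ (decorated with $z$ in the wrap-around cases exactly as you state) is verified correctly from $\beta \beta ^{\ast }-t\beta ^{\ast }\beta =1-t$, the element $Z_{i}$ does commute with $a_{i}$ and $a_{i+1}$ for the reason you give, and the two-sided multiply-and-eliminate step delivers both Knuth relations. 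The paper's route buys brevity and the conceptual point that (\ref{qKnuth}) is the Serre relation in disguise; yours buys self-containedness (the paper only asserts, without computation, that $h_{z}$ is a homomorphism) and produces as a by-product the $t$-commutator formula, which is essentially the $j=i+2$ case of the identity (\ref{aij}) that the paper needs anyway in the subsequent proposition on noncommutative elementary symmetric polynomials. One caveat, inherited from the statement rather than introduced by you: the condition $|i-j|\func{mod}n>1$ in (\ref{nonlocal}) must be read as saying that $i$ and $j$ are non-adjacent nodes of the cycle $\mathbb{Z}_{n}$, since e.g.\ $a_{1}$ and $a_{n}$ satisfy $|1-n|\func{mod}n=n-1>1$ yet share the site $1$ with a $\beta _{1}$ against a $\beta _{1}^{\ast }$ and hence do not commute; your disjointness argument implicitly (and correctly) uses this cyclic reading.
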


\begin{proof}
The assertion is a direct consequence of the quantum Serre relations (\ref%
{qSerre}) and the homomorphisms (\ref{h1}) and (\ref{h2}).
\end{proof}

\begin{remark}
Setting formally $t=0$ we recover the faithful representation of the local
affine plactic algebra considered in \cite[Def 5.4 and Prop 5.8]{KS}.
Setting $t=0$ and $a_{n}=0$ one obtains a representation of the local finite
plactic algebra; compare with \cite{FG}. The (non-local) plactic algebra was
introduced in \cite{LasSchutz} and is intimately linked to the
Robinson-Schensted-Knuth correspondence \cite{FultonYT}. We therefore shall
refer to $\mathcal{\hat{U}}_{n}^{-}$ as \emph{affine quantum plactic algebra}%
. Note that this definition is different from the construction in \cite[%
Section 4.7]{KrobThibon} using the quantum coordinate ring.
\end{remark}

To describe the matrix elements of the monodromy matrices we require the
notion of cyclically ordered words $i_{1}\ldots i_{r}$ with letters $%
i_{j}\in \{1,\ldots ,n\}$. A word $i_{1}\ldots i_{r}$ is \emph{%
anti-clockwise cyclically ordered} if for any two indices $i_{j}$, $i_{k}$
with $i_{k}=i_{j}+1$ modulo $n$, the $i_{j}$ occurs to the right of $i_{k}$.
(In case $i_{k}\neq i_{j}+1$ the order does not matter because of %
\eqref{nonlocal}.) The origin of the name becomes obvious if we identify the
letter $i_{j}$ with labels of the nodes of the $\widehat{\mathfrak{sl}}(n)$
Dynkin diagram: there are two circle segments connecting the two points. If
they are not of the same length we choose the shorter one and the
anti-clockwise order is the same as the intuitively defined anti-clockwise
order with respect to this segment. For any word $i_{1}\ldots i_{r}$ not
containing all the letters in $\{1,\ldots ,n\}$, there is a unique
anti-clockwise cyclically ordered word which differs only by a permutation;
compare with \cite[Section 5.3]{KS}.

\begin{proposition}[noncommutative elementary symmetric polynomials]
One has the identity%
\begin{equation}
\boldsymbol{E}(u):=A(u)+zD(u)=\sum_{r=0}^{n}u^{r}\boldsymbol{e}_{r},
\label{ncE}
\end{equation}%
where we set $\boldsymbol{e}_{n}=z~1$ and for $r<n$%
\begin{equation}
\boldsymbol{e}_{r}=(1-t)^{1-r}\sum_{w=i_{1}\cdots
i_{r}}[a_{i_{1}},[a_{i_{2}},...[a_{i_{r-1}},a_{i_{r}}]_{t}...]_{t}]_{t}
\label{nce}
\end{equation}%
with $[x,y]_{t}:=xy-t~yx$ and the sum runs over all cyclically ordered words
$w$ with distinct elements.
\end{proposition}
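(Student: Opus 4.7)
The plan is a direct combinatorial expansion of $A(u)+zD(u)$ starting from the definition of the monodromy matrix $T(u)=L_{n}(u)L_{n-1}(u)\cdots L_{1}(u)$, followed by a bijective identification with the cyclically ordered nested $t$-commutators on the right. First I would write $A(u)=\langle 0|T(u)|0\rangle$ and $D(u)=\langle 1|T(u)|1\rangle$ as matrix products and expand each into a sum over binary paths $(c_{0},c_{1},\ldots,c_{n})\in\{0,1\}^{n+1}$ with boundary conditions $c_{0}=c_{n}=0$ and $c_{0}=c_{n}=1$ respectively. At site $i$ the transition $(c_{i-1},c_{i})$ contributes one of $1$, $\beta_{i}$, $u\beta_{i}^{\ast}$, $u$ in the cases $(0,0)$, $(1,0)$, $(0,1)$, $(1,1)$; in particular the exponent of $u$ contributed at site $i$ is $c_{i-1}$. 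Combining the two pieces into $A(u)+zD(u)$ glues $c_{0}$ to $c_{n}$, so the coefficient of $u^{r}$ becomes a weighted sum over cyclic binary sequences on $\mathbb{Z}/n\mathbb{Z}$ with exactly $r$ ones, carrying an extra factor $z$ exactly when $c_{0}=c_{n}=1$. Each such cyclic sequence decomposes uniquely into maximal blocks of consecutive 1s.

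Using that operators at distinct sites commute by \eqref{qboson1}--\eqref{qboson2}, a block at cyclic positions $\{k,k+1,\ldots,k+\ell-1\}$ (indices mod $n$) produces the operator weight $u^{\ell}\,\beta^{\ast}_{k+\ell\bmod n}\beta_{k}$: one $\beta_{k}$ upon entering, $\ell-1$ factors of $u$ inside, and $u\beta^{\ast}_{k+\ell\bmod n}$ upon exiting. The $z$ attached to wrapping configurations will be absorbed into the affine generator $a_{n}=z\beta_{1}^{\ast}\beta_{n}$ once the expression is rewritten in terms of the letters $a_{i}$.

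The main technical lemma, proved by induction on $\ell$, reads
\begin{equation*}
[a_{k+\ell-1},[a_{k+\ell-2},\ldots,[a_{k+1},a_{k}]_{t}\ldots]_{t}]_{t}=(1-t)^{\ell-1}\,\beta^{\ast}_{k+\ell\bmod n}\,\beta_{k},
\end{equation*}
with the factor $z$ appearing automatically on the right whenever the nested commutator passes through $a_{n}$. The base case $\ell=1$ is the definition $a_{k}=\beta^{\ast}_{k+1}\beta_{k}$. The inductive step reduces to the one-step identity $[a_{j+1},\beta^{\ast}_{j+1}\beta_{k}]_{t}=(1-t)\beta^{\ast}_{j+2}\beta_{k}$, which I would establish by a direct computation using $\beta\beta^{\ast}=1-tq^{2N}$ and $\beta^{\ast}\beta=1-q^{2N}$ from \eqref{crucial} together with the commutativity of operators at distinct sites; the key cancellation is that the two $q^{2N_{j+1}}$-terms produced in the calculation annihilate each other. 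Applied at $j+1=n$, the identity transfers the factor $z$ from $a_{n}$ onto the result and lets the induction continue across the wrap-around into $a_{1},a_{2},\ldots$.

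For a cyclic configuration with $m$ blocks of lengths $\ell_{1},\ldots,\ell_{m}$ summing to $r<n$, the associated cyclically ordered word strings together the letters of each block in decreasing order and concatenates the blocks following the anti-clockwise cyclic order on $\mathbb{Z}/n\mathbb{Z}$. Since letters originating from different blocks label non-cyclically-adjacent nodes, they commute by \eqref{nonlocal}, and for any such pair $[x,y]_{t}=(1-t)xy$. Hence the inner-block commutators produce $(1-t)^{\ell_{j}-1}$ each, and the $m-1$ inter-block $t$-commutators contribute $(1-t)^{m-1}$, amounting to $(1-t)^{r-1}$ in total, which is exactly cancelled by the normalization $(1-t)^{1-r}$ in \eqref{nce}. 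The boundary case $r=n$ concerns the unique cyclic configuration $c_{1}=\cdots=c_{n}=1$, lying entirely in $zD(u)$ and producing $u^{n}z$, matching $\boldsymbol{e}_{n}=z\cdot 1$. The main obstacle I anticipate is verifying that any two admissible anti-clockwise cyclically ordered words representing the same subset $S\subsetneq\{1,\ldots,n\}$ give rise to the same nested $t$-commutator; this should reduce to the fact that admissible re-orderings only exchange non-cyclically-adjacent letters, for which the $t$-commutator is symmetric in its arguments.
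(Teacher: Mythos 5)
Your proposal is correct and follows essentially the same route as the paper: the paper likewise expands $A(u)$ and $D(u)$ (via the coproduct, by induction on $n$) into sums of products of the elementary factors $a_{i,j}=\beta_i\beta_j^{\ast}$ — which encode exactly your maximal blocks of consecutive $1$'s — and then invokes the same key lemma identifying each such factor with a nested $t$-commutator carrying $(1-t)^{\ell-1}$, with the $z$ absorbed into $a_n$ for wrap-around blocks. The only blemish is that your table of local weights has the $(1,0)$ and $(0,1)$ transitions swapped relative to the $L$-matrix \eqref{L}, but your subsequent block computation ($\beta_k$ on entry, $u\beta^{\ast}_{k+\ell}$ on exit) uses the correct assignment, so nothing downstream is affected.
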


\begin{example}
Set $n=4$ and $r=3$, then the affine quantum plactic elementary symmetric
polynomial reads%
\begin{equation*}
(1-t)^{2}\boldsymbol{e}%
_{r}=[a_{3},[a_{2},a_{1}]_{t}]_{t}+[a_{4},[a_{3},a_{2}]_{t}]_{t}+[a_{1},[a_{4},a_{3}]_{t}]_{t}+[a_{2},[a_{1},a_{4}]_{t}]_{t}\;.
\end{equation*}
\end{example}

\begin{remark}
In the commutative case, making the formal replacement $a_{i}\rightarrow
x_{i}$ with $x_{i}x_{j}=x_{j}x_{i}$ for all $i,j$, the noncommutative
polynomial $\boldsymbol{e}_{r}$ becomes the standard, commutative elementary
symmetric polynomial.
\end{remark}

\begin{proof}
Employing the coproduct of the Yang-Baxter algebra one easily verifies via
induction the formulae%
\begin{eqnarray*}
A(u) &=&\sum_{0\leq r\leq \frac{n}{2}}~\sum_{1\leq
i_{1}<j_{1}<...<i_{r}<j_{r}\leq n}u^{(j_{1}-i_{1})+\cdots
+(j_{r}-i_{r})}a_{i_{1},j_{1}}\cdots a_{i_{r},j_{r}}, \\
D(u) &=&\sum_{0\leq r\leq \frac{n}{2}}~\sum_{1\leq
i_{1}<j_{1}<...<i_{r}<j_{r}\leq n}u^{n-(j_{1}-i_{1})-\cdots
-(j_{r}-i_{r})}a_{j_{1},i_{1}}\cdots a_{j_{r},i_{r}}\;,
\end{eqnarray*}%
where $a_{i,j}:=\beta _{i}\beta _{j}^{\ast }$. The remainder of the proof is
now a consequence of the following lemma which can be proved via a
straightforward computation.

\begin{lemma}
For $i<j$ we have%
\begin{equation}
a_{i,j}=(1-q)^{i-j+1}[a_{j-1},[a_{j-2},...[a_{i+1},a_{i}]_{t}...]_{t}]_{t}
\label{aij}
\end{equation}%
and%
\begin{equation}
a_{j,i}=z^{-1}(1-q)^{i-j-1}[a_{i-1},[a_{i-2},...[a_{1},[a_{n},[a_{n-1},...[a_{j+1},a_{j}]_{t}...]_{t}]_{t}\ .
\label{aji}
\end{equation}
\end{lemma}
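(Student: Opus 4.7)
The plan is to prove both identities by induction on the length of the nested commutator, driven by a single telescoping identity in the $q$-boson algebra.

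For \eqref{aij} I induct on $\ell = j - i$. The base case $\ell = 1$ reduces to $a_{i,i+1} = \beta_i\beta_{i+1}^* = \beta_{i+1}^*\beta_i = a_i$, which holds because $\beta_i$ and $\beta_{i+1}^*$ act on distinct tensor factors and hence commute by \eqref{qboson3}. For the inductive step the key claim is
\[
[a_{j-1},\, a_{i,\,j-1}]_t \;=\; (1-t)\, a_{i,j} \qquad (i < j-1).
\]
To verify this, expand $a_{j-1}=\beta_j^*\beta_{j-1}$ and $a_{i,\,j-1}=\beta_i\beta_{j-1}^*$. Using \eqref{qboson1}--\eqref{qboson2}, $\beta_i$ commutes with all of $\beta_{j-1},\beta_{j-1}^*,\beta_j^*$, and $\beta_j^*$ commutes with $\beta_{j-1},\beta_{j-1}^*$; reordering and subtracting yields
\[
a_{j-1}\, a_{i,\,j-1} - t\, a_{i,\,j-1}\, a_{j-1} \;=\; \beta_i\beta_j^*\bigl(\beta_{j-1}\beta_{j-1}^* - t\,\beta_{j-1}^*\beta_{j-1}\bigr) \;=\; (1-t)\,\beta_i\beta_j^*,
\]
the last equality being precisely the second relation in \eqref{H2}. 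Combined with the inductive hypothesis this absorbs one additional factor of $(1-t)^{-1}$ into the prefactor at each step and delivers \eqref{aij}.

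The proof of \eqref{aji} proceeds in exactly the same fashion; the only new feature is that the nested commutator now has to traverse the affine Dynkin diagram once, threading through the generator $a_n = z\beta_1^*\beta_n$. I induct on the circular length $n+i-j$ of the path $j\to j+1\to\cdots\to n\to 1\to\cdots\to i$. At every index other than $n$ the telescoping identity applies verbatim. At the step involving $a_n$ the computation goes through unchanged: all commutations between distinct tensor factors are available via \eqref{qboson1}--\eqref{qboson2}, one extra factor of $(1-t)$ is produced by \eqref{H2} at the telescoped site, and a single factor $z$ (coming from the definition $a_n = z\beta_1^*\beta_n$) is produced and then absorbed by the overall $z^{-1}$ in the stated prefactor.

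There is no substantive obstacle once the one-line telescoping identity above is isolated; the remainder of the argument is mechanical bookkeeping. The only points that require attention are (i) confirming that each of the nested brackets in a chain of length $\ell$ contributes exactly one factor of $(1-t)$, so that the exponent in the prefactor is correct, and (ii) tracking the unique factor of $z$ introduced by $a_n$ in the second identity. Both are settled by a direct appeal to \eqref{qboson1}--\eqref{qboson3} and \eqref{H2}, with no role played by any other structure.
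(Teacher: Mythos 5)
Your proof is correct, and it is precisely the ``straightforward computation'' that the paper leaves to the reader: the lemma is stated without proof inside the proof of the proposition on noncommutative elementary symmetric polynomials, and your telescoping identity $[a_m,\,a_{l,m}]_t=(1-t)\,a_{l,m+1}$ (valid whenever $l\neq m,m+1$, with the extra factor $z$ when $m=n$) is exactly the step the author intends. One remark on the bookkeeping you defer to the end: carrying out the count for \eqref{aji} gives $n+i-j$ letters, hence $n+i-j-1$ nested brackets and the prefactor $z^{-1}(1-t)^{j-i-n+1}$ rather than the printed $(1-q)^{i-j-1}$; since your exponent is the one consistent with the normalisation $(1-t)^{1-r}$ in \eqref{nce} (and $1-q$ should read $1-t$ throughout the lemma, as in the base relation $\beta\beta^{\ast}-t\beta^{\ast}\beta=1-t$), the mismatch is a typo in the printed statement and not a gap in your argument.
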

\end{proof}

Also the matrix elements of the other monodromy matrix (\ref{T'}) can be
expressed as noncommutative analogues of a family of symmetric functions;
compare with (\ref{g'}) and (\ref{G'}).

\begin{proposition}[noncommutative Rogers-Szeg\"{o} polynomials]
One has the formal power series expansion%
\begin{equation}
\boldsymbol{G}^{\prime }(u):=\sum_{m\geq 0}z^{m}T_{m,m}^{\prime
}=\sum_{r\geq 0}u^{r}\boldsymbol{g}_{r}^{\prime },\qquad \boldsymbol{g}%
_{r}^{\prime }:=\sum_{\lambda \vdash r}\frac{\boldsymbol{m}_{\lambda }}{%
(t)_{\lambda }}~,  \label{ncG'}
\end{equation}%
where the quantum plactic analogues of the \emph{symmetric monomial functions%
\footnote{%
Note that these polynomials do not commute in general.}} are defined for
partitions $\lambda $ of length $\leq n$ as%
\begin{equation}
\boldsymbol{m}_{\lambda }:=\sum_{w}(z\beta _{1}^{\ast })^{\lambda
_{w_{n}}}a_{1}^{\lambda _{w_{1}}}\cdots a_{n-1}^{\lambda _{w_{n-1}}}\beta
_{n}^{\lambda _{w_{n}}}  \label{ncm}
\end{equation}%
with the last sum running over all \emph{distinct} permutations of $\lambda $%
.
\end{proposition}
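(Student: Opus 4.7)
The plan is to expand the weighted auxiliary trace $\sum_{m\geq 0}z^{m}T'_{mm}(u)$ directly from the definitions of $T'(u)$ and the matrix element (\ref{L'}), and then recognize the resulting sum of monomials in the $\beta_i^{\ast},\beta_i$ as the noncommutative monomial symmetric polynomials $\boldsymbol{m}_{\lambda}$.

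First I would insert resolutions of the identity $\sum_{m^{(i)}\geq 0}|m^{(i)}\rangle\langle m^{(i)}|$ in the auxiliary Fock space between consecutive factors of $T'(u)=L'_n(u)L'_{n-1}(u)\cdots L'_1(u)$. Using (\ref{L'}) this yields
\begin{equation*}
T'_{mm}(u)=\sum_{m^{(1)},\ldots,m^{(n-1)}\geq 0}\prod_{i=1}^{n}\frac{u^{m^{(i-1)}}(\beta_i^{\ast})^{m^{(i-1)}}\beta_i^{m^{(i)}}}{(t)_{m^{(i-1)}}},
\end{equation*}
with the cyclic closure condition $m^{(0)}=m^{(n)}=m$. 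Because operators supported on different sites commute (relations (\ref{qboson1})--(\ref{qboson2})), the ordering of the factors in the product above is immaterial.

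Next I would use the non-local commutation $\beta_j\beta_k^{\ast}=\beta_k^{\ast}\beta_j$ for $j\neq k$ to perform the rearrangement
\begin{equation*}
(\beta_1^{\ast})^{m^{(0)}}\beta_1^{m^{(1)}}(\beta_2^{\ast})^{m^{(1)}}\beta_2^{m^{(2)}}\cdots(\beta_n^{\ast})^{m^{(n-1)}}\beta_n^{m^{(0)}}
=(\beta_1^{\ast})^{m^{(0)}}a_1^{m^{(1)}}a_2^{m^{(2)}}\cdots a_{n-1}^{m^{(n-1)}}\beta_n^{m^{(0)}},
\end{equation*}
where the closure $m^{(n)}=m^{(0)}$ is essential to recover the $\beta_n^{m^{(0)}}$ on the right and the $(\beta_1^{\ast})^{m^{(0)}}$ on the left. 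Absorbing the weight $z^{m}=z^{m^{(0)}}$ into the leftmost factor and renaming $\ell_i:=m^{(i-1)}$, I obtain
\begin{equation*}
\sum_{m\geq 0}z^{m}T'_{mm}(u)=\sum_{(\ell_1,\ldots,\ell_n)\in\mathbb{Z}_{\geq 0}^{n}}u^{\ell_1+\cdots+\ell_n}\,\frac{(z\beta_1^{\ast})^{\ell_1}a_1^{\ell_2}a_2^{\ell_3}\cdots a_{n-1}^{\ell_n}\beta_n^{\ell_1}}{\prod_{i=1}^{n}(t)_{\ell_i}}.
\end{equation*}

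Finally, I would group the sum by the multiset $\{\ell_1,\ldots,\ell_n\}$. Each such multiset determines a unique partition $\lambda$ of length $\leq n$ (with zeros deleted), and summing over the distinct orderings of this multiset reproduces exactly (\ref{ncm}) via the identification $(\lambda_{w_1},\ldots,\lambda_{w_{n-1}},\lambda_{w_n}) = (\ell_2,\ldots,\ell_n,\ell_1)$; note that the denominator $\prod_i(t)_{\ell_i}=(t)_{\lambda}$ depends only on the multiset. Extracting the coefficient of $u^{r}$ yields $\sum_{\lambda\vdash r}\boldsymbol{m}_{\lambda}/(t)_{\lambda}=\boldsymbol{g}'_{r}$, as asserted. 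The only real obstacle is the cyclic bookkeeping in the second step; no algebraic input beyond the bosonic commutation relations on distinct sites and the cyclic closure enforced by the diagonal trace is required.
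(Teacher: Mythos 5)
Your proposal is correct and follows essentially the same route as the paper: expanding $T'_{m'm}(u)$ site by site via the definition of $L'$, using commutativity of operators on distinct tensor factors to regroup the $\beta_i^{\ast},\beta_i$ into powers of the $a_i$'s (this is exactly the paper's intermediate formula for $T'_{m',m}(u)$), and then setting $m'=m$ and collecting terms by the multiset of exponents. The only difference is that you spell out the resolution-of-identity insertion and the regrouping into $\boldsymbol{m}_\lambda/(t)_\lambda$, which the paper leaves as "easy to verify."
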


\begin{proof}
According to its definition it is easy to verify that the matrix elements $%
T_{m^{\prime },m}^{\prime }(u)\in \mathbb{C}[\![u]\!]\otimes \mathcal{H}%
_{q}^{\otimes n}$ of the monodromy matrix can be written as%
\begin{equation}
T_{m^{\prime },m}^{\prime }(u)=\frac{u^{m}}{(t)_{m}}\sum_{m_{1},\ldots
,m_{n-1}\geq 0}\frac{u^{m_{1}+\cdots +m_{n-1}}}{(t)_{m_{1}}\cdots
(t)_{m_{n-1}}}~(\beta _{1}^{\ast })^{m}a_{1}^{m_{1}}a_{2}^{m_{2}}\cdots
a_{n-1}^{m_{n-1}}\beta _{n}^{m^{\prime }}\;.  \label{T'matrix}
\end{equation}%
Setting $m=m^{\prime }$ the assertion follows.
\end{proof}

Note that despite the sum in (\ref{ncG'}) being infinite, only a finite
number of terms survive when acting on a vector in $\mathcal{F}^{\otimes n}$%
. Thus, the \textquotedblleft generating function\textquotedblright\ $\boldsymbol{G}'$, which is a formal series here, becomes a well defined operator in $\limfunc{End}(%
\mathcal{F}^{\otimes n})$; see (\ref{ncG'2}) below.

\begin{corollary}[Integrability]
All of the quantum plactic symmetric polynomials defined above commute
pairwise, i.e. for any $r,r^{\prime }\geq 0$ we have%
\begin{equation}
\boldsymbol{e}_{r}\boldsymbol{e}_{r^{\prime }}=\boldsymbol{e}_{r}\boldsymbol{%
e}_{r^{\prime }},\quad \boldsymbol{g}_{r}^{\prime }\boldsymbol{g}_{r^{\prime
}}^{\prime }=\boldsymbol{g}_{r^{\prime }}^{\prime }\boldsymbol{g}_{r},\quad
\boldsymbol{e}_{r}\boldsymbol{g}_{r^{\prime }}^{\prime }=\boldsymbol{g}%
_{r^{\prime }}^{\prime }\boldsymbol{e}_{r}\;.  \label{integrable}
\end{equation}%
This statement remains in particular true for $z=0$, that is, for the finite
plactic polynomials obtained by setting formally $a_{n}\equiv 0$ in $%
\boldsymbol{e}_{r}$ and considering only partitions $\lambda $ with length $%
<n$ in (\ref{ncm}) to compute $\boldsymbol{g}_{r}^{\prime }$.
\end{corollary}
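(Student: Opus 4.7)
The plan is to use the standard Baxter argument for commuting transfer matrices. Each generating function is a (twisted) partial trace of a monodromy matrix over an auxiliary space, and pairwise commutativity will follow from the three Yang-Baxter identities \eqref{ybe}, \eqref{ybe'}, \eqref{ybe''}, together with the observation that the associated $R$-matrices commute with the relevant twist operators.

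First, by applying the coproduct \eqref{YBcop} inductively (together with its analogue for $T'$), I would lift \eqref{ybe}, \eqref{ybe'} and \eqref{ybe''} to the following identities for the monodromy matrices:
\begin{equation*}
R_{12}(u,v)T_{1}(u)T_{2}(v)=T_{2}(v)T_{1}(u)R_{12}(u,v),
\end{equation*}
\begin{equation*}
R'_{12}(u,v)T'_{1}(u)T'_{2}(v)=T'_{2}(v)T'_{1}(u)R'_{12}(u,v),
\end{equation*}
\begin{equation*}
R''_{12}(u,v)T_{1}(u)T'_{2}(v)=T'_{2}(v)T_{1}(u)R''_{12}(u,v).
\end{equation*}
The commutativity of the $\boldsymbol{e}_r$'s will follow from the first identity: note that $\boldsymbol{E}(u)=\limfunc{tr}_{\mathbb{C}^{2}}\!\left(K\,T(u)\right)$ with twist $K=\limfunc{diag}(1,z)$, and an inspection of \eqref{R} shows that $R(u,v)$ is block diagonal with respect to the total number of $v_{1}$-factors, so $[R(u,v),K\otimes K]=0$. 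Inserting $K_{1}K_{2}$ into the monodromy YBE, commuting it through $R_{12}$, multiplying by $R_{12}^{-1}$ on the left, and then taking the trace over the two auxiliary spaces gives, by cyclicity of the trace and factorisation $\limfunc{tr}_{12}(K_{1}T_{1}(u)K_{2}T_{2}(v))=\boldsymbol{E}(u)\boldsymbol{E}(v)$, the commutativity $\boldsymbol{E}(u)\boldsymbol{E}(v)=\boldsymbol{E}(v)\boldsymbol{E}(u)$. Expanding in $u,v$ yields $[\boldsymbol{e}_r,\boldsymbol{e}_{r'}]=0$.

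The argument for $\boldsymbol{G}'$ is structurally identical with the Fock space $\mathcal{F}$ replacing $\mathbb{C}^{2}$ and the twist $z^{N_2}$ replacing $K$; the defining formula \eqref{ncG'} is precisely $\boldsymbol{G}'(u)=\limfunc{tr}_{\mathcal{F}}(z^{N}T'(u))$. From \eqref{R'} one reads off that $R'(u,v)$ preserves the total particle number $m_{1}+m_{2}$, so $[R'(u,v),z^{N_{1}+N_{2}}]=0$, and the same trace manipulation yields $\boldsymbol{G}'(u)\boldsymbol{G}'(v)=\boldsymbol{G}'(v)\boldsymbol{G}'(u)$. Although $\boldsymbol{G}'(u)$ is a formal series in the two auxiliary factors, only finitely many matrix elements contribute when acting on any vector of $\mathcal{F}^{\otimes n}$, so the partial traces are well-defined operators. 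For the mixed commutativity, the monodromy version of \eqref{ybe''} is combined with the twist $K_{1}\otimes z^{N_{2}}$: inspection of \eqref{R''} shows that $R''$ preserves the combined charge given by (number of $v_{1}$'s) $+\,N_{2}$, so $[R''(u,v),K_{1}z^{N_{2}}]=0$, and the same Baxter argument produces $\boldsymbol{E}(u)\boldsymbol{G}'(v)=\boldsymbol{G}'(v)\boldsymbol{E}(u)$, hence $[\boldsymbol{e}_{r},\boldsymbol{g}'_{r'}]=0$.

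The main obstacle I anticipate is the verification of the three twist-commutation identities $[R,K\otimes K]=0$, $[R',z^{N_{1}+N_{2}}]=0$, $[R'',K_{1}z^{N_{2}}]=0$; however each reduces to the easily verifiable observation that the relevant $R$-matrix preserves a $\mathbb{Z}$-grading that is respected by the twist. Finally, the $z=0$ specialisation is automatic: the commutativity relations hold as polynomial identities in $z$ over $\mathcal{H}_{q}^{\otimes n}$, so they persist at $z=0$, where $\boldsymbol{E}(u)$ reduces to $A(u)$ (the $D$-term drops out because $a_{n}=z\beta_{1}^{\ast}\beta_{n}\equiv 0$) and only partitions $\lambda$ with $\lambda_{w_{n}}=0$ contribute to $\boldsymbol{g}_{r}'$.
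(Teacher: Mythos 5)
Your proposal is correct and follows essentially the same route as the paper: both establish commutativity via the monodromy versions of \eqref{ybe}, \eqref{ybe'}, \eqref{ybe''}, the compatibility of the twists with the $R$-matrices (your $[R,K\otimes K]=[R',z^{N_1+N_2}]=[R'',K_1z^{N_2}]=0$ is exactly the paper's $[R,\sigma\otimes1+1\otimes\sigma]=[R',N\otimes1+1\otimes N]=[R'',\sigma\otimes1+1\otimes N]=0$ written multiplicatively), and cyclicity of the partial trace, with the $z=0$ case obtained by specialisation. No gaps.
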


\begin{proof}
This is a direct consequence of the Yang-Baxter equations (\ref{ybe}), (\ref%
{ybe'}), (\ref{ybe''}) and observing that the $R$-matrices (\ref{R}), (\ref%
{R'}), (\ref{R''}) are compatible with quasi-periodic boundary conditions,
\begin{equation*}
\lbrack R,\sigma \otimes 1+1\otimes \sigma ]=[R^{\prime },N\otimes
1+1\otimes N]=[R^{\prime \prime },\sigma \otimes 1+1\otimes N]=0,
\end{equation*}%
where $\sigma =\bigl(%
\begin{smallmatrix}
0 & 0 \\
0 & 1%
\end{smallmatrix}%
\bigr)$ and $N|m\rangle =m|m\rangle $ is the occupation or number operator.
Using the above relations one shows with the help of (\ref{ybe}) that%
\begin{equation*}
z^{\sigma _{1}}T_{1}(u)z^{\sigma _{2}}T_{2}(v)=R_{12}^{-1}(u/v)z^{\sigma
_{2}}T_{2}(v)z^{\sigma _{1}}T_{1}(u)R_{12}(u/v)
\end{equation*}%
and, hence, $\boldsymbol{E}(u)\boldsymbol{E}(v)=\boldsymbol{E}(v)\boldsymbol{%
E}(u)$ after taking the partial trace of the monodromy matrices on both
sides. Similarly, one derives with the help of (\ref{ybe'}) and (\ref{ybe''}%
) that $\boldsymbol{G}^{\prime }(u)\boldsymbol{G}^{\prime }(v)=\boldsymbol{G}%
^{\prime }(v)\boldsymbol{G}^{\prime }(u)$ and $\boldsymbol{E}(u)\boldsymbol{G%
}^{\prime }(v)=\boldsymbol{G}^{\prime }(v)\boldsymbol{E}(u)$. Making a power
series expansion with respect to the variables $u,v$ and comparing
coefficients on both sides, the assertion follows.

The proof of the statement for $z=0$ is now an immediate consequence of the
definitions (\ref{ncE}) and (\ref{ncG'}). Alternatively, it follows from the
first relation (\ref{AT'}) in Corollary \ref{YBQ} which yields $%
A(u)T_{0,0}^{\prime }(v)=T_{0,0}^{\prime }(v)A(u)\boldsymbol{\ }$when
setting $m=m^{\prime }=0$. Moreover, one easily deduces from (\ref{ybe}) and
(\ref{ybe'}) that $A(u)A(v)=A(v)A(u)$ and $T_{0,0}^{\prime
}(u)T_{0,0}^{\prime }(v)=T_{0,0}^{\prime }(v)T_{0,0}^{\prime }(u)$.
\end{proof}

Returning briefly to the case of commuting variables, we can expect a
functional relationship between the generating functions%
\begin{equation*}
G^{\prime }(u)=\prod_{i>0}\frac{1}{(ux_{i};t)_{\infty }}\qquad \text{and}%
\qquad E(u)=\prod_{i>0}(1+ux_{i})\;.
\end{equation*}%
Namely, one verifies without difficulty that $G^{\prime }(u)E(-u)=G^{\prime
}(ut)$ has to hold by formally manipulating the infinite products in $%
G^{\prime }$. The following proposition states the noncommutative analogue
of this relation which due to the periodic boundary conditions, i.e. the use
of the affine instead of the finite quantum plactic algebra, contains an
additional term.

\begin{proposition}[functional equation]
\label{TQeqn}The following functional equation in $\mathcal{H}_{q}^{\otimes
n}\otimes \mathbb{C}[\![u]\!]$ is valid,%
\begin{equation}
\boldsymbol{E}(-u)\boldsymbol{G}^{\prime }(u)=\boldsymbol{G}^{\prime
}(uq^{2})+z(-u)^{n}\boldsymbol{G}^{\prime
}(uq^{-2})\prod_{i=1}^{n}q^{2N_{i}}\;.  \label{TQ}
\end{equation}
\end{proposition}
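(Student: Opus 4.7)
The statement \eqref{TQ} is a Baxter-type TQ functional equation. The natural strategy is to exploit the third intertwining relation \eqref{ybe''} at the specialisation $v=-u$, where the ratio $u/v=-1$ makes $R''$ degenerate. A short computation (using $\beta^{*}\beta=1-q^{2N}$) gives
\[
R''(-1)=\begin{pmatrix} 1-q^{2N} & -\beta^{*}\\ \beta & -1\end{pmatrix},
\]
whose operator-valued determinant vanishes identically. One easily checks that inside the combined auxiliary space $X:=\mathbb{C}^{2}\otimes\mathcal{F}_{a}$ the kernel of $R''(-1)$ is
\[
K=\mathrm{span}\bigl\{|k_{m}\rangle:=|0\rangle\otimes|m+1\rangle_a+|1\rangle\otimes|m\rangle_a : m\ge 0\bigr\},
\]
with a natural complement $C=\mathrm{span}\{|0\rangle\otimes|m\rangle_a : m\ge 0\}$, so that $X=K\oplus C$. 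The weight $W:=z^{\sigma}\otimes z^{N_{a}}$ with $\sigma=\bigl(\begin{smallmatrix}0&0\\0&1\end{smallmatrix}\bigr)$ commutes with $R''(-1)$, acts as multiplication by $z^{m+1}$ on $|k_{m}\rangle$ and by $z^{m}$ on $|0\rangle\otimes|m\rangle_a$.

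The starting point is
\[
\boldsymbol{E}(-u)\boldsymbol{G}'(u) = \mathrm{Tr}_{X}\bigl[W\,T_{1}(-u)T'_{2}(u)\bigr],
\]
which is immediate from the definitions of $\boldsymbol{E}$, $\boldsymbol{G}'$ and the fact that $T_{1}$, $T'_{2}$ act on disjoint auxiliary spaces. Building the $n$-fold monodromy from \eqref{ybe''} at $v=-u$ yields $R''(-1)\,T_{1}(-u)T'_{2}(u)=T'_{2}(u)T_{1}(-u)\,R''(-1)$, so that $T_{1}(-u)T'_{2}(u)$ preserves $K$. In the basis adapted to $X=K\oplus C$ the operator $T_{1}(-u)T'_{2}(u)$ is therefore block upper-triangular, and hence
\[
\boldsymbol{E}(-u)\boldsymbol{G}'(u)=\mathrm{Tr}_{K}\bigl[W\,\Phi_{K}\bigr]+\mathrm{Tr}_{C}\bigl[W\,\Phi_{C}\bigr],
\]
where $\Phi_{K}$ and $\Phi_{C}$ are the diagonal blocks.

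The bulk of the argument is to identify these two traces with the right-hand side of \eqref{TQ}. Under the natural identifications $|k_{m}\rangle\leftrightarrow|m\rangle_a$ and $|0\rangle\otimes|m\rangle_a\leftrightarrow|m\rangle_a$, and using the explicit matrix elements $\langle m'|L'(u)|m\rangle=u^{m}(\beta^{*})^{m}\beta^{m'}/(t)_{m}$, I would compute the product $L_{n,1}(-u)L'_{n,2}(u)\cdots L_{1,1}(-u)L'_{1,2}(u)$ restricted to each block, expressing the result after telescoping across the $n$ sites as a spectral shift inside $\boldsymbol{G}'$. The expected outcome is that the $K$-block contributes $\boldsymbol{G}'(uq^{2})$ (the shift $u\mapsto uq^{2}$ reflecting the zero of the operator-valued determinant of $R''$ at $u/v=-1$), while the $C$-block contributes $(-u)^{n}\prod_{i=1}^{n}q^{2N_{i}}\boldsymbol{G}'(uq^{-2})$; the factor $(-u)^{n}\prod_{i}q^{2N_{i}}$ accumulates one contribution from each site, originating from the entries $-u\beta^{*}$ and $-u$ of $L(-u)$ combined with the $q$-boson identity $\beta\beta^{*}=1-q^{2N+2}$. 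The overall extra factor $z$ in front of this second term comes from the uniform shift between the weights of $W$ on $K$ (namely $z^{m+1}$) and on $C$ (namely $z^{m}$), producing $\sum_{m} z^{m+1}=z\sum_{m}z^{m}$.

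The main obstacle is precisely this last identification of $\Phi_{K}$ and $\Phi_{C}$ with shifted $L'$-operators. It is a calculation inside $\mathcal{H}_{q}\otimes\mathrm{End}(\mathbb{C}^{2}\otimes\mathcal{F}_{a})$ using \eqref{H1}--\eqref{H2} and \eqref{L'}; the key subtlety is that the single-site operator $L(-u)L'(u)$ does \emph{not} preserve $K$ at one site, only globally after taking the $n$-fold product and invoking the intertwining. The noncommutative combinatorics of the $q$-oscillators must be used to verify that the boundary terms leaking between $K$ and $C$ at adjacent sites telescope correctly to produce the clean spectral shifts $u\mapsto uq^{\pm 2}$. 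Once these block identifications are established, summing the two diagonal traces reproduces \eqref{TQ}.
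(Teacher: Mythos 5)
Your overall strategy is the same as the paper's: specialise the mixed intertwiner $R''$ at $u/v=-1$, use its kernel to block-triangularise the combined single-site operator, and split the weighted trace over $\mathbb{C}^2\otimes\mathcal{F}$ into two pieces giving the two terms of (\ref{TQ}). However, there are concrete errors in the execution.

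First, the kernel of $R''(-1)=\bigl(\begin{smallmatrix}1-q^{2N} & -\beta^{*}\\ \beta & -1\end{smallmatrix}\bigr)$ is misidentified. A vector $|0\rangle\otimes f+|1\rangle\otimes g$ is annihilated iff $g=\beta f$ (the other condition is then automatic from $\beta^{*}\beta=1-q^{2N}$), so the kernel is $W=\{|0\rangle\otimes f+|1\rangle\otimes\beta f\}$, spanned by $w_{0}=|0,0\rangle$ and $w_{m}=|0,m\rangle+|1,m-1\rangle$ for $m\geq 1$. Your $K$ consists only of the $w_{m}$ with $m\geq 1$ and omits $w_{0}=|0,0\rangle$. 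This is not a cosmetic slip: the single-site operator $L(-u)L'(u)$ maps $w_{m}$ to a combination of \emph{all} $w_{m'}$, $m'\geq 0$, so your $K$ is not invariant (neither at one site nor globally), and $X=K\oplus C$ does not give a block-triangular form. One must work with the full kernel $W$ and the quotient $X/W$ (spanned by $\bar{w}_{m}=|1,m\rangle$), not with a complementary subspace of your $C$.

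Second, the $z$-bookkeeping is inverted. The weight $z^{\sigma}\otimes z^{N}$ acts on $w_{m}$ (both components) by $z^{m}$ and on the quotient class $\bar{w}_{m}=|1,m\rangle$ by $z^{m+1}$. It is therefore the \emph{quotient} block — the one producing $(-u)^{n}\boldsymbol{G}'(uq^{-2})\prod_i q^{2N_i}$ — that picks up the extra factor $z$, matching the second term of (\ref{TQ}); the kernel block carries plain weight $z^{m}$ and yields $\boldsymbol{G}'(uq^{2})$ with no $z$. Your assignment ($z^{m+1}$ on the block giving $\boldsymbol{G}'(uq^{2})$, $z^{m}$ on the other) would put the factor $z$ on the wrong term, and the missing $w_{0}$ would also delete the constant term of $\boldsymbol{G}'(uq^{2})$.

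Finally, the heart of the proof — showing that the induced action of $L_{13}(-u)L'_{23}(u)$ on $W\cong\mathcal{F}$ is exactly $L'(uq^{2})$, and on $X/W\cong\mathcal{F}$ is $-u$ times $L'(uq^{-2})$ twisted by $q^{2N}$ — is only asserted as an ``expected outcome,'' and your closing remark that the single-site operator does not preserve the kernel (requiring some telescoping across sites) is incorrect: the intertwining relation (\ref{ybe''}) holds site by site, so $L(-u)L'(u)$ preserves $\ker R''(-1)$ at a single site, and the identification of the two induced actions is a direct single-site computation with (\ref{crucial}). As written, the proposal does not establish the proposition.
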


\begin{proof}
We start by considering the kernel $W\subset \mathbb{C}^{2}\otimes \mathcal{F%
}$ of $\tilde{R}(-1)$. The latter is spanned by the vectors $%
w_{0}=|0,0\rangle $ and $w_{m}=|0,m\rangle +|1,m-1\rangle $ for $m>0$. Here $%
|\sigma ,m\rangle =|\sigma \rangle \otimes |m\rangle $ and we have adopted
the bra-ket notation also for $\mathbb{C}^{2}$ with $\sigma =0,1$. From the
Yang-Baxter equation (\ref{ybe''}) we infer that
\begin{equation*}
L_{13}(-u)L_{23}^{\prime }(u)W\otimes \mathcal{H}_{q}\otimes \mathbb{C}%
[\![u]\!]\subset W\otimes \mathcal{H}_{q}\otimes \mathbb{C}[\![u]\!]\;.
\end{equation*}%
A straightforward computation yields that for any $X\in \mathcal{H}%
_{q}\otimes \mathbb{C}[\![u]\!]$%
\begin{multline*}
L_{13}(-u)L_{23}^{\prime }(u)w_{m}\otimes X= \\
u^{m}\sum_{m^{\prime }\geq 0}\left[ q^{2m}|0,m^{\prime }\rangle \otimes
\frac{\beta ^{\ast m}\beta ^{m^{\prime }}}{(q^{2})_{m}}X+|1,m^{\prime
}\rangle \otimes (\frac{\beta \beta ^{\ast m}\beta ^{m^{\prime }}}{%
(q^{2})_{m}}-\frac{\beta ^{\ast (m-1)}\beta ^{m^{\prime }}}{(q^{2})_{m-1}})X%
\right] = \\
=u^{m}q^{2m}\sum_{m^{\prime }\geq 0}w_{m^{\prime }}\otimes \frac{\beta
^{\ast m}\beta ^{m^{\prime }}}{(q^{2})_{m}}X,
\end{multline*}%
where we have used in the second line that $\beta \beta ^{\ast
m}=(1-q^{2m})\beta ^{\ast (m-1)}+\beta ^{\ast m}\beta $; see (\ref{crucial}%
). Thus, we can identify the action $L_{13}(-u)L_{23}^{\prime }(u)$ on $%
W\otimes \mathcal{H}_{q}\otimes \mathbb{C}[\![u]\!]$ with the action of $%
L^{\prime }(uq^{2})$ on $\mathcal{F}[\![u]\!]\otimes \mathcal{H}_{q}\mathbb{\ }$%
using the isomorphism $W\cong \mathcal{F}$ with $w_{m}\mapsto |m\rangle $.

Let us now turn to the quotient space $(\mathbb{C}^{2}\otimes \mathcal{F})/W$
which is spanned by the vectors $\bar{w}_{m}:=|1,m\rangle $ with $m\geq 0$.
We again calculate the action on the basis vectors for any $X\in \mathcal{H}%
_{q}\otimes \mathbb{C}[\![u]\!]$ and find%
\begin{multline*}
L_{13}(-u)L_{23}^{\prime }(u)\bar{w}_{m}\otimes X= \\
-\frac{u^{m+1}(1-q^{2m+2})}{(q^{2})_{m}}\sum_{m^{\prime }\geq 0}w_{m^{\prime
}}\otimes \beta ^{\ast (m+1)}\beta ^{m^{\prime }}X-\frac{u^{m+1}}{(q^{2})_{m}%
}\sum_{m^{\prime }\geq 0}\bar{w}_{m^{\prime }}\otimes \beta ^{\ast m}\beta
^{m^{\prime }}X \\
+\frac{u^{m+1}(1-q^{2m+2})}{(q^{2})_{m}}\sum_{m^{\prime }>0}\bar{w}%
_{m^{\prime }-1}\otimes \beta ^{\ast (m+1)}\beta ^{m^{\prime }}X= \\
-\frac{u^{m+1}}{(q^{2})_{m}}\sum_{m^{\prime }\geq 0}q^{-2m^{\prime }}\bar{w}%
_{m^{\prime }}\otimes \beta ^{\ast m}\beta ^{m^{\prime }}q^{2N}X+\ldots
\end{multline*}%
where in the last line we have omitted terms in $W\otimes \mathcal{H}_{q}$
since they are sent to zero under the quotient map. To arrive at this result
use the relation $\beta ^{\ast }\beta ^{m^{\prime }}=\beta ^{m^{\prime
}-1}(1-q^{2(N-m^{\prime }+1)})$ which follows from (\ref{crucial}). Using
both formulae for the action of $L_{13}(-u)L_{23}^{\prime }(u)$, one now
easily shows that the trace of the product
\begin{equation*}
T_{0}(-u)T_{0^{\prime }}^{\prime }(u)=L_{0n}(-u)L_{0^{\prime }n}^{\prime
}(u)\cdots L_{01}(-u)L_{0^{\prime }1}^{\prime }(u),
\end{equation*}%
with $0,0^{\prime }$ now labelling the factors in $\mathbb{C}^{2}\otimes
\mathcal{F}$ and $i=1,\ldots ,n$ the factors in $\mathcal{H}_{q}^{\otimes n}$%
, can be written as the sum of the traces of the monodromy matrices $%
T^{\prime }(uq^{2})$ and $(-u)^{n}T^{\prime }(uq^{-2})q^{2(N_{1}+\cdots
+N_{n})}$. Hence, the assertion follows.
\end{proof}

\subsection{Noncommutative Macdonald polynomials and Cauchy identities}

Employing known formulae for commutative symmetric functions we now define
polynomials in the generators of the quantum plactic algebra $\mathcal{\hat{U%
}}_{n}^{-}$ which can interpreted as noncommutative analogues of the
Macdonald functions $P_{\lambda }^{\prime },~Q_{\lambda }^{\prime }$; see (%
\ref{cHLdef}).\smallskip

\noindent Let $K(t)=M(s,P)$ be the unitriangular transition matrix between
Schur and Hall-Littlewood $P$-functions and denote by $M(P,s)=K(t)^{-1}$ its
inverse.

\begin{definition}
We define the following noncommutative analogues of Macdonald polynomials%
\begin{equation}
\boldsymbol{P}_{\lambda ^{\prime }}^{\prime }:=\sum_{\mu \geq \lambda }%
\boldsymbol{s}_{\mu ^{\prime }}K_{\mu \lambda }(t),\qquad \boldsymbol{s}%
_{\lambda }:=\det (\boldsymbol{e}_{\lambda _{i}^{\prime }-i+j})_{1\leq
i,j\leq \ell }  \label{ncschur}
\end{equation}%
and%
\begin{equation}
\boldsymbol{Q}_{\lambda }^{\prime }=\sum_{\mu ^{\prime }\leq \lambda
^{\prime }}K_{\lambda ^{\prime }\mu ^{\prime }}^{-1}(t)\boldsymbol{S}_{\mu
}^{\prime },\qquad \boldsymbol{S}_{\lambda }^{\prime }:=\det (\boldsymbol{g}%
_{\lambda _{i}-i+j}^{\prime })_{1\leq i,j\leq \ell }\;,  \label{ncQ'S'}
\end{equation}%
where $\boldsymbol{s}_{\lambda }$ is the noncommutative Schur polynomial and
$\boldsymbol{S}_{\lambda }^{\prime }$ its noncommutative dual; see (\ref{SS'}%
).
\end{definition}

The matrix elements $K_{\mu \lambda }(t)$ are the celebrated
Kostka-Foulkes polynomials for which explicit formulae are known; see e.g.
\cite[III.6 p242 and Examples 4, 7, pp 243-245 ]{Macdonald}\ and
references therein. The matrix elements of the inverse matrix, $K^{-1}(t)$,
and hence $\boldsymbol{Q}_{\lambda }^{\prime }$ can also be explicitly
computed: denote by $R_{ij}$ the familiar raising and lowering operators of
the ring of symmetric functions, $R_{ij}\lambda =(\lambda _{1},\ldots
,\lambda _{i}+1,\ldots ,\lambda _{j}-1,\ldots )$. Define $R_{ji}^{\prime
}F_{\lambda }:=F_{(R_{ji}\lambda ^{\prime })^{\prime }}$, then
\begin{equation*}
\boldsymbol{Q}_{\lambda }^{\prime }:=\prod_{\lambda _{i}^{\prime }>\lambda
_{j}^{\prime }}(1-tR_{ji}^{\prime })\boldsymbol{S}_{\lambda }^{\prime }
\end{equation*}%
compare with \cite[III.6]{Macdonald}.

\begin{remark}
Note that the polynomials $\boldsymbol{s}_{\lambda },\boldsymbol{S}_{\lambda
}^{\prime }$ and $\boldsymbol{Q}_{\lambda }^{\prime }$ are well-defined
because of (\ref{integrable}). In particular, we have for any $\lambda ,\mu $
that
\begin{equation}
\boldsymbol{S}_{\lambda }^{\prime }\boldsymbol{S}_{\mu }^{\prime }=%
\boldsymbol{S}_{\mu }^{\prime }\boldsymbol{S}_{\lambda }^{\prime },\quad
\boldsymbol{s}_{\lambda }\boldsymbol{s}_{\mu }=\boldsymbol{s}_{\mu }%
\boldsymbol{s}_{\lambda },\quad \boldsymbol{S}_{\lambda }^{\prime }%
\boldsymbol{s}_{\mu }=\boldsymbol{s}_{\mu }\boldsymbol{S}_{\lambda }^{\prime
}\;.
\end{equation}%
Note that in general some of the relations are different from the case of
the usual symmetric functions in commuting variables: the functional
relation (\ref{TQ}) implies that%
\begin{equation}
\sum_{a+b=c}(-1)^{a}\boldsymbol{e}_{a}\boldsymbol{g}_{b}^{\prime }=%
t^c\boldsymbol{g}_{c}^{\prime }+zt^{N_{\text{tot}}}\boldsymbol{g}_{c-n}^{\prime
}\label{TQ2}
\end{equation}%
with $N_{\text{tot}}=N_{1}+\cdots +N_{n}$. Only if we set $z=0$ the above
relation coincides with the known one for commuting variables, which is
easily derived with help of the generating functions (\ref{E}) and (\ref{G'}%
). Thus, in general we have that $\boldsymbol{P}_{\lambda }^{\prime }\neq
b_{\lambda ^{\prime }}(t)\boldsymbol{Q}_{\lambda }^{\prime }$ which is
different from the commutative case.

In contrast, the case $z=0$ corresponding to the finite quantum plactic
algebra has relations completely analogus to the commutative case, that is,
the finite quantum plactic polynomials $\boldsymbol{s}_{\lambda },%
\boldsymbol{S}_{\lambda }^{\prime }$ and $\boldsymbol{P}_{\lambda }^{\prime
},\boldsymbol{Q}_{\lambda }^{\prime }$ with $a_{n}$ set formally to zero
behave just as their commutative counterparts. In particular, it then
follows from the above definitions that for $\lambda $ being a horizontal or
vertical $r$-strip one has $\boldsymbol{Q}_{(r)}^{\prime }=\boldsymbol{g}%
_{r}^{\prime }\;$and$\;(1-t)\boldsymbol{Q}_{(1^{r})}^{\prime }=\boldsymbol{e}%
_{r}$.
\end{remark}

The definition of the above noncommutative polynomials is motivated by the
following noncommutative analogues of Cauchy identities; compare with (\ref%
{C1}) and (\ref{C2}).

\begin{corollary}[noncommutative Cauchy identities]
We have the expansions%
\begin{equation}
\boldsymbol{E}(x_{1})\cdots \boldsymbol{E}(x_{k})=
\sum_{\lambda}m_{\lambda }(x)\boldsymbol{e}_{\lambda }=
\sum_{\lambda}s_{\lambda }(x)\boldsymbol{s}_{\lambda ^{\prime
}}=\sum_{\lambda}P_{\lambda }(x;t)\boldsymbol{P}%
_{\lambda ^{\prime }}^{\prime }  \label{ncC1}
\end{equation}%
where $\boldsymbol{e}_{\lambda }:=\boldsymbol{e}_{\lambda _{1}}\boldsymbol{e}%
_{\lambda _{2}}\cdots$ and%
\begin{equation}
\boldsymbol{G}^{\prime }(x_{1})\cdots \boldsymbol{G}^{\prime
}(x_{n-1})=\sum_{\lambda}m_{\lambda }(x)\boldsymbol{g}_{\lambda
}^{\prime }=\sum_{\lambda}s_{\lambda }(x)\boldsymbol{S}_{\lambda
}^{\prime }=\sum_{\lambda}P_{\lambda }^{\prime }(x;t)\boldsymbol{Q}%
_{\lambda }^{\prime },  \label{ncC2}
\end{equation}%
where $\boldsymbol{g}_{\lambda }^{\prime }:=\boldsymbol{g}%
_{\lambda _{1}}^{\prime }\boldsymbol{g}_{\lambda _{2}}^{\prime }\cdots$.
\end{corollary}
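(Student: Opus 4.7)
The plan is to reduce both identities to classical Cauchy identities in the ring of symmetric functions $\Lambda$ by lifting them through two ring homomorphisms into commutative subalgebras of $\mathcal{H}_q^{\otimes n}$. The enabling fact is the Integrability Corollary, which guarantees that the two families $\{\boldsymbol{e}_r\}_{r\geq 0}$ and $\{\boldsymbol{g}'_r\}_{r\geq 0}$ are each pairwise commuting. Consequently, $\mathbb{C}(q)$-algebra homomorphisms
\[
\phi_E,\;\phi_{G'}:\;\Lambda(t)\longrightarrow \mathcal{H}_q^{\otimes n},\qquad \phi_E(e_r)=\boldsymbol{e}_r,\qquad \phi_{G'}(g'_r)=\boldsymbol{g}'_r,
\]
are well-defined: for $\phi_E$ the $\{e_r\}$ freely generate $\Lambda$; for $\phi_{G'}$ the $\{g'_r\}$ are algebraically independent over $\mathbb{C}(t)$, as can be verified by specialising to $t=0$, where $(t)_rg'_r=h'_r$ reduces to the complete homogeneous symmetric function $h_r$ and an algebraic relation would descend to one among the $h_r$.

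First I would handle (\ref{ncC1}). Comparing (\ref{ncschur}) with the Jacobi-Trudi identity $s_\lambda=\det(e_{\lambda_i'-i+j})$ immediately gives $\phi_E(s_\lambda)=\boldsymbol{s}_\lambda$, and the classical analogue $P'_{\lambda'}(y;t)=\sum_{\mu\geq\lambda}K_{\mu\lambda}(t)\,s_{\mu'}(y)$ of the definition of $\boldsymbol{P}'_{\lambda'}$ — obtained by substituting the unitriangular expansion $s_\mu=\sum_\nu K_{\mu\nu}(t)P_\nu(t)$ into the Schur-Cauchy identity in (\ref{C1}) and matching coefficients of $P_\lambda(y;t)$ — yields $\phi_E(P'_{\lambda'})=\boldsymbol{P}'_{\lambda'}$. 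Expanding $\prod_{i=1}^k E(x_i;y)=\prod_{i,j}(1+x_iy_j)$ in the $y$-variables in the monomial/elementary form $\sum_\lambda m_\lambda(x)e_\lambda(y)$, in Cauchy's Schur form $\sum_\lambda s_\lambda(x)s_{\lambda'}(y)$, and in the Hall-Littlewood form $\sum_\lambda P_\lambda(x;t)P'_{\lambda'}(y;t)$ of (\ref{C1}), and then applying $\phi_E$ to the $y$-factor in each term, produces the three equalities of (\ref{ncC1}). The first of these alternatively follows directly from expanding $\prod_i\boldsymbol{E}(x_i)=\prod_i\sum_r x_i^r\boldsymbol{e}_r$ and grouping monomials via commutativity of the $\boldsymbol{e}_r$.

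The identity (\ref{ncC2}) proceeds analogously with $\phi_{G'}$ replacing $\phi_E$. The dual Jacobi-Trudi formula $S'_\lambda=\det(g'_{\lambda_i-i+j})$ gives $\phi_{G'}(S'_\lambda)=\boldsymbol{S}'_\lambda$, and the classical identity $Q'_\lambda=\sum_{\mu'\leq\lambda'}K^{-1}_{\lambda'\mu'}(t)S'_\mu$ — obtained by applying the automorphism $\omega_t^{-1}$ from the table (\ref{omega}) to the expansion $P_{\lambda'}=\sum_\nu K^{-1}_{\lambda'\nu}(t)s_\nu$ and using $\omega_t^{-1}(s_\nu)=S'_{\nu'}$ — yields $\phi_{G'}(Q'_\lambda)=\boldsymbol{Q}'_\lambda$ by inspection of (\ref{ncQ'S'}). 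Applying $\phi_{G'}$ to the $y$-variables in the three expansions of $\prod_{i=1}^{n-1}G'(x_i;y;t)=\prod_{i,j}(x_iy_j;t)_\infty^{-1}$ collected in (\ref{C2}) — namely in the monomial, Schur, and modified Hall-Littlewood forms — then produces the three equalities of (\ref{ncC2}), after swapping $\lambda\leftrightarrow\lambda'$ in the third.

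I do not anticipate any substantive obstacle: the entire argument is the functoriality of the classical Cauchy identities under the substitution of commuting noncommutative operators for symmetric-function generators. The two points requiring care are the algebraic independence of the $\{g'_r\}$ (settled by specialisation to $t=0$) and the essentially tautological verification that $\boldsymbol{s}_\lambda$, $\boldsymbol{S}'_\lambda$, $\boldsymbol{P}'_{\lambda'}$, $\boldsymbol{Q}'_\lambda$ are literally the images under $\phi_E$, $\phi_{G'}$ of their classical counterparts, which is immediate since the definitions use identical determinantal and linear-combination formulas in the relevant commuting generators. Finally, the restriction of the sums to partitions of length at most $k$ in (\ref{ncC1}) and at most $n-1$ in (\ref{ncC2}) is automatic, since $m_\lambda$, $s_\lambda$, $P_\lambda(t)$ and $P'_\lambda(t)$ all vanish in fewer variables than $\ell(\lambda)$.
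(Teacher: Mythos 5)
Your proposal is correct and follows essentially the same route as the paper: the paper's proof also reduces to the commutative case by observing that $\boldsymbol{s}_{\lambda'},\boldsymbol{P}'_{\lambda'},\boldsymbol{S}'_\lambda,\boldsymbol{Q}'_\lambda$ are defined by literally the same determinantal and transition-matrix formulas in the pairwise-commuting generators $\boldsymbol{e}_r$ and $\boldsymbol{g}'_r$, and then invokes the classical Cauchy expansions (\ref{C1}), (\ref{C2}). Your only addition is to make the phrase ``the same arguments as in the commuting case'' precise by packaging it as the ring homomorphisms $\phi_E,\phi_{G'}$ (with the algebraic-independence check for the $g'_r$), which is a legitimate and slightly more careful write-up of the identical idea.
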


\begin{proof}
The first equalities in (\ref{ncC1}) and (\ref{ncC2}) are obvious and follow
from the definition of $\boldsymbol{e}_{r}$ and $\boldsymbol{g}_{r}^{\prime
} $ as the expansion coefficients of the matrices (\ref{ncE}) and (\ref{ncG'}%
), respectively. To prove the other identities note that the definitions of $%
\boldsymbol{s}_{\lambda ^{\prime }},\boldsymbol{P}_{\lambda ^{\prime
}}^{\prime }$ and $\boldsymbol{S}_{\lambda }^{\prime }$, $\boldsymbol{Q}%
_{\lambda }^{\prime }$ are the same as in the case of commuting variables.
Thus, the expansions follow from the same arguments as in the case of
commuting variables using the known transition matrices $M(s,m)=K^{-1}(1)$
\cite[III.6, Table on page 241]{Macdonald} and $M(P,s)=K(t)^{-1}$.
\end{proof}

\section{Hall-Littlewood functions and the Yang-Baxter algebra}

We now describe the action of the Yang-Baxter algebra employing the language
of statistical mechanics: we show that the generators $A,B,C,D$ can be used
to compute partition functions of certain statistical vertex models defined
on a finite square lattice. These partition functions turn out to be skew
Hall-Littlewood functions. More precisely, the Yang-Baxter algebra
generators play the role of row-to-row transfer matrices, i.e. their matrix
elements yield the partition functions of a single lattice row with
appropriate boundary conditions imposed on the lattice. We state for each
generator a bijection between lattice configurations of the respective
statistical model and (semi-standard) skew tableaux, showing that the
Boltzmann weights of the statistical model coincide with the coefficient
functions (\ref{phi}) and (\ref{psi}) appearing in the definition of skew
Hall-Littlewood functions. This section can be seen as a preparatory step to
motivate our construction of cylindric Hall-Littlewood functions in the
subsequent section.

\begin{figure}[tbp]
\begin{equation*}
\includegraphics[scale=0.35]{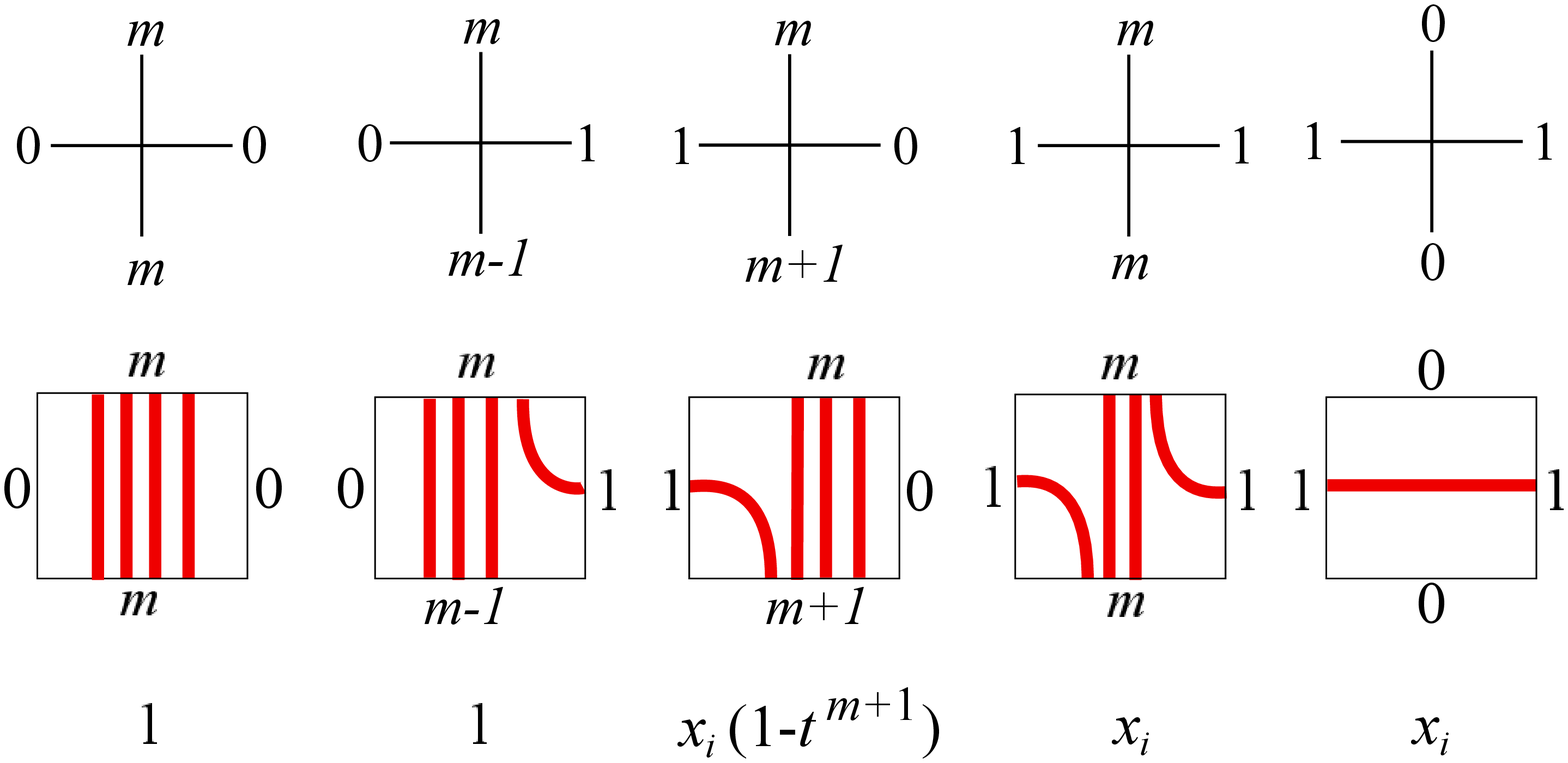}
\end{equation*}%
\caption{Depiction of the allowed vertex configurations associated with
the $L$-operator (\protect\ref{L}); see top row. Horizontal edges can have values $%
0 $ and $1$ only, while vertical edges can have values $m\in \mathbb{Z}%
_{\geq 0}$. Note the constraint that the sum of the values at the N and E
edge equals the sum of the W and S values. Below, in the middle row, the vertex
configurations are described in terms of nonintersecting paths. Listed at the bottom are
the associated weights where $i$ is the row index of the lattice.}
\label{fig:qbosonvertex}
\end{figure}

\subsection{The statistical vertex model associated with $L$}

Set $\mathbb{I}_{r}=\{0,1,2,\ldots ,r+1\}\subset \mathbb{Z}$ and consider
the set $\mathbb{L}:=\mathbb{I}_{\ell }\times \mathbb{I}_{n}$. We call $%
\mathbb{\dot{L}}:=\{\langle i,j\rangle \in \mathbb{L}~|~i\neq 0,\ell +1$ and
$j\neq 0,n+1\}$ the set of \emph{interior lattice points} and $\partial
\mathbb{L}:=\mathbb{L}\backslash \mathbb{\dot{L}}$ the set of \emph{boundary
lattice points,} identifying $\mathbb{\dot{L}}$\ with a square lattice of $%
\ell $ lattice rows and $n$ lattice columns; see the definition below. The $%
i $th lattice row with $i=1,\ldots ,\ell $ is the set $\{\langle r,s\rangle
\in \mathbb{L}:r=i\}$ and the $j$th lattice column with $j=1,\ldots ,n$ is
the set $\{\langle r,s\rangle \in \mathbb{L}:s=j\}$. We will keep the number
of lattice columns fixed throughout our discussion, while $\ell $ can vary.

\begin{definition}[lattice edges]
Let $u,v\in \mathbb{L}$. Then we call the pair

\begin{itemize}
\item $(u,v)$ a \emph{horizontal edge} if $u_{1}=v_{1}$ and $u_{2}+1=v_{2}$.
We refer to $u$ and $v$ as the start and end point of the edge,
respectively. Denote by $\mathbb{E}_{h}\subset \mathbb{L}\times \mathbb{L}$
the set of all horizontal edges which start or end at an interior lattice
point.

\item $(u,v)$ a \emph{vertical edge} if $u_{1}+1=v_{1}$ and $u_{2}=v_{2}$.
Similar as before we call $u$ and $v$ the start and end point and denote by $%
\mathbb{E}_{v}\subset \mathbb{L}\times \mathbb{L}$ the set of all vertical
edges which start or end at an interior lattice point.
\end{itemize}

We call the horizontal and vertical edges which start (end) at a boundary
lattice point and end (start) at an interior point \emph{outer} horizontal
and vertical edges, respectively.
\end{definition}

We will often refer to the horizontal edges in the $i$th lattice row ($%
i=1,\ldots ,\ell $) or the vertical edges in the $j$th lattice column ($%
j=1,\ldots ,n$). By this we shall mean horizontal edges which start or end
at a point $\langle i,r\rangle $ and vertical edges which start or end at a
point $\langle s,j\rangle $, respectively. By the horizontal edges in the $j$%
th lattice column we shall mean those edges which end in the $j$th lattice
column. Similarly the upper and lower vertical edges in the $i$th lattice
row will be those vertical edges which respectively end and start in the $i$%
th row.

We now assign so-called statistical variables to the lattice edges.

\begin{definition}[lattice \& vertex configurations]
A \emph{horizontal }and \emph{vertical edge configuration} are maps $\gamma
_{h}:\mathbb{E}_{h}\rightarrow \{0,1\}$ and $\gamma _{v}:\mathbb{E}%
_{v}\rightarrow \mathbb{Z}_{\geq 0}$, respectively. A pair $\gamma =(\gamma
_{h},\gamma _{v})$ is simply called a \emph{lattice configuration}. Given $%
\gamma $ and a lattice point $\langle i,j\rangle \in \mathbb{\dot{L}}$, we
call the images of the horizontal and vertical edges which either start or
end at this lattice point \emph{the vertex configuration }at $\langle
i,j\rangle $ and denote it by $\gamma _{\langle i,j\rangle }$.
\end{definition}

Each lattice configuration $\gamma $ can be assigned a statistical weight as
follows: given a vertex configuration $\gamma _{\langle i,j\rangle
}=\{\sigma ,m,\sigma ^{\prime },m^{\prime }\}$, where $\sigma ,\sigma
^{\prime }=0,1$ are the images of the W and E horizontal edges and $%
m,m^{\prime }\in \mathbb{Z}_{\geq 0}$ the images of the N and S vertical
edges, define%
\begin{equation}
\limfunc{wt}(\gamma ):=\prod_{\langle i,j\rangle \in \mathbb{\dot{L}}}%
\limfunc{wt}(\gamma _{\langle i,j\rangle }),\qquad \limfunc{wt}(\gamma
_{\langle i,j\rangle }):=\langle \sigma ,m|L(x_{i})|\sigma ^{\prime
},m^{\prime }\rangle \;.  \label{Lweight}
\end{equation}%
Here $\langle \sigma ,m|L(x_{i})|\sigma ^{\prime },m^{\prime }\rangle $ are
the matrix elements of the solution (\ref{L}) to the Yang-Baxter equation
with $\langle \sigma ,m|\sigma ^{\prime },m^{\prime }\rangle :=\delta
_{\sigma ,\sigma ^{\prime }}\langle m|m^{\prime }\rangle $ and $%
x_{i},\,i=1,\ldots ,\ell $ are some abstract commutative variables which
only depend on the row index. The nonzero weights for the allowed vertex
configurations are listed in Figure \ref{fig:qbosonvertex}. Note that only
the weights of vertex configurations for interior points contribute to the
weight of a lattice configuration.

\begin{remark}
Alternatively, each lattice configuration can be described in terms of
non-intersecting paths, closely related to the infinitely-friendly walker
model in \cite{Guttmann2,Guttmann3}, by mapping vertex configurations onto
those path configurations shown in Figure \ref{fig:qbosonvertex}. However,
we will not employ this path picture in the proofs but instead continue to
focus on an algebraic description in terms of vertex models.
\end{remark}

In what follows we will impose \emph{boundary conditions} on the lattice,
that is we will prescribe certain values for the outer horizontal and
vertical edges. Namely, given $\mu \in \mathcal{A}_{k,n}^{+}$, $\lambda \in
\mathcal{A}_{k^{\prime },n}^{+}$ and $\sigma ,\tau =0,1$ consider the set $%
\Gamma _{\lambda ,\mu }^{\sigma ,\tau }$ of lattice configurations $\gamma $
where the outer vertical edge starting at $\langle 0,j\rangle $ has value $%
m_{j}(\mu )$ and the outer vertical edge ending at $\langle \ell +1,j\rangle
$ value $m_{j}(\lambda )$ for all $1\leq j\leq n$. Fix the values of the
outer horizontal edges either starting at $\langle i,0\rangle $ or ending at
$\langle i,n+1\rangle $ to be $\sigma $ or $\tau $ respectively for all $%
1\leq i\leq \ell $.

\begin{definition}[partition function]
The \emph{partition function }is the weighted sum $Z_{\lambda ,\mu }^{\sigma
,\tau }(x_{1},\ldots ,x_{\ell })=\sum_{\gamma \in \Gamma _{\lambda ,\mu
}^{\sigma ,\tau }}\limfunc{wt}(\gamma )$ over those lattice configurations $%
\gamma $ which satisfy the boundary conditions specified by $\lambda ,\mu
,\sigma ,\tau $ as just described.
\end{definition}

\begin{lemma}
We have the following identities between partition functions and matrix
elements,%
\begin{eqnarray}
Z_{\lambda ,\mu }^{0,0}(x_{1},\ldots ,x_{\ell }) &=&\langle \lambda
|A(x_{1})\cdots A(x_{\ell })|\mu \rangle ,\quad \lambda ,\mu \in \mathcal{A}%
_{k,n}^{+}  \label{ZA} \\
Z_{\lambda ,\mu }^{1,0}(x_{1},\ldots ,x_{\ell }) &=&\langle \lambda
|B(x_{1})\cdots B(x_{\ell })|\mu \rangle ,\quad \lambda \in \mathcal{A}%
_{k+\ell ,n}^{+},\;\mu \in \mathcal{A}_{k,n}^{+}  \label{ZB} \\
Z_{\lambda ,\mu }^{0,1}(x_{1},\ldots ,x_{\ell }) &=&\langle \lambda
|C(x_{1})\cdots C(x_{\ell })|\mu \rangle ,\quad \lambda \in \mathcal{A}%
_{k-\ell ,n}^{+},\;\mu \in \mathcal{A}_{k,n}^{+}  \label{ZC} \\
Z_{\lambda ,\mu }^{1,1}(x_{1},\ldots ,x_{\ell }) &=&\langle \lambda
|D(x_{1})\cdots D(x_{\ell })|\mu \rangle ,\quad \lambda ,\mu \in \mathcal{A}%
_{k,n}^{+}  \label{ZD}
\end{eqnarray}
\end{lemma}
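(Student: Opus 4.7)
The plan is to prove the four identities by directly unpacking the definitions of $A, B, C, D$ as auxiliary matrix elements of $T(u) = L_n(u) L_{n-1}(u) \cdots L_1(u)$, inserting complete sets of intermediate states, and matching each resulting term with a single lattice configuration of the vertex model. All four cases are proved by the same argument; only the boundary values on the auxiliary ($\mathbb{C}^2$) space differ.

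Concretely, for \eqref{ZA} I would first split the product of row-operators by inserting complete sets of Fock basis states between consecutive rows:
\begin{equation*}
\langle \lambda | A(x_1)\cdots A(x_\ell) | \mu \rangle = \sum_{\nu_1,\ldots,\nu_{\ell-1} \in \mathcal{A}^+_{k,n}} \prod_{r=1}^{\ell} \langle \nu_{r-1} | A(x_r) | \nu_r \rangle, \qquad \nu_0 := \lambda,\ \nu_\ell := \mu.
\end{equation*}
Each vector $\nu_r = (m_1(\nu_r),\ldots,m_n(\nu_r))$ is interpreted as the collection of statistical variables living on the vertical edges between rows $r$ and $r+1$. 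Next, I would use $A(x_r) = \langle 0|T(x_r)|0\rangle_{\text{aux}}$ and insert complete sets $\sum_{\sigma\in\{0,1\}} |\sigma\rangle\langle \sigma|$ in the auxiliary space between consecutive factors of $T(x_r) = L_n(x_r)\cdots L_1(x_r)$, obtaining
\begin{equation*}
\langle \nu_{r-1} | A(x_r) | \nu_r \rangle = \sum_{\substack{\sigma_{r,1},\ldots,\sigma_{r,n-1} \\ \sigma_{r,0}=\sigma_{r,n}=0}} \prod_{j=1}^{n} \langle \sigma_{r,j}, m_j(\nu_{r-1}) | L(x_r) | \sigma_{r,j-1}, m_j(\nu_r)\rangle.
\end{equation*}
By the definition \eqref{Lweight}, the $j$th factor is precisely the weight $\limfunc{wt}(\gamma_{\langle r,j\rangle})$ of the vertex configuration at $\langle r,j\rangle\in\dot{\mathbb{L}}$, with horizontal edge values $\sigma_{r,j-1},\sigma_{r,j}$ and vertical edge values $m_j(\nu_{r-1}), m_j(\nu_r)$. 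Combining these products over $r$ and $j$ yields $\limfunc{wt}(\gamma)$ for the lattice configuration $\gamma$ determined by the chosen $\{\sigma_{r,j}\}$ and $\{\nu_r\}$, and the outer summations over all admissible intermediate states coincide exactly with the sum over $\gamma \in \Gamma^{0,0}_{\lambda,\mu}$.

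The identities \eqref{ZB}--\eqref{ZD} are obtained verbatim by replacing the auxiliary boundary labels with $(\sigma_{r,0},\sigma_{r,n}) = (1,0)$, $(0,1)$, $(1,1)$ respectively. The only additional point is consistency of the particle-number constraints: the matrix elements of $L$ in Figure~\ref{fig:qbosonvertex} enforce the local conservation $\sigma + m = \sigma' + m'$ at each vertex, so telescoping across one row gives $|\nu_{r-1}| - |\nu_r| = \sigma_{r,0} - \sigma_{r,n}$, and summing over rows yields $|\lambda|-|\mu| = \ell(\sigma-\tau)$. This forces the stated cardinality restrictions on $\lambda,\mu$ for each of the four cases and confirms that the sums on the right-hand sides are supported on the claimed sets. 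I expect no real obstacle beyond bookkeeping; the main thing to double-check is that the ordering convention $T = L_n \cdots L_1$ is compatible with the chosen labelling of columns from $1$ to $n$, so that the auxiliary insertions produce exactly the horizontal edges of the lattice in the correct order.
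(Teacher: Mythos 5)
Your proposal is correct and is precisely the argument the paper has in mind: the paper's proof consists of the single sentence that the identities are "a direct consequence of the definition (\ref{Lweight}) and the monodromy matrix (\ref{T})", and your insertion of complete sets of Fock states between rows and of auxiliary states $\sum_{\sigma}|\sigma\rangle\langle\sigma|$ between the factors of $T(x_r)=L_n(x_r)\cdots L_1(x_r)$ is exactly the standard unpacking of that statement, with each summand matching one lattice configuration and the telescoped conservation law fixing the admissible $\lambda$. The only caveat is the bookkeeping you already flag (which end of the product of row operators corresponds to row $1$, and the W/E convention in the matrix elements of $L$), which does not affect the validity of the argument.
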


\begin{proof}
This is a direct consequence of the definition (\ref{Lweight}) and the
monodromy matrix (\ref{T}).
\end{proof}

\begin{example}
Figure \ref{fig:Bexample} shows an example for an allowed configuration for
the $B$-operator when $n=6,\,k=4,\,\ell =3$ and $\mu =(5,5,4,3),\,\lambda
=(6,6,5,4,3,2,1)$. Because in each row one path is entering on the outer
horizontal edge from the left and none is exiting on the right, the level $%
k=\sum_{i=1}^{n}m_{i}$ increases by one in each row and, thus, the partition
function is only nonzero for $\lambda \in \mathcal{A}_{k+\ell ,n}^{+}$.
\end{example}

\begin{figure}[tbp]
\begin{equation*}
\includegraphics[scale=0.33]{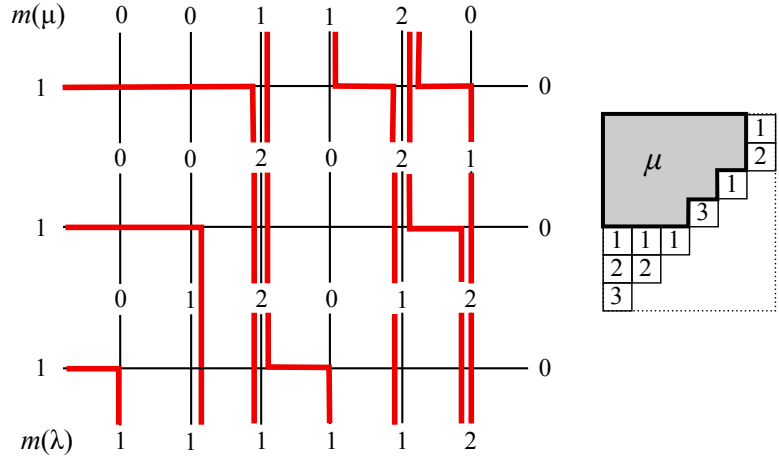}
\end{equation*}%
\caption{Graphical depiction of a sum over vertex configurations for the $B$%
-operator of the Yang-Baxter algebra (\protect\ref{ybe}) with $%
n=6,\,k=4,\,\ell =3$ and $\protect\mu=(5,5,4,3),\,\protect\lambda %
=(6,6,5,4,3,2,1)$. Depicted on the right is the corresponding skew tableau
under the bijection described in the proof of Lemma \protect\ref{bijection1}%
. }
\label{fig:Bexample}
\end{figure}

\paragraph{Lattice-tableau bijection 1: the $A$ and $B$-operator}

\begin{lemma}
\label{bijection1}Let $\lambda \in \mathcal{A}_{k+\ell ,n}^{+}$ and $\mu \in
\mathcal{A}_{k,n}^{+}$. There exists a bijection $\gamma \mapsto T(\gamma )$
between lattice configurations $\gamma \in \Gamma _{\lambda ,\mu }^{1,0}$
and skew tableaux $T$ of shape $\lambda /\mu $ such that $\limfunc{wt}%
(\gamma )=\varphi _{T(\gamma )}x^{T(\gamma )}$, where $\varphi _{T}$ is
defined in (\ref{phi}). The analogous statement holds for $\lambda ,\mu \in
\mathcal{A}_{k,n}^{+}$ and $\gamma \in \Gamma _{\lambda ,\mu }^{0,0}$.
\end{lemma}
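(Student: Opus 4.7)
The plan is to construct the bijection by recording intermediate partitions from the vertical-edge data of $\gamma$. For $\gamma \in \Gamma_{\lambda,\mu}^{1,0}$, define $\lambda^{(i)} \in \mathcal{A}_{k+i,n}^{+}$ for $i=0,1,\ldots,\ell$ by letting $m_{j}(\lambda^{(i)})$ be the value of the vertical edge at column $j$ between lattice rows $i$ and $i+1$, so $\lambda^{(0)}=\mu$ and $\lambda^{(\ell)}=\lambda$. The tableau $T(\gamma)$ is then taken to have shape $\lambda/\mu$ with entry $i$ filling the cells of $\lambda^{(i)}/\lambda^{(i-1)}$. This reduces the lemma to two verifications: (a) each $\lambda^{(i)}/\lambda^{(i-1)}$ is a horizontal strip, so $T(\gamma)$ is a genuine semi-standard skew tableau, and (b) the product of vertex weights contributed by row $i$ equals $\varphi_{\lambda^{(i)}/\lambda^{(i-1)}}(t)\, x_{i}^{|\lambda^{(i)}/\lambda^{(i-1)}|}$.

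The decisive local analysis is within a single row. Using Figure~\ref{fig:qbosonvertex} together with the conservation rule $W+S=E+N$, the horizontal edges of row $i$ carrying value $1$ form a disjoint union of column-intervals $[j_{s}^{\mathrm{start}}, j_{s}^{\mathrm{end}}]$, $s=1,2,\ldots$, each terminated on the right by a Type~3 vertex (where $m_{j_{s}^{\mathrm{end}}}$ rises by one) and begun on the left either by the external input $\sigma_{W}=1$ (in the $B$-case, only for the first segment, with the convention $j_{1}^{\mathrm{start}}:=0$) or by a Type~4 vertex (where $m_{j_{s}^{\mathrm{start}}}$ drops by one). A direct computation with the column multiplicities shows that such a segment contributes $(\lambda^{(i)})'_{j}-(\lambda^{(i-1)})'_{j}=1$ for $j\in (j_{s}^{\mathrm{start}},j_{s}^{\mathrm{end}}]$ and $0$ elsewhere. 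Since a Type~3 vertex at column $c$ forces the horizontal edge to its right to carry value $0$, two successive segments cannot share a column, and the intervals $(j_{s}^{\mathrm{start}},j_{s}^{\mathrm{end}}]$ are pairwise disjoint. This disjointness is precisely the horizontal-strip condition $(\lambda^{(i)})'_{j}-(\lambda^{(i-1)})'_{j}\in\{0,1\}$, and the inverse map is transparent: given $T$, one reads off the maximal runs of $1$'s in $\theta' = (\lambda^{(i)})'-(\lambda^{(i-1)})'$ and places a Type~4 at the left end, a Type~3 at the right end, Type~2 in between, and Type~1 at the inactive columns.

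For the weight calculation I then count vertex contributions in row $i$. Inspecting $L(u)$ in (\ref{L}), the factor $u=x_{i}$ is produced by Types~2 and~3 but not by Types~1 and~4, so the total power of $x_{i}$ equals the number of active columns, namely $j_{1}^{\mathrm{end}}+\sum_{s\geq 2}(j_{s}^{\mathrm{end}}-j_{s}^{\mathrm{start}})=|\lambda^{(i)}/\lambda^{(i-1)}|$, which is the number of entries $i$ in $T(\gamma)$. The nontrivial $q$-boson factor appears only at Type~3 vertices, where the matrix element $\langle m{+}1|\beta^{\ast}|m\rangle = 1-t^{m+1}$ produces $(1-t^{m_{j_{s}^{\mathrm{end}}}(\lambda^{(i)})})$. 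By the segment-to-run correspondence above, $\{j_{s}^{\mathrm{end}}\}$ coincides with the index set $I_{\lambda^{(i)}/\lambda^{(i-1)}}=\{j:\theta'_{j}=1,\,\theta'_{j+1}=0\}$ of (\ref{phi}), so the product of Type~3 factors equals $\varphi_{\lambda^{(i)}/\lambda^{(i-1)}}(t)$. Multiplying over rows yields $\mathrm{wt}(\gamma)=\varphi_{T(\gamma)}(t)\, x^{T(\gamma)}$ as claimed.

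The case $\gamma \in \Gamma_{\lambda,\mu}^{0,0}$ is handled by the identical argument, with only the cosmetic difference that no segment is driven by an external boundary input, so every segment in every row begins internally with a Type~4 vertex. I expect the main subtlety to be the identification $\{j_{s}^{\mathrm{end}}\}=I_{\lambda^{(i)}/\lambda^{(i-1)}}$, which requires translating between the multiplicity language $m_{j}=\lambda'_{j}-\lambda'_{j+1}$ natural to the vertex model and the column-count language $\lambda'_{j}$ appearing in the definition (\ref{phi}) of $\varphi$; everything else is a direct unpacking of the four vertex types.
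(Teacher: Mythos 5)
Your proposal is correct and follows essentially the same route as the paper: reduce to a single row, use vertex-by-vertex conservation to identify the occupied horizontal edges with the columns where $\lambda'_j-\mu'_j=1$, observe that the $u$-factors count exactly the $W=1$ vertices (giving $x_i^{|\lambda^{(i)}/\lambda^{(i-1)}|}$) and that the $u\beta^{\ast}$-vertices sit precisely at the right ends of the occupied runs, i.e.\ at $I_{\lambda^{(i)}/\lambda^{(i-1)}}$, yielding $\varphi_{\lambda^{(i)}/\lambda^{(i-1)}}(t)$. Your segment/run bookkeeping is just a local repackaging of the paper's telescoped identity $\sigma_j=\lambda'_j-\mu'_j$, so no substantive difference or gap.
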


\begin{proof}
We will make use of the fact that a (semi-standard) tableau $T$ of shape $%
\lambda /\mu $ is equivalent to a sequence of partitions $(\mu =\lambda
^{(0)},\lambda ^{(1)},\ldots ,\lambda ^{(\ell )}=\lambda )$ such that $%
\lambda ^{(i+1)}/\lambda ^{(i)}$ is a horizontal strip; see e.g. \cite[%
Chapter I, Section 1]{Macdonald}. It will therefore suffice to prove the
bijection for a single horizontal strip and a single row configuration.

\begin{figure}[tbp]
\begin{equation*}
\includegraphics[scale=0.37]{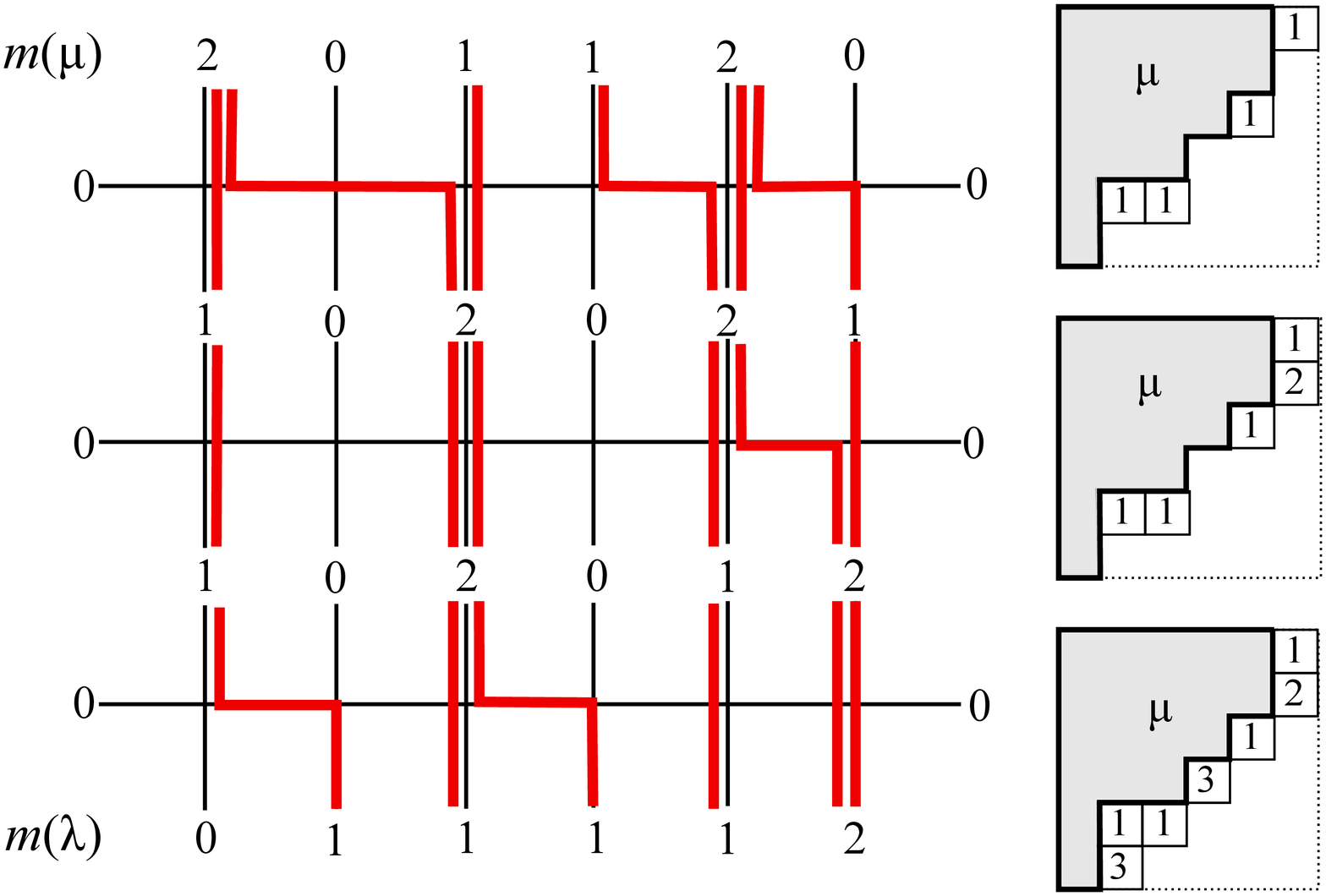}
\end{equation*}%
\caption{Graphical depiction of a sum over vertex configurations for the $A$%
-operator of the Yang-Baxter algebra (\protect\ref{ybe}) with $n=k=6,\,\ell
=3$ and $\protect\mu=(5,5,4,3,1,1),\,\protect\lambda=(6,6,5,4,3,2)$.
Depicted on the right are the corresponding skew diagrams for each lattice
row; see the proof of Lemma \protect\ref{bijection1}.}
\label{fig:Aexample}
\end{figure}

Assume $\ell =1$, that is consider one lattice row only. Fix an allowed
horizontal edge configuration $\sigma =(\sigma _{1},\sigma _{2},\ldots
,\sigma _{n+1})\in \{0,1\}^{n+1}$ in $\Gamma _{\lambda ,\mu }^{1,0}$.
According to Figure \ref{fig:qbosonvertex} $m_{i}(\lambda )-m_{i}(\mu
)=\sigma _{i}-\sigma _{i+1}$ for allowed configurations, where $%
m_{i}(\lambda )=\lambda _{i}^{\prime }-\lambda _{i+1}^{\prime }$ and $%
m_{i}(\mu )=\mu _{i}^{\prime }-\mu _{i+1}^{\prime }$. Moreover, we must have
that $0\leq \sigma _{i+1}\leq \min (1,m_{i}(\mu ))$ with $\sigma _{1}=1$ and
$\sigma _{n+1}=0$. Hence,
\begin{equation}
\sigma _{i+1}=\sigma _{i}+m_{i}(\mu )-m_{i}(\lambda )=\sigma
_{1}+\sum_{j=1}^{i}(m_{j}(\mu )-m_{j}(\lambda ))  \label{rowconfig}
\end{equation}%
and the horizontal edge value $\sigma _{i}$ in the $i$th lattice column is
given by $\sigma _{i}=\lambda _{i}^{\prime }-\mu _{i}^{\prime }=0,1$ because
$\mu _{1}^{\prime }=\sum_{j=1}^{n}m_{j}(\mu )=k$ while $\lambda _{1}^{\prime
}=\sum_{j=1}^{n}m_{j}(\lambda )=k+1$. Thus, $\lambda /\mu $ is a horizontal
strip with a box in the $i$th column of the skew diagram if $\sigma _{i}=1$.

Conversely, it is easy to see that each horizontal strip $\lambda /\mu $
with $\lambda _{1},\mu _{1}\leq n$ which has a box in the first column
defines a unique allowed row configuration of the lattice employing the same
formulae in reverse order.

Recall from the definition (\ref{phi}), that $\varphi _{\lambda /\mu }$
contains a factor $(1-t^{m_{i}(\lambda )})$ if there is a box added in the $%
i $th column of the skew diagram but none in the ($i+1$)th. In terms of the
correspondence between row configurations and horizontal strips this means
that the value of the horizontal\emph{\ }edge in $i$th lattice column is one
and zero in the ($i+1$)th. Thus, we obtain the weight of the third vertex
configuration shown in Figure \ref{fig:qbosonvertex}. Finally, it is obvious
that the sum $r=\sum_{i=1}^{n+1}\sigma _{i}$ over the horizontal edge values
gives the length of the horizontal strip. From this we now easily deduce
that $\limfunc{wt}(\gamma (T))=\varphi _{\lambda /\mu }x^{r}$ as desired.

Assume now $\ell \geq 1$ then it follows that a lattice configuration
defines a sequence of horizontal strips of the type just described and hence
we end up with skew tableaux of $\lambda /\mu $ with $\ell $ boxes in the
first column: starting from the first lattice row on the top add a box
labelled with one to each column of the Young diagram of $\mu $ whenever a
horizontal path edge occurs in the lattice column with the same number. Then
continue and do the same for the second lattice row labeling boxes now with
2, and so on. Conversely, given a skew tableau $T$ of shape $\lambda /\mu $
with $\lambda \in \mathcal{A}_{k+\ell ,n}^{+}$ and $\mu \in \mathcal{A}%
_{k,n}^{+}$, the first column of the skew tableau $T$ of shape $\lambda /\mu
$ must be of height $\ell $ with boxes labelled from 1 to $\ell $. It thus
decomposes into a sequence of allowed row configurations. The equality
between the weight $\limfunc{wt}(\gamma )$ of the lattice configuration $%
\gamma =\gamma (T)$ corresponding to the tableau $T$ and the value of the
coefficient function $\varphi _{T}$ is now a direct consequence of the fact
that $\varphi _{T}=\prod_{i\geq 0}\varphi _{\lambda ^{(i+1)}/\lambda ^{(i)}}$
with $\lambda ^{(i+1)}/\lambda ^{(i)}$ being the horizontal strips
determined by $T$ and that the weight $\limfunc{wt}(\gamma )$ of the lattice
configuration $\gamma $ is the product of the weights of its row
configurations, see (\ref{Lweight}).

The construction of the bijection for $\lambda ,\mu \in \mathcal{A}%
_{k,n}^{+} $ and $\gamma \in \Gamma _{\lambda ,\mu }^{0,0}$ is completely
analogous and only differs in the boundary conditions imposed on the square
lattice: now the left \emph{and} the right boundary edges are set to zero,
so one has $\sigma _{1}=\sigma _{n+1}=0$ in each row configuration. An
example is shown in Figure \ref{fig:Aexample}. From the graphical depiction
of the weights of vertex configurations in Figure \ref{fig:qbosonvertex} it
is evident that these boundary conditions imply that the level $%
k=\sum_{i=1}^{n}m_{i}(\lambda )$ is preserved in each lattice row: the sum
over the values of the outer vertical edges on the top must equal the sum of
the values of the outer vertical edges at the bottom. This means that we
have to choose $\lambda $ and $\mu \in \mathcal{A}_{k,n}^{+}$ to obtain
allowed lattice configurations. Conversely, only skew diagrams $\lambda /\mu
$ which have no boxes in the first column correspond to such lattice
configurations. We omit the remainder of the proof as it now closely follows
along the previous lines.
\end{proof}

\paragraph{Lattice-tableau bijection 2: the $C$ and $D$-operator}

\begin{figure}[tbp]
\begin{equation*}
\includegraphics[scale=0.35]{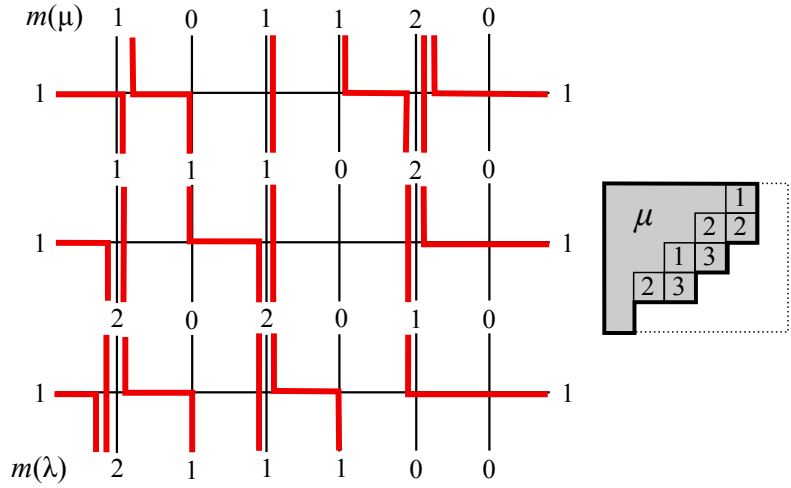}
\end{equation*}%
\caption{Graphical depiction of a sum over vertex configurations for the $D$%
-operator of the Yang-Baxter algebra (\protect\ref{ybe}) with $%
n=6,\,k=5,\,\ell =3$ and $\protect\mu=(5,5,4,3,1),\,\protect\lambda%
=(4,3,2,1,1)$. Depicted on the right is the corresponding skew tableau; see
the proof of Lemma \protect\ref{bijection2}.}
\label{fig:Dexample}
\end{figure}

\begin{lemma}
\label{bijection2}Let $\lambda \in \mathcal{A}_{k-\ell ,n}^{+}$ and $\mu \in
\mathcal{A}_{k,n}^{+}$ with $\ell \leq k$. There exists a bijection $\gamma
\mapsto T(\gamma )$ between lattice configurations $\gamma \in \Gamma
_{\lambda ,\mu }^{0,1}$ and Young tableaux $T$ of shape $\mu /\lambda $ such
that $\limfunc{wt}(\gamma )=\psi _{T(\gamma )}x^{T(\gamma )}$. A similar
statement holds for $\gamma \in \Gamma _{\lambda ,\mu }^{1,1}$ and $\lambda
,\mu \in \mathcal{A}_{k,n}^{+}$.
\end{lemma}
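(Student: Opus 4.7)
The plan is to mirror the proof of Lemma~\ref{bijection1}, adjusting only for the different outer horizontal boundary values and for the replacement of $\varphi_T$ by $\psi_T$. As in that proof, it suffices to establish the bijection at the level of a single row ($\ell=1$) and then extend to $\ell\geq 1$ via the decomposition of an allowed lattice configuration into $\ell$ successive row configurations, each producing a horizontal strip.

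For the $C$-operator, fix a single-row configuration $\sigma=(\sigma_1,\ldots,\sigma_{n+1})\in\{0,1\}^{n+1}$ with $\sigma_1=0$, $\sigma_{n+1}=1$. The same particle-conservation identity used in the previous lemma,
\begin{equation*}
\sigma_{i+1}=\sigma_i+m_i(\mu)-m_i(\lambda),
\end{equation*}
still holds, and iterating with $\sigma_1=0$ gives $\sigma_i=1-(\mu_i'-\lambda_i')$ for all $i$. The constraint $\sigma_i\in\{0,1\}$ thus forces $\mu/\lambda$ to be a horizontal strip, while $\sigma_1=0$ amounts to $\mu_1'-\lambda_1'=1$, i.e.\ the strip contains exactly one box in column~$1$ (matching the level drop of one per row). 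Conversely, every such horizontal strip uniquely determines an allowed row configuration with $\sigma_i=0$ precisely at the columns containing a box of $\mu/\lambda$. The index set $J_{\mu/\lambda}$ entering the definition~(\ref{psi}) consists of those $i$ with $\sigma_i=1$ and $\sigma_{i+1}=0$. Reading off the Boltzmann weight at this vertex type from Figure~\ref{fig:qbosonvertex}, one obtains exactly the factor $(1-t^{m_i(\lambda)})$ appearing in $\psi_{\mu/\lambda}$, while the remaining vertex types account for the requisite power of the spectral variable $x_{\text{row}}$.

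Iterating over the $\ell$ rows, an allowed configuration $\gamma\in\Gamma_{\lambda,\mu}^{0,1}$ decomposes into a chain of horizontal strips $\mu=\mu^{(0)}\supseteq\mu^{(1)}\supseteq\cdots\supseteq\mu^{(\ell)}=\lambda$, each containing a single column-$1$ box. This is equivalent to a semistandard tableau $T$ of shape $\mu/\lambda$ with entries in $\{1,\ldots,\ell\}$, obtained by labelling the box removed at the $i$-th step with the label~$i$. The total Boltzmann weight factorises row by row as $\prod_i\psi_{\mu^{(i-1)}/\mu^{(i)}}(t)\,x_i^{|\mu^{(i-1)}/\mu^{(i)}|}=\psi_T(t)\,x^T$, which is the claimed identity.

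The argument for the $D$-operator is strictly parallel. The boundary values $\sigma_1=\sigma_{n+1}=1$ still give $\sigma_i=1-(\mu_i'-\lambda_i')$, now forcing $\mu/\lambda$ to be a horizontal strip with \emph{no} box in column~$1$, consistent with the level being preserved ($\mu_1'=\lambda_1'$) in each row. All subsequent identifications between vertex weights and factors of $\psi_T x^T$ go through verbatim. The only real technical step, common to both parts, is the verification that the Boltzmann weight at the vertex type singled out by $J_{\mu/\lambda}$ matches the factor $(1-t^{m_i(\lambda)})$ in~(\ref{psi}) rather than $(1-t^{m_i(\mu)})$; this follows immediately from the action of $\beta$ and $\beta^*$ on the Fock basis given in~(\ref{Fock}) and is the natural counterpart of the corresponding matching for $\varphi_T$ in the $B$-case, where the alternative identity $b_\mu\varphi_{\lambda/\mu}=b_\lambda\psi_{\lambda/\mu}$ (cf.~(\ref{bphi=bpsi})) accounts for the complementary normalization.
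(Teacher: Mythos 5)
Your overall strategy is the same as the paper's: reduce to a single row via the conservation relation $\sigma_{i+1}=\sigma_i+m_i(\mu)-m_i(\lambda)$, deduce $\sigma_i=1-(\mu_i'-\lambda_i')$, identify the index set $J_{\mu/\lambda}$ with the vertex type carrying the factor $(1-t^{m_i(\lambda)})$, and then chain the $\ell$ rows into a sequence of horizontal strips. (The paper treats the $D$-operator first and derives the $C$-case from it; you do the reverse, which is immaterial.)

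There is, however, one step that fails as written: the bookkeeping of the spectral variable. You assert that the $i$-th row contributes $x_i^{|\mu^{(i-1)}/\mu^{(i)}|}$, one factor of $x_i$ per box of the strip, and conclude $\limfunc{wt}(\gamma)=\psi_T\,x^{T}$. The $L$-operator (\ref{L}) gives the opposite count: the entry $\langle 1,m|L(u)|1,m\rangle=u$ means that the generic $D$-vertex, where the horizontal path passes straight through and nothing happens on the vertical line, carries a factor $x_i$, so an \emph{empty} strip contributes $x_i^{\,n}$ rather than $x_i^{0}$, and a row removing a strip of length $r$ contributes $x_i^{\,n-r}$. (Independently of any figure conventions this is forced by $D_n=1$, equivalently by $A_r^{\ast}=D_{n-r}$ in Proposition \ref{YBPieri}, which says the coefficient of $u^{r}$ in $D(u)$ removes strips of length $n-r$.) The correct conclusion, which the paper's proof derives and which produces the prefactor and inverted variables in (\ref{CHL}) and (\ref{DHL}), is $\limfunc{wt}(\gamma)=(x_1\cdots x_\ell)^{n}\,\psi_T\,x^{-T}$; the formula $\psi_T x^{T}$ in the lemma's statement must be read with that caveat, and your phrase ``the remaining vertex types account for the requisite power'' hides exactly the computation that would have revealed this. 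A more minor slip: labelling the strip removed in the $i$-th lattice row from the top with the entry $i$ yields a filling that \emph{decreases} down columns; since the bottom row removes the innermost strip, it must carry the smallest entry, so the top row should be labelled $\ell$, the next $\ell-1$, and so on, as in the paper. Neither point affects your (correct) identification of $J_{\mu/\lambda}$ with the vertices where $\sigma_i=1$, $\sigma_{i+1}=0$.
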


\begin{proof}
We will discuss the $D$-operator case only, the generalization to the $C$%
-operator will then be obvious. The outer horizontal edges on the left and
right boundary now all have value one, $\sigma =\tau =1$; compare with the
definition of the $D$-operator via (\ref{T}). By the same reasoning as
before it suffices to consider the simplest case $\ell =1.$

Employing the relations $m_{i}(\lambda )-m_{i}(\mu )=\sigma _{i}-\sigma
_{i+1}$, $0\leq \sigma _{i+1}\leq \min (1,m_{i}(\mu ))$ and $\sigma
_{1}=\sigma _{n+1}=1$ for allowed row configurations according to Figure \ref%
{fig:qbosonvertex}, we now infer with the help of (\ref{rowconfig}) that $%
1-\sigma _{i}=\mu _{i}^{\prime }-\lambda _{i}^{\prime }=0,1$ since $\lambda
,\mu \in \mathcal{A}_{k,n}^{+}$. Thus, we obtain a horizontal strip $\mu
/\lambda $ which has a box in the $i$th column of the skew diagram if $%
\sigma _{i}=0$. Conversely, given a horizontal strip of shape $\mu /\lambda $
with $\lambda ,\mu \in \mathcal{A}_{k,n}^{+}$ we must have $\sigma
_{1}=\sigma _{n+1}=1$ since there can be no boxes in the first or $(n+1)$th
column.

The equality between weights $\limfunc{wt}(\gamma )$ of row configurations $%
\gamma $ and the coefficient function $\psi _{\mu /\lambda }$ defined in (%
\ref{psi}) follows again from their definitions: $\psi _{\mu /\lambda }$
contains a factor $(1-t^{m_{i}(\lambda )})$ if the skew diagram $\mu
/\lambda $ has a box in the ($i+1)$th column but not in the preceding one.
Comparison with the vertex configurations in Figure \ref{fig:qbosonvertex}
shows that this matches again with the third vertex configuration and its
weight. Furthermore, we obviously have that $r=n+1-\sum_{i=1}^{n+1}\sigma
_{i},$ where $\sigma _{i}$ are the horizontal edge values, equals the length
of the horizontal strip and, thus, $\limfunc{wt}(\gamma )=x_{1}^{n-r}\psi
_{\mu /\lambda }$.

The general case of $\ell \geq 1$ is obtained again by writing each tableau $%
T$ as a sequence of horizontal strips: label the lowest box \emph{within}
each column of the Young diagram of $\mu $ with $\ell $ whenever there is no
horizontal edge in the corresponding lattice column. (We employ the same
numbering convention of columns in the Young diagram and the lattice as
before.) Do the same for the second lattice row labeling boxes now with $%
\ell -1$. Continue with this procedure up to the last row. An example is
provided in Figure \ref{fig:Dexample}. The equality of lattice configuration
weights with $(x_{1}\cdots x_{\ell })^{n}\psi _{T}x^{-T}$ is once more
immediate from the definition of $\psi _{T}$ and the fact that lattice
configurations are products of row configurations.

Finally, matrix elements for the $C$-operator are obtained by setting $%
\sigma _{1}=0$ and $\sigma _{n+1}=1$ in each row. Allowed lattice
configurations can therefore only occur if $\lambda \in \mathcal{A}_{k-\ell
,n}^{+}$ and $\mu \in \mathcal{A}_{k,n}^{+}$ and the corresponding skew
tableaux $\mu /\lambda $ will have $\ell $ boxes in the first column. The
remainder of the proof now follows along the same lines as before.
\end{proof}

\subsection{Skew Hall-Littlewood functions as partition functions}

The following proposition and corollary summarise the findings of the two
previous lemmata with $\varphi _{\lambda /\mu }$ and $\psi _{\mu /\lambda }$
defined in (\ref{phi}) and (\ref{psi}), respectively.

\begin{proposition}[Pieri-type formulae]
\label{YBPieri}Let $\mu \in \mathcal{A}_{k,n}^{+}$ and set $t=q^{2}$. Then
the action of the Yang-Baxter operators can be expressed as%
\begin{equation}
A_{r}|\mu \rangle =\sum_{\substack{ \lambda -\mu =(r),  \\ \lambda \in
\mathcal{A}_{k,n}^{+}}}\varphi _{\lambda /\mu }(t)|\lambda \rangle ,\text{%
\qquad }B_{r}|\mu \rangle =\sum_{\substack{ \lambda -\mu =(r),  \\ \lambda
\in \mathcal{A}_{k+1,n}^{+}}}\varphi _{\lambda /\mu }(t)|\lambda \rangle
\label{ABPieri}
\end{equation}%
and%
\begin{equation}
C_{r}|\mu \rangle =\sum_{\substack{ \mu -\lambda =(n-r),  \\ \lambda \in
\mathcal{A}_{k-1,n}^{+}}}\psi _{\mu /\lambda }(t)|\lambda \rangle ,\text{%
\qquad }D_{r}|\mu \rangle =\sum_{\substack{ \mu -\lambda =(n-r),  \\ \lambda
\in \mathcal{A}_{k,n}^{+}}}\psi _{\mu /\lambda }(t)|\lambda \rangle \ ,
\label{CDPieri}
\end{equation}%
where the notation $\lambda -\mu =(r)$ and $\mu -\lambda =(n-r)$ means that
the skew diagrams $\lambda /\mu $ and $\mu /\lambda $ are horizontal strips
of length $r$ and $n-r$, respectively. In particular we have $A_{r}^{\ast
}=D_{n-r}$ and $B_{r}^{\ast }=C_{n-r}$ with respect to the inner product on
the Fock space $\mathcal{F}^{\otimes n}$.
\end{proposition}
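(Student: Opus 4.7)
The plan is to specialise the bijections of Lemmas \ref{bijection1} and \ref{bijection2} to a single lattice row and extract coefficients in the spectral variable; the adjoint relations then fall out of the identity $b_\mu(t)\,\varphi_{\lambda/\mu}(t) = b_\lambda(t)\,\psi_{\lambda/\mu}(t)$ recorded in (\ref{bphi=bpsi}).

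I first set $\ell = 1$ in (\ref{ZA})--(\ref{ZD}); then the partition functions on the left become matrix elements of the single generating functions $A(x)$, $B(x)$, $C(x)$, $D(x)$. Power-series expansion $A(x) = \sum_{r \geq 0} x^r A_r$ (and likewise for $B, C, D$) identifies the coefficient of $x^r$ with the matrix element of the operator $A_r$, etc. On the lattice side Lemma \ref{bijection1} shows that for $\sigma=\tau=0$ (respectively $\sigma=1$, $\tau=0$) each single-row configuration in $\Gamma^{0,0}_{\lambda,\mu}$ (respectively $\Gamma^{1,0}_{\lambda,\mu}$) is in weight-preserving bijection with a single horizontal strip $\lambda/\mu$, the total weight being $\varphi_{\lambda/\mu}(t)\,x^{|\lambda/\mu|}$. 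Extracting the coefficient of $x^r$ yields (\ref{ABPieri}); the constraint on the level $k$ is dictated by the two boundary values, namely $\sigma=0=\tau$ preserves $k$, whereas $\sigma=1$, $\tau=0$ forces $\lambda$ to contain one extra part in the first column.

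An entirely parallel argument using Lemma \ref{bijection2} handles the $C$ and $D$ operators: for $\sigma=\tau=1$ (respectively $\sigma=0$, $\tau=1$) each single-row configuration corresponds to a horizontal strip $\mu/\lambda$, with total weight $\psi_{\mu/\lambda}(t)\,x^{\,n-|\mu/\lambda|}$. The shift in exponent compared with the $A,B$ case reflects the feature that in the $D$-vertex dictionary of Figure \ref{fig:qbosonvertex} every lattice column \emph{not} carrying a horizontal-strip box contributes a factor of $x$. Extracting the coefficient of $x^r$ therefore enforces $|\mu/\lambda| = n-r$, producing (\ref{CDPieri}).

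Finally I turn to the adjoint relations. The inner product on $\mathcal{F}^{\otimes n}$ induced by $\imath$ is diagonal in the basis $\{|\lambda\rangle\}$ with $\langle\lambda|\lambda\rangle = b_\lambda(t)$; this follows from $\langle m|m\rangle = (t)_m$ on each factor of $\mathcal{F}$ (with $t=q^2$) together with the definition $b_\lambda(t)=\prod_{i>0}(t)_{m_i(\lambda)}$. Combining this normalisation with the Pieri rules proved above gives
\begin{equation*}
\langle \lambda | A_r | \mu \rangle \;=\; \varphi_{\lambda/\mu}(t)\,b_\lambda(t),
\qquad
\langle \mu | D_{n-r} | \lambda \rangle \;=\; \psi_{\lambda/\mu}(t)\,b_\mu(t),
\end{equation*}
whenever $\lambda/\mu=(r)$ is a horizontal strip with $\lambda,\mu\in\mathcal{A}_{k,n}^{+}$. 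Equality of these two expressions is precisely (\ref{bphi=bpsi}), so $A_r^{\ast}=D_{n-r}$; the proof of $B_r^{\ast}=C_{n-r}$ is identical, with the replacement $k\mapsto k+1$ to account for the particle created by $B$ and annihilated by $C$. The main content of the proposition is thus a bookkeeping exercise once the two lemmas are in hand, and there is no substantive obstacle beyond matching the boundary-condition conventions carefully on both sides of each identity.
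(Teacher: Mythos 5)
Your derivation of the four Pieri rules is correct and is exactly the route the paper takes: Proposition \ref{YBPieri} is presented there as a direct summary of Lemmas \ref{bijection1} and \ref{bijection2}, obtained precisely by restricting to a single lattice row in (\ref{ZA})--(\ref{ZD}) and reading off the coefficient of $x^{r}$. Your bookkeeping of the boundary values $\sigma,\tau$ and of the exponent $n-|\mu/\lambda|$ in the $C,D$ cases matches the weights recorded in the proofs of those lemmas.

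The adjointness argument, however, rests on an incorrect normalisation of the induced inner product. The isomorphism $\imath$ sends $\langle m|$ to $(t)_{m}|m\rangle$, so the induced form satisfies $\langle m|m\rangle =1/(t)_{m}$, \emph{not} $(t)_{m}$: this is forced by the paper's assertion that $\beta$ and $\beta ^{\ast }$ are mutually adjoint, since $(\beta ^{\ast }|m\rangle ,|m+1\rangle )=(1-t^{m+1})\langle m+1|m+1\rangle$ must equal $(|m\rangle ,\beta |m+1\rangle )=\langle m|m\rangle$. Hence $\langle \lambda |\lambda \rangle =1/b_{\lambda }(t)$ and the two matrix elements you display should read $\varphi _{\lambda /\mu }(t)/b_{\lambda }(t)$ and $\psi _{\lambda /\mu }(t)/b_{\mu }(t)$; their equality is then exactly (\ref{bphi=bpsi}). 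As written, your displayed identity $\varphi _{\lambda /\mu }b_{\lambda }=\psi _{\lambda /\mu }b_{\mu }$ is false in general --- for $n=3$, $\mu =(1,1)$, $\lambda =(2,1)$ one has $\varphi _{\lambda /\mu }=1-t$, $\psi _{\lambda /\mu }=1-t^{2}$, $b_{\lambda }=(1-t)^{2}$, $b_{\mu }=(1-t)(1-t^{2})$, giving $(1-t)^{3}\neq (1-t)(1-t^{2})^{2}$ --- and it is not what (\ref{bphi=bpsi}) states, which is $b_{\mu }\varphi _{\lambda /\mu }=b_{\lambda }\psi _{\lambda /\mu }$. (Part of the confusion is notational: in the paper $\langle \lambda |A_{r}|\mu \rangle$ denotes the dual pairing and equals $\varphi _{\lambda /\mu }$ with no $b$-factor at all; the adjoint is taken with respect to the \emph{induced} inner product.) Once the normalisation is inverted the argument goes through verbatim, so the conclusion $A_{r}^{\ast }=D_{n-r}$, $B_{r}^{\ast }=C_{n-r}$ stands, but the step as stated does not.
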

The multiple action of the Yang-Baxter algebra generators $%
A,B,C,D$ can be described in terms of skew Hall-Littlewood functions.
\begin{corollary}
\label{YBHL} Let $%
\mu \in \mathcal{A}_{k,n}^{+}$ and $x_{1},...,x_{\ell }$ be some generic
variables, then we have%
\begin{eqnarray}
A(x_{1})\cdots A(x_{\ell })|\mu \rangle &=&\sum_{\lambda \in \mathcal{A}%
_{k,n}^{+}}Q_{\lambda /\mu }(x_{1},...,x_{\ell };t)|\lambda \rangle ,
\label{AHL} \\
B(x_{1})\cdots B(x_{\ell })|\mu \rangle &=&\sum_{\lambda \in \mathcal{A}%
_{k+\ell ,n}^{+}}Q_{\lambda /\mu }(x_{1},...,x_{\ell };t)|\lambda \rangle ,
\label{BHL} \\
C(x_{1})\cdots C(x_{\ell })|\mu \rangle &=&(x_{1}\cdots x_{\ell
})^{n}\sum_{\lambda \in \mathcal{A}_{k-\ell ,n}^{+}}P_{\mu /\lambda
}(x_{1}^{-1},...,x_{\ell }^{-1};t)|\lambda \rangle ,  \label{CHL} \\
D(x_{1})\cdots D(x_{\ell })|\mu \rangle &=&(x_{1}\cdots x_{\ell
})^{n}\sum_{\lambda \in \mathcal{A}_{k,n}^{+}}P_{\mu /\lambda
}(x_{1}^{-1},...,x_{\ell }^{-1};t)|\lambda \rangle \ .  \label{DHL}
\end{eqnarray}
\end{corollary}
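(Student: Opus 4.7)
The plan is to iterate the Pieri-type formulas of Proposition \ref{YBPieri} and recognise the resulting sums over chains of partitions as the definition of skew Hall-Littlewood $P$- and $Q$-functions via semi-standard tableaux in (\ref{HLdef})--(\ref{MacdonaldP&Q}). Since by (\ref{aba0}) all of $A(x_i)$, $B(x_i)$, $C(x_i)$, $D(x_i)$ commute among themselves for varying spectral parameters, the ordering of the factors in a product is immaterial, which is essential for the sums below to be symmetric functions.

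First I would handle $A(x_1)\cdots A(x_\ell)|\mu\rangle$. Expanding $A(x_i)=\sum_{r\ge 0}x_i^r A_r$ and applying (\ref{ABPieri}) inductively, one obtains
\begin{equation*}
A(x_1)\cdots A(x_\ell)|\mu\rangle=\sum_{\lambda\in\mathcal{A}_{k,n}^+}\Bigl(\sum_{\mu=\lambda^{(0)}\subset\cdots\subset\lambda^{(\ell)}=\lambda}\prod_{i=1}^\ell\varphi_{\lambda^{(i)}/\lambda^{(i-1)}}(t)\,x_i^{|\lambda^{(i)}|-|\lambda^{(i-1)}|}\Bigr)|\lambda\rangle,
\end{equation*}
where each skew diagram $\lambda^{(i)}/\lambda^{(i-1)}$ is a horizontal strip. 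Such chains are in bijection with semi-standard tableaux $T$ of shape $\lambda/\mu$ with entries in $\{1,\ldots,\ell\}$: the boxes labelled $i$ form the strip $\lambda^{(i)}/\lambda^{(i-1)}$. Under this bijection the inner sum equals $\sum_{|T|=\lambda/\mu}\varphi_T(t)\,x^T=Q_{\lambda/\mu}(x_1,\ldots,x_\ell;t)$ by definition, proving (\ref{AHL}). The identity (\ref{BHL}) is proved identically, noting from (\ref{ABPieri}) that $B_r$ differs from $A_r$ only in that the level is raised from $k$ to $k+1$, which forces $\lambda\in\mathcal{A}_{k+\ell,n}^+$ after $\ell$ iterations.

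For the $D$-operator, the key difference lies in (\ref{CDPieri}): $D_r|\mu\rangle$ sums over horizontal strips $\mu/\lambda$ of length $n-r$ weighted by $\psi_{\mu/\lambda}(t)$. Writing $s=|\mu|-|\lambda|=n-r$, one has
\begin{equation*}
D(x)|\mu\rangle=\sum_\lambda x^{n-(|\mu|-|\lambda|)}\psi_{\mu/\lambda}(t)|\lambda\rangle=x^n\sum_\lambda (x^{-1})^{|\mu|-|\lambda|}\psi_{\mu/\lambda}(t)|\lambda\rangle.
\end{equation*}
Iterating $\ell$ times and reading the chain as $\lambda=\lambda^{(0)}\subset\lambda^{(1)}\subset\cdots\subset\lambda^{(\ell)}=\mu$, each intermediate step $\lambda^{(i)}/\lambda^{(i-1)}$ is a horizontal strip contributing $\psi_{\lambda^{(i)}/\lambda^{(i-1)}}(t)(x_i^{-1})^{|\lambda^{(i)}|-|\lambda^{(i-1)}|}$, with an overall prefactor $(x_1\cdots x_\ell)^n$ pulled out. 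The chain is again encoded by a semi-standard tableau $T$ of shape $\mu/\lambda$ with entries in $\{1,\ldots,\ell\}$, and the inner sum becomes $\sum_T\psi_T(t)(x^{-1})^T=P_{\mu/\lambda}(x_1^{-1},\ldots,x_\ell^{-1};t)$, yielding (\ref{DHL}). The proof of (\ref{CHL}) is the same after replacing $\mathcal{A}_{k,n}^+$ by $\mathcal{A}_{k-\ell,n}^+$, as $C_r$ lowers the level by one per application.

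The main (and only) subtlety will be the careful bookkeeping in the $C$, $D$ cases, where the spectral variable $x_i$ tracks $r=n-s_i$ rather than the strip length $s_i$ itself, producing the $(x_1\cdots x_\ell)^n$ prefactor and the inversion $x_i\mapsto x_i^{-1}$ in the resulting Hall-Littlewood $P$-functions. Everything else reduces to the standard tableau-chain correspondence underlying (\ref{HLdef}).
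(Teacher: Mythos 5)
Your proof is correct and follows essentially the same route as the paper: the paper establishes the general-$\ell$ case through the lattice-configuration/tableau bijections of Lemmas \ref{bijection1} and \ref{bijection2} (a lattice configuration is a product of row configurations, i.e.\ a chain of horizontal strips), and your iteration of the Pieri-type formulas of Proposition \ref{YBPieri} is the algebraic restatement of exactly that decomposition. The bookkeeping for the $C$, $D$ cases (the prefactor $(x_1\cdots x_\ell)^n$ and the inversion $x_i\mapsto x_i^{-1}$) is handled correctly.
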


\begin{remark}
Note in particular that for $|\emptyset \rangle :=|(0,0,\ldots )\rangle $
(the pseudo-vacuum) we have the following expansion%
\begin{equation}
B(x_{1})\cdots B(x_{k})|\emptyset \rangle =\sum_{\lambda \in \mathcal{A}%
_{k,n}^{+}}Q_{\lambda }(x_{1},...,x_{k};t)|\lambda \rangle \ .
\label{BetheHL}
\end{equation}%
For this special case the connection with Hall-Littlewood functions has been
first dicussed in \cite[Section 3, Propositions 3 and 4]{Tsilevich}.
However, the formula (\ref{BetheHL}) differs from the one in \cite[Section
3, Propositions 3 and 4]{Tsilevich}: in our result the sums are restricted
to partitions of a fixed length $\ell (\lambda )=k\geq 0$. Without this
restriction one obtains erroneous results.
\end{remark}

\section{Cylindric Hall-Littlewood functions}

So far we have considered lattice configurations where the horizontal edge
values on the left and on the right boundary are fixed to certain values in
each row. Now we wish to consider (quasi) periodic boundary conditions, i.e.
we move onto the cylinder by identifying the first and the last lattice
column.

\subsection{Periodic boundary conditions}

Given $\lambda ,\mu \in \mathcal{A}_{k,n}^{+}$ we now restrict $1\leq $ $%
\ell \leq k$ and consider the set $\Gamma _{\lambda ,\mu }$ of lattice
configurations $\gamma $ where the outer vertical edge starting at $\langle
0,j\rangle $ has value $m_{j}(\mu )$ and the outer vertical edge ending at $%
\langle \ell +1,j\rangle $ value $m_{j}(\lambda )$ for all $1\leq j\leq n$
as before. Moreover, $\gamma _{h}$ should satisfy $\gamma _{h}(\langle
i,0\rangle ,\langle i,1\rangle )=\gamma _{h}(\langle i,n\rangle ,\langle
i,n+1\rangle )$ for all $1\leq i\leq \ell $. That is, the values of the
outer horizontal edges either starting at $\langle i,0\rangle $ or ending at
$\langle i,n+1\rangle $ need to be the same. Clearly this condition is
equivalent to considering the cylindric lattice $\mathbb{L}^{\text{cyl}}:=%
\mathbb{I}_{\ell }\times \mathbb{Z}_{n}$ with the obvious generalisations of
the definitions of horizontal, vertical edges and lattice configurations.

Denote by $\Gamma _{\lambda /d/\mu }:=\{\gamma \in \Gamma _{\lambda }^{\mu
}:\sum_{i=1}^{\ell }\gamma _{h}(\langle i,0\rangle ,\langle i,1\rangle )=d\}$
the subset of configurations where the sum over the values of the left (or
right) outer horizontal edges is $d$ and let
\begin{equation}
Z_{\lambda /d/\mu }(x_{1},\ldots ,x_{\ell }):=\sum_{\gamma \in \Gamma
_{\lambda /d/\mu }}\limfunc{wt}(\gamma )
\end{equation}%
be the partition function for the lattice with periodic boundary conditions
in the horizontal direction. We introduce a formal variable $z$ keeping
track of the winding number around the cylinder; the latter is the same
variable as the one used in (\ref{qplacticrep}) and (\ref{ncE}).

\begin{lemma}
We have the identity%
\begin{equation}
\langle \lambda |\boldsymbol{E}(x_{1})\cdots \boldsymbol{E}(x_{\ell })|\mu
\rangle =\sum_{d\geq 0}z^{d}Z_{\lambda /d/\mu }(x_{1},\ldots ,x_{\ell
}),\qquad \ell \leq k\;,  \label{ZE}
\end{equation}%
and, hence, in light of definition (\ref{ncE}) that%
\begin{equation}
Z_{\lambda /d/\mu }(x_{1},\ldots ,x_{\ell })=\sum_{0<i_{1}<\cdots <i_{d}\leq
\ell }\langle \lambda |A(x_{1})\cdots D(x_{i_{1}})\cdots D(x_{i_{d}})\cdots
A(x_{\ell })|\mu \rangle \;.  \label{ZEd}
\end{equation}%
In particular, the partition function $Z_{\lambda /d/\mu }(x_{1},\ldots
,x_{\ell })$ is symmetric in the variables $(x_{1},\ldots ,x_{\ell })$.
\end{lemma}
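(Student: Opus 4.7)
The plan is to distribute the product $\boldsymbol{E}(x_1)\cdots\boldsymbol{E}(x_\ell)$ using the definition $\boldsymbol{E}(x)=A(x)+zD(x)$ from (\ref{ncE}), then interpret each resulting term as a cylindric lattice partition function. Concretely, I would write
\begin{equation*}
\boldsymbol{E}(x_1)\cdots\boldsymbol{E}(x_\ell)=\sum_{S\subset\{1,\ldots,\ell\}}z^{|S|}X^{S}(x_1,\ldots,x_\ell),\qquad X^{S}=\prod_{i=1}^{\ell}X_i,
\end{equation*}
where the product is ordered by $i$ and $X_i=D(x_i)$ if $i\in S$ and $X_i=A(x_i)$ otherwise. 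Grouping by $d=|S|$ immediately reduces the problem to identifying $\langle\lambda|X^{S}|\mu\rangle$ with a row-by-row partition function.

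Next I would insert complete sets of Fock-space basis vectors between consecutive factors, writing
\begin{equation*}
\langle\lambda|X^{S}|\mu\rangle=\sum_{\mu=\nu^{(0)},\nu^{(1)},\ldots,\nu^{(\ell)}=\lambda}\prod_{i=1}^{\ell}\langle\nu^{(i)}|X_i|\nu^{(i-1)}\rangle,
\end{equation*}
and invoke (\ref{ZA}) and (\ref{ZD}): each single-row matrix element $\langle\nu^{(i)}|A(x_i)|\nu^{(i-1)}\rangle$ (resp.\ $\langle\nu^{(i)}|D(x_i)|\nu^{(i-1)}\rangle$) equals the partition function of a single lattice row with outer horizontal edges fixed to $\sigma=0$ (resp.\ $\sigma=1$) on both left and right, and outer vertical edges determined by $\nu^{(i-1)}$ above and $\nu^{(i)}$ below. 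Multiplying row partition functions and summing over intermediate states reconstructs the full $\ell$-row partition function, and the horizontal boundary values on left and right match automatically in each row—this is precisely the cylindric identification $\gamma_h(\langle i,0\rangle,\langle i,1\rangle)=\gamma_h(\langle i,n\rangle,\langle i,n+1\rangle)$. Rows with $i\in S$ contribute a winding, and their total number equals $d=|S|$, so summing over all $S$ with $|S|=d$ yields $Z_{\lambda/d/\mu}(x_1,\ldots,x_\ell)$, proving (\ref{ZE}). The refined expression (\ref{ZEd}) is just the enumeration of subsets $S=\{i_1<\cdots<i_d\}$.

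For the symmetry assertion, I would appeal to the integrability statement (\ref{integrable}): $\boldsymbol{E}(u)\boldsymbol{E}(v)=\boldsymbol{E}(v)\boldsymbol{E}(u)$ for arbitrary spectral parameters. Consequently the matrix element $\langle\lambda|\boldsymbol{E}(x_1)\cdots\boldsymbol{E}(x_\ell)|\mu\rangle$ is invariant under any permutation of the $x_i$'s, and since $z$ is an independent formal variable, each coefficient $Z_{\lambda/d/\mu}(x_1,\ldots,x_\ell)$ of $z^d$ is symmetric.

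The only non-routine point is checking that the periodic identification of outer horizontal edges is faithfully captured by the algebraic expansion—in particular, that no over- or undercounting occurs when a row has $\sigma=1$ at both ends (the $D$-contribution) versus $\sigma=0$ (the $A$-contribution), and that no mixed boundary terms appear in the expansion of $\boldsymbol{E}$. This is immediate from the fact that $\boldsymbol{E}(x)=A(x)+zD(x)$ contains no $B$ or $C$ terms, so the row-boundary values are constrained to agree. Everything else reduces to bookkeeping already encoded in Lemmata \ref{bijection1} and \ref{bijection2}.
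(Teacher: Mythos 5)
Your proposal is correct and follows essentially the same route as the paper: expand $\boldsymbol{E}(x)=A(x)+zD(x)$ term by term, match the resulting boundary conditions ($0$ for $A$-rows, $1$ for $D$-rows, equal on left and right in each row) with the cylindric configurations of winding number $d=|S|$, and deduce the symmetry of each $Z_{\lambda/d/\mu}$ from $\boldsymbol{E}(u)\boldsymbol{E}(v)=\boldsymbol{E}(v)\boldsymbol{E}(u)$ together with the fact that $z$ is an independent formal variable. The paper states this far more tersely; your version merely fills in the row-by-row bookkeeping.
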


\begin{proof}
The first identity follows from the definition (\ref{ncE}). Due to (\ref%
{integrable}), which implies $\boldsymbol{E}(x_{i})\boldsymbol{E}(x_{j})=%
\boldsymbol{E}(x_{j})\boldsymbol{E}(x_{i})$, we know that the sum in (\ref%
{ZE}) must be symmetric in $(x_{1},\ldots ,x_{\ell })$. But since $z$ is an
independent formal variable (the weights of the lattice configurations do
not depend on $z$) we can conclude that each single summand must be
symmetric as well.
\end{proof}

\begin{remark}
The partition function (\ref{ZE}) for $z=1$ is related to the so-called $q$%
-boson model discussed by Bogoliubov, Izergin and Kitanine in \cite%
{Bogoliubovetal}. This model is \emph{quantum integrable} and the associated
discrete Hamiltonians are
\begin{equation}
H_{r}^{\pm }:=-\frac{\boldsymbol{e}_{r}\pm \boldsymbol{e}_{n-r}}{2},\qquad
1\leq r\leq n/2\;.  \label{IoM}
\end{equation}
In particular, the operator%
\begin{equation}
H_{1}^{+}=-\frac{1}{2}\sum_{i\in \mathbb{Z}_{n}}\left( \beta _{i}\beta
_{i+1}^{\ast }+\beta _{i}^{\ast }\beta _{i+1}\right) ,  \label{qbosonHam}
\end{equation}%
describes nearest neighbour hopping of certain highly-correlated quantum
particles, so-called $q$-bosons, on a one-dimensional periodic lattice (i.e.
a circle) with $n$ sites. The interaction between the particles is encoded
in the $q$-deformation. This model can also be interpreted as a quantization
of the Ablowitz-Ladik hierarchy in integrable systems. The latter describes
a discrete version of the nonlinear Schr\"{o}dinger model.
\end{remark}

\begin{corollary}
According to Proposition \ref{YBPieri} the action of $\boldsymbol{E}%
(u)=\sum_{r\geq 0}u^{r}\boldsymbol{e}_{r}$ is given by%
\begin{equation}
\boldsymbol{e}_{r}|\mu \rangle =\sum_{\substack{ \lambda -\mu =(r),  \\ %
\lambda \in \mathcal{A}_{k,n}^{+}}}\varphi _{\lambda /\mu }(t)|\lambda
\rangle +z\sum_{\substack{ \mu -\lambda =(n-r),  \\ \lambda \in \mathcal{A}%
_{k,n}^{+}}}\psi _{\mu /\lambda }(t)|\lambda \rangle \;.  \label{Epieri}
\end{equation}%
and for $\bar{z}=z^{-1},~\bar{u}=u$ we have $\boldsymbol{E}^{\ast
}(u)=z^{-1}u^{n}\boldsymbol{E}(u^{-1})$.
\end{corollary}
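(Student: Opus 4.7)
The plan is to deduce both assertions directly from Proposition \ref{YBPieri} by matching coefficients of $u$ on the two sides of the defining relation $\boldsymbol{E}(u)=A(u)+zD(u)$, equation (\ref{ncE}).

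First I would read off $\boldsymbol{e}_r = A_r + z D_r$ for all $r\geq 0$ from (\ref{ncE}), which is consistent with $\boldsymbol{e}_n=z\cdot 1$ in view of the expansion formulae for $A(u)$ and $D(u)$ established in the proof of the noncommutative elementary symmetric polynomial proposition (the constraints $i_r<j_r\leq n$ force $A_n=0$, while the empty choice gives $D_n=1$). Inserting the two Pieri-type identities of Proposition \ref{YBPieri} then yields (\ref{Epieri}) term by term: $A_r|\mu\rangle$ contributes horizontal strips $\lambda/\mu$ of length $r$ with coefficient $\varphi_{\lambda/\mu}(t)$, while $zD_r|\mu\rangle$ contributes horizontal strips $\mu/\lambda$ of length $n-r$ with coefficient $\psi_{\mu/\lambda}(t)$. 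This is the whole content of the first assertion.

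For the adjoint identity I would start from $A_r^{\ast}=D_{n-r}$ (equivalently $D_s^{\ast}=A_{n-s}$ by taking adjoints), as stated in Proposition \ref{YBPieri}. Under the assumption $\bar{u}=u$, the substitution $s=n-r$ gives the generating-function form
\begin{equation*}
A(u)^{\ast} \;=\; \sum_{r\geq 0}u^{r}A_r^{\ast} \;=\; \sum_{s\geq 0}u^{n-s}D_s \;=\; u^{n}D(u^{-1}),
\end{equation*}
and in the same way $D(u)^{\ast}=u^{n}A(u^{-1})$. Since the Fock-space pairing is antilinear in its first slot, $(zD(u))^{\ast}=\bar{z}\,D(u)^{\ast}=z^{-1}u^{n}A(u^{-1})$ under the convention $\bar{z}=z^{-1}$. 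Summing the two contributions yields
\begin{equation*}
\boldsymbol{E}(u)^{\ast} \;=\; A(u)^{\ast}+\bar{z}\,D(u)^{\ast} \;=\; u^{n}\bigl(D(u^{-1})+z^{-1}A(u^{-1})\bigr) \;=\; z^{-1}u^{n}\,\boldsymbol{E}(u^{-1}),
\end{equation*}
as claimed.

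Neither step is technically demanding: the corollary is really a generating-series reformulation of Proposition \ref{YBPieri}, so no genuine obstacle arises. The only point to watch is the antilinearity of the adjoint in the scalar $z$, namely that $\bar{z}=z^{-1}$ is applied to $z$ but $u$ is held fixed; once this is respected the assembly of the prefactor $z^{-1}u^{n}$ in front of $\boldsymbol{E}(u^{-1})$ is automatic.
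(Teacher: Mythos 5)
Your proof is correct and takes the same route the paper intends: the corollary is stated as an immediate consequence of Proposition \ref{YBPieri} together with the decomposition $\boldsymbol{E}(u)=A(u)+zD(u)$ from (\ref{ncE}), which is exactly what you spell out, including the correct handling of the antilinearity in $z$ and the index shift $s=n-r$ giving the prefactor $z^{-1}u^{n}$. Nothing is missing.
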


We now wish to derive an expansion of the partition function $Z_{\lambda
/d/\mu }$ of the $q$-boson model on the cylinder,which is analogous to the
formulae appearing in Corollary \ref{YBHL}. Before we can do so, we need
some additional combinatorial notions.

\subsection{Cylindric loops and cylindric skew tableaux}

Recall that a skew diagram can be seen as subset of $\mathbb{Z}^{2}$, $%
\lambda /\mu :=\left\{ \langle i,j\rangle \in \mathbb{Z}^{2}:1\leq i\leq
\ell (\lambda ),\;\mu _{i}<j\leq \lambda _{i}\right\} $. Inspired by the
discussion in \cite{GesselKrattenthaler}, \cite{Postnikov} and \cite%
{McNamara} we now define \emph{cylindric} \emph{skew diagrams}. The borders
of such cylindric skew diagram are so-called cylindric loops, which are a
periodic generalisation of the Ferrer's shapes or Young diagrams of ordinary
partitions.

\begin{definition}[cylindric loop]
Let $\lambda \in \mathcal{A}_{k,n}^{+}$ and $r\in \mathbb{Z}$. Define the
associated cylindric loop $\lambda \lbrack r]=(\ldots ,\lambda \lbrack
r]_{-1},\lambda \lbrack r]_{0},\lambda \lbrack r]_{1},\lambda \lbrack
r]_{2},\ldots )$ as the sequence obtained from $\lambda \lbrack
r]_{i}:=\lambda \lbrack 0]_{i-r}$ with $\lambda \lbrack 0]_{i}:=\lambda _{i}$
for $1\leq i\leq k$ and $\lambda \lbrack 0]_{i+\ell }:=\lambda \lbrack
0]_{i}-\ell n$ outside this interval.
\end{definition}

In other words, $\lambda \lbrack 0]$ is the path in $\mathbb{Z\times Z}$
traced out when translating the outline of the Young diagram of $\lambda $
by the period vector $\Omega =(k,-n)$. The loop $\lambda \lbrack r]$ is then
its shift by the vector $(r,0)$. The reason for the name \textquotedblleft
cylindric loop\textquotedblright\ should be apparent from the definition;
see Figure \ref{fig:cylindric_loop} for an example.

\begin{figure}[tbp]
\begin{equation*}
\includegraphics[scale=0.4]{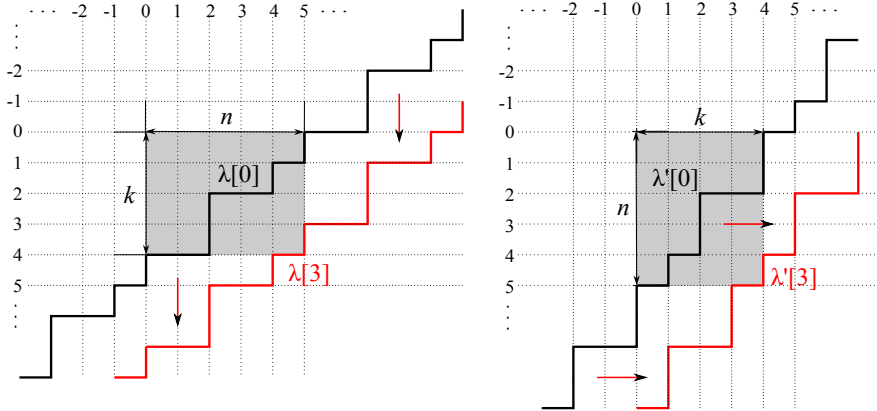}
\end{equation*}%
\caption{On the left are shown two cylindric loops constructed from the
partition $\protect\lambda=(5,4,2,2)$. The loop $\protect\lambda[0]$ is
simply a periodic continuation of the outline of the Young diagram of $%
\protect\lambda$. The loop $\protect\lambda[3]$ is then obtained by shifting
three times in the direction of the vector $(1,0)$. On the left are the
corresponding conjugate cylindric loops.}
\label{fig:cylindric_loop}
\end{figure}

\begin{definition}[cylindric skew diagram]
Let $\lambda ,\mu \in \mathcal{A}_{k,n}^{+}$. Define the $(n,k)$-restricted
\emph{cylindric skew diagram of degree }$d\geq 0$ as the set%
\begin{equation}
\lambda /d/\mu :=\{\langle i,j\rangle \in \mathbb{Z}\times \mathbb{Z}%
~|~\lambda \lbrack d]_{i}\geq j>\mu \lbrack 0]_{i}\}~.
\label{cylindric_diagram}
\end{equation}%
A cylindric skew diagram which in each column (row) has at most one square
is called a cylindric horizontal (vertical) strip.
\end{definition}

Broadly speaking the cylindric skew diagram is the \textquotedblleft
periodic continuation\textquotedblright\ of the skew diagram $\Lambda
^{(d)}/\mu $ by the period vector $\Omega =(k,-n)$, where $\Lambda ^{(d)}$
is obtained by adding $d$ parts of size $n$ to $\lambda $. Letting $\lambda
,\mu \in \mathcal{A}_{k,n}^{+}$ it follows that the cylindric diagram $%
\lambda /d/\mu $ can have at most $k$ squares in each column and at most $n$
elements in each row. Hence, we call such diagrams $(n,k)$-restricted%
\footnote{%
Postnikov called such restricted cylindric shapes \textquotedblleft
toric\textquotedblright ; see \cite[Def 3.2]{Postnikov} and
\cite[Paragraph after eqn (4.1) on page 289]{McNamara}.} and shall
henceforth only consider such.

The \emph{fundamental region} of a cylindric skew diagram $\lambda /d/\mu $
is the following finite set of squares between the two loops $\lambda
\lbrack d]$ and $\mu \lbrack 0]$,
\begin{equation*}
\lbrack \lambda /d/\mu ]_{0}:=\left\{ \langle i,j\rangle \in \mathbb{Z}%
^{2}:1\leq i\leq k+d,\;\mu _{i}[0]<j\leq \lambda \lbrack d]_{i},\;\ell \in
\mathbb{Z}\right\} \;.
\end{equation*}%
Conversely, given a $(n,k)$-restricted cylindric skew diagram $\Theta $ one
can reconstruct the partitions $\lambda ,\mu \in \mathcal{A}_{k,n}^{+}$ and
degree $d$. It is easiest to explain this on a concrete example like the one
shown in Figure \ref{fig:cylindric_tableau}.

One first identifies a fundamental region, that is one fixes a bounding box
of height $k$ and width $n$ in the $\mathbb{Z\times Z}$ lattice such that
the squares contained in it generate the entire diagram upon shifting with $%
(k,-n)$. Then one defines $\mu $ to be the partition whose Young diagram is
cut out by the bounding box and the upper boundary of the cylindric skew
diagram. To obtain the partition $\lambda $ consider first the partition $%
\Lambda $ obtained by adding to $\mu $ all the boxes of the cylindric skew
diagram within the specified $n$-strip. Then remove from $\Lambda $ rows of
size $n$ until the associated Young diagram has height $k$, i.e. fits into
the bounding box. The degree $d$ is the number of $n$-rows removed. Note
that for $d=0$ we recover the familiar skew-diagram of two partitions, i.e. $%
[\lambda /0/\mu ]_{0}=\lambda /\mu $.

\begin{remark}
The set $\lambda /d/\mu $ is a particular way of parametrizing certain \emph{%
cylindric skew diagrams} or \emph{cylindric skew shapes }with period vector $%
\Omega =(k,-n)$; compare with the general definition of (reversed) cylindric
plane partitions in \cite{GesselKrattenthaler} and \cite[Section 3]{McNamara}
for a discussion in terms of oriented posets. While we use the same notation
as in \cite[Section 4]{McNamara} and \cite{Postnikov}, our parametrization
of a cylindric skew diagram in terms of two partitions $\lambda $, $\mu $
and a degree $d$ differs from the one used in \emph{loc. cit.} in order to
accommodate the description of the Verlinde algebra given in \cite{KS}.
\end{remark}

For our discussion we will also need the transposed or \emph{conjugate
cylindric skew diagram} which we denote by $\lambda ^{\prime }/d/\mu
^{\prime }$ and is defined as the set
\begin{equation}
\lambda ^{\prime }/d/\mu ^{\prime }:=\left\{ \langle i+\ell n,j-\ell
k\rangle \in \mathbb{Z}^{2}:1\leq i\leq n,\;\mu _{i}^{\prime }<j\leq \lambda
_{i}^{\prime }+d,\;\ell \in \mathbb{Z}\right\} \;.  \label{ccsd}
\end{equation}%
The latter can also be described in terms of\emph{\ conjugate cylindric loops%
} $\lambda ^{\prime }[r]$. Consider first the case $r=0$ and set $\lambda
^{\prime }[0]_{i}=\lambda _{i}^{\prime }$ for $1\leq i\leq n$ and $\lambda
^{\prime }[0]_{i+n}=\lambda _{i}-ik$ otherwise. The conjugate cylindric loop
$\lambda ^{\prime }[r]$ is then obtained by shifting $\lambda ^{\prime }[0]$
by the vector $(0,r)$, i.e. we set $\lambda ^{\prime }[r]_{i}=\lambda
^{\prime }[0]_{i}+r$. It is then straightforward to verify that $\lambda
^{\prime }/d/\mu ^{\prime }=\{\langle i,j\rangle \in \mathbb{Z}_{n}\times
\mathbb{Z}_{k}~|~\lambda ^{\prime }[d]_{i}\geq j>\mu ^{\prime }[0]_{i}\}$
and that the transposed or conjugate cylindric skew diagram is obtained by
transposing $\lambda /d/\mu $; see Figure \ref{fig:cylindric_loop}. In the
introduction we discussed these conjugate cylindric loops and skew diagrams.
\begin{definition}[cylindric skew tableau]
Let $\Theta $ be a $(n,k)$-restricted cylindric skew diagram. A cylindric
(semi-standard) skew tableau is a map $T:\Theta \rightarrow \mathbb{N}$ such
that for any $\langle i,j\rangle \in \Theta $ one has%
\begin{eqnarray}
T(i,j) &=&T(i+k,j-n),  \label{T1} \\
T(i,j) &<&T(i+1,j),\;\text{if}\;\langle i+1,j\rangle \in \Theta  \label{T2}
\\
T(i,j) &\leq &T(i,j+1),\;\text{if}\;\langle i,j+1\rangle \in \Theta \;.
\label{T3}
\end{eqnarray}%
The weight vector $\limfunc{wt}(T)=(t_{1},\ldots ,t_{k})$ of a cylindric
tableau is defined by setting $t_{i}$ to be the number of $i$-entries in $T$
in the fundamental region.
\end{definition}

In other words, a tableau $T$ of a (restricted) cylindric shape $\lambda
/d/\mu $ is a filling of the squares of the associated diagram with integers
such that in each row the numbers are weakly increasing (left to right),
while they strictly increase (top to bottom) in each column; see Figure \ref%
{fig:cylindric_tableau} for an example.

\begin{definition}
The \emph{cylindric Kostka number\footnote{%
Compare with the \textquotedblleft quantum Kostka number\textquotedblright\
introduced by Bertram, Ciocan-Fontanine, Fulton \cite{BCF} in the context of
the quantum cohomology ring of the Grassmannian.}} $K_{\lambda /d/\mu
,\theta }$ is defined to be the number of cylindric tableaux of weight $%
\limfunc{wt}(T)=\theta $, and specialises for $d=0$ to the ordinary Kostka
number $K_{\lambda /\mu ,\theta }$.
\end{definition}

\begin{figure}[tbp]
\begin{equation*}
\includegraphics[scale=0.45]{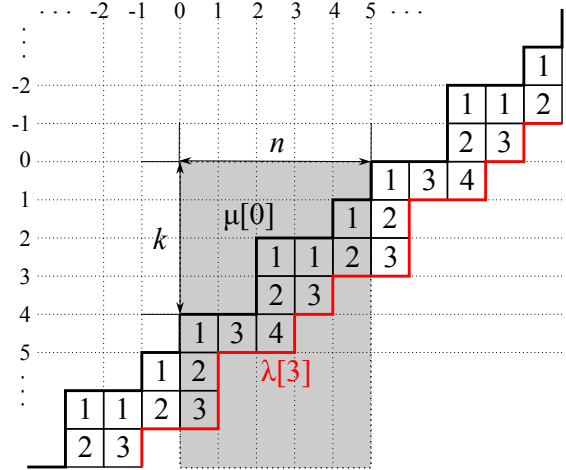}
\end{equation*}%
\caption{Example of a cylindric skew tableau. A set of representatives for
the cylindric skew shape is shown in grey. Fix the $(n,k)$-bounding box as
shown, then the upper boundary of the cylindric skew diagram cuts out the
partition $\protect\mu=(5,4,2,2)$ while the lower one gives the partition $%
\Lambda =(5,5,5,4,3,1,1)$. Remove 3 rows of size $n=5$ from the Young
diagram of $\Lambda $ to obtain the partition $\protect\lambda =(4,3,1,1)$
which now lies within the bounding box. Denote the cylindric skew diagram by
$\protect\lambda/d/\protect\mu$. The depicted filling of the boxes with
integers 1 to 4 yields an example of a cylindric tableau, the integers are
weakly increasing in each row and strictly increasing in each column.}
\label{fig:cylindric_tableau}
\end{figure}

\begin{remark}
Note that because of condition (\ref{T3}) not every ordinary skew tableau of
shape $\Lambda ^{(d)}/\mu $ (with $\Lambda ^{(d)}$ being the partition
obtained from $\lambda $ by adding $d$ parts of size $n$) gives rise to a
cylindric tableau via periodic continuation with respect to the vector $%
\Omega =(k,-n)$. For instance, consider the ordinary skew tableau%
\begin{equation*}
\young(::::3,::114,::23,134,2,3)
\end{equation*}
which is of the shape of the skew diagram shown in the grey box in Figure %
\ref{fig:cylindric_tableau}. This tableau does not give rise to a cylindric
skew tableau when periodically continued.
\end{remark}

\begin{lemma}
\label{cyltab}Any cylindric skew tableau $T$ of shape $\lambda /d/\mu $ with
$\lambda ,\mu \in \mathcal{A}_{k,n}^{+}$ is equivalent to a sequence of
cylindric loops
\begin{equation*}
(\lambda ^{(0)}[d_{0}=0]=\mu \lbrack 0],\lambda ^{(1)}[d_{1}],\ldots
,\lambda ^{(r)}[d_{r}=d]=\lambda \lbrack d])
\end{equation*}
with $\lambda ^{(a)}\in \mathcal{A}_{k,n}^{+}$\ and $0\leq d_{a}-d_{a-1}\leq
1$ such that
\begin{equation*}
\lambda ^{(a)}/(d_{a}-d_{a-1})/\lambda ^{(a-1)}:=\{\langle i,j\rangle \in
\mathbb{Z}\times \mathbb{Z}~|~\lambda _{i}^{(a)}[d_{a}]\geq j>\lambda
_{i}^{(a-1)}[d_{a-1}]\}
\end{equation*}%
is a cylindric horizontal strip.
\end{lemma}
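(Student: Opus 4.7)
The plan is to imitate the classical decomposition of an ordinary semistandard tableau into nested partitions, one for each value $a \in \{0,1,\ldots,r\}$ with $r := \max_{(i,j) \in \lambda/d/\mu} T(i,j)$. For each such $a$, let
\begin{equation*}
\Theta^{(a)} := \{(i,j) \in \mathbb{Z}^2 : j \leq \mu[0]_i\} \;\cup\; \{(i,j) \in \lambda/d/\mu : T(i,j) \leq a\},
\end{equation*}
so that $\Theta^{(0)}$ is the region above the loop $\mu[0]$ and $\Theta^{(r)}$ is the region above $\lambda[d]$. The task is to show that each $\Theta^{(a)}$ is the region above a cylindric loop $\lambda^{(a)}[d_a]$ for a unique pair $(\lambda^{(a)}, d_a)$ with $\lambda^{(a)} \in \mathcal{A}_{k,n}^{+}$ and $d_a \in \mathbb{Z}_{\geq 0}$, and that successive differences are cylindric horizontal strips.

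First I would verify that the lower boundary of $\Theta^{(a)}$ is a bona fide cylindric loop. Periodicity under $(i,j) \mapsto (i+k, j-n)$ follows from condition \eqref{T1}, and the fact that in each row the set $\{j : (i,j) \in \Theta^{(a)}\}$ is a down-set of the form $\{j \leq c_i\}$ follows from the weak row-inequality \eqref{T3}. The weak decrease $c_i \geq c_{i+1}$ in $i$ comes from the strict column-inequality \eqref{T2} together with the fact that $\mu[0]$ already has this property, and the inclusion $\Theta^{(a)} \subseteq \Theta^{(r)} = \{j \leq \lambda[d]_i\}$ forces $c_i - c_{i+k} = n$ for all $i$. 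These three facts together say that $c_i = \lambda^{(a)}[d_a]_i$ for some unique partition $\lambda^{(a)}$ and some $d_a \in \mathbb{Z}$. The sandwiching $\mu[0]_i \leq \lambda^{(a)}[d_a]_i \leq \lambda[d]_i$ for all $i$ yields $d_a \geq 0$ and, by comparing row lengths in a fundamental region, the bounds $n \geq \lambda^{(a)}_1 \geq \cdots \geq \lambda^{(a)}_k \geq 1$, i.e.\ $\lambda^{(a)} \in \mathcal{A}_{k,n}^{+}$.

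Next I would establish the horizontal strip property. The set of cells with $T(i,j) = a$ has at most one element in each column of $\mathbb{Z}^2$ by \eqref{T2}, so the difference $\lambda^{(a)}/(d_a - d_{a-1})/\lambda^{(a-1)}$ has at most one cell in each column of the fundamental region, which is the definition of a cylindric horizontal strip. Monotonicity $d_a \geq d_{a-1}$ is automatic from the containment $\Theta^{(a)} \supseteq \Theta^{(a-1)}$. For the upper bound $d_a - d_{a-1} \leq 1$, I would argue by contradiction: if $d_a - d_{a-1} \geq 2$, then a fundamental region of the cylindric skew diagram for the label $a$ would contain a full $n$-row's worth of cells, hence two distinct cells in the same column $j \pmod n$, violating the horizontal strip condition just proved. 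The converse direction (sequence of loops $\Rightarrow$ cylindric tableau) is immediate: setting $T(i,j) = a$ on $\lambda^{(a)}/(d_a - d_{a-1})/\lambda^{(a-1)}$ gives a well-defined map satisfying \eqref{T1}--\eqref{T3} by the horizontal strip property and the $(k,-n)$-periodicity of each loop.

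The main obstacle is the bookkeeping in the second step: one must track carefully how the index shift by the period vector $(k,-n)$ interacts with the horizontal-strip condition to secure the sharp bound $d_a - d_{a-1} \leq 1$, and ensure that the intermediate loops remain in the restricted alcove $\mathcal{A}_{k,n}^{+}$ rather than drifting to partitions with parts of size $0$ or $>n$. Once these boundary effects of the cylinder are controlled, the rest of the argument runs in parallel with the classical skew-tableau decomposition.
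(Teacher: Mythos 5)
Your proposal is correct and follows essentially the same route as the paper's proof: decompose the tableau into the level sets of $T$, identify their lower boundaries as cylindric loops, deduce the horizontal-strip property from the strict column condition (\ref{T2}), and bound $d_a-d_{a-1}\leq 1$ by counting boxes in the first column of the fundamental region. You are somewhat more explicit than the paper in verifying that the intermediate boundaries really are cylindric loops lying in $\mathcal{A}_{k,n}^{+}$, a point the paper treats as immediate.
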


\begin{proof}
Given $T$ define $\lambda ^{(a)}/(d_{a}-d_{a-1})/\lambda ^{(a-1)}$ to be the
set of squares filled with $a$ in $T$ and let $\lambda ^{(a)}[d_{a}],\lambda
^{(a-1)}[d_{a-1}]$ be its lower and upper boundary. Because of (\ref{T2}) $%
\lambda ^{(a)}/(d_{a}-d_{a-1})/\lambda ^{(a-1)}$ must be a horizontal strip
and, thus, we have $d_{a}-d_{a-1}=0,1$ otherwise there would be more than
one box in the first column of the skew diagram in the fundamental region.
We obviously have that $\lambda ^{(0)}=\mu $ and $\lambda ^{(r)}=\lambda $.
The converse statement, that such a sequence of cylindric loops gives a
cylindric tableau, is equally obvious. Requirement (\ref{T1}) is satisfied
because each $\lambda ^{(a)}/(d_{a}-d_{a-1})/\lambda ^{(a-1)}$ is cylindric
and, by definition, the degrees $d_{a}$ of the individual loops accumulate
to give $d$. (\ref{T2}) and (\ref{T3}) are satisfied since each cylindric
sub-diagram $\lambda ^{(a)}/(d_{a}-d_{a-1})/\lambda ^{(a-1)}$ is horizontal.
\end{proof}

\subsection{Cylindric Hall-Littlewood functions}

We now generalise the notion of skew Hall-Littlewood functions to cylindric
skew diagrams and then link them to the partition function of the
quasi-periodic $q$-boson model, analogous to Corollary \ref{YBHL}.

We start with the generalisation of the coefficient functions (\ref{phi})
and (\ref{psi}) to cylindric skew tableaux. A cylindric horizontal strip of
period $(k,-n)$ can obviously always be written as $\lambda /d/\mu $ with $%
\lambda ,\mu \in \mathcal{A}_{k,n}^{+}$ and either $d=0$ or $d=1$ depending
on the strip fitting within the $(n,k)$-bounding box or not. We introduce
the following generalisations of the functions (\ref{phi}) and (\ref{psi}):
let%
\begin{equation}
\Phi _{\lambda /d/\mu }(t)=\left\{
\begin{array}{cc}
\prod_{i\in I_{\lambda /d/\mu }}(1-t^{m_{i}(\lambda )}), & \lambda /d/\mu
\text{ is a horizontal strip} \\
0, & \text{else}%
\end{array}%
\right.  \label{Phi}
\end{equation}%
and%
\begin{equation}
\Psi _{\lambda /d/\mu }(t)=\left\{
\begin{array}{cc}
\prod_{i\in J_{\lambda /d/\mu }}(1-t^{m_{i}(\mu )}), & \lambda /d/\mu \text{
is a horizontal strip} \\
0, & \text{else}%
\end{array}%
\right. \;.  \label{Psi}
\end{equation}%
Here the sets $I_{\lambda /d/\mu }$ and $J_{\lambda /d/\mu }$ are defined as
follows: include $1\leq i\leq n$ in $I_{\lambda /d/\mu }$ if $\lambda
^{\prime }[d]_{i}-\mu ^{\prime }[0]_{i}=1$ and $\lambda ^{\prime
}[d]_{i+1}-\mu ^{\prime }[0]_{i+1}=0$, while $i\in J_{\lambda /d/\mu }$ if
and only if $\lambda ^{\prime }[d]_{i}-\mu ^{\prime }[0]_{i}=0$ and $\lambda
^{\prime }[d]_{i+1}-\mu ^{\prime }[0]_{i+1}=1$.

\begin{example}
Set $n=5,k=6$ and $\ell =3$. Consider the cylindric tableau shown on the
right in Figure \ref{fig:qbosonlatticeconfig}. Let us first compute the sets
$I_{\lambda ^{(i+1)}/d_{i}/\lambda ^{(i)}}$ for the horizontal strips
describing the first (top) skew tableau $T_{1}$. This cylindric skew tableau
corresponds to the sequence $(\mu \lbrack 0]=\lambda ^{(0)}[0],\lambda
^{(1)}[d_{1}],\lambda ^{(2)}[d_{2}],\lambda ^{(3)}[d_{3}]=\lambda \lbrack
d])$ with $\mu \lbrack 0]=(\ldots ,5,5,4,3,1,1,\ldots )$, $\lambda
^{(1)}[0]=(\ldots ,6,5,5,3,3,1,\ldots )$, $\lambda ^{(2)}[1]=(\ldots
,6,5,5,4,3,2,1,\ldots )$ and $\lambda \lbrack d=1]=(\ldots
,6,6,5,4,4,2,1,\ldots )$. We then find that $I_{\lambda ^{(1)}/0/\mu \lbrack
0]}=\{3,6\}$, $I_{\lambda ^{(2)}/1/\lambda ^{(1)}}=\{4\}$ and $I_{\lambda
^{(3)}/d/\lambda ^{(2)}}=\{4,6\}$.

Similarly, we find for the second (bottom) tableau $T_{2}$ in Figure \ref%
{fig:qbosonlatticeconfig} the sequence $(\lambda \lbrack 0]=\mu
^{(0)}[0],$ $\mu ^{(1)}[1],\mu ^{(2)}[1],$ $\mu ^{(3)}[2]=\mu \lbrack 2])$ with
the cylindric loops $\mu ^{(1)}[1]=(\ldots ,6,5,5,4,3,2,1,\ldots )$, $\mu
^{(2)}[1]=(\ldots ,6,6,5,5,3,3,1,\ldots )$. Following the prescription
outlined above we compute $J_{\mu ^{(1)}/1/\lambda }=\{4,6\}$, $J_{\mu
^{(2)}/1/\mu ^{(1)}}=\{4\}$ and $J_{\mu ^{(3)}/2/\mu ^{(2)}}=\{3,6\}$. As
the alert reader will have noticed these are the same sets as before but
calculated in reverse order, since both results ought to reproduce the \emph{%
same} Boltzmann weight $\Phi _{T_{1}}(t)=\Psi
_{T_{2}}(t)=(1-t)^{4}(1-t^{2})^{2}$ of the integrable $q$-boson lattice for
the lattice configuration shown on the left in Figure \ref%
{fig:qbosonlatticeconfig}; see Theorem \ref{qboson_Z}.
\end{example}

\begin{definition}[cylindric Hall-Littlewood functions]
Given a $(n,k)$-restricted skew diagram $\lambda /d/\mu $ with $\lambda ,\mu
\in \mathcal{A}_{k,n}^{+}$ define the (restricted) cylindric skew
Hall-Littlewood functions for $1\leq \ell \leq k$ as%
\begin{equation}
Q_{\lambda /d/\mu }(x_{1},\ldots ,x_{\ell };t):=\sum_{|T|=\lambda /d/\mu
}\Phi _{T}(t)x^{T},  \label{cylQ}
\end{equation}%
and%
\begin{equation}
P_{\lambda /d/\mu }(x_{1},\ldots ,x_{\ell };t):=\sum_{|T|=\lambda /d/\mu
}\Psi _{T}(t)x^{T}~,  \label{cylP}
\end{equation}%
where the sums run over all cylindric skew tableaux of shape $\lambda /d/\mu
$ and $\Phi _{T},\Psi _{T}$ are defined as the products of the functions (%
\ref{Phi}) and (\ref{Psi}) obtained when decomposing $T$ into a sequence of
horizontal strips. For $\ell >k$ we set both functions to zero.
\end{definition}

\begin{lemma}
As in the non-cylindric case we have the identity%
\begin{equation}
\Phi _{\lambda /d/\mu }=\frac{b_{\lambda }}{b_{\mu }}~\Psi _{\lambda /d/\mu }
\end{equation}%
and, thus, $Q_{\lambda /d/\mu }=(b_{\lambda }/b_{\mu })P_{\lambda /d/\mu }$;
compare with (\ref{bphi=bpsi}).
\end{lemma}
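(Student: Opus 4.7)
The plan is to reduce this to the same local computation that yields (\ref{bphi=bpsi}) in the non-cylindric case, with the only extra ingredient being the periodicity of conjugate cylindric loops. Because $\lambda/d/\mu$ is a cylindric horizontal strip by assumption, the integer sequence $\theta'_i := \lambda'[d]_i - \mu'[0]_i$ takes values in $\{0,1\}$ only. Combining $\lambda'[d]_i = \lambda'[0]_i + d$ with the periodicity relation $\lambda'[0]_{i+n} = \lambda'[0]_i - k$ (and likewise for $\mu'[0]$), one sees that $\theta'_\bullet$ is genuinely periodic of period $n$. In particular the sets $I_{\lambda/d/\mu}$ and $J_{\lambda/d/\mu}$ can be read off as the positions $i \in \mathbb{Z}/n\mathbb{Z}$ where $\theta'$ drops from $1$ to $0$, respectively climbs from $0$ to $1$, and periodicity forces $|I_{\lambda/d/\mu}| = |J_{\lambda/d/\mu}|$.

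Next I would perform a local case analysis at each $i \in \{1,\ldots,n\}$. Using $m_i(\lambda) = \lambda'[0]_i - \lambda'[0]_{i+1}$ and $m_i(\mu) = \mu'[0]_i - \mu'[0]_{i+1}$, the two multiplicities differ by precisely $\theta'_i - \theta'_{i+1} \in \{-1,0,1\}$. When $\theta'_i = \theta'_{i+1}$ the local ratio $(t)_{m_i(\lambda)}/(t)_{m_i(\mu)}$ equals $1$; when $(\theta'_i,\theta'_{i+1})=(1,0)$, so $i \in I_{\lambda/d/\mu}$, one has $m_i(\lambda) = m_i(\mu)+1$ and the ratio becomes $1-t^{m_i(\lambda)}$; when $(\theta'_i,\theta'_{i+1})=(0,1)$, so $i \in J_{\lambda/d/\mu}$, one has $m_i(\lambda) = m_i(\mu)-1$ and the ratio becomes $(1-t^{m_i(\mu)})^{-1}$. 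Multiplying over one full period $i=1,\ldots,n$ yields
\begin{equation*}
\frac{b_\lambda(t)}{b_\mu(t)} \;=\; \prod_{i=1}^{n} \frac{(t)_{m_i(\lambda)}}{(t)_{m_i(\mu)}} \;=\; \frac{\prod_{i \in I_{\lambda/d/\mu}}(1-t^{m_i(\lambda)})}{\prod_{i \in J_{\lambda/d/\mu}}(1-t^{m_i(\mu)})} \;=\; \frac{\Phi_{\lambda/d/\mu}(t)}{\Psi_{\lambda/d/\mu}(t)},
\end{equation*}
which is the asserted identity.

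For the identity $Q_{\lambda/d/\mu} = (b_\lambda/b_\mu)P_{\lambda/d/\mu}$, I would invoke Lemma \ref{cyltab}: any cylindric skew tableau $T$ of shape $\lambda/d/\mu$ decomposes as a chain $\mu = \lambda^{(0)}[0], \lambda^{(1)}[d_1], \ldots, \lambda^{(r)}[d] = \lambda[d]$ of cylindric horizontal strips, and by definition $\Phi_T = \prod_a \Phi_{\lambda^{(a)}/(d_a-d_{a-1})/\lambda^{(a-1)}}$, with $\Psi_T$ the analogous product of $\Psi$'s. Applying the first identity to each strip gives $\Phi_T/\Psi_T = \prod_a b_{\lambda^{(a)}}/b_{\lambda^{(a-1)}} = b_\lambda/b_\mu$ by telescoping; substituting into the defining sums (\ref{cylQ}) and (\ref{cylP}) yields the claimed relation. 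The only delicate point, and essentially the sole thing meriting care, is ensuring that the index $i=n$ in the definition of $I_{\lambda/d/\mu}, J_{\lambda/d/\mu}$ is handled by the periodic extension of $\theta'$ in a way consistent with $m_n(\lambda), m_n(\mu)$ being the `wrap-around' multiplicities; once the conjugate-loop conventions are spelled out the non-cylindric argument carries over verbatim.
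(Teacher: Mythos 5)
Your proof is correct and follows essentially the same route as the paper's: the paper also sets $\theta_i = \lambda'[d]_i - \mu'[0]_i$, observes $\theta_i - \theta_{i+1} = m_i(\lambda)-m_i(\mu)$, runs the same local case analysis on the ratio $(t)_{m_i(\lambda)}/(t)_{m_i(\mu)}$, and closes the wrap-around via $\theta_{n+1}=\theta_1$. Your telescoping argument for $Q_{\lambda/d/\mu}=(b_\lambda/b_\mu)P_{\lambda/d/\mu}$ is the intended (though unspelled-out) deduction from the strip-level identity.
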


\begin{proof}
Set $\theta _{i}=\lambda ^{\prime }[d]_{i}-\mu ^{\prime }[0]_{i}$ and
observe that
\begin{equation*}
m_{i}(\lambda )=\lambda _{i}^{\prime }-\lambda _{i+1}^{\prime }=\lambda
^{\prime }[d]_{i}-\lambda ^{\prime }[d]_{i+1},
\end{equation*}%
\ and $\theta _{i}-\theta _{i+1}=m_{i}(\lambda )-m_{i}(\mu )$. Then%
\begin{eqnarray*}
\frac{b_{\lambda }(t)}{b_{\mu }(t)} &=&\frac{(t)_{m_{1}(\lambda
)}(t)_{m_{2}(\lambda )}\cdots (t)_{m_{n}(\lambda )}}{(t)_{m_{1}(\mu
)}(t)_{m_{2}(\mu )}\cdots (t)_{m_{n}(\mu )}} \\
&=&\prod_{\theta _{i}-\theta _{i+1}=1}(1-t^{m_{i}(\lambda )})\prod_{\theta
_{i}-\theta _{i+1}=-1}(1-t^{m_{i}(\mu )})^{-1}=\Phi _{\lambda /d/\mu }/\Psi
_{\lambda /d/\mu },
\end{eqnarray*}%
since $\theta _{n+1}=\lambda ^{\prime }[d]_{n+1}-\mu ^{\prime
}[0]_{n+1}=\lambda ^{\prime }[d]_{1}-\mu ^{\prime }[0]_{1}$.
\end{proof}

With these definitions at hand we can now rewrite the action of the transfer
matrix (\ref{ncE}) in terms of cylindric horizontal strips.

\begin{lemma}[cylindric horizontal strips]
\label{cylhorstrip}We have the following alternative formulae for the action
of the noncommutative elementary symmetric polynomials on the Fock space,%
\begin{equation}
\boldsymbol{e}_{r}|\mu \rangle =\sum_{\substack{ \lambda /d/\mu =(r),  \\ %
\lambda \in \mathcal{A}_{k,n}^{+}}}z^{d}\Phi _{\lambda /d/\mu }(t)|\lambda
\rangle =\sum_{\substack{ \mu /d/\lambda =(n-r),  \\ \lambda \in \mathcal{A}%
_{k,n}^{+}}}z^{1-d}\Psi _{\mu /d/\lambda }(t)|\lambda \rangle \;.
\label{nce2}
\end{equation}%
Note that the length of the horizontal strip in the fundamental region is
different in both cases and that $d=0$ or $1$.
\end{lemma}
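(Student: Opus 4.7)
The plan is to read the identity $\boldsymbol{E}(u)=A(u)+zD(u)$ off (\ref{ncE}), so that $\boldsymbol{e}_r=A_r+zD_r$, and then to combine the two Pieri-type formulae of Proposition \ref{YBPieri} into a single sum indexed by cylindric horizontal strips. More precisely, I would start by writing
\begin{equation*}
\boldsymbol{e}_r|\mu\rangle
=A_r|\mu\rangle+z\,D_r|\mu\rangle
=\sum_{\substack{\lambda-\mu=(r)\\ \lambda\in\mathcal{A}_{k,n}^{+}}}\varphi_{\lambda/\mu}(t)\,|\lambda\rangle
+z\!\!\sum_{\substack{\mu-\lambda=(n-r)\\ \lambda\in\mathcal{A}_{k,n}^{+}}}\!\!\psi_{\mu/\lambda}(t)\,|\lambda\rangle,
\end{equation*}
and then recognise the first sum as the $d=0$ part and the second as the $d=1$ part of the claimed cylindric expansion; higher degrees $d\geq 2$ contribute nothing, since a double wrap necessarily produces a column with two boxes and so cannot give a horizontal strip.

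For $d=0$ the identification is immediate: the cylindric skew diagram $\lambda/0/\mu$ restricted to the fundamental region is the ordinary skew diagram $\lambda/\mu$, so $\lambda/0/\mu=(r)$ iff $\lambda-\mu=(r)$, and the defining conditions of $I_{\lambda/0/\mu}$ collapse to those of the ordinary index set in (\ref{phi}), giving $\Phi_{\lambda/0/\mu}(t)=\varphi_{\lambda/\mu}(t)$. The analogous reduction $\Psi_{\mu/0/\lambda}(t)=\psi_{\mu/\lambda}(t)$ handles the $d=0$ piece of the second formula in the lemma.

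The nontrivial step is to show that for $d=1$ one has $\Phi_{\lambda/1/\mu}(t)=\psi_{\mu/\lambda}(t)$ whenever $\mu/\lambda$ is an ordinary horizontal strip of length $n-r$, and dually $\Psi_{\mu/1/\lambda}(t)=\varphi_{\lambda/\mu}(t)$ when $\lambda-\mu=(r)$. Using the conjugate cylindric loop relation $\lambda'[1]_i=\lambda'_i+1$ for $1\leq i\leq n$, the conditions $\lambda'[1]_i-\mu'[0]_i=1$ and $\lambda'[1]_{i+1}-\mu'[0]_{i+1}=0$ defining $I_{\lambda/1/\mu}$ translate into $\mu'_i=\lambda'_i$ and $\mu'_{i+1}-\lambda'_{i+1}=1$, which is exactly the defining condition of $J_{\mu/\lambda}$ in (\ref{psi}); since both products are over the exponents $m_i(\lambda)=\lambda'_i-\lambda'_{i+1}$, the weights agree index-by-index. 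The dual identity $\Psi_{\mu/1/\lambda}=\varphi_{\lambda/\mu}$ is obtained by exactly the same calculation with the roles of $\lambda,\mu$ swapped.

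The main obstacle, and the place where care is needed, is the wraparound at the cyclic index $i=n$, where $i+1$ has to be interpreted modulo $n$. Here periodicity of the conjugate cylindric loops gives $\mu'[0]_{n+1}=\mu'_1-k$ and $\lambda'[1]_{n+1}=\lambda'_1+1-k$; the restriction $\lambda,\mu\in\mathcal{A}_{k,n}^{+}$ forces $\lambda'_1=\mu'_1=k$, so $\mu'[0]_{n+1}=0$ while $\lambda'[1]_{n+1}=1$, and therefore the condition $\lambda'[1]_{n+1}-\mu'[0]_{n+1}=0$ is never met. Consequently $n\notin I_{\lambda/1/\mu}$, and the parallel calculation shows $n\notin J_{\mu/\lambda}$; the cyclic and linear index sets therefore coincide and the combinatorial identification goes through. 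Combining these pieces yields the first equation of the lemma, and repackaging the two contributions with $\lambda,\mu$ and the orientation around the cylinder swapped (so that $A_r$ is identified with $d=1$ cylindric strips of length $n-r$ and $zD_r$ with the $d=0$ piece) produces the second.
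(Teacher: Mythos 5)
Your proposal is correct and follows essentially the same route as the paper: both start from $\boldsymbol{e}_r=A_r+zD_r$ together with the Pieri-type formulae of Proposition \ref{YBPieri}, and then identify the ordinary coefficient functions with the cylindric ones via $\Phi_{\lambda/0/\mu}=\varphi_{\lambda/\mu}$, $\Phi_{\lambda/1/\mu}=\psi_{\mu/\lambda}$ and their $\Psi$-counterparts, using that $d\ge 2$ is excluded for horizontal strips. Your explicit verification of the wraparound index $i=n$ is a detail the paper leaves implicit, but it does not change the argument.
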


\begin{proof}
To prove the assertion it suffices to show the identities $\psi _{\mu
/\lambda }=\Phi _{\lambda /1/\mu }$, $\varphi _{\lambda /\mu }=\Psi _{\mu
/1/\lambda }$ together with the fact that $\lambda /1/\mu $ being a
horizontal $r$-strip implies that $\mu -\lambda =(n-r)$ and $\mu /1/\lambda $
being a horizontal $n-r$ strip is equivalent to $\lambda -\mu =(r)$. The
last two claims are obvious from the definition of the cylindric skew
diagram, since $\lambda \lbrack 1]_{i}^{\prime }-\mu _{i}[0]^{\prime }=1$
means that $\mu _{i}^{\prime }-\lambda _{i}^{\prime }=0$ and vice versa.
Recall that $\lambda \lbrack 1]$ confined to the fundamental region is the
Young diagram of $\lambda $ with one row of length $n$ added. Similarly, we
have that $\mu \lbrack 1]_{i}^{\prime }-\lambda _{i}[0]^{\prime }=0,1$ is
equivalent to $\lambda _{i}^{\prime }-\mu _{i}{}^{\prime }=1,0$. The
equality of the coefficient functions is now a trivial consequence of the
definitions (\ref{phi}), (\ref{psi}) and (\ref{Phi}), (\ref{Psi}).
\end{proof}

\begin{figure}[tbp]
\begin{equation*}
\includegraphics[scale=0.3]{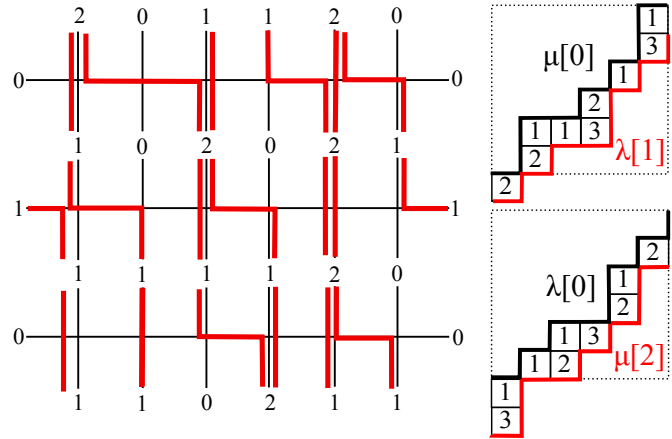}
\end{equation*}%
\caption{Example of a lattice configuration for the $q$-boson model with
periodic boundary conditions. On the right the two cylindric skew tableau
obtained from bijection 1 (top) and bijection 2 (bottom) in the proof of
Theorem \protect\ref{qboson_Z}. Here $\protect\lambda=(6,5,4,4,2,1)$ and $%
\protect\mu=(5,5,4,3,1,1)$.}
\label{fig:qbosonlatticeconfig}
\end{figure}

\begin{theorem}
\label{qboson_Z}The cylindric skew HL functions are symmetric and we have
the following identities with the quantum integrable $q$-boson model,%
\begin{eqnarray}
Z_{\lambda /d/\mu }(x_{1},\ldots ,x_{\ell }) &=&Q_{\lambda /d/\mu
}(x_{1},\ldots ,x_{\ell };t)  \label{ZE1} \\
&=&(x_{1}\cdots x_{\ell })^{n}P_{\mu /(\ell -d)/\lambda }(x_{1}^{-1},\ldots
,x_{\ell }^{-1};t)\;,  \label{ZE2}
\end{eqnarray}%
where $1\leq \ell \leq k$ as before.
\end{theorem}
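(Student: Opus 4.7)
The plan is to derive both identities by iterating Lemma \ref{cylhorstrip} in its two equivalent forms and then packaging the resulting chains of cylindric horizontal strips into cylindric tableaux via Lemma \ref{cyltab}. By (\ref{ZE}) and (\ref{ZEd}) the partition function $Z_{\lambda/d/\mu}(x_1,\ldots,x_\ell)$ is exactly the coefficient of $z^d$ in the matrix element $\langle\lambda|\boldsymbol{E}(x_1)\cdots\boldsymbol{E}(x_\ell)|\mu\rangle$, so it suffices to compute this matrix element in two different ways. Symmetry in the $x_i$ is automatic from the commutativity of the transfer matrices $\boldsymbol{E}(u)$ established in (\ref{integrable}), and need not be reproven for the $Q$- and $P$-sides once the identifications below are in place.

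For the first identity I would apply the first form of Lemma \ref{cylhorstrip} to each factor: $\boldsymbol{e}_r|\nu\rangle=\sum_{\lambda/d/\nu=(r)} z^d\Phi_{\lambda/d/\nu}(t)|\lambda\rangle$. Iterating this $\ell$ times produces
\[
\boldsymbol{E}(x_1)\cdots\boldsymbol{E}(x_\ell)|\mu\rangle
=\sum_\lambda\sum_{\text{chains}} z^{\sum d_i}\prod_{i=1}^{\ell} x_i^{r_i}\,\Phi_{\nu^{(i)}/d_i/\nu^{(i-1)}}(t)\,|\lambda\rangle,
\]
where a chain is a sequence $\mu=\nu^{(0)},\nu^{(1)},\ldots,\nu^{(\ell)}=\lambda$ of elements of $\mathcal{A}_{k,n}^{+}$ with each $\nu^{(i)}/d_i/\nu^{(i-1)}$ a cylindric horizontal strip of length $r_i$ and degree $d_i\in\{0,1\}$. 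Lemma \ref{cyltab} identifies such chains with cylindric tableaux $T$ of shape $\lambda/d/\mu$ with $d=\sum d_i$, and the weight becomes $z^d\Phi_T(t)x^T$ by definition of $\Phi_T$. Extracting the coefficient of $z^d$ yields $\sum_{|T|=\lambda/d/\mu}\Phi_T(t)x^T=Q_{\lambda/d/\mu}(x;t)$, giving (\ref{ZE1}).

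For the second identity I would apply the second form of Lemma \ref{cylhorstrip}: $\boldsymbol{e}_r|\nu\rangle=\sum_{\nu/d/\lambda=(n-r)} z^{1-d}\Psi_{\nu/d/\lambda}(t)|\lambda\rangle$. Iteration now produces chains in which each step $\nu^{(i-1)}/d_i/\nu^{(i)}$ is a cylindric horizontal strip of length $n-r_i$, contributing $z^{1-d_i}\Psi_{\nu^{(i-1)}/d_i/\nu^{(i)}}(t)\,x_i^{n-r_i}$. Pulling out the common factor $(x_1\cdots x_\ell)^n$, reversing the chain (which yields a valid ascending sequence from $\lambda$ to $\mu$, hence a cylindric tableau $T$ of shape $\mu/D/\lambda$ with $D=\sum d_i$ by Lemma \ref{cyltab}), and noting that the weight $\Psi_T=\prod_i\Psi_{\nu^{(i-1)}/d_i/\nu^{(i)}}$ is independent of the reading direction, one obtains
\[
\boldsymbol{E}(x_1)\cdots\boldsymbol{E}(x_\ell)|\mu\rangle=(x_1\cdots x_\ell)^n\sum_\lambda\sum_D z^{\ell-D}\,P_{\mu/D/\lambda}(x_1^{-1},\ldots,x_\ell^{-1};t)\,|\lambda\rangle.
\]
Matching the coefficient of $z^d$ via the substitution $D=\ell-d$ then gives (\ref{ZE2}).

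The main obstacle is the bookkeeping in the second computation: one must verify carefully that the degree exponent $z^{1-d_i}$ accumulates to give the complementary winding $\ell-D$, that the reversed chain indeed satisfies the hypotheses of Lemma \ref{cyltab} (each cylindric horizontal strip of length $n-r_i$ viewed from $\nu^{(i)}$ upward to $\nu^{(i-1)}$ remains a cylindric horizontal strip), and that the resulting cylindric skew shape $\mu/(\ell-d)/\lambda$ is indeed well-defined and $(n,k)$-restricted when $\lambda,\mu\in\mathcal{A}_{k,n}^{+}$ and $\ell\leq k$. Alternatively, one can give a purely combinatorial proof by directly constructing two bijections between cylindric lattice configurations in $\Gamma_{\lambda/d/\mu}$ and cylindric tableaux of shapes $\lambda/d/\mu$ and $\mu/(\ell-d)/\lambda$ respectively, row-by-row using Lemmata \ref{bijection1} and \ref{bijection2} (the non-winding rows give $A$- or $D$-type strips and the winding rows give the complementary strips), with the weight identifications $\Phi_{\lambda/0/\mu}=\varphi_{\lambda/\mu}$, $\Phi_{\lambda/1/\mu}=\psi_{\mu/\lambda}$ and their $\Psi$-counterparts supplied by Lemma \ref{cylhorstrip}.
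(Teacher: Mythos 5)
Your proposal is correct and follows essentially the same route as the paper: the paper also reduces everything to the single-row identities of Lemma \ref{cylhorstrip} (stating them as two explicit lattice-row/cylindric-strip bijections for $\ell=1$) and then assembles the general case by composing horizontal strips into cylindric tableaux via Lemma \ref{cyltab}, with exactly the $z$-degree bookkeeping ($z^{d_i}$ versus $z^{1-d_i}$, accumulating to $d$ and $\ell-d$ respectively) that you describe.
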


\begin{proof}
We only need to prove the case $\ell =1$, since according to Lemma \ref%
{cyltab}, definitions (\ref{cylQ}), (\ref{cylP}) and the fact that the
weight of a lattice configuration equals the product of its row
configurations the general case with $\ell \geq 1$ then trivially follows.
We prove both assertions by formulating bijections between lattice
configurations and cylindric tableau starting with (\ref{ZE1}).\medskip\

\noindent \textbf{Lattice-cylindric tableau bijection 1}. Fix $\lambda ,\mu
\in \mathcal{A}_{k,n}^{+}$ and assume $\ell =1$. Given an allowed row
configuration $\sigma =(\sigma _{1},\ldots ,\sigma _{n},\sigma _{1})$ we
find from (\ref{rowconfig}) that $\sigma _{i}=\lambda \lbrack \sigma
_{1}]_{i}^{\prime }-\mu \lbrack 0]_{i}^{\prime }=0,1$ for $1\leq i\leq n$,
where $\lambda \lbrack \sigma _{1}]_{i}^{\prime }$, $\mu \lbrack
0]_{i}^{\prime }$ are the heights of the columns of the cylindric loops $%
\lambda \lbrack \sigma _{1}],\mu \lbrack 0]$ in the fundamental region.
Conversely, given a horizontal strip $\lambda /d/\mu $ we must have $d=0$ or
$1$, and, thus, we can define a unique row configuration via $\sigma
_{i}:=\lambda \lbrack \sigma _{1}]_{i}^{\prime }-\mu \lbrack 0]_{i}^{\prime
} $ with $\sigma _{1}:=\sigma _{n+1}:=d$. Using the first equality of (\ref%
{nce2}) in Lemma \ref{cylhorstrip} the identity (\ref{ZE1}) follows.

For the benefit of the reader we describe the bijection between the set $%
\Gamma _{\lambda /d/\mu }$ of allowed lattice configurations (w.r.t. the
vertex configurations shown in Figure (\ref{fig:qbosonvertex})) and the set
of cylindric skew tableaux of shape $\lambda /d/\mu $ also for the general
case when $\ell \geq 1$. It is instructive to consult the example shown in
Figure \ref{fig:qbosonlatticeconfig}. Beginning at the first lattice row add
a box with entry 1 in each column of the Young diagram of $\mu $ whenever
the horizontal edge in the corresponding lattice column has value one. (This
is analogous to the non-cylindric case discussed previously.) After arriving
at the $n$th lattice column one obtains a finite horizontal strip, since in
each column of the Young diagram at most one box is added. Continue the
resulting horizontal strip from the fundamental region by shifting with the
period vector $\Omega =(k,-n)$ to obtain the \emph{cylindric} horizontal
strip. Now do the same for the second lattice row labeling the boxes with 2
and so on. Continue until the last lattice row. The result is a uniquely
defined cylindric skew tableau of shape $\lambda /d/\mu $. This bijection
preserves weights and it generalises the one stated earlier in the context
of the $A$-operator of the Yang-Baxter algebra. \medskip

\noindent \textbf{Lattice-cylindric tableau bijection 2}. For given $\lambda
,\mu \in \mathcal{A}_{k,n}^{+}$ and $\ell =1$ we now set $1-\sigma _{i}=\mu
\lbrack 1-\sigma _{1}]_{i}^{\prime }-\lambda \lbrack 0]_{i}^{\prime }$ in
order to identify a row configuration with a cylindric strip $\mu /d/\lambda
$, where $d=1-\sigma _{1}$; compare with the second identity in (\ref{nce2})
in Lemma \ref{cylhorstrip}. This proves (\ref{ZE2}) for $\ell =1$ and the
general case then follows by the same arguments as before.

Again for completeness we state the bijection also in the general case $\ell
\geq 1$; it generalises the one for the $D$-operator from the previous
section. Beginning with the bottom lattice row starting at $\langle 0,\ell
\rangle $ place squares labelled with $\ell $ in each column of the extended
Young diagram of $\mu \lbrack \ell -d]$, if the value of horizontal edge in
the respective lattice column modulo $n$ is zero. Continue to the next row
and label the squares with $\ell -1$ and so on. The result is a unique
restricted skew cylindric tableau of shape $\mu /(\ell -d)/\lambda $.
\end{proof}


\begin{corollary}[Expansions of cylindric Hall-Littlewood functions]
Denote by $\mathcal{P}_{k,n}^{+}$ the set of partitions $\lambda $ with $%
\ell (\lambda )\leq k$ and $\lambda _{1}\leq n$. Then%
\begin{eqnarray}
Q_{\lambda /d/\mu }(x;t) &=&\sum_{\nu \in \mathcal{P}_{k,n}^{+}}\left\langle
\lambda |\boldsymbol{e}_{\nu }|\mu \right\rangle m_{\nu }(x)=\sum_{\nu \in
\mathcal{P}_{k,n}^{+}}\left\langle \lambda |\boldsymbol{s}_{\nu ^{\prime
}}|\mu \right\rangle s_{\nu }(x),  \label{cylQ2m&s} \\
&=&\sum_{\nu \in \mathcal{P}_{k,n}^{+}}\left\langle \lambda |\boldsymbol{P}%
_{\nu ^{\prime }}^{\prime }|\mu \right\rangle P_{\nu }(x;t),
\end{eqnarray}%
and%
\begin{equation}
\left\langle \nu |\boldsymbol{e}_{\lambda }|\mu \right\rangle =\sum_{\sigma
\in \mathcal{P}_{k,n}^{+}}\left\langle \nu |\boldsymbol{s}_{\sigma ^{\prime
}}|\mu \right\rangle K_{\sigma \lambda }=\sum_{\substack{ |T|=\nu /d/\mu  \\
\limfunc{wt}(T)=\lambda }}\Psi _{T}(t),  \label{ematrix}
\end{equation}%
where $K=K(1)$ is the matrix of Kostka numbers and, hence,%
\begin{equation*}
\left\langle \nu |\boldsymbol{s}_{\lambda ^{\prime }}|\mu \right\rangle
=\sum_{w\in \mathfrak{S}_{\ell }}\varepsilon (w)\sum_{\substack{ |T|=\nu
/d/\mu  \\ \limfunc{wt}T=\lambda (w)}}\Psi _{T}(t)\;
\end{equation*}%
with $\lambda (w)=(\lambda _{1}-1+w_{1},\ldots ,\lambda _{\ell }-\ell
+w_{\ell })$.
\end{corollary}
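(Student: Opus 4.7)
The plan is to obtain all three asserted expansions of $Q_{\lambda /d/\mu }$ by sandwiching the noncommutative Cauchy identities (\ref{ncC1}) between $\langle \lambda |$ and $|\mu \rangle$ and then reading off coefficients of $z^d$. Concretely, starting from
\begin{equation*}
\boldsymbol{E}(x_{1})\cdots \boldsymbol{E}(x_{\ell })=\sum_{\nu }m_{\nu }(x)\boldsymbol{e}_{\nu }=\sum_{\nu }s_{\nu }(x)\boldsymbol{s}_{\nu ^{\prime }}=\sum_{\nu }P_{\nu }(x;t)\boldsymbol{P}_{\nu ^{\prime }}^{\prime },
\end{equation*}
one applies $\langle \lambda |\,\cdot \,|\mu \rangle$. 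By (\ref{ZE}) together with Theorem \ref{qboson_Z}, the left-hand side equals $\sum_{d\geq 0}z^{d}Q_{\lambda /d/\mu }(x;t)$. Each noncommutative coefficient $\boldsymbol{e}_{\nu },\boldsymbol{s}_{\nu ^{\prime }},\boldsymbol{P}_{\nu ^{\prime }}^{\prime }\in \mathcal{H}_{q}^{\otimes n}$ is a polynomial in $z$ whose matrix elements carry a unique power $z^{d}$ (the total winding number of the underlying lattice configurations, cf.\ (\ref{qplacticrep})); identifying the $z^{d}$ component on both sides yields the three expansions of (\ref{cylQ2m&s}). The intermediate equality $\langle \nu |\boldsymbol{e}_{\lambda }|\mu \rangle =\sum_{\sigma }\langle \nu |\boldsymbol{s}_{\sigma ^{\prime }}|\mu \rangle K_{\sigma \lambda }$ follows at once by inserting $s_{\sigma }=\sum_{\nu }K_{\sigma \nu }m_{\nu }$ into the second expansion and equating with the first.

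Next, for the tableaux formula $\langle \nu |\boldsymbol{e}_{\lambda }|\mu \rangle =\sum_{|T|=\nu /d/\mu ,\ \limfunc{wt}T=\lambda }\Psi _{T}(t)$, I would iterate the second identity of Lemma \ref{cylhorstrip}: applying $\boldsymbol{e}_{\lambda _{1}},\boldsymbol{e}_{\lambda _{2}},\ldots ,\boldsymbol{e}_{\lambda _{r}}$ in turn to $|\mu \rangle$ produces, after projecting onto $\langle \nu |$, a weighted sum over chains of cylindric loops
\begin{equation*}
\mu \lbrack 0]=\lambda ^{(0)}[d_{0}],\lambda ^{(1)}[d_{1}],\ldots ,\lambda ^{(r)}[d_{r}]=\nu \lbrack d],
\end{equation*}
where each $\lambda ^{(a)}/(d_{a}-d_{a-1})/\lambda ^{(a-1)}$ is a cylindric horizontal strip of size $\lambda _{a}$ in the fundamental region, and the total weight is $\prod _{a}\Psi _{\lambda ^{(a)}/(d_{a}-d_{a-1})/\lambda ^{(a-1)}}(t)$. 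By Lemma \ref{cyltab} such chains are in bijection with cylindric tableaux $T$ of shape $\nu /d/\mu $ and weight $\lambda$, and the product of $\Psi$-factors is by definition $\Psi _{T}(t)$, establishing (\ref{ematrix}).

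Finally, the signed sum expression for $\langle \nu |\boldsymbol{s}_{\lambda ^{\prime }}|\mu \rangle$ is extracted from the Jacobi--Trudi definition $\boldsymbol{s}_{\lambda ^{\prime }}=\det (\boldsymbol{e}_{\lambda _{i}-i+j})_{1\leq i,j\leq \ell }$. Because the $\boldsymbol{e}_{r}$ pairwise commute by (\ref{integrable}), the determinant expands unambiguously as
\begin{equation*}
\boldsymbol{s}_{\lambda ^{\prime }}=\sum_{w\in \mathfrak{S}_{\ell }}\varepsilon (w)\,\boldsymbol{e}_{\lambda _{1}-1+w_{1}}\boldsymbol{e}_{\lambda _{2}-2+w_{2}}\cdots \boldsymbol{e}_{\lambda _{\ell }-\ell +w_{\ell }}=\sum_{w\in \mathfrak{S}_{\ell }}\varepsilon (w)\,\boldsymbol{e}_{\lambda (w)},
\end{equation*}
and applying the tableaux formula just proved to each summand produces the claimed expression.

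The only subtle point, which I anticipate as the main obstacle, is verifying that matrix elements of $\boldsymbol{e}_{\nu },\boldsymbol{s}_{\nu ^{\prime }},\boldsymbol{P}_{\nu ^{\prime }}^{\prime }$ between $\langle \lambda |$ and $|\mu \rangle$ are genuinely supported on a single power $z^{d}$ determined by $\lambda ,\mu ,\nu$. This is ensured by the grading of $\mathcal{\hat{U}}_{n}^{-}$ in which each generator $a_{n}=z\beta _{1}^{\ast }\beta _{n}$ carries weight $1$ in $z$ while $a_{1},\ldots ,a_{n-1}$ carry weight $0$: tracking this weight through the Pieri action (\ref{Epieri}) shows the $z$-degree of $\langle \lambda |\boldsymbol{e}_{\nu }|\mu \rangle$ is fixed by the excess $|\nu |-(|\lambda |-|\mu |)$ in units of $n$, which coincides with the cylindric degree $d$. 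Once this bookkeeping is in place, all three identities and their tableaux refinements follow mechanically from the noncommutative Cauchy identities and Lemma \ref{cyltab}.
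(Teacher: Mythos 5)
Your proposal is correct and follows essentially the same route as the paper, whose proof simply invokes the noncommutative Cauchy identities (\ref{ncC1}) sandwiched between $\langle\lambda|$ and $|\mu\rangle$ together with (\ref{ZE}) and Theorem \ref{qboson_Z}. The extra detail you supply — the $z$-grading argument showing each matrix element sits in a single degree $d=(|\mu|+|\nu|-|\lambda|)/n$, and the chain-of-cylindric-loops reading of the tableaux formula via Lemmas \ref{cylhorstrip} and \ref{cyltab} — is exactly the bookkeeping the paper leaves implicit.
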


\begin{proof}
These expansion are an immediate consequence of the noncommutative Cauchy
identities (\ref{ncC1}).
\end{proof}

\begin{remark}
For $d=0$ (\ref{ematrix}) becomes $\sum_{T}\psi
_{T}(t)=b_{\lambda }(t)(K(t)^{-1}K)_{\lambda \mu }$ \cite[III.6, eqn (6.4)
on page 239]{Macdonald}. Note that in general the expansion of $Q_{\lambda
/d/\mu }(x;t)$ into Hall-Littlewood $Q$-functions does not yield polynomial
coefficients in $t$, that is, $\left\langle \lambda |\boldsymbol{P}_{\nu
^{\prime }}^{\prime }|\mu \right\rangle /b_{\nu ^{\prime }}(t)$ is in
general a rational function in $t$. In contrast, $\left\langle \lambda |%
\boldsymbol{P}_{\nu ^{\prime }}^{\prime }|\mu \right\rangle b_{\mu
}(t)/b_{\lambda }(t)$ is always polynomial in $t$.
\end{remark}

\begin{example}
Set $n=4$ and $k=3$. The following table summarises the expansion of the
cylindric Hall-Littlewood function $P_{(3,2,1)/2/(4,3,1)}$ in various bases
of the ring of symmetric functions.

\begin{equation*}
\begin{tabular}{|c|c|c|c|}
\hline\hline
$\lambda $ & $m_{\lambda }$ & $s_{\lambda }$ & $P_{\lambda }$ \\ \hline\hline
\multicolumn{1}{|l|}{$4,2,0$} & \multicolumn{1}{|l|}{$1-t$} &
\multicolumn{1}{|l|}{$1-t$} & \multicolumn{1}{|l|}{$1-t$} \\ \hline
\multicolumn{1}{|l|}{$4,1,1$} & \multicolumn{1}{|l|}{$2-3t+t^{3}$} &
\multicolumn{1}{|l|}{$1-2t+t^{3}$} & \multicolumn{1}{|l|}{$1-t-t^{2}+t^{3}$}
\\ \hline
\multicolumn{1}{|l|}{$3,3,0$} & \multicolumn{1}{|l|}{$2-3t+t^{3}$} &
\multicolumn{1}{|l|}{$1-2t+t^{3}$} & \multicolumn{1}{|l|}{$1-t-t^{2}+t^{3}$}
\\ \hline
\multicolumn{1}{|l|}{$3,2,1$} & \multicolumn{1}{|l|}{$%
\begin{tabular}{l}
${\small 6-14t+5t}^{2}$ \\
$\;\;{\small +9t}^{3}{\small -7t}^{4}{\small +t}^{5}$%
\end{tabular}%
$} & \multicolumn{1}{|l|}{$%
\begin{tabular}{l}
${\small 2-8t+5t}^{2}$ \\
$\;\;{\small +7t}^{3}{\small -7t}^{4}{\small +t}^{5}$%
\end{tabular}%
$} & \multicolumn{1}{|l|}{$%
\begin{tabular}{l}
${\small 2-5t+t}^{2}$ \\
$\;\;{\small +6t}^{3}{\small -5t}^{4}{\small +t}^{5}$%
\end{tabular}%
$} \\ \hline
\multicolumn{1}{|l|}{$2,2,2$} & \multicolumn{1}{|l|}{$%
\begin{tabular}{l}
${\small 10-26t+14t}^{2}{\small +16t}^{3}$ \\
$\;\;{\small -21t}^{4}{\small +7t}^{5}{\small +t}^{6}{\small -t}^{7}$%
\end{tabular}%
$} & \multicolumn{1}{|l|}{$%
\begin{tabular}{l}
${\small 1-3t+4t}^{2}{\small -7t}^{4}$ \\
$\;\;{\small +5t}^{5}{\small +t}^{6}{\small -t}^{7}$%
\end{tabular}%
$} & \multicolumn{1}{|l|}{$%
\begin{tabular}{l}
${\small 1-t-t}^{2}{\small -t}^{3}$ \\
$\;\;{\small +t}^{4}{\small +4t}^{5}{\small -3t}^{6}$%
\end{tabular}%
$} \\ \hline
\end{tabular}%
\end{equation*}

In comparison we have the following expansion coefficients for the ordinary
skew Hall-Littlewood function $P_{(4,4,3,2,1)/(4,3,1,0,0)}$:

\begin{equation*}
\begin{tabular}{|c|c|c|c|}
\hline\hline
$\lambda $ & $m_{\lambda }$ & $s_{\lambda }$ & $P_{\lambda }$ \\ \hline\hline
\multicolumn{1}{|l|}{${\small 4,2,0}$} & \multicolumn{1}{|l|}{${\small 1}$}
& \multicolumn{1}{|l|}{${\small 1}$} & \multicolumn{1}{|l|}{${\small 1}$} \\
\hline
\multicolumn{1}{|l|}{${\small 4,1,1}$} & \multicolumn{1}{|l|}{${\small 2-t-t}%
^{2}$} & \multicolumn{1}{|l|}{${\small 1-t-t}^{2}$} & \multicolumn{1}{|l|}{$%
{\small 1-t}^{2}$} \\ \hline
\multicolumn{1}{|l|}{${\small 3,3,0}$} & \multicolumn{1}{|l|}{${\small 2-t-t}%
^{2}$} & \multicolumn{1}{|l|}{${\small 1-t-t}^{2}$} & \multicolumn{1}{|l|}{$%
{\small 1-t}^{2}$} \\ \hline
\multicolumn{1}{|l|}{${\small 3,2,1}$} & \multicolumn{1}{|l|}{${\small %
7-7t-4t}^{2}{\small +4t}^{3}$} & \multicolumn{1}{|l|}{${\small 3-5t-2t}^{2}%
{\small +4t}^{3}$} & \multicolumn{1}{|l|}{${\small 3-2t-3t}^{2}{\small +2t}%
^{3}$} \\ \hline
\multicolumn{1}{|l|}{${\small 2,2,2}$} & \multicolumn{1}{|l|}{${\small %
12-15t-6t}^{2}{\small +11t}^{3}{\small -t}^{4}{\small -t}^{5}$} &
\multicolumn{1}{|l|}{${\small 1-3t+3t}^{3}{\small -t}^{4}{\small -t}^{5}$} &
\multicolumn{1}{|l|}{${\small 1-t}^{2}{\small -t}^{3}{\small +t}^{5}$} \\
\hline
\end{tabular}%
\end{equation*}
\end{example}

\begin{proposition}[inverse Kostka-Foulkes matrix]
Denote by $n^{k}$ the partition whose Young diagram consists of $k$ rows of
size $n$. Let $\lambda \in \mathcal{A}_{k,n}^{+}$ and $\tilde{\lambda}$ be
the partition with all $n$-parts removed. Then%
\begin{equation}
Q_{\lambda /d/n^{k}}(x_{1},\ldots ,x_{k};t)=Q_{\tilde{\lambda}}(x_{1},\ldots
,x_{k};t)
\end{equation}%
with $d=k-m_{n}(\lambda )$ and in particular we have that%
\begin{equation}
\langle \lambda |\boldsymbol{s}_{\mu ^{\prime }}|n^{k}\rangle =b_{\tilde{%
\lambda}}(t)K(t)_{\tilde{\lambda}\mu }^{-1}\;.
\end{equation}
\end{proposition}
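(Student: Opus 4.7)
The plan is to prove the first identity $Q_{\lambda/d/n^{k}}(x_{1},\ldots,x_{k};t) = Q_{\tilde\lambda}(x_{1},\ldots,x_{k};t)$ by a weight-preserving bijection between cylindric tableaux of shape $\lambda/d/n^{k}$ and ordinary semi-standard tableaux of $\tilde\lambda$; the formula for $\langle\lambda|\boldsymbol{s}_{\mu'}|n^{k}\rangle$ will then drop out by comparing Schur expansions of the two sides.

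First I would pin down the fundamental region. Writing $\lambda = (n^{a},\tilde\lambda)$ with $a = m_{n}(\lambda)$ and each $\tilde\lambda_{i} < n$, the hypothesis forces $d = k - a = \ell(\tilde\lambda)$. From $\lambda'_{i} = a + \tilde\lambda'_{i}$ for $1 \le i \le n-1$, $\lambda'_{n} = a$ and $(n^{k})'[0]_{i} = k$, a direct check yields
\begin{equation*}
\lambda'[d]_{i} - (n^{k})'[0]_{i} = \tilde\lambda'_{i}, \qquad 1 \le i \le n,
\end{equation*}
with the convention $\tilde\lambda'_{n} = 0$. Choosing the fundamental region of $\lambda/d/n^{k}$ to sit in rows $1,\ldots,k$ places the boxes into rows $1,\ldots,d$ at columns $n+1,\ldots,n+\tilde\lambda_{i}$; the rows $d+1,\ldots,k$ are empty because $\lambda[d]_{i} = n = (n^{k})[0]_{i}$ there. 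Hence one fundamental piece is the Young diagram of $\tilde\lambda$ (translated by $n$ columns), and the gaps of empty rows and columns between it and its $(k,-n)$-translates ensure that conditions (T2)--(T3) on a cylindric tableau $T$ impose no constraint across its boundary and reduce to the ordinary semi-standard conditions on the restriction $T' := T|_{\text{fund.}}$. This yields the bijection $T \leftrightarrow T'$.

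For weight preservation, Lemma~\ref{cyltab} decomposes $T$ into a chain $n^{k} = \nu^{(0)},\nu^{(1)},\ldots,\nu^{(r)} = \lambda$ with degrees $d_{0} = 0,\ldots,d_{r} = d$; under the bijection this becomes the ordinary chain $\emptyset = \tilde\nu^{(0)},\ldots,\tilde\nu^{(r)} = \tilde\lambda$ with $\nu^{(c)} = (n^{k-\ell(\tilde\nu^{(c)})},\tilde\nu^{(c)})$ and $d_{c} = \ell(\tilde\nu^{(c)})$. The same conjugate-loop calculation applied to each step gives
\begin{equation*}
\nu^{(c)\prime}[d_{c}]_{i} - \nu^{(c-1)\prime}[d_{c-1}]_{i} = \tilde\nu^{(c)\prime}_{i} - \tilde\nu^{(c-1)\prime}_{i}, \qquad 1 \le i \le n,
\end{equation*}
and since $\tilde\nu^{(c)\prime}_{n} = 0$ the cylindric index set $I_{\nu^{(c)}/(d_{c}-d_{c-1})/\nu^{(c-1)}}$ agrees with the ordinary $I_{\tilde\nu^{(c)}/\tilde\nu^{(c-1)}}$ (the potential wrap-around at $i = n$ never contributes). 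Combined with $m_{i}(\nu^{(c)}) = m_{i}(\tilde\nu^{(c)})$ for $i < n$, this gives $\Phi_{\nu^{(c)}/(d_{c}-d_{c-1})/\nu^{(c-1)}}(t) = \varphi_{\tilde\nu^{(c)}/\tilde\nu^{(c-1)}}(t)$; multiplying over $c$ and noting $x^{T} = x^{T'}$ yields the first identity $Q_{\lambda/d/n^{k}}(x;t) = Q_{\tilde\lambda}(x;t)$.

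For the second formula I would apply (\ref{cylQ2m&s}), which gives $Q_{\lambda/d/n^{k}}(x;t) = \sum_{\nu}\langle\lambda|\boldsymbol{s}_{\nu'}|n^{k}\rangle\,s_{\nu}(x)$, and combine it with the standard expansion $Q_{\tilde\lambda}(x;t) = b_{\tilde\lambda}(t)P_{\tilde\lambda}(x;t) = b_{\tilde\lambda}(t)\sum_{\nu} K(t)^{-1}_{\tilde\lambda\nu}\,s_{\nu}(x)$; equating coefficients of $s_{\nu}$ then gives $\langle\lambda|\boldsymbol{s}_{\nu'}|n^{k}\rangle = b_{\tilde\lambda}(t)\,K(t)^{-1}_{\tilde\lambda\nu}$, which is the claimed formula. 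The main obstacle is the column-height bookkeeping of the second paragraph: one must track carefully how the degree shift $d = k - a$ on the $\lambda$-side cancels exactly, uniformly in $i \in \{1,\ldots,n\}$, against the offset $k$ coming from $(n^{k})'[0]$, and verify that the induced identification of $\Phi$-weights with $\varphi$-weights is free of any residual cylindric contribution from the potential wrap-around at $i = n$.
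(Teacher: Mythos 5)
Your proof is correct and follows essentially the same route as the paper: both reduce the first identity to the weight-preserving restriction $T\mapsto\tilde T$ to the fundamental region with $\Phi_T=\varphi_{\tilde T}$, and then obtain the second formula by comparing Schur expansions via $M(Q,s)=b(t)K(t)^{-1}$. The only difference is presentational — the paper justifies $\Phi_T=\varphi_{\tilde T}$ by appealing to the non-intersecting-path picture ($d$ paths cross the boundary and none ends in the $n$th lattice column), whereas you verify the same fact by a direct computation with conjugate cylindric loops, which is if anything more explicit.
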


\begin{proof}
Exploit the description of lattice configurations in terms of
non-intersecting paths using the correspondence shown in Figure \ref%
{fig:qbosonvertex} between vertex and paths configurations. According to our
assumptions there are $d=\sum_{i=1}^{n-1}m_{i}(\lambda )$ paths crossing the
boundary, none of which ends up in the $n$th lattice column. Therefore, we
must have for each cylindric tableau $T$ of shape $\lambda /d/n^{k}$ that $%
\Phi _{T}=\varphi _{\tilde{T}}$ with $\tilde{T}$ being the ordinary tableau
of shape $\tilde{\lambda}$ obtained by restricting $T$ to the fundamental
region. Thus, the first assertion follows. The second is then a simple
consequence of the definition (\ref{ncschur}) and the known transformation
matrix $M(Q,s)=b(t)K(t)^{-1}$; see \cite[III.6, Table on p241]{Macdonald}.
\end{proof}

\section{Cylindric Macdonald functions}

In this section we define cylindric analogues of the skew Macdonald
functions $Q_{\lambda /\mu }^{\prime }$ and $P_{\lambda /\mu }^{\prime }$;
see (\ref{cHLdef}) for their definition.

\begin{figure}[tbp]
\begin{equation*}
\includegraphics[scale=0.3]{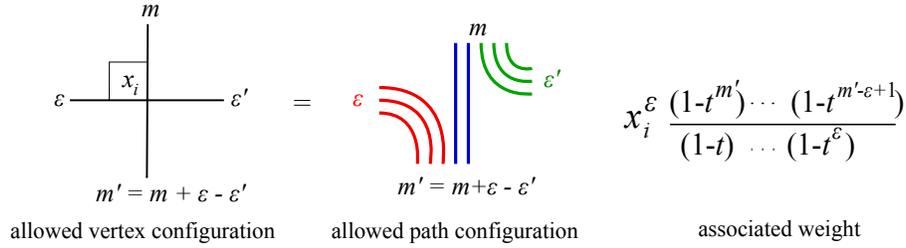}
\end{equation*}%
\caption{The vertex configurations and weights of the statistical model
associated with the second solution to the Yang-Baxter equation (\protect\ref%
{L'weight}). }
\label{fig:qdeserterweights}
\end{figure}

\subsection{The statistical vertex model associated with $L^{\prime }$}

We consider again a statistical vertex model defined on a cylinder but this
time we assume that the number of lattice rows of $\mathbb{L}$ lies in the
interval $1\leq \ell \leq n-1$. The setup and definitions remain the same as
previously with the following two exceptions:

\begin{enumerate}
\item A horizontal edge configuration is now a map $\gamma _{h}^{\prime }:%
\mathbb{E}_{h}\rightarrow \mathbb{Z}_{\geq 0}$, that is, vertical \emph{and}
horizontal edges can take values in the nonnegative integers. We shall label
values of horizontal edge with the Greek letter $\varepsilon $ and continue
to label vertical ones with the letter $m$. With this change a vertex
configuration $\gamma _{\langle i,j\rangle }^{\prime }$ is now a 4-tuple of
nonnegative integers, $\gamma _{\langle i,j\rangle }^{\prime }=\{\varepsilon
,m,\varepsilon ^{\prime },m^{\prime }\}$ giving the values of the W, N, E
and S edges centered at the interior lattice point $\langle i,j\rangle $,
respectively; see Figure \ref{fig:qdeserterweights}

\item We now fix the weights of the vertex configurations through the matrix
elements of the $L^{\prime }$-operator (\ref{L'}),%
\begin{equation}
\limfunc{wt}\nolimits^{\prime }(\gamma _{\langle i,j\rangle }^{\prime
}):=\langle \varepsilon ^{\prime },m^{\prime }|L^{\prime
}(x_{i})|\varepsilon ,m\rangle =u^{\varepsilon }\QATOPD[ ] {m^{\prime }}{%
\varepsilon }_{t}\delta _{m+\varepsilon ,m^{\prime }+\varepsilon ^{\prime
}}\;.  \label{L'weight}
\end{equation}%
As previously we then define the weight of a lattice configuration $\gamma
^{\prime }=(\gamma _{h}^{\prime },\gamma _{v}^{\prime })$ as the product of
vertex configurations over interior lattice points,
\begin{equation}
\limfunc{wt}\nolimits^{\prime }(\gamma ^{\prime })=\prod_{\langle i,j\rangle
\in \mathbb{\dot{L}}}\limfunc{wt}\nolimits^{\prime }(\gamma _{\langle
i,j\rangle }^{\prime })  \label{weight'prod}
\end{equation}%
and call a vertex or lattice configuration \textquotedblleft not
allowed\textquotedblright\ if the corresponding weight vanishes.
\end{enumerate}

\begin{remark}
The vertex configurations can again be interpreted in terms of
non-intersecting paths or \textquotedblleft infinitely-friendly
walkers\textquotedblright ; see Figure \ref{fig:qdeserterweights}. Note,
however, that this particular statistical model has not been previously
discussed in the literature.
\end{remark}

Shadowing closely our previous discussion of the $q$-boson model we consider
again for $\lambda ,\mu \in \mathcal{A}_{k,n}^{+}$ the set $\Gamma _{\lambda
,\mu }^{\prime }$ of periodic or cylindric lattice configurations, that is $%
\gamma _{h}^{\prime }(e_{i,0})=\gamma _{h}^{\prime }(e_{i,n})$ for each $%
\gamma ^{\prime }\in \Gamma _{\lambda ,\mu }^{\prime }$ where $e_{i,0}$ and $%
e_{i,n}$ are the horizontal edges starting at the points $\langle i,0\rangle
$ and $\langle i,n\rangle $, respectively. The values of the top and bottom
outer vertical edges are fixed in terms of the multiplicities $m_{j}(\mu )$
and $m_{j}(\lambda )$ as previously discussed and we set similar as before
\begin{equation*}
\Gamma _{\lambda /d/\mu }^{\prime }:=\{\gamma ^{\prime }\in \Gamma _{\lambda
,\mu }^{\prime }:\sum_{i=1}^{\ell }\gamma _{h}^{\prime }(\langle i,0\rangle
,\langle i,1\rangle )=d\}\ .
\end{equation*}%
That is, $\Gamma _{\lambda /d/\mu }^{\prime }$ is the subset of
configurations where the sum over the values of the left (or right) outer
horizontal edges is $d$. Define the partition function $Z_{\lambda /d/\mu
}^{\prime }(x_{1},\ldots ,x_{\ell }):=\sum_{\gamma \in \Gamma _{\lambda
/d/\mu }^{\prime }}\limfunc{wt}\nolimits^{\prime }(\gamma )$ then we have in
close analogy to our previous discussion the following identity.

\begin{lemma}
Let $\lambda ,\mu \in \mathcal{A}_{k,n}^{+}$. The following equality between
matrix element and partition functions is true,%
\begin{equation}
\langle \lambda |\boldsymbol{G}^{\prime }(x_{1})\cdots \boldsymbol{G}%
^{\prime }(x_{\ell })|\mu \rangle =\sum_{d\geq 0}z^{d}Z_{\lambda /d/\mu
}^{\prime }(x_{1},\ldots ,x_{\ell })\;.  \label{Z'G'}
\end{equation}%
Moreover, $Z_{\lambda /d/\mu }^{\prime }(x_{1},\ldots ,x_{\ell })$ is
symmetric in the $x_{i}$'s.
\end{lemma}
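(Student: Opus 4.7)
The strategy is to mirror the argument used for the analogous identity $\langle\lambda|\boldsymbol{E}(x_{1})\cdots\boldsymbol{E}(x_{\ell})|\mu\rangle=\sum_{d\geq 0}z^{d}Z_{\lambda/d/\mu}$, replacing the $L$-vertex model by the one built from the $L'$-operator. The key observation is that, by (\ref{ncG'}),
\[
\boldsymbol{G}'(u)=\sum_{m\geq 0}z^{m}T'_{m,m}(u),\qquad T'_{\varepsilon',\varepsilon}(x_{i})=L'_{i,n}(x_{i})\cdots L'_{i,1}(x_{i})\Big|_{\varepsilon',\varepsilon},
\]
so that each factor $T'_{\varepsilon',\varepsilon}(x_{i})$ corresponds to a single horizontal row of vertices with the leftmost horizontal edge carrying label $\varepsilon$ and the rightmost carrying $\varepsilon'$; the diagonal selection $\varepsilon'=\varepsilon=m$ imposed by $\boldsymbol{G}'$ is precisely the cylindric identification $\gamma'_{h}(e_{i,0})=\gamma'_{h}(e_{i,n})$.

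First I would expand a single transfer matrix element by inserting the complete set $\{|\varepsilon_{j}\rangle\langle\varepsilon_{j}|\}_{\varepsilon_{j}\geq 0}$ of Fock states between consecutive auxiliary-space $L'$-factors along the $i$-th row. This produces a sum over interior horizontal-edge labels $(\varepsilon_{i,0}=m,\varepsilon_{i,1},\dots,\varepsilon_{i,n-1},\varepsilon_{i,n}=m)$, each vertex contributing the factor (\ref{L'weight}). Concatenating $\ell$ such rows by the rule $\boldsymbol{G}'(x_{1})\cdots\boldsymbol{G}'(x_{\ell})$ and inserting $\mathbf{1}=\sum_{\nu\in\mathcal{A}_{k,n}^{+}}|\nu\rangle\langle\nu|$ between successive rows introduces the interior vertical-edge labels (the $\nu$'s parametrize Fock states by partitions as in Section~3); particle-number conservation at each vertex, $\varepsilon+m'=\varepsilon'+m$, keeps all intermediate states inside $\mathcal{A}_{k,n}^{+}$ and matches the boundary data prescribed by $\lambda,\mu$.

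At this point the matrix element has been rewritten as a sum over lattice configurations $\gamma'$ with fixed boundary values specified by $\lambda,\mu$ and satisfying the cylindric condition on horizontal edges, weighted by $\prod_{\langle i,j\rangle}\limfunc{wt}'(\gamma'_{\langle i,j\rangle})=\limfunc{wt}'(\gamma')$ from (\ref{weight'prod}). The extra factor $z^{m}$ in $\boldsymbol{G}'$ attaches a power $z^{\varepsilon_{i,0}}$ to each row; summing over $i$ gives the total degree $d(\gamma')=\sum_{i=1}^{\ell}\gamma'_{h}(e_{i,0})$. Grouping configurations according to the value of $d$ yields
\[
\langle\lambda|\boldsymbol{G}'(x_{1})\cdots\boldsymbol{G}'(x_{\ell})|\mu\rangle=\sum_{d\geq 0}z^{d}\sum_{\gamma'\in\Gamma'_{\lambda/d/\mu}}\limfunc{wt}{}'(\gamma')=\sum_{d\geq 0}z^{d}Z'_{\lambda/d/\mu}(x_{1},\dots,x_{\ell}),
\]
which is the claimed identity. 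Symmetry of each $Z'_{\lambda/d/\mu}$ in the $x_{i}$ then follows because $\boldsymbol{G}'(x_{i})\boldsymbol{G}'(x_{j})=\boldsymbol{G}'(x_{j})\boldsymbol{G}'(x_{i})$ by the integrability corollary (\ref{integrable}), so the full left-hand side is symmetric; as $z$ is an independent formal parameter on which the vertex weights do not depend, symmetry must hold coefficient-by-coefficient in $z^{d}$.

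The only point that requires mild attention, rather than being a genuine obstacle, is the apparent infinite sum $\sum_{m\geq 0}$ in the definition of $\boldsymbol{G}'$: when evaluated on vectors of $\mathcal{F}_{k}^{\otimes n}$ at most $k$ horizontal units can flow across the seam of the cylinder, so only finitely many terms survive and all manipulations above are well-defined as formal identities in $\mathbb{C}[\![x_{1},\dots,x_{\ell}]\!]$.
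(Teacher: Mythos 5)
Your proposal is correct and follows essentially the same route as the paper, which simply observes that the identity is a direct consequence of the definitions (\ref{ncG'}) and (\ref{L'weight}) and deduces symmetry from the commutativity $\boldsymbol{G}'(x_i)\boldsymbol{G}'(x_j)=\boldsymbol{G}'(x_j)\boldsymbol{G}'(x_i)$ coming from (\ref{ybe'}). You merely spell out the expansion into lattice configurations and make explicit the (correct) point that symmetry holds coefficient-by-coefficient in $z^d$ because $z$ is an independent formal variable on which the vertex weights do not depend.
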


\begin{proof}
The assertion is again a direct consequence of the definitions (\ref{ncG'})
and (\ref{L'weight}). That the partition function is symmetric follows from $%
\boldsymbol{G}^{\prime }(x_{i})\boldsymbol{G}^{\prime }(x_{j})=\boldsymbol{G}%
^{\prime }(x_{j})\boldsymbol{G}^{\prime }(x_{i})$ which is a consequence of (%
\ref{ybe'}).
\end{proof}

Note that we can recover \textquotedblleft open boundary
conditions\textquotedblright , that is configurations on a finite strip with
the values of the outer horizontal edges all being zero, by setting $z=0$.

\begin{figure}[tbp]
\begin{equation*}
\includegraphics[scale=0.35]{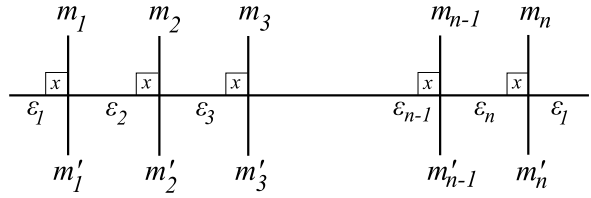}
\end{equation*}%
\caption{The numbering convention used in describing a row configuration of
the vertex model (\protect\ref{L'weight}). }
\label{fig:qdeserterrow}
\end{figure}

\subsection{Cylindric weight functions}

We now introduce the cylindric analogue of the coefficient function (\ref%
{phiprime}) which assigns a weight to each cylindric horizontal strip of the
form $\lambda ^{\prime }/d/\mu ^{\prime }$. Here $\lambda ^{\prime }/d/\mu
^{\prime }$ is the conjugate or transposed cylindric skew diagram of $%
\lambda /d/\mu $ defined earlier, that is we now consider the the case when $%
\lambda /d/\mu $ is a \emph{vertical} strip. For $\lambda ,\mu \in \mathcal{A%
}_{k,n}^{+}$ set%
\begin{equation}
\Phi _{\lambda ^{\prime }/d/\mu ^{\prime }}^{\prime }(t):=\prod_{j=1}^{n}%
\QATOPD[ ] {\mu _{j}^{\prime }-\mu _{j+1}^{\prime }}{\lambda ^{\prime
}[d]_{j+1}-\mu ^{\prime }[0]_{j+1}}_{t}  \label{Phiprime}
\end{equation}%
if $\lambda ^{\prime }/d/\mu ^{\prime }$ is a horizontal strip and zero
otherwise. To facilitate the comparison with (\ref{phiprime}) note that we
have%
\begin{equation}
\lambda ^{\prime }[d]_{1}-\mu ^{\prime }[0]_{1}=k+d-k=d-0=\lambda ^{\prime
}[d]_{n+1}-\mu ^{\prime }[0]_{n+1}
\end{equation}%
and, hence, the analogue of the additional factor $1/(t)_{\lambda
_{1}^{\prime }-\mu _{1}^{\prime }}$ appearing in (\ref{phiprime}) is
included in the definition (\ref{Phiprime}). Thus, the polynomial $\Phi
_{\lambda ^{\prime }/d/\mu ^{\prime }}^{\prime }$ can be interpreted as the
natural generalisation of $\varphi _{\lambda ^{\prime }/\mu ^{\prime
}}^{\prime }$ to cylindric skew diagrams. We also introduce the cylindric
counterpart for the second coefficient function (\ref{psiprime}) setting%
\begin{equation}
\Psi _{\lambda ^{\prime }/d/\mu ^{\prime }}^{\prime }(t)=\prod_{j=1}^{n}%
\QATOPD[ ] {\lambda _{j}^{\prime }-\lambda _{j+1}^{\prime }}{\lambda
^{\prime }[d]_{j}-\mu ^{\prime }[0]_{j}}_{t}\;.  \label{Psiprime}
\end{equation}%
The following equality shows that the two definitions (\ref{Phiprime}) and (%
\ref{Psiprime}) are consistent with each other.

\begin{lemma}
We have the identity
\begin{equation}
\Psi _{\lambda ^{\prime }/d/\mu ^{\prime }}^{\prime }(t)=\frac{b_{\lambda
}(t)}{b_{\mu }(t)}\Phi _{\lambda ^{\prime }/d/\mu ^{\prime }}^{\prime }(t)\;.
\label{bPsi'=bPhi'}
\end{equation}
\end{lemma}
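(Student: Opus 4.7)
If $\lambda'/d/\mu'$ is not a horizontal strip, both $\Phi'_{\lambda'/d/\mu'}$ and $\Psi'_{\lambda'/d/\mu'}$ vanish by definition, so the identity holds trivially. Henceforth I assume that $\lambda'/d/\mu'$ is a horizontal strip, equivalently $\theta_j := \lambda'[d]_j - \mu'[0]_j \in \{0,1\}$ for every column index $j$ (cyclically modulo $n$).

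The plan is to expand both products via $\binom{m}{\theta}_t = (t)_m / \bigl( (t)_\theta (t)_{m-\theta} \bigr)$ and then compare factors. Writing this out,
\begin{equation*}
\frac{\Psi'_{\lambda'/d/\mu'}(t)}{\Phi'_{\lambda'/d/\mu'}(t)} \;=\; \prod_{j=1}^{n} \frac{(t)_{\lambda_j'-\lambda_{j+1}'}\,(t)_{\theta_{j+1}}\,(t)_{\mu_j'-\mu_{j+1}'-\theta_{j+1}}}{(t)_{\mu_j'-\mu_{j+1}'}\,(t)_{\theta_j}\,(t)_{\lambda_j'-\lambda_{j+1}'-\theta_j}}.
\end{equation*}
Since the index $j$ runs cyclically over $\{1,\dots,n\}$, the shift $j\mapsto j-1$ gives $\prod_j (t)_{\theta_{j+1}} = \prod_j (t)_{\theta_j}$, so those factors cancel. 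Recalling $b_\nu(t)=\prod_{i=1}^n (t)_{m_i(\nu)}$ with $m_j(\nu)=\nu_j'-\nu_{j+1}'$, the claim $\Psi'/\Phi' = b_\lambda(t)/b_\mu(t)$ reduces to the single identity
\begin{equation*}
\prod_{j=1}^{n} (t)_{m_j(\lambda)-\theta_j} \;=\; \prod_{j=1}^{n} (t)_{m_j(\mu)-\theta_{j+1}}.
\end{equation*}

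The crucial step is then to identify the factors termwise rather than only after taking the product. Unpacking the conjugate cylindric loops (in particular $\lambda'[d]_{n+1} = \lambda'[d]_1 - k = d$, and likewise $\mu'[0]_{n+1}=0$), one has for every $1\leq j\leq n$
\begin{equation*}
m_j(\lambda) \;=\; \lambda'[d]_j - \lambda'[d]_{j+1}, \qquad m_j(\mu) \;=\; \mu'[0]_j - \mu'[0]_{j+1},
\end{equation*}
where the boundary case $j=n$ is handled by the period-$(n,-k)$ continuation built into the conjugate loops. Substituting $\theta_j=\lambda'[d]_j-\mu'[0]_j$ gives
\begin{equation*}
m_j(\lambda)-\theta_j \;=\; \mu'[0]_j - \lambda'[d]_{j+1} \;=\; m_j(\mu)-\theta_{j+1},
\end{equation*}
so the two products coincide factor by factor, yielding the desired identity.

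The only delicate point is book-keeping at the boundary $j=n$, where one must verify that the cylindric continuation convention $\lambda'[d]_{n+1}=d$ is precisely what is needed for the formulas $m_n(\lambda)=\lambda_n'$ (viewed as a difference of consecutive conjugate loop values) to be consistent with both the definition of multiplicities and the horizontal strip condition. Once that is in place, the proof is purely a matter of cancelling $q$-Pochhammer symbols.
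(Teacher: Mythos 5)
Your proof is correct and follows essentially the same route as the paper's, which simply invokes the rewriting $\lambda'_j-\lambda'_{j+1}=\lambda'[d]_j-\lambda'[d]_{j+1}$ together with $\theta_{n+1}=\theta_1$; your termwise identity $m_j(\lambda)-\theta_j=\mu'[0]_j-\lambda'[d]_{j+1}=m_j(\mu)-\theta_{j+1}$ is exactly that observation made explicit. One side remark is off: the horizontal-strip condition for $\lambda'/d/\mu'$ is the interlacing $\lambda'[d]_j\ge\mu'[0]_j\ge\lambda'[d]_{j+1}$, i.e.\ $0\le\theta_j\le m_j(\lambda)$, not $\theta_j\in\{0,1\}$ (that would be a vertical strip) --- but since your Pochhammer cancellation never uses this, nothing in the argument breaks.
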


\begin{proof}
A trivial rewriting using that $\lambda _{j}^{\prime }-\lambda
_{j+1}^{\prime }=\lambda ^{\prime }[0]_{j}-\lambda ^{\prime
}[0]_{j+1}=\lambda ^{\prime }[d]_{j}-\lambda ^{\prime }[d]_{j+1}$.
\end{proof}

\begin{definition}
Let $\lambda ,\mu \in \mathcal{A}_{k,n}^{+}$ and $d\geq 0,\;1\leq \ell \leq
n-1$. Then we define the (restricted) cylindric Macdonald functions $%
P_{\lambda ^{\prime }/d/\mu ^{\prime }}^{\prime }$ and $Q_{\lambda ^{\prime
}/d/\mu ^{\prime }}^{\prime }$ as follows,
\begin{equation}
P_{\lambda ^{\prime }/d/\mu ^{\prime }}^{\prime }(x_{1},\ldots ,x_{\ell
};t):=\sum_{|T|=\lambda ^{\prime }/d/\mu ^{\prime }}\Psi _{T}^{\prime
}(t)x^{T},  \label{cylP'}
\end{equation}%
and%
\begin{equation*}
Q_{\lambda ^{\prime }/d/\mu ^{\prime }}^{\prime }(x_{1},\ldots ,x_{\ell
};t):=\frac{b_{\mu }(t)}{b_{\lambda }(t)}~P_{\lambda ^{\prime }/d/\mu
^{\prime }}^{\prime }(x_{1},\ldots ,x_{\ell };t),
\end{equation*}%
where the sum runs over all conjugate cylindric skew tableaux $T$ of shape $%
\lambda ^{\prime }/d/\mu ^{\prime }$.
\end{definition}

The following proposition ties the transfer matrix (\ref{ncG'}) of the
second statistical model to cylindric vertical strips $\lambda /d/\mu $.

\begin{proposition}[cylindric vertical strips]
Let $\boldsymbol{G}^{\prime }(u):=\sum_{r\geq 0}u^{r}\boldsymbol{g}%
_{r}^{\prime }$ be the formal partial trace of the monodromy introduced
above. The latter is well-defined as an operator in $\limfunc{End}(\mathcal{F%
}^{\otimes n})$ since when acting on $|\mu \rangle ,~\mu \in \mathcal{A}%
_{k,n}^{+}$ for any $k\geq 0$ only a finite number of coefficients act
non-trivially. Namely, one has
\begin{equation}
\boldsymbol{g}_{r}^{\prime }|\mu \rangle =\sum_{\substack{ \lambda /d/\mu
=(1^{r}),  \\ \lambda \in \mathcal{A}_{k,n}^{+}}}z^{d}\Psi _{\lambda
^{\prime }/d/\mu ^{\prime }}^{\prime }(t)|\lambda \rangle \;,  \label{ncG'2}
\end{equation}%
where the second sum runs over all cylindric vertical strips of length $r$
with $0\leq d\leq \min (r,m_{n}(\mu ))$.
\end{proposition}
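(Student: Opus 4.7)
The plan is to reduce to a single-row lattice computation and exhibit an explicit bijection between periodic row configurations of the vertex model associated with $L'$ and cylindric vertical strips, mirroring the argument used in the proof of Theorem~\ref{qboson_Z}.

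First I would rewrite $\boldsymbol{G}'(u)=\sum_{m\geq 0}z^m T'_{m,m}(u)$ by (\ref{ncG'}), so that $\langle\lambda|\boldsymbol{g}'_r|\mu\rangle$ is the coefficient of $u^rz^d$ in $\sum_d z^d\langle\lambda|T'_{d,d}(u)|\mu\rangle$. By (\ref{L'weight}) this matrix element is the partition function of a single lattice row with top and bottom outer vertical edges prescribed by $m_j(\mu)$ and $m_j(\lambda)$ respectively, and left/right outer horizontal edges both set to $d$. Such a configuration is specified by a sequence of horizontal edge values $(\varepsilon_0=d,\varepsilon_1,\ldots,\varepsilon_{n-1},\varepsilon_n=d)\in\mathbb{Z}_{\geq 0}^{n+1}$ obeying the vertex conservation law $\varepsilon_{j-1}+m_j(\mu)=\varepsilon_j+m_j(\lambda)$, and carries Boltzmann weight $\prod_{j=1}^n u^{\varepsilon_{j-1}}\QATOPD[ ]{m_j(\lambda)}{\varepsilon_{j-1}}_t$.

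Next I would solve the conservation constraints by telescoping. Since $\sum_{i=1}^j m_i(\nu)=\nu'_1-\nu'_{j+1}=k-\nu'_{j+1}$ for $\nu\in\mathcal{A}^+_{k,n}$, one obtains $\varepsilon_{j-1}=d+\lambda'_j-\mu'_j=\lambda'[d]_j-\mu'[0]_j$ for $1\leq j\leq n$; the closing condition $\varepsilon_n=d$ is automatic from $\lambda'_{n+1}=\mu'_{n+1}=0$. Hence the row configuration is \emph{uniquely} determined by the triple $(\lambda,\mu,d)$, and its existence reduces to the inequalities $0\leq\varepsilon_{j-1}\leq m_j(\lambda)=\lambda'_j-\lambda'_{j+1}$ needed for the $t$-binomials to be nonzero. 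These translate into $0\leq\lambda'[d]_j-\mu'[0]_j\leq\lambda'_j-\lambda'_{j+1}$ for every $j$, which is exactly the condition that $\lambda'/d/\mu'$ be a cylindric horizontal strip, equivalently that $\lambda/d/\mu$ be a cylindric vertical strip.

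Substituting the explicit values of $\varepsilon_{j-1}$, the total power of $u$ in the row weight becomes $\sum_{j=1}^n(\lambda'[d]_j-\mu'[0]_j)=|\lambda|+nd-|\mu|$, which is the number of boxes in the cylindric strip, so extracting the coefficient of $u^r$ restricts to strips of length $r$. The surviving product of $t$-binomials is precisely $\prod_{j=1}^n\QATOPD[ ]{\lambda'_j-\lambda'_{j+1}}{\lambda'[d]_j-\mu'[0]_j}_t=\Psi'_{\lambda'/d/\mu'}(t)$ by definition (\ref{Psiprime}), proving the identity. The bounds $0\leq d\leq\min(r,m_n(\mu))$ are immediate: $d\leq r$ since $\varepsilon_0=d$ is one of the nonnegative summands adding up to $r$, while $d\leq m_n(\mu)$ because (\ref{T'matrix}) contains a factor $\beta_n^d$ that annihilates $|\mu\rangle$ unless $m_n(\mu)\geq d$. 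The same bound shows that only finitely many summands of the formal series $\boldsymbol{G}'(u)$ act nontrivially on any $|\mu\rangle$, so $\boldsymbol{G}'\in\limfunc{End}(\mathcal{F}^{\otimes n})$. The only real obstacle is book-keeping: one must keep the conjugate cylindric loop conventions aligned so that the telescoping lands on $\Psi'$ rather than $\Phi'$, but once this is settled the argument is a direct analogue of Lemma~\ref{cylhorstrip}.
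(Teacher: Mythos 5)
Your argument is correct and is essentially the paper's own proof: both reduce to a single periodic lattice row, use the vertex conservation law to show that the horizontal edge values are forced to be $\varepsilon_{j-1}=\lambda'[d]_j-\mu'[0]_j$, identify the non-vanishing of the $t$-binomials with the cylindric horizontal-strip condition for $\lambda'/d/\mu'$, and read off the weight $u^{|\lambda|+nd-|\mu|}\Psi'_{\lambda'/d/\mu'}(t)$. The only difference is presentational — you derive the edge values by explicit telescoping and phrase the constraint via the $S$-edge/$\lambda$ multiplicities rather than the $N$-edge/$\mu$ multiplicities, which are equivalent by the conservation law.
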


\begin{remark}
The operators $\boldsymbol{g}_{r}^{\prime }$ are closely related to the
discrete Laplacians considered in \cite{vanDiejen} and the functional
relation (\ref{TQ}) can be seen as a generalisation of Baxter's famous $TQ$%
-relation for the six-vertex or XXZ model with $\boldsymbol{E}$
corresponding to the transfer matrix $T$ and $\boldsymbol{G}^{\prime }$ to
Baxter's $Q$-operator. In fact, (\ref{ncG'}) is a generalisation of the XXZ $%
Q$-operator construction given in \cite{KorffQ} for \textquotedblleft
infinite spin\textquotedblright .
\end{remark}

\begin{proof}
Let $\mu \in \mathcal{A}_{k,n}^{+}$. According to the definitions (\ref{ncG'}%
) and (\ref{ncm}) the matrix elements of $\boldsymbol{g}_{r}^{\prime }$ give
the sum over all allowed row configurations. Fix an allowed horizontal edge
configuration $(\varepsilon _{1},\varepsilon _{2},\ldots ,\varepsilon
_{n},\varepsilon _{n+1})\in \mathbb{Z}_{\geq 0}^{n+1}$, that is $m_{j}(\mu
)\geq \varepsilon _{j+1}$ and $\varepsilon _{n+1}=\varepsilon _{1}$; see
Figure \ref{fig:qdeserterrow} for an illustration and note that we must have
in particular $\varepsilon _{1}\leq m_{n}$. Denote by $\lambda $ the
partition in $\mathcal{A}_{k,n}^{+}$ corresponding to the resulting
multiplicities $m_{j}^{\prime }:=m_{j}+\varepsilon _{j}-\varepsilon _{j+1}$
and by $\lambda ^{\prime }[\varepsilon _{1}]$ the associated conjugate
cylindric loop with $\lambda ^{\prime }[\varepsilon _{1}]_{n}=m_{n}^{\prime
}+\varepsilon _{1}=m_{n}+\varepsilon _{n}$. Then we have the following
relations between conjugate cylindric loops and horizontal edge values,%
\begin{equation}
m_{j}^{\prime }=\lambda ^{\prime }[\varepsilon _{1}]_{j}-\lambda ^{\prime
}[\varepsilon _{1}]_{j+1}\quad \text{and\quad }\varepsilon _{j}=\lambda
^{\prime }[\varepsilon _{1}]_{j}-\mu ^{\prime }[0]_{j},\quad j=1,...,n~.
\label{bijectionG'}
\end{equation}%
Furthermore, it follows from $\varepsilon _{1}\leq m_{n}$ that the conjugate
cylindric skew diagram $\lambda ^{\prime }/\varepsilon _{1}/\mu ^{\prime }$
is a horizontal strip of length $r=\sum_{j=1}^{n}\varepsilon _{j}$ in the
fundamental region, where $\varepsilon _{j}$ boxes are added in the $j$th
row of the Young diagram of $\mu ^{\prime }$. The associated weight of this
row configuration is according to (\ref{L'weight}) given by
\begin{equation}
x_{i}^{\varepsilon _{1}+\cdots +\varepsilon _{n}}\prod_{j=1}^{n}\QATOPD[ ] {%
\lambda _{j}^{\prime }-\lambda _{j+1}^{\prime }}{\lambda ^{\prime
}[\varepsilon _{1}]_{j}-\mu ^{\prime }[0]_{j}}_{t}=x_{i}^{\varepsilon
_{1}+\cdots +\varepsilon _{n}}\Psi _{\lambda ^{\prime }/\varepsilon _{1}/\mu
^{\prime }}^{\prime }
\end{equation}%
which for configurations with $\varepsilon _{1}=0$ coincides with $%
x_{i}^{\varepsilon _{2}+\cdots +\varepsilon _{n}}\psi _{\lambda ^{\prime
}/\mu ^{\prime }}^{\prime }(t)$; see the definition in (\ref{psiprime}).
Conversely, given a conjugate cylindric diagram $\lambda ^{\prime }/d/\mu
^{\prime }$ which is a horizontal strip the lattice row configuration can be
reconstructed in a unique way using the formulae in (\ref{bijectionG'}) with
$\varepsilon _{1}=d$.
\end{proof}

In analogy with Theorem \ref{qboson_Z} we now have the following result:

\begin{theorem}
Let $\lambda ,\mu \in \mathcal{A}_{k,n}^{+}$. The cylindric skew Macdonald
functions $P_{\lambda ^{\prime }/d/\mu ^{\prime }}^{\prime }$, $Q_{\lambda
^{\prime }/d/\mu ^{\prime }}^{\prime }$ are symmetric in the variables $%
x=(x_{1},\ldots ,x_{\ell })$ and one has the expansion
\begin{equation}
\langle \lambda |\boldsymbol{G}^{\prime }(x_{1})\cdots \boldsymbol{G}%
^{\prime }(x_{\ell })|\mu \rangle =\sum_{d\geq 0}z^{d}P_{\lambda ^{\prime
}/d/\mu ^{\prime }}^{\prime }(x_{1},\ldots ,x_{\ell };t)
\end{equation}%
which is equivalent to%
\begin{equation}
Z_{\lambda /d/\mu }^{\prime }(x_{1},\ldots ,x_{\ell })=P_{\lambda ^{\prime
}/d/\mu ^{\prime }}^{\prime }(x_{1},\ldots ,x_{\ell };t)\;.  \label{Z'=P'}
\end{equation}%
Here $1\leq \ell \leq n-1$ as before.
\end{theorem}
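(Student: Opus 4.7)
The equivalence of the two displayed identities is immediate from the identity \eqref{Z'G'} which rewrites the matrix element as a generating series of partition functions in the formal variable $z$; since $z$ is algebraically independent from the lattice weights, comparing coefficients in $z^d$ reduces the second identity to the first. Symmetry in the $x_i$ then follows from $\boldsymbol{G}'(x_i)\boldsymbol{G}'(x_j)=\boldsymbol{G}'(x_j)\boldsymbol{G}'(x_i)$, which is a consequence of the Yang-Baxter equation \eqref{ybe'} (cf.\ the proof of the integrability statement \eqref{integrable}). Alternatively, once the main identity is established, symmetry will also be visible on the right-hand side after the bijective interpretation below.

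\noindent The main step is to insert resolutions of the identity in $\mathcal{F}^{\otimes n}$ between consecutive operators:
\begin{equation*}
\langle\lambda|\boldsymbol{G}'(x_1)\cdots\boldsymbol{G}'(x_\ell)|\mu\rangle
=\sum_{\mu=\lambda^{(0)},\lambda^{(1)},\ldots,\lambda^{(\ell)}=\lambda}\prod_{i=1}^{\ell}\langle\lambda^{(i)}|\boldsymbol{G}'(x_i)|\lambda^{(i-1)}\rangle,
\end{equation*}
where the sum is over sequences of partitions in $\mathcal{A}_{k,n}^{+}$. Expanding $\boldsymbol{G}'(x_i)=\sum_{r\geq 0}x_i^{r}\boldsymbol{g}_r'$ and applying the cylindric vertical-strips formula \eqref{ncG'2}, each factor is nonzero only when $\lambda^{(i)}/d_i/\lambda^{(i-1)}$ is a cylindric vertical strip of some length $r_i$ with $0\leq d_i\leq\min(r_i,m_n(\lambda^{(i-1)}))$, in which case it contributes the weight $z^{d_i}\,x_i^{r_i}\,\Psi'_{(\lambda^{(i)})'/d_i/(\lambda^{(i-1)})'}(t)$. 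Collecting the factors of $z$ one obtains the overall winding $d=d_1+\cdots+d_\ell$.

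\noindent What remains is the combinatorial identification. Transposing the period vector $(k,-n)\mapsto(n,-k)$, one has the conjugate of Lemma \ref{cyltab}: a cylindric skew tableau $T$ of conjugate shape $\lambda'/d/\mu'$ is equivalent to a sequence of conjugate cylindric loops
\begin{equation*}
\bigl(\mu'[0]=(\lambda^{(0)})'[0],\,(\lambda^{(1)})'[d_1],\,\ldots,\,(\lambda^{(\ell)})'[d]=\lambda'[d]\bigr)
\end{equation*}
in which each consecutive difference is a (conjugate) horizontal strip, equivalently each $\lambda^{(i)}/(d_i-d_{i-1})/\lambda^{(i-1)}$ is a cylindric vertical strip. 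Under this bijection the monomial $\prod_i x_i^{r_i}$ becomes $x^T$ and the product of weights becomes $\Psi'_T(t):=\prod_i\Psi'_{(\lambda^{(i)})'/(d_i-d_{i-1})/(\lambda^{(i-1)})'}(t)$ by definition. Substituting into the summation and using \eqref{cylP'} yields
\begin{equation*}
\langle\lambda|\boldsymbol{G}'(x_1)\cdots\boldsymbol{G}'(x_\ell)|\mu\rangle
=\sum_{d\geq 0}z^d\sum_{|T|=\lambda'/d/\mu'}\Psi'_T(t)\,x^T
=\sum_{d\geq 0}z^d\,P'_{\lambda'/d/\mu'}(x;t),
\end{equation*}
which is the claimed expansion. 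The identity for $Q'_{\lambda'/d/\mu'}$ then follows from the normalising relation $Q'_{\lambda'/d/\mu'}=(b_\mu/b_\lambda)\,P'_{\lambda'/d/\mu'}$ and the corresponding rescaling provided by \eqref{bPsi'=bPhi'}.

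\noindent The only genuinely delicate point in carrying out this plan is to verify the conjugate form of Lemma \ref{cyltab}: one must check that the strict-increase/weak-increase conditions defining a cylindric tableau of shape $\lambda'/d/\mu'$ (with period $(n,-k)$) translate precisely into the requirement that each layer be a cylindric \emph{vertical} strip in the original $\lambda/d/\mu$, and that the bounds $0\leq d_i-d_{i-1}\leq\min(r_i,m_n(\lambda^{(i-1)}))$ match those appearing in \eqref{ncG'2}. This is a transposed rerun of the argument given for Lemma \ref{cyltab} and presents no essential new difficulty.
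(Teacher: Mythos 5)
Your proposal is correct and follows essentially the same route as the paper: both reduce to the single-row case via the cylindric vertical-strips formula (\ref{ncG'2}) and then assemble the general case by identifying a sequence of conjugate cylindric horizontal strips (equivalently, a row-by-row decomposition of the lattice configuration) with a cylindric skew tableau of shape $\lambda'/d/\mu'$, multiplying the row weights to get $\Psi'_T$. The only difference is presentational — you insert resolutions of the identity where the paper speaks directly of lattice rows — and your minor notational slip between incremental and cumulative $d_i$'s does not affect the argument.
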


\begin{remark}
Setting $z=0$ the above result specialises to the identity
\begin{equation}
Z_{\lambda /0/\mu }^{\prime }(x_{1},\ldots ,x_{\ell })=P_{\lambda ^{\prime
}/\mu ^{\prime }}^{\prime }(x_{1},\ldots ,x_{\ell };t),
\end{equation}%
where $P_{\lambda ^{\prime }/\mu ^{\prime }}^{\prime }$ is the ordinary
(non-cylindric) Macdonald function defined in (\ref{cHLdef}). In particular,
$Z_{\lambda /0/\mu }^{\prime }(x_{1},\ldots ,x_{\ell })=0$ unless $\mu
\subset \lambda $.
\end{remark}

\begin{proof}
Set $\ell =1$ then the statement is an immediate consequence of the previous
proposition. The case $\ell >1$ now trivially follows, since each conjugate
cylindric tableau $\lambda ^{\prime }/d/\mu ^{\prime }$ can be written as a
sequence over (conjugate) cylindric horizontal strips: given an allowed
configuration $\gamma ^{\prime }$ of the entire lattice, denote by $%
m^{(i)}=(m_{1}^{(i)},\ldots ,m_{n}^{(i)})$ the upper vertical, and by $%
\varepsilon ^{(i)}=(\varepsilon _{1}^{(i)},\ldots ,\varepsilon _{n}^{(i)})$
the horizontal edge values in the $i^{\text{th}}$ lattice row. Set $%
m^{(0)}=m(\mu )$, $m^{(\ell )}=m(\lambda )$ and denote by $\lambda ^{(i)}$
the partition whose Young diagram has $m_{j}^{(i)}$ columns of length $j$.
Then $\lambda ^{(i)}[\sum_{l=1}^{i}\varepsilon _{1}^{(i)}]$ is the
associated conjugate cylindric loop, that is in the fundamental region we
add to the partition $\lambda ^{(i)}$ the total of $\sum_{l=1}^{i}%
\varepsilon _{1}^{(i)}$ columns of height $n$. Thus, each row adds a
horizontal strip of length $r_{i}=|\varepsilon ^{(i)}|$ to the Young diagram
of $\mu $ in the fundamental region. The sequence%
\begin{equation*}
(\lambda ^{(0)}[0],\lambda ^{(1)}[\varepsilon _{1}^{(1)}],\lambda
^{(2)}[\varepsilon _{1}^{(1)}+\varepsilon _{1}^{(2)}],\ldots ,\lambda
^{(\ell -1)}[\varepsilon _{1}^{(1)}+\cdots +\varepsilon _{1}^{(\ell
-1)}],\lambda ^{(\ell )}[d])
\end{equation*}%
with $\mu ^{\prime }[0]=\lambda ^{(0)}[0]$, $\lambda ^{(\ell )}[d]=\lambda
^{\prime }[d],$ $d=\sum_{i=1}^{\ell }\varepsilon _{1}^{(i)}$ results in a
unique conjugate cylindric skew tableaux $T$ of shape $\lambda ^{\prime
}/d/\mu ^{\prime }$, because in each row we have the constraint $\varepsilon
_{1}^{(i+1)}\leq m_{n}^{(i)}$ and (\ref{bijectionG'}) must hold true for an
allowed configuration. Moreover, the weight of the tableau is given as
product over the row weights, see (\ref{weight'prod}), and we recall that $%
\Psi _{T}^{\prime }:=\prod_{i=0}^{n-1}\Psi _{\lambda ^{(i+1)}/\varepsilon
_{1}^{(i+1)}/\lambda ^{(i)}}^{\prime }$.
\end{proof}

\begin{figure}[tbp]
\begin{equation*}
\includegraphics[scale=0.72]{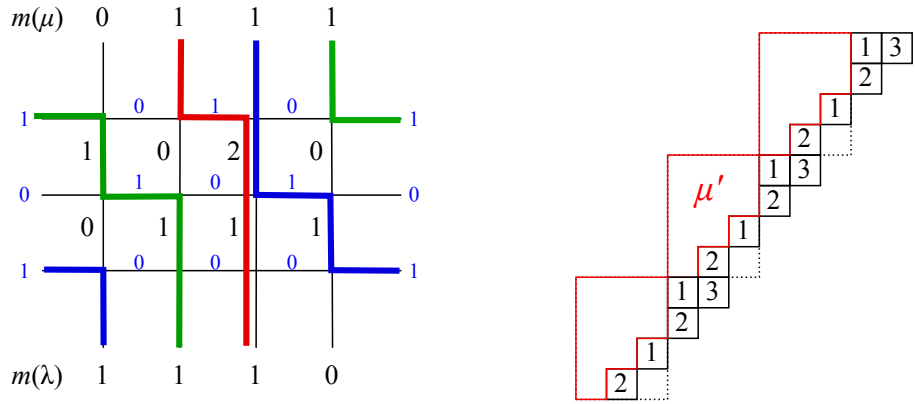}
\end{equation*}%
\caption{Example for the bijection between the lattice configurations of the
vertex model (\protect\ref{L'weight}) and cylindric skew tableaux.}
\label{fig:3rdbijection}
\end{figure}

\begin{corollary}[Expansions of cylindric Macdonald functions]
\label{ccHLexpansion}We have the following expansions of the cylindric
functions (\ref{cylP'})%
\begin{eqnarray}
P_{\lambda ^{\prime }/d/\mu ^{\prime }}^{\prime }(x_1,\ldots,x_{n-1};t) &=&\sum_{\nu \in
\mathcal{\tilde{A}}_{k,n}^{+}}\langle \lambda |\boldsymbol{g}_{\nu ^{\prime
}}^{\prime }|\mu \rangle m_{\nu ^{\prime }}(x_1,\ldots,x_{n-1})  \label{cylP'2m} \\
&=&\sum_{\nu \in \mathcal{\tilde{A}}_{k,n}^{+}}\langle \lambda |\boldsymbol{S%
}_{\nu ^{\prime }}^{\prime }|\mu \rangle s_{\nu ^{\prime }}(x_1,\ldots,x_{n-1}),\quad
\label{cylP'2s} \\
&=&\sum_{\nu \in \mathcal{\tilde{A}}_{k,n}^{+}}\langle \lambda |\boldsymbol{Q%
}_{\nu ^{\prime }}^{\prime }|\mu \rangle P_{\nu ^{\prime }}^{\prime
}(x_1,\ldots,x_{n-1};t),\quad  \label{cylP'2P'}
\end{eqnarray}%
where $m$ are the monomial symmetric functions, $s$ the Schur functions and $%
P^{\prime }$ the Macdonald functions defined in (\ref{cHLdef}). The
expansion coefficients obey the relations%
\begin{equation}
\langle \lambda |\boldsymbol{g}_{\nu ^{\prime }}^{\prime }|\mu \rangle
=\sum_{\sigma \in \mathcal{\tilde{A}}_{k,n}^{+}}\langle \lambda |\boldsymbol{%
S}_{\sigma ^{\prime }}^{\prime }|\mu \rangle K_{\sigma \nu }=\sum_{\substack{
|T|=\lambda ^{\prime }/d/\mu ^{\prime }  \\ \limfunc{wt}(T)=\nu ^{\prime }}}%
\Psi _{T}^{\prime }(t)  \label{g'matrix}
\end{equation}%
with $K=M(s,m)$ being the Kostka-matrix \cite{Macdonald} and%
\begin{equation}
\langle \lambda |\boldsymbol{S}_{\nu ^{\prime }}^{\prime }|\mu \rangle
=\sum_{\sigma \in \mathcal{\tilde{A}}_{k,n}^{+}}K_{\nu \sigma }(t)\langle
\lambda |\boldsymbol{Q}_{\sigma ^{\prime }}^{\prime }|\mu \rangle
=\sum_{w\in \mathfrak{S}_{\ell }}\varepsilon (w)\sum_{\substack{ |T|=\lambda
^{\prime }/d/\mu ^{\prime }  \\ \limfunc{wt}T=\nu ^{\prime }(w)}}\Psi
_{T}^{\prime }(t),  \label{S'matrix}
\end{equation}%
where $K(t)=M(s,P)$, $\ell =\ell (\nu ^{\prime })$ and $\nu ^{\prime
}(w):=(\nu _{1}^{\prime }-1+w_{1},\ldots ,\nu _{\ell }^{\prime }-\ell
+w_{\ell })$.
\end{corollary}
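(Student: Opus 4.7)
My plan is to take matrix elements $\langle\lambda|\cdot|\mu\rangle$ in the noncommutative Cauchy identity (\ref{ncC2}) with $\ell = n-1$ and to compare the result with the theorem just proved, which identifies $\langle\lambda|\boldsymbol{G}'(x_1)\cdots\boldsymbol{G}'(x_{n-1})|\mu\rangle$ with the generating function $\sum_{d\geq 0} z^d P'_{\lambda'/d/\mu'}(x;t)$. This yields
\begin{equation*}
\sum_{d\geq 0} z^d P'_{\lambda'/d/\mu'}(x;t) \;=\; \sum_{\nu} m_\nu(x)\langle\lambda|\boldsymbol{g}'_\nu|\mu\rangle \;=\; \sum_\nu s_\nu(x)\langle\lambda|\boldsymbol{S}'_\nu|\mu\rangle \;=\; \sum_\nu P'_\nu(x;t)\langle\lambda|\boldsymbol{Q}'_\nu|\mu\rangle.
\end{equation*}
After relabelling $\nu\mapsto\nu'$ and reading off the coefficient of $z^d$ on both sides (the $z$-dependence on the right lives entirely inside the matrix elements through the affine generator $a_n = z\beta^*_1\beta_n$) we would obtain the three expansions (\ref{cylP'2m})--(\ref{cylP'2P'}).

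To justify restricting the sum to $\nu\in\tilde{\mathcal{A}}_{k,n}^+$ I would argue as follows. Partitions with $\nu_1\geq n$ satisfy $\ell(\nu')\geq n$, so $m_{\nu'}(x_1,\ldots,x_{n-1})=0$; and partitions with $\ell(\nu)>k$ yield vanishing matrix elements, since the Pieri rule (\ref{ncG'2}) shows that each factor $\boldsymbol{g}'_r$ preserves the level $k$ and adds a cylindric vertical strip bounded in length by $k$ in the fundamental region, forcing $\nu'_1\leq k$. The bases $\{m_{\nu'}\}$, $\{s_{\nu'}\}$, $\{P'_{\nu'}\}$ indexed by $\nu\in\tilde{\mathcal{A}}_{k,n}^+$ remain linearly independent in the ring of symmetric functions in $n-1$ variables, so the expansion coefficients are uniquely determined.

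The coefficient identities (\ref{g'matrix}) and (\ref{S'matrix}) would then be assembled from two separate ingredients. The first equality in each follows by substituting the definitions (\ref{ncschur})--(\ref{ncQ'S'}) of $\boldsymbol{S}'_\nu$ and $\boldsymbol{Q}'_\nu$ and invoking the classical transition matrices $M(s,m)=K$ and $M(s,P)=K(t)$ from \cite[III.6]{Macdonald}. The combinatorial right-hand side of (\ref{g'matrix}) is obtained by iterating (\ref{ncG'2}): writing $\boldsymbol{g}'_{\nu'}|\mu\rangle$ as an $\ell$-fold product and repeatedly applying the Pieri rule produces a sum over chains of conjugate cylindric loops, which by the conjugate analogue of Lemma \ref{cyltab} are in bijection with conjugate cylindric tableaux $T$ of shape $\lambda'/d/\mu'$ and weight $\nu'$, each contributing $\Psi'_T(t)$. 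The alternating-sign formula for $\langle\lambda|\boldsymbol{S}'_{\nu'}|\mu\rangle$ is then obtained by expanding the Jacobi--Trudi-type determinant $\boldsymbol{S}'_{\nu'}=\det(\boldsymbol{g}'_{\nu'_i-i+j})$ and substituting this combinatorial expression into each permutation term, with the sign of $w\in\mathfrak{S}_\ell$ arising in the standard way.

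The main technical point is the $z$-graded bookkeeping: one must verify that the total degree $d=\sum_i d_i$ accumulated through the $\ell$ successive Pieri applications equals the cylindric degree of the resulting conjugate tableau $T$. This is an instance of the same argument already carried out in the proof of $Z'_{\lambda/d/\mu}=P'_{\lambda'/d/\mu'}$ immediately above, and so it transfers essentially verbatim by replacing horizontal lattice strips with their conjugate vertical counterparts.
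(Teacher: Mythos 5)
Your proposal is correct and follows essentially the same route as the paper: take matrix elements in the noncommutative Cauchy identity (\ref{ncC2}), identify the left-hand side with $\sum_d z^d P'_{\lambda'/d/\mu'}$ via (\ref{Z'=P'}), and then derive (\ref{g'matrix}) and (\ref{S'matrix}) from the classical transition matrices $M(s,m)=K$, $M(s,P)=K(t)$ together with the determinantal definition (\ref{ncQ'S'}) of $\boldsymbol{S}'_{\nu'}$. Your extra care with the truncation of the sum to $\tilde{\mathcal{A}}_{k,n}^{+}$ and the $z$-grading is sound and merely makes explicit what the paper leaves implicit.
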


\begin{proof}
The first three equalities, (\ref{cylP'2m}), (\ref{cylP'2s}) and (\ref%
{cylP'2P'}) are direct consequences of the noncommutative Cauchy expansions (%
\ref{ncC2}). Exploiting the known transition matrix from the basis of Schur
functions to the basis of monomial symmetric functions the asserted relation
(\ref{g'matrix}) now follows from (\ref{Z'=P'}). The last relation, Equation
(\ref{S'matrix}), is then a direct consequence of the definition (\ref%
{ncQ'S'}) of $\boldsymbol{S}_{\nu ^{\prime }}^{\prime }$,
\begin{equation*}
\langle \lambda |\boldsymbol{S}_{\nu ^{\prime }}^{\prime }|\mu \rangle
=\sum_{w\in \mathfrak{S}_{\ell }}\varepsilon (w)\langle \lambda |\boldsymbol{%
g}_{\nu _{\ell }^{\prime }-\ell +w_{\ell }}^{\prime }\cdots \boldsymbol{g}%
_{\nu _{1}^{\prime }-1+w_{1}}^{\prime }|\mu \rangle ,
\end{equation*}%
and the fact that $P_{\lambda ^{\prime }}^{\prime }=\sum_{\mu }s_{\mu
^{\prime }}K_{\mu \lambda }(t)$; see \cite{Macdonald}.
\end{proof}

\begin{example}
Let $n=k=5$ and consider $P_{(5,3,2,1,1)/1/(5,3,1,0,0)}^{\prime }$. Then we
find the expansion coefficients in the $m$ and $s$-bases displayed in the
table below:

\begin{equation*}
\begin{tabular}{|c|c|c|}
\hline\hline
$\lambda $ & $m_{\lambda }$ & $s_{\lambda }$ \\ \hline\hline
\multicolumn{1}{|l|}{$4,2,2,0$} & \multicolumn{1}{|l|}{$1+t$} &
\multicolumn{1}{|l|}{$1+t$} \\ \hline
\multicolumn{1}{|l|}{$4,2,1,1$} & \multicolumn{1}{|l|}{$2+3t+t^{2}$} &
\multicolumn{1}{|l|}{$1+2t+t^{2}$} \\ \hline
\multicolumn{1}{|l|}{$3,3,2,0$} & \multicolumn{1}{|l|}{$2+3t+t^{2}$} &
\multicolumn{1}{|l|}{$1+2t+t^{2}$} \\ \hline
\multicolumn{1}{|l|}{$3,3,1,1$} & \multicolumn{1}{|l|}{$4+8t+5t^{2}+t^{3}$}
& \multicolumn{1}{|l|}{$1+3t+3t^{2}+t^{3}$} \\ \hline
\multicolumn{1}{|l|}{$3,2,2,1$} & \multicolumn{1}{|l|}{$%
11+22t+16t^{2}+4t^{3} $} & \multicolumn{1}{|l|}{$3+8t+9t^{2}+3t^{3}$} \\
\hline
\multicolumn{1}{|l|}{$2,2,2,2$} & \multicolumn{1}{|l|}{$%
24+52t+45t^{2}+16t^{3}+t^{4}$} & \multicolumn{1}{|l|}{$%
1+4t+6t^{2}+5t^{3}+t^{4}$} \\ \hline
\end{tabular}%
\end{equation*}%
N.B. in these expansions, as in the definition of $P_{\lambda ^{\prime
}/d/\mu ^{\prime }}^{\prime }$, we have assumed the number of variables to
be at most $n-1$.
\end{example}

\subsection{Kostka-Foulkes polynomials}

As a special case one can also recover the non-skew $P^{\prime },Q^{\prime }$%
-functions from the cylindric functions.

\begin{proposition}[Kostka-Foulkes polynomials]
\label{KFpolys}Let $\lambda \in \mathcal{A}_{k,n}^{+}$ and $\tilde{\lambda}$
the partition with all parts of size $n$ removed. Set $d=k-m_{n}(\lambda )$.
Then
\begin{equation}
P_{\lambda ^{\prime }/d/k^{n}}^{\prime }(x;t)=P_{\tilde{\lambda}^{\prime
}}^{\prime }(x;t)\;,
\end{equation}%
where $k^{n}$ is the partition with $n$ parts of size $k$. In particular,
for any $\nu \in \mathcal{\tilde{A}}_{k,n}^{+}$ we have the following
equalities%
\begin{equation}
\langle \lambda |\boldsymbol{g}_{\nu ^{\prime }}^{\prime }|n^{k}\rangle
=\sum_{\mu }K_{\mu \tilde{\lambda}}(t)K_{\mu ^{\prime }\nu ^{\prime }}=\sum
_{\substack{ |T|=\tilde{\lambda}^{\prime }  \\ \limfunc{wt}(T)=\nu ^{\prime
} }}\psi _{T}^{\prime }(t)
\end{equation}%
and%
\begin{equation}
K_{\nu \tilde{\lambda}}(t)=\langle \lambda |\boldsymbol{S}_{\nu ^{\prime
}}^{\prime }|n^{k}\rangle =\sum_{w}\varepsilon (w)\sum_{\substack{ |T|=%
\tilde{\lambda}^{\prime }  \\ \limfunc{wt}T=\nu ^{\prime }(w)}}\psi
_{T}^{\prime }(t)\;,  \label{KF_S'}
\end{equation}%
where $K_{\nu \tilde{\lambda}}(t)$ is the celebrated Kostka-Foulkes
polynomial and $\nu ^{\prime }(w)=(\nu _{1}^{\prime }-1+w_{1},\ldots ,\nu
_{\ell }^{\prime }-\ell +w_{\ell })$.
\end{proposition}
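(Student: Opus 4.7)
The plan is to first prove the equality $P_{\lambda'/d/k^n}'(x;t) = P_{\tilde\lambda'}'(x;t)$ by a direct bijection of conjugate cylindric tableaux, and then deduce the two matrix-element identities from the expansions of Corollary \ref{ccHLexpansion} together with the unitriangular Kostka--Foulkes transition matrix $K(t)$.

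First I would compute the relevant conjugate cylindric loops explicitly. Since $\mu = n^k$ we have $\mu' = k^n$, hence $\mu'[0]_j = k$ for $1 \leq j \leq n$, i.e.\ the loop $\mu'[0]$ traces the boundary of the full $n \times k$ bounding box. On the other hand $\lambda'[d]_j = \lambda'_j + d$ for $1 \leq j \leq n$, so with $d = k - m_n(\lambda)$ one obtains $\lambda'[d]_j - \mu'[0]_j = \lambda'_j - m_n(\lambda)$. Because removing the $m_n(\lambda)$ parts of size $n$ from $\lambda$ shortens each column of $\lambda'$ (for $j < n$) by exactly $m_n(\lambda)$ while shrinking the $n$-th column to length $0$, this difference equals $\tilde\lambda'_j$ for $1 \leq j \leq n-1$ and $0$ for $j = n$. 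The fundamental region of $\lambda'/d/k^n$ therefore sits in rows $1, \ldots, n-1$ and reproduces (up to a horizontal shift by $k$ columns) the Young diagram of $\tilde\lambda'$.

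Next I would use this geometric reduction to construct a weight-preserving bijection between cylindric semi-standard tableaux $T$ of shape $\lambda'/d/k^n$ and ordinary semi-standard tableaux $\tilde T$ of shape $\tilde\lambda'$. Because row $n$ of the cylindric shape is empty, the column-strictness constraints between row $n-1$ and row $n$, and between row $n$ and the periodic copy of row $1$ in row $n+1$, become vacuous; the only surviving cylindric constraints are those between consecutive rows $j, j+1$ with $1 \leq j \leq n-2$, which coincide with the ordinary semi-standard constraints on $\tilde\lambda'$. The bijection is therefore given by restriction of $T$ to the fundamental region. Decomposing both sides into horizontal strips by entry (Lemma \ref{cyltab}), each cylindric horizontal strip sits inside rows $1, \ldots, n-1$, so the product in (\ref{Psiprime}) collapses term-for-term to (\ref{psiprime}), yielding $\Psi_T'(t) = \psi_{\tilde T}'(t)$. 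The identity $P_{\lambda'/d/k^n}' = P_{\tilde\lambda'}'$ is then immediate from the definitions (\ref{cylP'}) and (\ref{cHLdef}).

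For the second part, I would combine the identity just established with the expansion (\ref{cylP'2P'}) at $\mu = n^k$ to obtain
\[
P_{\tilde\lambda'}'(x;t) = \sum_{\nu \in \tilde{\mathcal{A}}_{k,n}^{+}} \langle \lambda | \boldsymbol{Q}_{\nu'}' | n^k \rangle\, P_{\nu'}'(x;t).
\]
Linear independence of the Macdonald functions $\{P_{\nu'}'\}$ forces $\langle \lambda | \boldsymbol{Q}_{\nu'}' | n^k \rangle = \delta_{\nu, \tilde\lambda}$. Substituting this into (\ref{S'matrix}) yields $\langle \lambda | \boldsymbol{S}_{\nu'}' | n^k \rangle = K_{\nu \tilde\lambda}(t)$, and then (\ref{g'matrix}) evaluates $\langle \lambda | \boldsymbol{g}_{\nu'}' | n^k \rangle$ as the asserted double sum of $K(t)$ against ordinary Kostka numbers. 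The combinatorial tableau expressions on the right-hand sides of the two displayed equations finally follow by specialising the tableau formulas in Corollary \ref{ccHLexpansion} through the bijection of the first step.

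The main technical obstacle I expect is the weight identity $\Psi_T'(t) = \psi_{\tilde T}'(t)$, since this requires checking that the binomial-coefficient factors in (\ref{Psiprime}) coincide, for every intermediate partition $\lambda^{(i)}$ in the horizontal-strip decomposition of $T$, with the corresponding factors in (\ref{psiprime}) for the restricted tableau $\tilde T$. In particular one has to track the degree increments $d_i - d_{i-1}$ through the sequence and confirm that no phantom contribution from the wrap-around beyond row $n$ enters the product. Once this bookkeeping is completed the remainder of the argument is purely algebraic, relying only on Corollary \ref{ccHLexpansion} and the unitriangularity of $K(t)$.
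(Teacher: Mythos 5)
Your proposal is correct and follows essentially the same route as the paper: both reduce the cylindric shape $\lambda'/d/k^{n}$ to the ordinary shape $\tilde{\lambda}'$ by restriction to the fundamental region (the paper justifies $\Psi'_{T}=\psi'_{\tilde{T}}$ via the non-intersecting path picture, you via an explicit computation of the conjugate loops, which amounts to the same observation), and both then read the matrix-element identities off Corollary \ref{ccHLexpansion}. The only cosmetic difference is that you extract $\langle\lambda|\boldsymbol{Q}'_{\nu'}|n^{k}\rangle=\delta_{\nu\tilde{\lambda}}$ from the $P'$-expansion and propagate it through (\ref{S'matrix}) and (\ref{g'matrix}), whereas the paper compares Schur coefficients directly using $P'_{\nu'}=\sum_{\lambda}s_{\lambda'}K_{\lambda\nu}(t)$; the two are equivalent.
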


\begin{proof}
Because of (\ref{Z'=P'}) we have%
\begin{equation*}
P_{\lambda ^{\prime }/d/k^{n}}^{\prime }(x)=\left. \frac{d^{d}}{dz^{d}}%
\langle \lambda |\boldsymbol{G}^{\prime }(x_{1})\cdots \boldsymbol{G}%
^{\prime }(x_{\ell })|n^{k}\rangle \right\vert _{z=0}\;.
\end{equation*}%
We use that each lattice configuration corresponds to a configuration of
non-intersecting paths on the lattice; see Figure \ref{fig:qdeserterweights}%
. All paths start at the rightmost, the $n$th, lattice column and there must
be $d$ paths crossing the boundary. None of these paths ends up in the $n$th
lattice column, since $\sum_{i=1}^{n-1}m_{i}(\lambda )=d$ (and the paths
cannot backtrack on themselves), hence one easily verifies that $\Psi
_{T}^{\prime }=\psi _{\tilde{T}}^{\prime }$. Here $T$ is the cylindric
tableau of shape $\lambda ^{\prime }/d/k^{n}$ and $\tilde{T}$ the tableau of
shape $\tilde{\lambda}^{\prime }$ which is obtained by only considering the
squares of $\lambda ^{\prime }/d/k^{n}$ in the fundamental region. It is now
obvious from (\ref{Psiprime}) and (\ref{psiprime}) that $\limfunc{wt}(T)=%
\limfunc{wt}(\tilde{T})$. This proves the first assertion.

Employing the known expansion $P_{\nu ^{\prime }}^{\prime }=\sum_{\lambda
}s_{\lambda ^{\prime }}K_{\lambda \nu }(t)$ where $K_{\lambda \nu }(t)$ are
the (ordinary) Kostka-Foulkes polynomials \cite{Macdonald}, it follows from
our earlier result that%
\begin{equation*}
P_{\lambda ^{\prime }/d/k^{n}}^{\prime }(x;t)=\sum_{\lambda }s_{\lambda
^{\prime }}(x)K_{\lambda \tilde{\nu}}(t)\;.
\end{equation*}%
Comparing this with (\ref{g'matrix}) and (\ref{S'matrix}) proves the
remaining assertions.
\end{proof}

The expression (\ref{KF_S'}) for Kostka-Foulkes polynomials is of
determinant type and not manifestly positive unlike other expressions \cite{LS} \cite{KR}, see
also \cite[III.6, eqn (6.5) on p242 and Ex. 7 on p245]{Macdonald}. The present formula resembles instead Lusztig's $t$%
-deformed weight multiplicity formula \cite{Lusztig2}%
\begin{equation*}
K_{\lambda \mu }(t)=\sum_{w\in \mathfrak{S}_{k}}(-1)^{|w|}\mathcal{P}%
_{t}(w(\lambda +\rho )-(\mu +\rho )),
\end{equation*}%
where $\rho =(k,k-1,\ldots ,1)$ is the Weyl vector and the weight
multiplicity is given through the $t$-analogue of Konstant's partition
function%
\begin{equation*}
\prod_{\alpha >0}\frac{1}{1-te^{\alpha }}=\sum_{\mu \in \mathbb{Z}^{k}}%
\mathcal{P}_{t}(\mu )e^{\mu }\;.
\end{equation*}%
That is, the coefficient of $t^{m}$ in $\mathcal{P}_{t}(\mu )$ is the number
of ways the weight $\mu $ can be expressed as a sum of $m$ positive roots.

\begin{example}
Set $\lambda =(3,3,2,0)$ and $\mu =(2,2,2,2)$. Then we find for $K_{\lambda
\mu }(t)$ the following non-vanishing summands employing formula (\ref{KF_S'}%
),%
\begin{equation*}
\begin{tabular}{|c|c|c|c|}
\hline\hline
$\varepsilon (w)$ & $w$ & $\lambda ^{\prime }(w)$ & $\psi _{T}^{\prime }$ \\
\hline\hline
$+$ & 123 & $(3,3,2)$ & \multicolumn{1}{|l|}{$%
1+2t+3t^{2}+3t^{3}+2t^{4}+t^{5} $} \\
$-$ & 213 & $(4,2,2)$ & \multicolumn{1}{|l|}{$1+t+2t^{2}+t^{3}+t^{4}$} \\
$-$ & 132 & $(4,3,1)$ & \multicolumn{1}{|l|}{$1+t+t^{2}+t^{3}$} \\
$+$ & 231 & $(4,4,0)$ & \multicolumn{1}{|l|}{$1$} \\ \hline
\end{tabular}%
\end{equation*}%
\noindent and, hence, that $K_{\lambda \mu }(t)=t^{3}+t^{4}+t^{5}$. In
comparison, using the $t$-analogue of Konstant's partition function one
arrives at (compare e.g. with \cite[Example 3.1]{DLT})%
\begin{equation*}
\begin{tabular}{|c|c|c|c|}
\hline\hline
$\varepsilon (w)$ & $w(\lambda +\rho )$ & $w(\lambda +\rho )-(\mu +\rho )$ &
$\mathcal{P}_{t}(w(\lambda +\rho )-(\mu +\rho ))$ \\ \hline\hline
$+$ & $(6,5,3,0)$ & $(1,1,0,-2)$ & \multicolumn{1}{|l|}{$%
t^{2}+3t^{3}+2t^{4}+t^{5}$} \\
$+$ & $(5,6,3,0)$ & $(0,2,0,-2)$ & \multicolumn{1}{|l|}{$t^{2}+t^{3}+t^{4}$}
\\
$-$ & $(6,3,5,0)$ & $(1,-1,2,-2)$ & \multicolumn{1}{|l|}{$t^{3}$} \\ \hline
\end{tabular}%
\end{equation*}%
reproducing the same result for the Kostka-Foulkes polynomial.
\end{example}

\section{A deformation of the Verlinde algebra}

In this section we will identify the expansion coefficients of the cylindric
Macdonald functions in Corollary \ref{ccHLexpansion}, i.e. the matrix
elements $\langle \nu |\boldsymbol{Q}_{\lambda ^{\prime }}^{\prime }|\mu
\rangle $ and $\langle \nu |\boldsymbol{S}_{\lambda ^{\prime }}^{\prime
}|\mu \rangle $, with the \emph{structure constants} of an algebra which is
a quotient of the spherical Hecke algebra. More precisely, this quotient
is the coordinate ring $\Bbbk \lbrack \boldsymbol{V}_{k,n}]$ of a discrete
(0-dimensional) affine variety $\boldsymbol{V}_{k,n}$ where $\Bbbk $ is the
field of Puiseux series in the indeterminate $t=q^{2}$ with complex
coefficients. In particular, $\Bbbk \lbrack \boldsymbol{V}_{k,n}]$ is
finite-dimensional and a commutative Frobenius algebra $\mathfrak{F}_{n,k}$.
The latter has a distinguished basis which we can identify with $%
\{Q_{\lambda ^{\prime }}^{\prime }:\lambda \in \mathcal{A}_{k,n}^{+}\}$ and
the coproduct $\Delta _{n,k}:\mathfrak{F}_{n,k}\rightarrow \mathfrak{F}%
_{n,k}\otimes \mathfrak{F}_{n,k}$ of the Frobenius algebra computed in this
basis yields the cylindric Macdonald function (\ref{cylP'}), $\Delta
_{n,k}P_{\lambda ^{\prime }}^{\prime }=\sum_{d,\mu ^{\prime }}P_{\lambda
^{\prime }/d/\mu ^{\prime }}^{\prime }\otimes P_{\mu ^{\prime }}^{\prime }$.

We arrive at this result by constructing a common eigenbasis for the
noncommutative Macdonald functions (\ref{ncschur}) and (\ref{ncQ'S'}). We
apply a technique known as algebraic Bethe ansatz which we discuss in the
next subsection. Each point in the affine variety $\boldsymbol{V}_{k,n}$
will determine an eigenvector and the latter corresponds to an idempotent of
$\mathfrak{F}_{n,k}$. This correspondence will allow us to identify $%
\mathfrak{F}_{n,k}$ with the subalgebra $\subset \limfunc{End}\mathcal{F}%
_{k}^{\otimes n}$ generated by the matrices $\boldsymbol{Q}_{\nu ^{\prime
}}^{(k)}:=\left( \langle \lambda |\boldsymbol{Q}_{\nu ^{\prime }}^{\prime
}|\mu \rangle \right) _{\lambda ,\mu \in \mathcal{A}_{k,n}^{+}}$. The latter
specialise for $t=0$ to the fusion matrices of the Verlinde algebra and,
hence, we will refer to $\mathfrak{F}_{n,k}$ as \emph{deformed} fusion ring
or Verlinde algebra of $\widehat{\mathfrak{sl}}(n)_{k}$.

\subsection{The Bethe ansatz}

The Bethe ansatz is a well-established technique in the physics literature
on exactly-solvable lattice and quantum integrable models \cite{Baxter}.
Here we will employ the so-called \emph{algebraic} Bethe ansatz also known
as the \emph{quantum inverse scattering method} (see e.g. \cite{KIB} and references therein) which uses the
commutation relations of the Yang-Baxter algebra (\ref{ybe}) to construct
eigenvectors, so-called \textquotedblleft Bethe vectors\textquotedblright ,
of the $q$-boson model transfer matrix (\ref{ncE}); compare with the
discussion in \cite{Bogoliubovetal}. Using Proposition \ref{TQ} one deduces
that these are also eigenvectors of the formal power series (\ref{ncG'}).
This allows one to recover the results in \cite{vanDiejen} where the
eigenvectors of certain \textquotedblleft discrete
Laplacians\textquotedblright\ for the quantum nonlinear Schr\"{o}dinger have
been computed using the so-called \emph{coordinate} Bethe ansatz.\medskip

\noindent Our starting point is the \emph{ansatz }that the eigenvectors of (\ref%
{ncE}) are of the algebraic form%
\begin{equation}
\mathfrak{b}(y):=B(y_{1}^{-1})\cdots B(y_{k}^{-1})|\emptyset \rangle
=\sum_{\lambda \in \mathcal{A}_{k,n}^{+}}Q_{\lambda }(y^{-1};t)|\lambda
\rangle ,  \label{Bethevec}
\end{equation}%
where $y=(y_{1},\ldots ,y_{k})$ are some \emph{dependent} invertible
variables, called \emph{Bethe roots}, which need to be determined. We
specify them below as points in a family of affine varieties depending on $t$%
, for the moment we treat them as formal variables. Using the commutation
relations (\ref{aba1}) and (\ref{aba2}) of the Yang-Baxter algebra, one
arrives at a set of nonlinear equations, called the \emph{Bethe ansatz
equations} and an expression of the eigenvalue for the transfer matrix (\ref%
{ncE}).

\begin{proposition}[Bogoliubov-Izergin-Kitanine]
(i) The Bethe vector $\mathfrak{b}(y)$ is an eigenvector of the transfer
matrix $\boldsymbol{E}$ provided the Bethe roots $y=(y_{1},\ldots ,y_{k})$
satisfy the following set of coupled nonlinear equations,%
\begin{equation}
y_{i}^{n}\prod_{j\not=i}\frac{y_{i}-y_{j}t}{y_{i}-y_{j}}=z\prod_{j\not=i}%
\frac{y_{i}t-y_{j}}{y_{i}-y_{j}}\ ,\qquad i=1,\ldots ,k~.  \label{bae}
\end{equation}%
(ii) Let $y$ be a solution to the Bethe ansatz equations (\ref{bae}). Then
one has the eigenvalue equation%
\begin{equation}
\boldsymbol{E}(u)\mathfrak{b}(y)=\left( \prod_{i=1}^{k}\frac{1-u~y_{i}t}{%
1-u~y_{i}}+zt^{k}u^{n}\prod_{i=1}^{k}\frac{1-ut^{-1}y_{i}}{1-u~y_{i}}\right)
\mathfrak{b}(y)\;.  \label{ncEspec}
\end{equation}
\end{proposition}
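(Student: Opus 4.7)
The plan is the standard algebraic Bethe ansatz: propagate $A(u)$ and $D(u)$ through the string of $B(v_i)$'s with $v_i := y_i^{-1}$ using (\ref{aba0})--(\ref{aba2}), split each commutator into a \emph{wanted} piece (retaining $u$ as the argument of the diagonal operator) and an \emph{unwanted} piece (swapping $u$ with one of the $v_i$), and verify that the wanted contributions sum to the asserted eigenvalue in (\ref{ncEspec}) while the unwanted contributions cancel precisely when (\ref{bae}) holds.

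First I would record the pseudo-vacuum action. From the $L$-operator (\ref{L}), since $\beta_i|0\rangle_i = 0$, a telescoping along the definition (\ref{T}) of the monodromy matrix yields
\begin{equation*}
A(u)|\emptyset\rangle = |\emptyset\rangle,\qquad D(u)|\emptyset\rangle = u^n|\emptyset\rangle,\qquad C(u)|\emptyset\rangle = 0,
\end{equation*}
since only the diagonal contribution survives on $|\emptyset\rangle$. Rewriting (\ref{aba1}) and (\ref{aba2}) in resolved form,
\begin{align*}
A(u)B(v) &= \tfrac{ut-v}{u-v}B(v)A(u) + \tfrac{(1-t)v}{u-v}B(u)A(v),\\
D(u)B(v) &= \tfrac{u-tv}{u-v}B(v)D(u) - \tfrac{(1-t)v}{u-v}B(u)D(v),
\end{align*}
and iterating the wanted branch $k$ times on $\mathfrak{b}(y)$ produces, after using the pseudo-vacuum eigenvalues, the scalar factor $\prod_{i=1}^k \tfrac{ut-v_i}{u-v_i} + zu^n\prod_{i=1}^k \tfrac{u-tv_i}{u-v_i}$. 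Under the substitution $v_i = y_i^{-1}$ the first product becomes $\prod_i\tfrac{1-uty_i}{1-uy_i}$, and extracting $t^k$ from the rewritten second product gives $zu^n t^k \prod_i \tfrac{1-ut^{-1}y_i}{1-uy_i}$, matching (\ref{ncEspec}) precisely.

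The main work is to control the unwanted terms. Using pairwise commutativity of the $B$'s from (\ref{aba0}), one may for each $i$ cyclically bring $v_i$ to the front, apply the unwanted branch once to obtain $B(u)A(v_i)$ or $-B(u)D(v_i)$, and propagate $A(v_i)$ (respectively $D(v_i)$) past the remaining $B(v_j)$'s using only the wanted branch. Summing the two channels, the coefficient of the linearly independent vector $B(u)\prod_{j\ne i}B(v_j)|\emptyset\rangle$ equals
\begin{equation*}
\tfrac{(1-t)v_i}{u-v_i}\!\left(\prod_{j\ne i}\tfrac{v_it-v_j}{v_i-v_j} - zv_i^n\prod_{j\ne i}\tfrac{v_i-tv_j}{v_i-v_j}\right),
\end{equation*}
and the vanishing of the bracket for every $i$, after multiplying numerators and denominators by $y_iy_j$ and using $v_i^n = y_i^{-n}$, rearranges into (\ref{bae}); part (ii) is then immediate. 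The main obstacle is justifying that no ``mixed'' unwanted terms (those swapping $u$ with two or more $v_i$'s simultaneously) contribute spuriously; this reduces to an induction on $k$ in which the symmetrization afforded by $B(v_i)\leftrightarrow B(v_j)$ collapses each multiply-swapped summand onto one of the prototype shapes above, with no overcounting because at every iteration step the recursion produces exactly one new wanted and one new unwanted branch.
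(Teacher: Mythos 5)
Your proposal is correct and is precisely the standard algebraic Bethe ansatz computation (wanted/unwanted term decomposition via the relations (\ref{aba0})--(\ref{aba2}), with the pseudo-vacuum eigenvalues $A(u)|\emptyset\rangle=|\emptyset\rangle$, $D(u)|\emptyset\rangle=u^{n}|\emptyset\rangle$) that the paper itself invokes by induction on $k$ and then omits as standard, citing \cite{KIB}. You have simply filled in the details the paper leaves out, and the substitution $v_i=y_i^{-1}$ correctly reproduces both the eigenvalue (\ref{ncEspec}) and the Bethe equations (\ref{bae}).
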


\begin{proof}
The asserted identities are derived via induction in $k$ exploiting the
commutation relations (\ref{aba0}), (\ref{aba1}), (\ref{aba2}) and (\ref%
{aba3}) of the Yang-Baxter algebra. This is a standard computation (see \cite%
{KIB} for a textbook reference) and we therefore omit it.
\end{proof}

The following result is not contained in \cite{Bogoliubovetal}.
\begin{corollary}
Under the same set of conditions (\ref{bae}) as above the Bethe vector $%
\mathfrak{b}(y)$ is also an eigenvector of the transfer matrix $\boldsymbol{G%
}^{\prime }$,%
\begin{equation}
\boldsymbol{G}^{\prime }(u)\mathfrak{b}(y)=\prod_{i=1}^{k}(1+u~y_{i})~%
\mathfrak{b}(y)\;.  \label{ncG'spec}
\end{equation}%
This, in particular implies that%
\begin{equation}
\boldsymbol{Q}_{\lambda ^{\prime }}^{\prime }\mathfrak{b}(y)=P_{\lambda
}(y;t)\mathfrak{b}(y),  \label{ncQ'spec}
\end{equation}%
where $\boldsymbol{Q}_{\lambda ^{\prime }}^{\prime }$ is the noncommutative
HL polynomial defined in (\ref{ncQ'S'}).
\end{corollary}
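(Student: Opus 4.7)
The plan is to combine the commutativity $[\boldsymbol{E}(u),\boldsymbol{G}'(v)]=0$, guaranteed by (\ref{integrable}), with the operator identity (\ref{TQ}). Since $\mathfrak{b}(y)\in\mathcal{F}_{k}^{\otimes n}$ and $\boldsymbol{G}'(u)$ preserves this subspace, the tail factor $\prod_{i=1}^{n}q^{2N_{i}}$ on the right-hand side of (\ref{TQ}) acts as the scalar $t^{k}$ on $\boldsymbol{G}'(uq^{-2})\mathfrak{b}(y)$. Commutativity with every $\boldsymbol{E}(v)$ forces $\boldsymbol{G}'(u)\mathfrak{b}(y)$ into the joint $\{\boldsymbol{E}(v)\}$-eigenspace containing $\mathfrak{b}(y)$; for generic Bethe roots this eigenspace is one-dimensional (simplicity, to be established independently in Section~7), so $\boldsymbol{G}'(u)\mathfrak{b}(y)=\mathcal{G}'(u)\mathfrak{b}(y)$ for a scalar $\mathcal{G}'(u)$. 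Projecting (\ref{TQ}) onto this eigenspace and inserting the $\boldsymbol{E}$-eigenvalue (\ref{ncEspec}) with $u\mapsto -u$ reduces the operator identity to the scalar $q$-difference equation
\[
\mathcal{E}(-u)\,\mathcal{G}'(u)=\mathcal{G}'(uq^{2})+zt^{k}(-u)^{n}\mathcal{G}'(uq^{-2}).
\]
Substituting the ansatz $\mathcal{G}'(u)=\prod_{i=1}^{k}(1+uy_{i})$ and using $q^{2}=t$, both sides collapse to the common expression $\prod_{i}(1+uy_{i}t)+zt^{k}(-u)^{n}\prod_{i}(1+uy_{i}/t)$; no additional constraint on the $y_{i}$ beyond (\ref{bae}) is required, and (\ref{ncG'spec}) is confirmed.

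For (\ref{ncQ'spec}) I would argue as follows. By definition (\ref{ncQ'S'}) every $\boldsymbol{Q}_{\lambda}'$ is a polynomial in the coefficients $\boldsymbol{g}_{r}'$ of $\boldsymbol{G}'(u)$ (via $\boldsymbol{S}_{\lambda}'$), so (\ref{ncG'spec}) already forces $\boldsymbol{Q}_{\lambda}'\,\mathfrak{b}(y)=c_{\lambda}(y)\,\mathfrak{b}(y)$ for some scalars $c_{\lambda}=c_{\lambda}(y;t)$. Applying the noncommutative Cauchy identity (\ref{ncC2}) to $\mathfrak{b}(y)$ and evaluating the left-hand side with (\ref{ncG'spec}) yields
\[
\prod_{i,j}(1+x_{i}y_{j})\,\mathfrak{b}(y)=\sum_{\lambda}P_{\lambda}'(x;t)\,c_{\lambda}(y)\,\mathfrak{b}(y).
\]
The classical Cauchy identity (\ref{C1}) expands the left-hand side as $\sum_{\lambda}P_{\lambda'}'(x;t)P_{\lambda}(y;t)$. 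Linear independence of the basis $\{P_{\mu}'(x;t)\}_{\mu}$ in the $x$-variables then forces $c_{\lambda}(y)=P_{\lambda'}(y;t)$, which is (\ref{ncQ'spec}) after relabelling $\lambda\mapsto\lambda'$.

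The main obstacle is to justify that $\mathfrak{b}(y)$ is a $\boldsymbol{G}'(u)$-eigenvector in the first place, rather than only pinning down its eigenvalue once eigenvectorness is granted. The functional-equation route above relies on simplicity of the joint $\boldsymbol{E}$-spectrum on $\mathcal{F}_{k}^{\otimes n}$, i.e.\ on completeness of the Bethe ansatz; this is treated separately in Section~7 through an algebro-geometric count of the solutions of (\ref{bae}). A self-contained alternative is to repeat, now for $\boldsymbol{G}'$, the algebraic Bethe ansatz computation underlying the preceding proposition: one pulls $\boldsymbol{G}'(u)$ stepwise through $B(y_{1}^{-1})\cdots B(y_{k}^{-1})|\emptyset\rangle$ using the commutation relations (\ref{AT'})--(\ref{DT'}) of Corollary~\ref{YBQ}, isolates the ``wanted'' diagonal contribution $\mathcal{G}'(u)\mathfrak{b}(y)$, and checks that the residue-type ``unwanted'' terms cancel precisely under (\ref{bae}); this is direct but combinatorially substantially more involved than the $\boldsymbol{E}$-case.
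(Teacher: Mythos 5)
Your eigenvalue computation for $\boldsymbol{G}'$ and your Cauchy-identity derivation of (\ref{ncQ'spec}) are both correct, and the second part is essentially equivalent to the paper's (which uses the definition (\ref{ncQ'S'}) together with $M(P,s)=K(t)^{-1}$, i.e.\ the same transition-matrix content that the Cauchy identities encode). The problem is the step you yourself flag as ``the main obstacle'': establishing that $\mathfrak{b}(y)$ is a $\boldsymbol{G}'(u)$-eigenvector at all. Neither of your two workarounds is satisfactory as written. Relying on one-dimensionality of the joint $\boldsymbol{E}$-eigenspaces imports completeness/simplicity from Section~7, which is a forward reference and is dangerously close to circular: the separation-of-points argument used there to prove orthogonality of the Bethe vectors itself invokes the eigenvalues of \emph{both} transfer matrices, including the $\boldsymbol{G}'$-eigenvalue this corollary is supposed to establish. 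The direct algebraic Bethe ansatz computation with $T'_{m'm}$ would work in principle but is not carried out, and you concede it is substantially more involved.

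The gap closes with a one-line observation you missed, and it is the actual content of the paper's proof. The coefficient form (\ref{TQ2}) of the functional equation reads
\begin{equation*}
(1-t^{c})\,\boldsymbol{g}_{c}^{\prime}=\boldsymbol{e}_{1}\boldsymbol{g}_{c-1}^{\prime}-\boldsymbol{e}_{2}\boldsymbol{g}_{c-2}^{\prime}+\cdots+z\,t^{N_{\mathrm{tot}}}\boldsymbol{g}_{c-n}^{\prime}\;,
\end{equation*}
and since $t^{N_{\mathrm{tot}}}$ acts as the scalar $t^{k}$ on $\mathcal{F}_{k}^{\otimes n}$ and $t$ is an indeterminate (so $1-t^{c}\neq 0$), this recursion expresses every $\boldsymbol{g}_{r}^{\prime}$ on the $k$-particle space as a polynomial in the commuting $\boldsymbol{e}_{i}$'s (e.g.\ $(1-t)\boldsymbol{g}_{1}^{\prime}=\boldsymbol{e}_{1}$). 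Any common eigenvector of the $\boldsymbol{e}_{i}$'s — in particular $\mathfrak{b}(y)$, by (\ref{ncEspec}) — is therefore automatically an eigenvector of every $\boldsymbol{g}_{r}^{\prime}$, hence of $\boldsymbol{G}'(u)$, with no appeal to simplicity of the spectrum, completeness, or a second Bethe ansatz computation. Your scalar $q$-difference equation together with $\boldsymbol{G}'(0)=1$ then correctly pins down the eigenvalue as $\prod_{i}(1+uy_{i})$, and the rest of your argument goes through.
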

\begin{proof}
The assertion is a direct consequence of the identity (\ref{TQ}) we proved earlier. Employing
the expansion (\ref{TQ2}) one verifies that $\boldsymbol{g}'_r$ can be written as a polynomial
in the $\boldsymbol{e}_{i}$'s. For instance, we have $(1-t)\boldsymbol{g}'_1=\boldsymbol{e}_1$,
$(1-t^2)\boldsymbol{g}'_2=(1-t)^{-1}\boldsymbol{e}_1-\boldsymbol{e}_2$ etc. Hence, $\mathfrak{b}(y)$
must be an eigenvector of $\boldsymbol{G}'(u)$. The corresponding eigenvalue is determined by (\ref{TQ}), (\ref{ncEspec})
and $\boldsymbol{G}'(0)=1$. The eigenvalue of the noncommutative Macdonald polynomial $\boldsymbol{Q}'_\lambda$
follows from the definition (\ref{ncQ'S'}) employing the automorphism $\omega_t$ in (\ref{omega}).
\end{proof}

Having derived the formal conditions (\ref{bae}) on the Bethe roots for the
Bethe vector (\ref{Bethevec}) to be an eigenvector, we need to investigate
if solutions to the set of equations (\ref{bae}) do exist and if any, how
many and their properties.

\subsection{Completeness of the Bethe ansatz}

In this section we are going to prove for $t$ an indeterminate that the
Bethe vectors (\ref{Bethevec}) form eigenbasis; see Theorem \ref{complete}
below. The strategy of the proof is entirely different from the one used in
\cite{vanDiejen} where $t$ is assumed to be a real parameter in the interval
$[-1,1]$. Here we will introduce an algebraic variety and its coordinate
ring showing that the associated Hilbert polynomial is a constant, the
dimension of the $k$ particle subspace $\mathcal{F}_{k}^{\otimes n}$. In
other words, the solutions to (\ref{bae}) are a 0-dimensional variety, a set
of discrete points, with cardinality $|\mathcal{A}_{k,n}^{+}|=\dim \mathcal{F%
}_{k}^{\otimes n}$. In the next section we will then identify this
coordinate ring as a deformation of the Verlinde ring. We start with some
general observations concerning the possible solutions to (\ref{bae}).

\begin{description}
\item[\emph{Permutation invariance}.] First we note that the Bethe vectors (%
\ref{Bethevec}) are symmetric in the Bethe roots, $\mathfrak{b}(y)=\mathfrak{%
b}(wy)$ for any $w\in \mathfrak{S}_{k}$ because of (\ref{aba0}). We have
already exploited this earlier; see Corollary \ref{YBHL}. Thus, we can
identify each solution $y=(y_{1},\ldots ,y_{k})$ to the Bethe ansatz
equations (\ref{bae}) with the set $\{y_{1},\ldots ,y_{k}\}$ or its orbit
under the natural action of the symmetric group.

\item[\emph{Scale invariance}.] Henceforth we will assume $z^{\pm 1/2n}$ to
exist. The $z$-dependence of the Bethe equations (\ref{bae}) can be
eliminated by rescaling the Bethe roots as $y\rightarrow z^{-1/n}y$, hence,
without loss of generality we will repeatedly set $z=1$ as it simplifies the
computations. By the same token we note that there is a $\mathbb{Z}_{n}$%
-action which leaves the set of solutions invariant: for any $\ell \in
\mathbb{Z}_{n}$ and any solution $y=(y_{1},\ldots ,y_{k})$ the rescaled
solution $e^{2\pi i\ell /n}y=(e^{2\pi i\ell /n}y_{1},\ldots ,e^{2\pi i\ell
/n}y_{k})$ also satisfies the Bethe ansatz equations.

\item[\emph{Inversion and complex conjugation}.] Finally, if $%
y=(y_{1},\ldots ,y_{k})$ is a solution, then one easily checks that $%
y^{-1}=(y_{1}^{-1},\ldots ,y_{k}^{-1})$ is a solution for $z\rightarrow
z^{-1}$. If we make the following further assumptions on the indeterminates $%
t$ and $z$,%
\begin{equation}
\bar{t}=t\qquad \text{and\qquad }\bar{z}^{\pm \frac{1}{2n}}=z^{\mp \frac{1}{%
2n}}  \label{tzassumption}
\end{equation}%
then we find that also $\bar{y}=(\bar{y}_{1},\ldots ,\bar{y}_{k})$ is a
solution with $z\rightarrow z^{-1}$. This latter assumption is physically
motivated: it makes the quantum Hamiltonians (\ref{IoM}) Hermitian; see (\ref%
{IoMspec}) below.
\end{description}

We will now exploit the permutation invariance of the Bethe ansatz equations
(\ref{bae}) by reformulating them in terms of symmetric functions in the
Bethe roots. This will lead us to the definition of an algebraic variety and
its vanishing ideal in the ring of symmetric functions. Recall from the
definition (\ref{G}) that%
\begin{equation*}
g_{r}(y;0,t)=Q_{(r)}(y;0,t)=(1-t)\sum_{i=1}^{k}y_{i}^{r}\prod_{j\not=i}\frac{%
y_{i}-y_{j}t}{y_{i}-y_{j}}
\end{equation*}%
is a symmetric polynomial of degree $r$. For ease of notation, we will
henceforth denote $g_{r}(y;0,t)$ simply by $g_{r}(y;t)$.

\begin{lemma}
(i) The Bethe ansatz equations (\ref{bae}) are equivalent to the equations%
\begin{equation}
g_{n}(y;t)=z(1-t^{k})\quad \text{and\quad }%
g_{n+r}(y;t)+zt^{k}g_{r}(y;t^{-1})=0,\quad 0<r<k~,  \label{gbae}
\end{equation}%
(ii) Let $y$ be a solution to the Bethe ansatz equations (\ref{bae}). Then
one has the identities%
\begin{equation}
g_{n-r}(y_{1},\ldots ,y_{k};t)=zg_{r}(y_{1}^{-1},\ldots
,y_{k}^{-1};t),\qquad 0<r<n\;.  \label{baecc}
\end{equation}
\end{lemma}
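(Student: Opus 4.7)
The plan is to recast both the Bethe ansatz equations and the equations in~(\ref{gbae}) as $k$ linear conditions on an auxiliary $k$-tuple of scalars, and then invoke Vandermonde invertibility to show they are equivalent. First I would use the partial-fraction decomposition
$$G(u;y,t)=\prod_i \frac{1-ty_i u}{1-y_i u}=t^k+\sum_{i=1}^k \frac{c_i}{1-y_i u}, \qquad c_i:=(1-t)\prod_{j\neq i}\frac{y_i-ty_j}{y_i-y_j},$$
which gives $g_r(y;t)=\sum_i c_i y_i^r$ for $r\geq 1$ while $g_0=1$ (equivalently, $\sum_i c_i=1-t^k$). Replacing $t$ by $t^{-1}$ yields an analogous expansion with coefficients $d_i:=(1-t^{-1})\prod_{j\neq i}\frac{y_i-t^{-1}y_j}{y_i-y_j}$, and $\sum_i d_i=1-t^{-k}$. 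The key observation is that the Bethe ansatz equations~(\ref{bae}) can be rewritten as
$$\xi_i:=c_i y_i^n+zt^k d_i=0,\qquad i=1,\ldots,k,$$
since the ratio $d_i/c_i$ is (up to a sign) the left-hand side of~(\ref{bae}) divided by its right-hand side.

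For the forward direction of (i), assume~(\ref{bae}). Substituting $\prod_{j\neq i}\frac{y_i-ty_j}{y_i-y_j}=(z/y_i^n)\prod_{j\neq i}\frac{ty_i-y_j}{y_i-y_j}$ into the expansion $g_{n+r}(y;t)=\sum_i c_i y_i^{n+r}$ directly gives $g_{n+r}(y;t)=-zt^k g_r(y;t^{-1})$ for $1\leq r<k$. For $r=0$ the same substitution yields $g_n(y;t)=(1-t)z\sum_i\prod_{j\neq i}\frac{ty_i-y_j}{y_i-y_j}$, and evaluating $\sum_i\prod_{j\neq i}\frac{ty_i-y_j}{y_i-y_j}=(1-t^k)/(1-t)$ from the constant term of $G(u;y,t^{-1})$ gives $g_n(y;t)=z(1-t^k)$.

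For the converse, the $k-1$ equations $g_{n+r}(y;t)+zt^k g_r(y;t^{-1})=0$ for $1\leq r<k$ translate to $\sum_i \xi_i\,y_i^r=0$. The remaining equation $g_n=z(1-t^k)$ is, using $\sum_i d_i=1-t^{-k}$, exactly $\sum_i\xi_i=0$. Hence for pairwise distinct $y_i$'s the Vandermonde matrix $(y_i^r)_{0\leq r\leq k-1,\;1\leq i\leq k}$ is invertible, so $\xi_i=0$ for every $i$; this is the Bethe ansatz equation at site $i$. Part (ii) is then immediate: the identity $\prod_{j\neq i}\frac{y_j-ty_i}{y_j-y_i}=\prod_{j\neq i}\frac{ty_i-y_j}{y_i-y_j}$ gives
$$g_r(y^{-1};t)=(1-t)\sum_i y_i^{-r}\prod_{j\neq i}\frac{ty_i-y_j}{y_i-y_j},$$
and applying~(\ref{bae}) to the product yields $g_r(y^{-1};t)=z^{-1}g_{n-r}(y;t)$ for $0<r<n$.

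The main technical point is the Vandermonde step in the converse direction, which requires the Bethe roots to be pairwise distinct. For generic $z$ this is expected to hold, and in any case the algebro-geometric analysis of the solution variety carried out in the next subsection reduces the lemma to the generic case; alternatively one can treat coincident roots via a standard $L'H\hat{o}$pital / deformation argument applied to the partial-fraction formula.
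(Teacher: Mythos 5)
Your argument is correct, and for part (i) it takes a genuinely different route from the paper's. The paper proves the forward direction by appealing to the transfer matrix: since $\boldsymbol{E}(u)$ is an operator polynomial of degree $n$ in $u$ with $\boldsymbol{e}_{n}=z$, its eigenvalue in (\ref{ncEspec}) must also be such a polynomial, and reading off the coefficients of $u^{n+r}$ yields (\ref{gbae}); for the converse it first propagates the $k$ given relations to $g_{n+r}(t)+zt^{k}g_{r}(t^{-1})=0$ for \emph{all} $r\geq k$ via the recursion $\sum_{a+b=r}(-1)^{a}e_{a}g_{b}=(-t)^{r}e_{r}$, concludes that the eigenvalue is a polynomial in $u$, and then recovers (\ref{bae}) from the vanishing of its residues at $u=y_{i}^{-1}$. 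You instead encode both directions in the single linear system $\sum_{i}\xi_{i}y_{i}^{r}=0$, $r=0,\ldots ,k-1$, where $\xi_{i}=c_{i}y_{i}^{n}+zt^{k}d_{i}$ is (up to the factor $1-t$) the difference of the two sides of the $i$-th Bethe equation, and invert a Vandermonde matrix; I have checked the identifications $\sum_i c_i=1-t^k$, $\sum_i d_i=1-t^{-k}$ and the proportionality of $\xi_i$ to (\ref{bae}), and they are right. Your route is more elementary and self-contained: it uses no Yang--Baxter input, does not require the Bethe vector to be nonzero (which the paper's forward direction implicitly does when passing from the operator identity to its eigenvalue), avoids the propagation step entirely, and makes the count of ``$k$ equations versus $k$ equations'' transparent. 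What the paper's route buys is the intermediate statement that the relations hold for all $r\geq k$, which is reused later (e.g.\ in the lemma on dual Bethe vectors), together with a conceptual explanation of why the particular combinations $g_{n+r}+zt^{k}g_{r}(t^{-1})$ arise. Part (ii) you prove by exactly the same computation as the paper.

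The Vandermonde step does require the $y_{i}$ to be pairwise distinct, as you note, but this is not a deficiency relative to the paper: its residue argument likewise treats the poles $u=y_{i}^{-1}$ as simple, and the equations (\ref{bae}) are themselves only defined for distinct roots. So restricting to that case is the right reading of the lemma, and no additional deformation argument is needed.
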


\begin{proof}
To prove the first assertion (i) assume that $y=(y_{1},\ldots ,y_{k})$
satisfy the Bethe ansatz equations (\ref{bae}). Recall $\prod_{i=1}^{k}\frac{%
1-u~y_{i}t}{1-u~y_{i}}=\sum_{r\geq 0}g_{r}(y;t)u^{r}$ with $g_{0}(y;t)=1$.
Using the coproduct of the Yang-Baxter algebra one shows easily by induction
that $\boldsymbol{e}_{n}=z~1$ and that $\boldsymbol{e}_{r}=0$ for $r>n$. The
assertion now follows from (\ref{ncEspec}).

Assume now that $y=(y_{1},\ldots ,y_{k})$ are invertible and such that (\ref%
{gbae}) hold. Let $E(u)=\prod_{i=1}^{k}(1+u~y_{i})$ be the generating
function of the elementary symmetric functions $e_{r}(y)$ and $%
G(u)=E(-ut)/E(-u)$ be the generating function of the $g_{r}(y;t)$'s. Then $%
E(-u)G(u)=E(-ut)$ and, hence,%
\begin{equation*}
\sum_{a+b=r}(-1)^{a}e_{a}(y)g_{b}(y;t)=(-t)^{r}e_{r}(y)\;.
\end{equation*}%
Employing this relation one calculates (recall that $n>2$ and that $%
e_{r}(y)=0$ for $r>k$)%
\begin{multline*}
0=e_{1}(g_{n+k-1}(t)+zt^{k}g_{k-1}(t^{-1}))=g_{n+k}(t)+zt^{k}g_{k}(t^{-1})+z(-1)^{k-1}e_{k}
\\
+e_{2}(g_{n+k-2}(t)+zt^{k}g_{k-2}(t^{-1}))-e_{3}\left(
g_{n+k-3}(t)+zt^{k}g_{k-3}(t^{-1})\right) +\cdots \\
\cdots +(-1)^{k}e_{k}\left( g_{n}(t)+zt^{k}\right)
=g_{n+k}(t)+zt^{k}g_{k}(t^{-1})\;,
\end{multline*}%
where we have used in the last line (\ref{gbae}). Similarly one shows by
induction that (\ref{gbae}) imply $g_{n+r}(t)+zt^{k}g_{r}(t^{-1})=0$ for any
$r\geq k$. Set $f=f(u)$ to be the eigenvalue appearing in (\ref{ncEspec}).
It now follows that $f$ is polynomial in $u,$ since $%
g_{n+r}(t)+zt^{k}g_{r}(t^{-1})=0$ for all $r\geq k$ ensures that the formal
power series expansion of $f$ with respect to $u$ terminates after finitely
many terms. Thus, the residues of $f$ at $u=y_{i}^{-1}$ for $i=1,\ldots ,k$
have to vanish. These conditions are equivalent to (\ref{bae}).

The second assertion (ii) follows from a simple computation,
\begin{eqnarray*}
g_{n-r}(y;t) &=&(1-t)\sum_{i=1}^{k}y_{i}^{n-r}\prod_{j\not=i}\frac{%
y_{i}-y_{j}t}{y_{i}-y_{j}}=z(1-t)\sum_{i=1}^{k}y_{i}^{-r}\prod_{j\not=i}%
\frac{y_{i}t-y_{j}}{y_{i}-y_{j}} \\
&=&z(1-t)\sum_{i=1}^{k}y_{i}^{-r}\prod_{j\not=i}\frac{y_{i}^{-1}-y_{j}^{-1}t%
}{y_{i}^{-1}-y_{j}^{-1}}=zg_{r}(y_{1}^{-1},\ldots ,y_{k}^{-1};t)\;.
\end{eqnarray*}
\end{proof}

\begin{remark}
Using the relation (ii) the spectrum of the commuting quantum Hamiltonians (%
\ref{IoM}) of the $q$-boson model as%
\begin{equation}
H_{r}^{\pm }\mathfrak{b}(y)=-\frac{g_{r}(y_{1},\ldots ,y_{k};t)\pm
zg_{r}(y_{1}^{-1},\ldots ,y_{k}^{-1};t)}{2}\mathfrak{b}(y)\;.
\label{IoMspec}
\end{equation}%
Assuming (\ref{tzassumption}) implies that $z^{-1/2}H_{r}^{\pm }$ are
(anti-) Hermitian operators.
\end{remark}

Recall that the functions $g_{r}(t)=g_{r}(0,t)$ are elements in $\mathbb{Z}%
[t][e_{1},\ldots ,e_{k}]$ where the $e_{r}$ are the elementary symmetric
functions in $k$ variables; see (\ref{E}). Thus, we can interpret (\ref{gbae}%
) as a set of polynomial equations in the variables $\{e_{1},\ldots ,e_{k}\}$
with coefficients in $\mathbb{Z}[t]$.

\begin{example}
Employing that%
\begin{equation*}
g_{r}(t)=\sum_{s=0}^{r}h_{r-s}e_{s}(-t)^{s},\qquad h_{r}=\det (e_{1-i+j})
\end{equation*}%
the equations (\ref{gbae}) can be reformulated in terms of elementary
symmetric polynomials leading to a coupled set of algebraic equations. For
instance, set $n=3$ and $k=2$. From the Bethe ansatz equations (\ref{bae})
one easily derives that $e_{k}(y)^{n}=1$; this latter relation is true for general $n,k$. Thus,
in the present example with $k=2$ we only need an additional relation to
express $e_{1}$ in terms of $e_{k=2}$. Using the expansion of the equation $%
zg_{1}(y^{-1};t)=g_{2}(y;t)$ into elementary symmetric polynomials one
arrives for $n=3,\;k=2$ and $z=1$ at the quadratic equation%
\begin{equation*}
e_{1}^{2}-e_{1}e_{2}^{2}-(1+t)e_{2}=0
\end{equation*}%
Here we have used that $e_{r}(y^{-1})=e_{k}^{n-1}(y)e_{k-r}(y)$.
\end{example}

As this example shows we are led to solving polynomial equations with
coefficients in $\mathbb{Z}[t]$. In order to guarantee the existence of
solutions we must work over an algebraically closed field. This motivates us
to consider the algebraically closed field of Puiseux series,%
\begin{equation}
\mathbb{\Bbbk }=\mathbb{C}\{\!\{t\}\!\}:=\tbigcup_{m=1}^{\infty }\mathbb{C}%
((t^{1/m})),  \label{Puiseux}
\end{equation}%
which is the formal union of the fields of Laurent series in $t^{1/m}$. That
is, an element $f\in \mathbb{\Bbbk }$ is a formal expression of the form $%
f=\sum_{\ell \geq \ell _{0}}c_{\ell }t^{\ell /m}$ with $c_{\ell }\in \mathbb{%
C}$.

Fix $k>0$ and set $\tilde{g}_{n+r}(t)=g_{n+r}(t)+zt^{k}g_{r}(t^{-1})$ for $%
r=1,\ldots ,k-1$, $\tilde{g}_{n}(t)=g_{n}(t)-z(1-t^{k})$ and $\tilde{g}%
_{r}(t)=g_{r}(t)$ for $r\notin \{n,n+1,\ldots ,n+k-1\}$. In what follows we
suppress for simplicity the explicit dependence on $t$ in the notation and
set once more $z=1$. Denote by%
\begin{equation}
\boldsymbol{V}_{k,n}:=\left\{ \epsilon =(\epsilon _{1},\ldots ,\epsilon
_{k})\in \mathbb{\Bbbk }^{k}:\tilde{g}_{n}|_{e=\epsilon }=\cdots =\tilde{g}%
_{n+k-1}|_{e=\epsilon }=0\right\}  \label{BetheV}
\end{equation}%
the solutions to (\ref{gbae}) for $z=1$ in the affine space $\mathbb{\Bbbk }%
^{k}$, where $\tilde{g}_{r}|_{e=\epsilon }$ is obtained by replacing $%
e_{r}\rightarrow \epsilon _{r}$ in its polynomial expression $\tilde{g}%
_{n}=\sum_{\lambda }c_{\lambda }e_{\lambda _{1}}\cdots e_{\lambda _{\ell }}$.

\begin{lemma}
\label{nullstellen}Let $\boldsymbol{I}(\boldsymbol{V}_{k,n})\subset \mathbb{%
\Bbbk }[e_{1},\ldots ,e_{k}]$ be the vanishing ideal of the affine variety (%
\ref{BetheV}) and define the two-sided ideal $\boldsymbol{I}_{k,n}:=\langle
\tilde{g}_{n},\ldots ,\tilde{g}_{n+k-1}\rangle $. Then%
\begin{equation}
\boldsymbol{I}(\boldsymbol{V}_{k,n})=\boldsymbol{I}_{k,n}\;.
\end{equation}
\end{lemma}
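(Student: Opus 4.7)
The containment $\boldsymbol{I}_{k,n}\subseteq\boldsymbol{I}(\boldsymbol{V}_{k,n})$ is tautological from the definition of $\boldsymbol{V}_{k,n}$. For the opposite direction, since $\Bbbk=\mathbb{C}\{\!\{t\}\!\}$ is algebraically closed, Hilbert's Nullstellensatz gives $\boldsymbol{I}(\boldsymbol{V}_{k,n})=\sqrt{\boldsymbol{I}_{k,n}}$, and the lemma reduces to the claim that $\boldsymbol{I}_{k,n}$ is radical. I shall use the standard criterion that a zero-dimensional ideal $I\subseteq\Bbbk[e_1,\ldots,e_k]$ with finite-dimensional quotient satisfies $|V(I)|\leq\dim_{\Bbbk}\Bbbk[e]/I$, with equality precisely when $I$ is radical. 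Writing $N:=|\mathcal{A}^+_{k,n}|=\binom{n+k-1}{k}$, the plan is to squeeze both quantities against the value $N$ from opposite sides.

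For the upper estimate $\dim_{\Bbbk}\Bbbk[e]/\boldsymbol{I}_{k,n}\leq N$, I would exhibit an explicit spanning set. The Hall--Littlewood functions $\{P_{\lambda}(e;0,t):\ell(\lambda)\leq k\}$ form a $\Bbbk$-basis of $\Bbbk[e_1,\ldots,e_k]=\Lambda_k\otimes\Bbbk$, and I claim that modulo $\boldsymbol{I}_{k,n}$ every $P_{\lambda}$ reduces to a $\Bbbk$-linear combination of $P_{\mu}$ with $\mu\in\mathcal{A}^+_{k,n}$, i.e.~with $\mu_1\leq n$. Indeed the defining relations $g_n\equiv z(1-t^k)$ and $g_{n+r}\equiv -zt^kg_r(t^{-1})$ for $0<r<k$, applied iteratively together with the Hall--Littlewood Pieri rule \eqref{Hall}, allow one to trade any first row exceeding $n$ for shorter rows. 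This is the Hall--Littlewood deformation of the classical Schur reduction behind Theorem \ref{KSThm1}.

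For the lower estimate $|\boldsymbol{V}_{k,n}|\geq N$, I would use the Bethe ansatz. The eigenvalue formula \eqref{ncQ'spec} attaches to each $y\in\boldsymbol{V}_{k,n}$ a joint eigenvector $\mathfrak{b}(y)\in\mathcal{F}_k^{\otimes n}$ for the commuting family $\{\boldsymbol{Q}'_{\lambda}\}_{\lambda\in\mathcal{A}^+_{k,n}}$ with eigenvalue $(P_{\lambda}(y;t))_{\lambda}$. Since the Macdonald $P$-functions generically separate $\mathfrak{S}_k$-orbits of Bethe roots, distinct elements of $\boldsymbol{V}_{k,n}$ produce distinct joint eigenvalues, hence linearly independent Bethe vectors, so in particular $|\boldsymbol{V}_{k,n}|\leq N=\dim\mathcal{F}_k^{\otimes n}$. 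The matching lower inequality I would establish by Puiseux analysis: clearing denominators in \eqref{bae} the leading-order relation as $t\to 0$ is $y_i^{n+k}=(-1)^{k-1}z\,e_k(y)$, which forces every $y_i$ to be an $(n+k)$-th root of a common value. Enumerating multisets of such roots compatible with the self-consistency constraint $e_k(y)=\prod y_i$, modulo the $\mathfrak{S}_k$-symmetry and the residual $\mathbb{Z}_n$ scale invariance $y\mapsto\zeta y$ with $\zeta^n=1$, produces exactly $N$ distinct leading points in the $(e_1,\ldots,e_k)$-coordinates; Newton--Puiseux lifting then extends each to an element of $\boldsymbol{V}_{k,n}$, provided the Jacobian of $(\tilde{g}_n,\ldots,\tilde{g}_{n+k-1})$ at that point is non-degenerate.

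Combining the two estimates yields $N\leq|\boldsymbol{V}_{k,n}|\leq\dim_{\Bbbk}\Bbbk[e]/\boldsymbol{I}_{k,n}\leq N$, so equality holds throughout and $\boldsymbol{I}_{k,n}$ is radical, as required. The principal obstacle I anticipate is the Jacobian non-degeneracy step in the lower bound: the raw Bethe equations \eqref{bae} are singular at $t=0$, where Bethe roots generically collide, so one has no option but to carry out the deformation analysis in the symmetric $e$-coordinates, where the correction $zt^kg_r(t^{-1})$ in $\tilde{g}_{n+r}$ is precisely what keeps the leading tangent spaces non-degenerate; one must moreover exclude Puiseux branches that escape to infinity or merge under the deformation, and account correctly for the $\mathfrak{S}_k\times\mathbb{Z}_n$ overcounting when translating between the $y$-picture and the $e$-picture.
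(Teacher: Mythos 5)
Your overall strategy is sound in principle (for a zero-dimensional ideal over an algebraically closed field, $|V(I)|\leq\dim_{\Bbbk}\Bbbk[e]/I$ with equality iff $I$ is radical, so squeezing both sides against $N$ would do the job), and it is genuinely different from the paper's argument. But as written it has a real gap at its load-bearing step, the lower bound $|\boldsymbol{V}_{k,n}|\geq N$. You reduce this to a Newton--Puiseux lifting of the $t=0$ solutions $y_i^{n+k}=(-1)^{k-1}z\,e_k(y)$, conditional on non-degeneracy of the Jacobian of $(\tilde g_n,\ldots,\tilde g_{n+k-1})$ at each leading point. That non-degeneracy is not a technicality you can defer: it is equivalent to the $t=0$ scheme being reduced, i.e.\ to the semisimplicity of the Verlinde algebra / radicality of the phase-model ideal, which is precisely the analogue of the statement being proved (it is Theorem 6.20 of \cite{KS} and is not available for free here). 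Even the claim that the leading system has exactly $N$ distinct points in $e$-coordinates is the completeness statement for the $t=0$ model. So the proposal either imports the hardest external input unproved or risks circularity. A secondary issue: your upper bound $\dim_{\Bbbk}\Bbbk[e]/\boldsymbol{I}_{k,n}\leq N$ must be established by a purely ideal-theoretic reduction; the straightening rules as formulated in the paper are themselves deduced \emph{from} this lemma, so you would need to redo the reduction directly from the generators $\tilde g_{n},\ldots,\tilde g_{n+k-1}$ (including deriving the relations $g_{n+r}+zt^{k}g_{r}(t^{-1})\in\boldsymbol{I}_{k,n}$ for $r\geq k$ and a terminating well-ordering), which you only sketch.

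For contrast, the paper's proof avoids all counting and all analysis at $t=0$. It works in the projective limit $\Bbbk[e_1,e_2,\ldots]\cong\Bbbk[g_1,g_2,\ldots]$, observes that the modified generators $\tilde g_r$ remain algebraically independent (since $g_{n+r}$ and the correction $zt^kg_r(t^{-1})$ have different degrees), so the monomials $\tilde g_\lambda$ are linearly independent; if $f\notin\boldsymbol{I}_{k,n}$ then its expansion contains some $\tilde g_\mu$ with no part in $\{n,\ldots,n+k-1\}$, whence $f^m$ contains $\tilde g_{\mu^m}$ and $f^m\notin\boldsymbol{I}_{k,n}$. Thus $\boldsymbol{I}_{k,n}$ is radical and the strong Nullstellensatz finishes the proof in a few lines. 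The dimension count and the completeness of the Bethe ansatz are then \emph{consequences} of this lemma rather than inputs to it. If you want to salvage your route, you would need an independent proof of the simplicity of the $t=0$ Bethe roots; otherwise the direct radicality argument is both shorter and logically prior.
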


In other words, given two symmetric functions $f,g\in \Bbbk \lbrack
e_{1},\ldots ,e_{k}]$ their difference $f-g$ lies in the ideal $\boldsymbol{I%
}_{k,n}$ if and only if $f(y)=g(y)$ for all solutions $y$ of the Bethe
ansatz equations (\ref{bae}). The proof of this statement follows a similar
strategy as the one given in \cite[Proof of Theorem 6.20, Claim 1]{KS} for
the Verlinde algebra.

\begin{proof}
We show that the ideal $\boldsymbol{I}_{k,n}$ is radical, i.e. $\boldsymbol{I%
}_{k,n}=\sqrt{\boldsymbol{I}_{k,n}}$, and the assertion then follows from
the strong Nullstellensatz. The results in \cite[(2.16), p. 213]{Macdonald}
imply that $\Lambda _{k}(t):=\mathbb{\Bbbk }[e_{1},\ldots ,e_{k}]\cong
\mathbb{\Bbbk }[g_{1},\ldots ,g_{k}]$ and in the projective limit $\Lambda
(t)=\lim\limits_{\longleftarrow }\Lambda _{k}(t)$ we have $\mathbb{\Bbbk }%
[g_{1},g_{2}\ldots ]\cong \mathbb{\Bbbk }[e_{1},e_{2},\ldots ]$. In
particular the $g_{r}$'s are all algebraically independent and, thus, the $%
\tilde{g}_{r}$ also form an algebraically independent set (note that $%
g_{n+r} $ and $g_{r}$ have different degree). Hence, the elements in $\{%
\tilde{g}_{\lambda }:=\tilde{g}_{\lambda _{1}}\tilde{g}_{\lambda _{2}}\cdots
\}_{\lambda },$ where $\lambda $ ranges over all partitions, are linearly
independent. Suppose $f=\sum_{\lambda }c_{\lambda }\tilde{g}_{\lambda }$
with $f\in \mathbb{\Bbbk }[e_{1},\ldots ,e_{k}]\subset \mathbb{\Bbbk }%
[e_{1},e_{2},\ldots ,]$ is not in $\boldsymbol{I}_{k,n}$, then there must
exist at least one partition $\mu $ such that $\mu _{j}\notin \{n,n+1,\ldots
,n+k-1\}$ for all $j$ and $c_{\mu }\neq 0$. We can therefore conclude that
the expansion of $f^{m}$, $m\geq 1$ contains $\tilde{g}_{\mu ^{m}}$ where $%
\mu ^{m}$ is the partition containing each part $\mu _{j}>0$ exactly $m$
times. Thus, $f^{m}\notin \boldsymbol{I}_{k,n}$ and projecting onto $\mathbb{%
\Bbbk }[e_{1},\ldots ,e_{k}]$ now yields the desired result.
\end{proof}

It will be important to compute which Hall-Littlewood functions lie in the
ideal $\boldsymbol{I}_{k,n}$. Recall the definition of the raising ($i<j$)
and lowering ($i>j$) operators $R_{ij}\lambda :=(\ldots ,\lambda
_{i}+1,\ldots ,\lambda _{j}-1,\ldots )$ and $R_{ij}f_{\lambda
}:=f_{R_{ij}\lambda }$. Then, in principle, one can calculate whether a
Hall-Littlewood $Q$-function lies in $\boldsymbol{I}_{k,n}$ by employing the
identity (\ref{Q2g}) \cite{Macdonald}. However, a simpler approach is given
by making use of the alternative definition of Hall-Littlewood functions via
the symmetric group. Namely, introduce another function $R_{\lambda
}=R_{\lambda }(x_{1},\ldots ,x_{k};t)$ via the equalities
\begin{equation}
(1-t)^{k}R_{\lambda }=P_{\lambda }\prod_{i=0}^{\lambda
_{1}}(t)_{m_{i}(\lambda )}\quad \quad \text{and}\quad \quad
(1-t)^{k}R_{\lambda }=(t)_{k-\ell (\lambda )}Q_{\lambda }\;.  \label{RPQ}
\end{equation}%
The function $R_{\lambda }$ for $\lambda $ a \emph{partition} can be
generalised to \emph{compositions }$\mu $ via \cite{Macdonald}%
\begin{equation}
R_{\mu }(x_{1},\ldots ,x_{k};t):=\sum_{w\in \mathfrak{S}_{k}}w\left( x^{\mu
}\tprod_{i<j}\frac{x_{i}-tx_{j}}{x_{i}-x_{j}}\right)  \label{HLR}
\end{equation}%
The latter are a linear combination of the functions $R_{\lambda }$ indexed
by \emph{partitions }$\lambda $ which is obtained by repeatedly applying the
following \textquotedblleft straightening rules\textquotedblright\ \cite%
{Macdonald}%
\begin{equation}
R_{\lambda .\sigma _{i}}=tR_{\lambda }-R_{(\ldots ,\lambda _{i}-1,\lambda
_{i+1}+1,\ldots )}+tR_{(\ldots ,\lambda _{i+1}+1,\lambda _{i}-1,\ldots
)},\qquad i=1,\ldots ,k-1  \label{straight1}
\end{equation}%
and%
\begin{equation}
R_{(\lambda _{1},\ldots ,\lambda _{k})}=0,\qquad \lambda _{k}<0\;.
\label{straight2}
\end{equation}%
The following lemma now shows that the Bethe ansatz equations (\ref{bae})
extend these straightening rules to the extended affine symmetric group $%
\mathfrak{\hat{S}}_{k}$, that is any $R_{\mu }$ with $\mu \in \mathcal{P}%
_{k} $ can be written as a linear combination of $R_{\lambda }$ with $%
\lambda \in \mathcal{A}_{k,n}^{+}$. We therefore can say that this
particular model exhibits an extended affine Weyl group invariance or
`symmetry'.

\begin{lemma}[extended affine straightening rules]
Let $\lambda \in \mathcal{A}_{k,n}^{+}$. Then the following polynomials are
in the ideal $\boldsymbol{I}_{k,n}$%
\begin{equation}
R_{\lambda .\sigma _{0}}-tR_{\lambda }+R_{(\lambda _{1}+1,\ldots ,\lambda
_{k}-1)}-tR_{(\lambda _{k}-1+n,\ldots ,\lambda _{1}+1-n)}  \label{straight3}
\end{equation}%
and%
\begin{equation}
R_{\lambda }-zR_{\lambda .\tau },\qquad \lambda _{1}\geq n,
\label{straight4}
\end{equation}%
where $\sigma _{0}$, $\tau $ are the additional generators of the extended
affine symmetric group $\mathfrak{\hat{S}}_{k}$\ whose right action on $%
\lambda $ is given by (\ref{affaction}).
\end{lemma}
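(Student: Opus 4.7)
The plan is to invoke Lemma \ref{nullstellen}: since $\boldsymbol{I}_{k,n}$ equals the vanishing ideal of the Bethe variety $\boldsymbol{V}_{k,n}$, it suffices to verify that the polynomials (\ref{straight3}) and (\ref{straight4}) vanish identically on every solution $y$ of the Bethe ansatz equations (\ref{bae}). I would dispatch (\ref{straight4}) first and then bootstrap to (\ref{straight3}).

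For (\ref{straight4}), start from the symmetric-sum presentation (\ref{HLR}),
\begin{equation*}
R_\lambda(y)=\sum_{w\in\mathfrak{S}_k}w(y^\lambda J),\qquad J=\prod_{i<j}\frac{y_i-ty_j}{y_i-y_j},
\end{equation*}
and reindex the permutation sum by postcomposing with the cyclic shift $c\colon i\mapsto i{+}1\bmod k$. Since $c(y^\lambda)=y^{c(\lambda)}$ with $c(\lambda)=(\lambda_k,\lambda_1,\ldots,\lambda_{k-1})$ and $\lambda.\tau=c(\lambda)+ne_1$, the identity to prove reduces to the comparison of $\sum_w w(y^{c(\lambda)}c(J))$ with $\sum_w w(y_1^n y^{c(\lambda)}J)$. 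Factorising the Jacobian separates out the $y_1$-row,
\begin{equation*}
c(J)=J'\prod_{j=2}^k\frac{y_j-ty_1}{y_j-y_1},\qquad y_1^nJ=y_1^n\prod_{j=2}^k\frac{y_1-ty_j}{y_1-y_j}\cdot J',
\end{equation*}
with $J'=\prod_{2\le i<j\le k}(y_i-ty_j)/(y_i-y_j)$, and the Bethe equation at site $1$ identifies the two $y_1$-products up to a power of $z$. Hence $c(J)$ and $y_1^nJ$ are scalar multiples on $\boldsymbol{V}_{k,n}$, giving the required proportionality $R_\lambda\equiv z R_{\lambda.\tau}\pmod{\boldsymbol{I}_{k,n}}$; the hypothesis $\lambda_1\ge n$ is what guarantees $\lambda.\tau\in\mathcal{P}_k^+$ so that the symmetric-sum substitution is valid without boundary issues.

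For (\ref{straight3}) I would use the relation $\sigma_0=\tau\sigma_{k-1}\tau^{-1}$ in $\hat{\mathfrak{S}}_k$ (obtainable from $\tau\sigma_i\tau^{-1}=\sigma_{i+1}$ and $\tau^k$ being central in the extended group). Applying (\ref{straight4}) to absorb the outer $\tau$, then the finite straightening rule (\ref{straight1}) for the adjacent transposition $\sigma_{k-1}$, and finally (\ref{straight4}) once more to convert $\tau^{-1}$ back, produces four $R$-functions whose indices are exactly $\lambda.\sigma_0$, $\lambda$, $\lambda+e_1-e_k$, and $(\lambda+e_1-e_k).\sigma_0$, with the $\pm n$ shifts in the first and last coordinates arising from the two $\tau^{\pm 1}$-conversions. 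The coefficients $1,-t,+1,-t$ come from (\ref{straight1}) and are unaffected by (\ref{straight4}), which only contributes powers of $z$ that cancel pairwise between the two $\tau$-applications.

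The main obstacle is bookkeeping rather than genuinely new input. For (\ref{straight4}) one has to track carefully how the non-invariant Jacobian $J$ transforms under $c$ and identify the correct power of $z$ supplied by the Bethe equation, in particular pinning down the convention that yields the coefficient $z$ (as opposed to $z^{-1}$) in the statement. For (\ref{straight3}) the delicate point is verifying that the intermediate partitions produced by (\ref{straight1}) all satisfy the hypothesis $\lambda_1\ge n$ required to invoke (\ref{straight4}), or else extending the scope of (\ref{straight4}) to cover these cases using the finite straightening rules (\ref{straight1})--(\ref{straight2}); modulo this, the four-term identity falls out of the composition.
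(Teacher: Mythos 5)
Your overall strategy—reduce to checking the identities on the Bethe variety via Lemma \ref{nullstellen}—is exactly the paper's, and your treatment of (\ref{straight4}) coincides with the paper's proof in substance: the paper realises your cyclic reindexing as the $k$-cycle $\sigma_1\sigma_2\cdots\sigma_{k-1}$ acting on $y^{\lambda}\theta(t)$ with $\theta(t)=\prod_{i<j}(y_i-ty_j)/(y_i-y_j)$, and the Bethe equation at a single site supplies precisely the factor of $z$ relating the two Jacobian products. Where you genuinely diverge is (\ref{straight3}): the paper does not bootstrap from (\ref{straight4}) but instead derives directly from the Bethe equations at sites $1$ and $k$ the single identity $(y_1-ty_k)\,y^{\lambda.\sigma_0}\theta(t)=-\sigma_{1k}\bigl[(y_1-ty_k)\,y^{\lambda}\theta(t)\bigr]$ and symmetrises; this produces all four terms at once without invoking the finite rule (\ref{straight1}). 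Your conjugation route is viable and arguably more conceptual (it exhibits (\ref{straight3}) as the $\tau$-translate of (\ref{straight1})), but it costs you the bookkeeping you anticipate, whereas the paper's one-line identity avoids it entirely.

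Two concrete corrections to your route. First, the conjugation you need for the explicit right level-$n$ action (\ref{affaction}) is $\sigma_0=\tau\sigma_1\tau^{-1}$, not $\tau\sigma_{k-1}\tau^{-1}$: computing $((\lambda.\tau).\sigma_{k-1}).\tau^{-1}$ with (\ref{affaction}) yields $\lambda.\sigma_{k-2}$, whereas $((\lambda.\tau).\sigma_1).\tau^{-1}=(\lambda_k+n,\lambda_2,\ldots,\lambda_{k-1},\lambda_1-n)=\lambda.\sigma_0$. With $\mu=\lambda.\tau=(\lambda_k+n,\lambda_1,\ldots,\lambda_{k-1})$, applying (\ref{straight1}) at $i=1$ and then converting each of the four terms back through $\tau^{-1}$ does produce exactly the indices $\lambda$, $(\lambda_1+1,\ldots,\lambda_k-1)$ and $(\lambda_k-1+n,\ldots,\lambda_1+1-n)$ with coefficients $t,-1,t$, and the powers of $z$ cancel as you claim. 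Second, your worry about the hypothesis $\lambda_1\geq n$ is resolved not by checking it for the intermediate weights (it can fail) but by observing that the identity underlying (\ref{straight4}), namely $y^{\lambda}\theta(t)=z\,\sigma_1\cdots\sigma_{k-1}\bigl(y^{\lambda.\tau^{-1}}\theta(t)\bigr)$ on the variety, holds for arbitrary integer vectors $\lambda\in\mathbb{Z}^k$; the restriction in the statement only serves to keep the index a composition with nonnegative parts. With these two adjustments your argument closes.
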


\begin{proof}
Let $\theta (t):=\prod_{1\leq i<j\leq k}\frac{y_{i}-ty_{j}}{y_{i}-y_{j}}$
and $\sigma _{ij}:=\sigma _{j-1}\sigma _{j-2}\cdots \sigma _{i+1}\sigma _{i}$%
. From the Bethe ansatz equations
\begin{equation*}
1=zy_{1}^{-n}\prod_{j\neq 1}\frac{y_{1}t-y_{j}}{y_{1}-ty_{j}}%
=z^{-1}y_{k}^{n}\prod_{j\neq 1}\frac{y_{j}t-y_{k}}{y_{j}-ty_{k}}
\end{equation*}%
one derives the identity%
\begin{equation*}
(y_{1}-ty_{k})y^{\lambda .\sigma _{0}}\theta (t)=-\sigma _{1k}\left[
(y_{1}-ty_{k})y^{\lambda }\theta (t)\right]
\end{equation*}%
and the first affine straightening rule now follows from Lemma \ref{nullstellen}.

To prove the second rule we first note the relations%
\begin{eqnarray*}
\sigma _{i}\theta (t) &=&\frac{y_{i}t-y_{i+1}}{y_{i}-ty_{i+1}}\theta (t), \\
\sigma _{ij}\theta (t) &=&\frac{y_{i}t-y_{j}}{y_{i}-ty_{j}}~\frac{%
y_{i+1}t-y_{j}}{y_{i+1}-ty_{j}}~\cdots ~\frac{y_{j-1}t-y_{j}}{y_{j-1}-ty_{j}}%
\theta (t), \\
\sigma _{ij}^{-1}\theta (t) &=&\frac{y_{i}t-y_{j}}{y_{i}-ty_{j}}~\frac{%
y_{i}t-y_{j-1}}{y_{i}-ty_{j-1}}~\cdots ~\frac{y_{i}t-y_{i+1}}{y_{i}-ty_{i+1}}%
\theta (t)\;.
\end{eqnarray*}%
Hence, it follows from the Bethe ansatz equations that%
\begin{equation*}
\sigma _{ik}^{-1}\theta (t)=\left( \tprod_{i<j}\frac{ty_{i}-y_{j}}{%
y_{i}-ty_{j}}\right) \theta (t)\overset{\text{BAE}}{=}z^{-1}y_{i}^{n}\left(
\tprod_{j<i}\frac{y_{i}-ty_{j}}{ty_{i}-y_{j}}\right) \theta (t)=z^{-1}\sigma
_{1i}[y_{1}^{n}\theta (t)]\;.
\end{equation*}%
In particular, choosing $i=1$ we obtain the identity%
\begin{eqnarray*}
y_{1}^{n}\theta (t) &=&z\sigma _{1k}^{-1}\theta (t)=z\sigma _{1}\sigma
_{2}\cdots \sigma _{k-1}\theta (t), \\
&\Rightarrow &\;y^{\lambda }\theta (t)=z\sigma _{1}\sigma _{2}\cdots \sigma
_{k-1}\left( y_{1}^{\lambda _{2}}\cdots y_{k-1}^{\lambda _{k}}y_{k}^{\lambda
_{1}-n}\theta (t)\right)
\end{eqnarray*}%
which proves the second straightening rule applying once more Lemma \ref{nullstellen}.
\end{proof}

\begin{remark}
Note in particular that for $\lambda =(\lambda _{1},\ldots ,\lambda _{k})$
with $\lambda _{1}=n$ and $\lambda _{k}>0$ the polynomials%
\begin{equation}
P_{\lambda }-z^{m_{n}(\lambda )}P_{\tilde{\lambda}}\quad \quad \text{and}%
\quad \quad Q_{\lambda }-z^{m_{n}(\lambda )}(t)_{m_{n}(\lambda )}Q_{\tilde{%
\lambda}},  \label{PQreduction}
\end{equation}%
where $\tilde{\lambda}$ is the reduced partition with all parts $n$ removed
lie in the ideal $\boldsymbol{I}_{k,n}$.
\end{remark}

The relation between the extended affine straightening rules and the
definition of $\boldsymbol{I}_{k,n}$ is given by the following result.

\begin{lemma}
We have for $r>0$ that%
\begin{equation}
\frac{(1-t)^{k}}{(t)_{k-1}}~R_{(0,\ldots ,0,r)}=-t^{k}g_{r}(t^{-1})\;.
\label{greduction}
\end{equation}
\end{lemma}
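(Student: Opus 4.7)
The plan is to evaluate $R_{(0,\ldots,0,r)}$ directly from the Weyl-type formula (\ref{HLR}), then match it against the expression for $g_r(x;t^{-1})$ coming from (\ref{G}). Since $x^{\mu}=x_k^r$ when $\mu=(0,\ldots,0,r)$, we may split the sum in (\ref{HLR}) according to $\ell = w(k)$:
\begin{equation*}
R_{(0,\ldots,0,r)} \;=\; \sum_{\ell=1}^{k} x_\ell^{\,r}\sum_{w\in\mathfrak{S}_{k},\,w(k)=\ell} w(\omega),\qquad \omega:=\prod_{i<j}\frac{x_i-tx_j}{x_i-x_j}.
\end{equation*}
The key factorisation I would use is $\omega=\omega_{k-1}(x_1,\ldots,x_{k-1})\cdot\prod_{i<k}\tfrac{x_i-tx_k}{x_i-x_k}$, where $\omega_{k-1}$ is the analogous product for $k-1$ variables. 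When $w(k)=\ell$ is fixed, the second factor maps universally to $\prod_{j\neq\ell}\tfrac{x_j-tx_\ell}{x_j-x_\ell}$, independent of the remaining choice of $w$, whereas the residual sum is a full symmetrisation of $\omega_{k-1}$ over the $k{-}1$ variables $\{x_j:j\neq\ell\}$.

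The auxiliary identity I shall invoke is $\sum_{w\in\mathfrak{S}_{k}}w(\omega)=[k]_t!=(t)_{k}/(1-t)^{k}$; this is just the content of (\ref{RPQ}) applied to $\lambda=\emptyset$, since $P_\emptyset=1$ and $m_0(\emptyset)=k$. Applying it with $k$ replaced by $k-1$ to the symmetrisation above yields
\begin{equation*}
\sum_{w:\,w(k)=\ell} w(\omega) \;=\; [k-1]_t!\,\prod_{j\neq\ell}\frac{x_j-tx_\ell}{x_j-x_\ell},
\end{equation*}
and hence
\begin{equation*}
R_{(0,\ldots,0,r)} \;=\; [k-1]_t!\sum_{\ell=1}^{k} x_\ell^{\,r}\prod_{j\neq\ell}\frac{x_j-tx_\ell}{x_j-x_\ell}.
\end{equation*}

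To finish, I would rewrite $g_r$ using $g_r(x;t^{-1})=(1-t^{-1})\sum_\ell x_\ell^{r}\prod_{j\neq\ell}\tfrac{x_\ell-t^{-1}x_j}{x_\ell-x_j}$ and observe the elementary rescaling $\tfrac{x_\ell-t^{-1}x_j}{x_\ell-x_j}=t^{-1}\tfrac{x_j-tx_\ell}{x_j-x_\ell}$, giving
\begin{equation*}
g_r(x;t^{-1}) \;=\; (1-t^{-1})\,t^{-(k-1)}\sum_{\ell=1}^{k} x_\ell^{\,r}\prod_{j\neq\ell}\frac{x_j-tx_\ell}{x_j-x_\ell}.
\end{equation*}
Combining the two displays with $[k-1]_t!=(t)_{k-1}/(1-t)^{k-1}$ and the scalar simplification $(1-t)/(1-t^{-1})=-t$ produces
$\frac{(1-t)^{k}}{(t)_{k-1}}R_{(0,\ldots,0,r)}=-t^{k}g_r(x;t^{-1})$, as desired. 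The only non-routine ingredient is the permutation-orbit identity for $\omega$; everything else is a direct algebraic simplification, so I expect no genuine obstacle beyond verifying the symmetrisation lemma cleanly.
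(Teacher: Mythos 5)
Your proposal is correct and follows essentially the same route as the paper: both split the symmetrisation over $\mathfrak{S}_k$ according to the image of the distinguished index, factor off $\prod_{j\neq\ell}\frac{x_j-tx_\ell}{x_j-x_\ell}$, evaluate the residual $(k-1)$-fold symmetrisation as $(t)_{k-1}/(1-t)^{k-1}$, and then match the remaining sum against $g_r(t^{-1})$ via the rescaling $\frac{x_\ell-t^{-1}x_j}{x_\ell-x_j}=t^{-1}\frac{x_j-tx_\ell}{x_j-x_\ell}$. The only cosmetic difference is that you justify the constant $\sum_w w(\omega)=[k]_t!$ explicitly from (\ref{RPQ}) with $\lambda=\emptyset$, which the paper simply asserts.
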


\begin{proof}
Inserting the definition of the polynomial $R_{\mu }$ we obtain%
\begin{eqnarray*}
R_{(0,\ldots ,0,r)} &=&\sum_{w\in \mathfrak{S}_{k}}w\left(
x_{k}^{r}\tprod_{i<j}\frac{x_{i}-tx_{j}}{x_{i}-x_{j}}\right) \\
&=&\sum_{i=1}^{k}x_{i}^{r}\tprod_{j\neq i}\frac{x_{i}t-x_{j}}{x_{i}-x_{j}}%
\sum_{w\in \mathfrak{S}_{k-1}}w\left( \tprod_{i<j<k}\frac{x_{i}-tx_{j}}{%
x_{i}-x_{j}}\right) \\
&=&\frac{(t)_{k-1}}{(1-t)^{k-1}}\sum_{i=1}^{k}x_{i}^{r}\tprod_{j\neq i}\frac{%
x_{i}t-x_{j}}{x_{i}-x_{j}}=-\frac{(t)_{k-1}t^{k}}{(1-t)^{k}}g_{r}(t^{-1})\;,
\end{eqnarray*}%
where in the second equality in the first line the sum only runs over
permutations $w\in \mathfrak{S}_{k-1}$ which only permute the first $k-1$
indices.
\end{proof}

\begin{proposition}[Basis of the coordinate ring]
The quotient $\Bbbk \lbrack \boldsymbol{V}_{k,n}]:=\mathbb{\Bbbk }%
[e_{1},\ldots ,e_{k}]/\boldsymbol{I}_{k,n}$ viewed as a vector space has
dimension $|\mathcal{A}_{k,n}^{+}|=\binom{n-1}{k}$. A basis is given by the
equivalence classes of the Hall-Littlewood functions $P_{\tilde{\lambda}}$
with $\tilde{\lambda}\in \mathcal{\tilde{A}}_{k,n}^{+}$.
\end{proposition}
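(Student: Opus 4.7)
The strategy breaks into two halves: a spanning argument producing $\dim_{\Bbbk}\Bbbk[\boldsymbol{V}_{k,n}]\le|\mathcal{A}_{k,n}^+|$ from the extended affine straightening rules, and a matching lower bound obtained by combining Lemma~\ref{nullstellen} with completeness of the Bethe ansatz.

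For the upper bound I would show that the family $\{[P_{\tilde\lambda}]\}_{\tilde\lambda\in\mathcal{\tilde A}_{k,n}^+}$ spans $\Bbbk[\boldsymbol{V}_{k,n}]$. Since $\{P_\lambda:\ell(\lambda)\le k\}$ is a $\Bbbk$-basis of $\Lambda_k(t)$, it is enough to reduce each such $P_\lambda$ modulo $\boldsymbol{I}_{k,n}$ to a $\Bbbk$-linear combination of the basis candidates. Using the proportionality (\ref{RPQ}) this is equivalent to reducing $R_\lambda$ to a combination of $R_{\tilde\mu}$. I proceed by induction: if $\lambda_1<n$ there is nothing to do; if $\lambda_1=n$ the reduction (\ref{PQreduction}) strips all parts of size $n$ at the cost of powers of $z$; and if $\lambda_1>n$ the affine straightening rule (\ref{straight4}) replaces $R_\lambda$ by $zR_{\lambda.\tau}$ with $\lambda.\tau=(\lambda_k+n,\lambda_1,\ldots,\lambda_{k-1})$, which is then brought back into partition form by iterating the finite straightening rules (\ref{straight1})--(\ref{straight2}). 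Since $\mathcal{A}_{k,n}^+$ is a fundamental domain for the level-$n$ action of $\mathfrak{\hat S}_k$ on $\mathcal{P}_k$ (see (\ref{alcove})), this procedure terminates after finitely many steps, producing the desired expression and yielding the upper bound $\dim_{\Bbbk}\Bbbk[\boldsymbol{V}_{k,n}]\le|\mathcal{\tilde A}_{k,n}^+|=|\mathcal{A}_{k,n}^+|$.

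For the matching lower bound I would invoke Lemma~\ref{nullstellen}: because $\boldsymbol{I}_{k,n}$ is radical, the strong Nullstellensatz over the algebraically closed Puiseux field (\ref{Puiseux}) provides a ring isomorphism $\Bbbk[\boldsymbol{V}_{k,n}]\cong\Bbbk^{|\boldsymbol{V}_{k,n}|}$, whence $\dim_{\Bbbk}\Bbbk[\boldsymbol{V}_{k,n}]=|\boldsymbol{V}_{k,n}|$. The task thus reduces to exhibiting $|\mathcal{A}_{k,n}^+|$ pairwise distinct Puiseux solutions of the system (\ref{gbae}). I would obtain them by degeneration at $t=0$, where (\ref{gbae}) collapses to the phase-model Bethe equations considered in \cite{KS} whose complete solution set is in canonical bijection with $\mathcal{A}_{k,n}^+$ via the $\widehat{\mathfrak{sl}}(n)$ Dynkin-diagram combinatorics. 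A verification that the Jacobian of (\ref{gbae}) with respect to $(e_1,\ldots,e_k)$ is non-singular at each of these $t=0$ points would then allow a unique lift to a Puiseux solution through a Newton-polygon/implicit-function argument in $\Bbbk$. Combining the two estimates forces
\[
|\mathcal{A}_{k,n}^+|\;\le\;|\boldsymbol{V}_{k,n}|\;=\;\dim_{\Bbbk}\Bbbk[\boldsymbol{V}_{k,n}]\;\le\;|\mathcal{A}_{k,n}^+|,
\]
so equality holds throughout and the spanning family from the first stage is in fact a basis, as its cardinality equals the dimension.

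The main obstacle is verifying the Jacobian non-degeneracy in the Bethe-completeness step: the $t\to 0$ degeneration is transparent, but ensuring that every $t=0$ solution lifts to exactly one Puiseux solution (with none lost and none doubled) is a nontrivial computation that encodes the representation-theoretic link between $\boldsymbol{V}_{k,n}$ and the Kirillov--Reshetikhin module $W^{1,k}$ of Proposition~\ref{KRmodule}. An attractive alternative I would pursue in parallel is to bypass the Jacobian calculation entirely by working with the faithful action of $\Bbbk[\boldsymbol{V}_{k,n}]$ on $\mathcal{F}_k^{\otimes n}$: the eigenvalue formula (\ref{ncQ'spec}) sends $[P_\lambda]$ to the operator $\boldsymbol{Q}^\prime_{\lambda^\prime}$, so a basis theorem for the Bethe vectors (\ref{Bethevec}) would directly exhibit $|\mathcal{A}_{k,n}^+|$ distinct simultaneous eigenvalues for the commuting family $\{\boldsymbol{Q}^\prime_{\lambda^\prime}\}$ and hence the required $|\mathcal{A}_{k,n}^+|$ points of $\boldsymbol{V}_{k,n}$. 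Either route constitutes the technical heart of the proof.
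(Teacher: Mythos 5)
Your spanning argument is the paper's: reduce each $P_\lambda$ via the affine straightening rule (\ref{straight4}) together with (\ref{PQreduction}), then apply the non-affine rules (\ref{straight1})--(\ref{straight2}) to land in the span of $\{[P_{\tilde\lambda}]:\tilde\lambda\in\mathcal{\tilde A}_{k,n}^+\}$. The problem is your lower bound. You propose to deduce linear independence from $|\boldsymbol{V}_{k,n}|\geq|\mathcal{A}_{k,n}^+|$, but you never actually establish that count: the Jacobian non-degeneracy of (\ref{gbae}) at the $t=0$ points, which is the whole content of the lifting step, is explicitly deferred as ``a nontrivial computation,'' and no argument is given that no solutions are lost or collide in the limit. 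Your ``attractive alternative'' via the Bethe vectors is worse than a gap --- it is circular relative to the paper's development: Theorem \ref{complete} (completeness of the Bethe ansatz, i.e.\ that $\boldsymbol{V}_{k,n}$ has exactly $|\mathcal{A}_{k,n}^+|$ points and the Bethe vectors form a basis) is \emph{derived from} the present proposition through the Hilbert-polynomial argument, so you cannot invoke either the point count or a basis theorem for the Bethe vectors here without an independent proof of one of them.

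The paper avoids this entirely with a purely algebraic independence argument that needs no information about the variety. Suppose $\sum_{\tilde\lambda}c_{\tilde\lambda}P_{\tilde\lambda}=\sum_{r=n}^{n+k-1}f_r\tilde g_r$. Since the transition matrix $M(P,g)$ is strictly lower unitriangular for the dominance order \cite[(2.16), p.~213]{Macdonald}, this relation can be rewritten as $\sum_{\tilde\lambda\in\mathcal{\tilde A}_{k,n}^+}c_{\tilde\lambda}g_{\tilde\lambda}=\sum_{\mu\notin\mathcal{\tilde A}_{k,n}^+}d_\mu\tilde g_\mu$. The elements $\tilde g_r$ are algebraically independent (the deformation $g_{n+r}\mapsto g_{n+r}+zt^kg_r(t^{-1})$ mixes generators of different degrees), so the monomials $\tilde g_\lambda$ are linearly independent, and $\tilde g_{\tilde\lambda}=g_{\tilde\lambda}$ for $\tilde\lambda\in\mathcal{\tilde A}_{k,n}^+$; comparing coefficients forces all $c_{\tilde\lambda}=0$. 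If you want to keep your overall architecture, you must either carry out the Jacobian/Newton-polygon analysis in full (which would amount to an independent proof of completeness, not available elsewhere in the paper), or replace your lower bound with an argument of this algebraic type.
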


\begin{proof}
The proof is a generalisation of the one in \cite[Theorem 6.20, Claim 4]{KS}.
Recall that the Hall-Littlewood functions $\{P_{\lambda }:\lambda $
partition with $\ell (\lambda )\leq k\}$ form a $\mathbb{Z}[t]$-basis of $%
\mathbb{Z}[t][e_{1},\ldots ,e_{k}]$; this follows from \cite[(2.7), p. 209]%
{Macdonald} and projecting onto the ring of symmetric function with $k$%
-variables by setting $e_{r}=0$ for $r>k$. Denote by $[P_{\lambda
}]:=P_{\lambda }+\boldsymbol{I}_{k,n}$ the equivalence class of $P_{\lambda
} $ in $\Bbbk \lbrack \boldsymbol{V}_{k,n}]$. Then it follows from our
previous lemma that for any partition $\lambda \in \mathcal{P}^+_{k}$ there
exists a $\mu \in \mathfrak{S}_{k}\mathcal{\tilde{A}}_{k,n}^{+}$ such that $%
[P_{\lambda }]=[P_{\mu }]$. Using the (non-affine) straightening rules (\ref%
{straight1}) and (\ref{straight2}) $P_{\mu }$ can be written as $\mathbb{Z}%
[t]$-linear combination of $P_{\tilde{\nu}}$'s with $\tilde{\nu}\in \mathcal{%
\tilde{A}}_{k,n}^{+}$. Thus, we have that the vector space dimension of $%
\Bbbk \lbrack \boldsymbol{V}_{k,n}]$ must be smaller or equal than $|%
\mathcal{\tilde{A}}_{k,n}^{+}|=|\mathcal{A}_{k,n}^{+}|$.

Now assume that $0=\sum_{\tilde{\lambda}\in \mathcal{\tilde{A}}_{k,n}^{+}}c_{%
\tilde{\lambda}}[P_{\tilde{\lambda}}]$ which is equivalent to $\sum_{\tilde{%
\lambda}\in \mathcal{\tilde{A}}_{k,n}^{+}}c_{\tilde{\lambda}}P_{\tilde{%
\lambda}}=\sum_{r=n}^{n+k-1}f_{r}\tilde{g}_{r}$ for some $f_{r}\in \mathbb{%
\Bbbk }[e_{1},\ldots ,e_{k}]$. The transition matrix between the basis $%
\{P_{\lambda }\}$ and the basis $\{g_{\lambda }\}$ in the ring of symmetric
functions is strictly lower unitriangular with respect to the natural or
dominance partial ordering \cite[(2.16), p. 213]{Macdonald}, whence we can
conclude that the last equality can be written as $\sum_{\tilde{\lambda}\in
\mathcal{\tilde{A}}_{k,n}^{+}}c_{\tilde{\lambda}}g_{\tilde{\lambda}%
}=\sum_{\mu \notin \mathcal{\tilde{A}}_{k,n}^{+}}d_{\mu }\tilde{g}_{\mu }$
for some $d_{\mu }\in \mathbb{\Bbbk }$. Recall that the $\tilde{g}_{\lambda
} $'s are also linearly independent and that $\tilde{g}_{\tilde{\lambda}}=g_{%
\tilde{\lambda}}$ for $\tilde{\lambda}\in \mathcal{\tilde{A}}_{k,n}^{+}$.
Thus, $c_{\tilde{\lambda}}=0$ for all $\tilde{\lambda}\in \mathcal{\tilde{A}}%
_{k,n}^{+}$ which establishes linear independence of the $P_{\tilde{\lambda}%
} $ with $\tilde{\lambda}\in \mathcal{\tilde{A}}_{k,n}^{+}$ in $\Bbbk
\lbrack \boldsymbol{V}_{k,n}]$.
\end{proof}

We are now ready to prove the main statement of this section.\

\begin{theorem}[completeness of the Bethe ansatz]
\label{complete}Let $t$ be an indeterminate and set $z=1$. \textbf{(1)}
There is a bijection $\sigma \mapsto y_{\sigma }=(y_{1},...,y_{k})\in
\mathbb{\Bbbk }^{k}$ between partitions in $\mathcal{A}_{k,n}^{+}$ and
solutions\ to the Bethe ansatz equations (\ref{bae}). We denote the Bethe
vector (\ref{Bethevec}) corresponding to $\sigma \in \mathcal{A}_{k,n}^{+}$
by $\mathfrak{b}_{\sigma }:=\mathfrak{b}(y_{\sigma })$. \textbf{(2)} The
Bethe states $\{\mathfrak{b}_{\sigma }~|~\sigma \in \mathcal{A}_{k,n}^{+}\}$
provide an orthogonal basis of $\mathcal{F}_{k}^{\otimes n}\otimes _{\mathbb{%
C}(t)}\mathbb{\Bbbk }$, i.e. one has the identity%
\begin{equation*}
\langle \mathfrak{b}_{\rho }|\mathfrak{b}_{\sigma }\rangle =\sum_{\lambda
\in \mathcal{A}_{k,n}^{+}}Q_{\lambda }(y_{\rho };t)P_{\lambda }(\bar{y}%
_{\sigma };t)=0,
\end{equation*}%
where $\rho \neq \sigma $ are two distinct partitions in $\mathcal{A}%
_{k,n}^{+}$.
\end{theorem}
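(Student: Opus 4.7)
The plan combines the structural description of the coordinate ring $\Bbbk[\boldsymbol{V}_{k,n}]$ established just before the theorem with a Hensel-type deformation in $t$. Lemma~\ref{nullstellen} shows that $\boldsymbol{I}_{k,n}$ is radical, while the preceding proposition identifies $\Bbbk[\boldsymbol{V}_{k,n}]$ as a finite-dimensional commutative $\Bbbk$-algebra of dimension $|\mathcal{A}_{k,n}^{+}|$ over the algebraically closed Puiseux field $\Bbbk$. The standard structure theorem (Chinese Remainder) therefore yields $\Bbbk[\boldsymbol{V}_{k,n}] \cong \Bbbk^{|\mathcal{A}_{k,n}^{+}|}$, so that $\boldsymbol{V}_{k,n}$ is a reduced $0$-dimensional variety consisting of exactly $|\mathcal{A}_{k,n}^{+}|$ points. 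Each such point $\epsilon=(\epsilon_1,\ldots,\epsilon_k)$ determines an unordered $k$-tuple of Bethe roots via the characteristic polynomial $\sum_{r=0}^k(-1)^r\epsilon_r z^{k-r}$, and hence a well-defined Bethe vector $\mathfrak{b}(y)$.

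To produce the explicit bijection $\sigma\leftrightarrow y_\sigma$, I would specialise to $t=0$, where (\ref{bae}) degenerates to $y_i^n=z$ with the $y_i$ pairwise distinct. The $\mathfrak{S}_k$-orbits of such tuples biject with $k$-element subsets of the $n$th roots of $z$, which in turn are parametrised by $\mathcal{A}_{k,n}^{+}$ via the identification used in Theorem~\ref{KSThm1} for the combinatorial Verlinde algebra. Provided each $t=0$ solution is simple -- verified by computing the Jacobian of the system (\ref{gbae}) in elementary symmetric coordinates, whose invertibility follows from the Vandermonde-type structure induced by the distinct $n$th roots -- Hensel's lemma over $\Bbbk$ produces a unique lift $y_\sigma\in\Bbbk^k$ for each $t=0$ solution. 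Since the number of such lifts already matches $|\boldsymbol{V}_{k,n}|$, they exhaust the variety, and part~(1) follows.

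For part~(2), fix $\rho\neq\sigma$ in $\mathcal{A}_{k,n}^{+}$. By (\ref{ncEspec}) both $\mathfrak{b}_\rho$ and $\mathfrak{b}_\sigma$ are joint eigenvectors of the commuting family $\{\boldsymbol{e}_r\}$; because the unordered root sets $\{y_{\rho,i}\}$ and $\{y_{\sigma,i}\}$ differ, the rational-function eigenvalue of $\boldsymbol{E}(u)$ on $\mathfrak{b}_\rho$ differs from that on $\mathfrak{b}_\sigma$, so there is at least one $r$ with distinct scalar eigenvalues $E_\rho^{(r)}\neq E_\sigma^{(r)}$. The adjoint relation $\boldsymbol{E}^{\ast}(u)=z^{-1}u^n\boldsymbol{E}(u^{-1})$ recorded in the corollary following Proposition~\ref{YBPieri}, combined with the assumption $\bar{z}=z^{-1}$, gives $\boldsymbol{e}_r^{\ast}=z^{-1}\boldsymbol{e}_{n-r}$; applying the usual two-sided eigenvalue argument to $\langle\mathfrak{b}_\rho|\boldsymbol{e}_r|\mathfrak{b}_\sigma\rangle$ at such an $r$ forces $\langle\mathfrak{b}_\rho|\mathfrak{b}_\sigma\rangle=0$. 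The rewriting of the inner product in the bilinear form $\sum_\lambda Q_\lambda(y_\rho;t)P_\lambda(\bar{y}_\sigma;t)$ then follows from the expansion (\ref{Bethevec}), the bra-ket pairing defined via the isomorphism $\imath:\tilde{\mathcal{F}}^{\otimes n}\to\mathcal{F}^{\otimes n}$ introduced in Section~3, and the Macdonald duality $\langle Q_\lambda,P_\mu\rangle_{t}=\delta_{\lambda\mu}$ at $q=0$. Since the $|\mathcal{A}_{k,n}^{+}|=\dim\mathcal{F}_k^{\otimes n}$ Bethe vectors are pairwise orthogonal and nonzero (nonvanishing follows by continuity from the $t=0$ limit, where it is part of the combinatorial Verlinde-algebra construction), they form a basis.

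The principal obstacle I anticipate is the nondegeneracy step at $t=0$ needed for Hensel's lemma: one must rule out both the coalescence of distinct $t=0$ solutions and the escape of solutions to infinity as $t$ moves off zero, so that the classical count genuinely exhausts $\boldsymbol{V}_{k,n}$. The Puiseux nature of $\Bbbk$ is essential in this argument, as it accommodates the fractional powers of $t$ that may appear in the deformation expansions $y_\sigma(t)=y_\sigma(0)+\sum_{j\geq 1}c_j t^{j/N}$; together with the separation of the joint spectrum $\{E_\sigma^{(r)}\}_r$ established above, this seals the identification of the Bethe vectors with an orthogonal basis.
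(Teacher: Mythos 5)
Your central counting argument is essentially the paper's: you use Lemma \ref{nullstellen} (radicality of $\boldsymbol{I}_{k,n}$) together with the fact that $\Bbbk[\boldsymbol{V}_{k,n}]$ is a finite-dimensional reduced commutative algebra over the algebraically closed field $\Bbbk$, so it splits as $\Bbbk^{N}$ with $N=|\boldsymbol{V}_{k,n}|=\dim_{\Bbbk}\Bbbk[\boldsymbol{V}_{k,n}]=|\mathcal{A}_{k,n}^{+}|$. The paper reaches the same conclusion via the affine Hilbert polynomial of $\boldsymbol{I}_{k,n}$ (degree zero, value equal to the vector-space dimension, nonzero hence $\boldsymbol{V}_{k,n}\neq\varnothing$); your Chinese-Remainder route is equivalent and arguably cleaner. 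Your orthogonality argument is also the paper's, fleshed out correctly: the joint eigenvalues of the commuting family separate the points of $\boldsymbol{V}_{k,n}$, and the adjoint relation $\boldsymbol{E}^{\ast}(u)=z^{-1}u^{n}\boldsymbol{E}(u^{-1})$ turns this into vanishing of the inner product. (For completeness you should replace the ``continuity from $t=0$'' justification of $\mathfrak{b}_{\sigma}\neq 0$ by the direct observation that the coefficient of $|n^{k}\rangle$ in $\mathfrak{b}_{\sigma}$ is $Q_{(n^{k})}(y_{\sigma}^{-1};t)=(t)_{k}\neq 0$ by (\ref{PQreduction}).)

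The Hensel-lifting construction of an ``explicit'' bijection, however, contains a concrete error and is in any case unnecessary. At $t=0$ the equations (\ref{bae}) do \emph{not} degenerate to $y_{i}^{n}=z$: the left-hand product contributes $\prod_{j\neq i}y_{i}/(y_{i}-y_{j})$ and the right-hand one $\prod_{j\neq i}(-y_{j})/(y_{i}-y_{j})$, so after cancelling the Vandermonde factors one gets $y_{i}^{n+k}=(-1)^{k-1}z\,e_{k}(y)$, i.e.\ the Bethe roots are distinct $(n+k)$-th roots of a common constant (the Kac--Peterson/Verlinde parametrisation used in \cite{KS}), not $n$-th roots of $z$. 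In particular the set of $k$-subsets of $n$-th roots of $z$ has the wrong cardinality $\binom{n}{k}\neq|\mathcal{A}_{k,n}^{+}|$, so your proposed classical solution set cannot exhaust $\boldsymbol{V}_{k,n}|_{t=0}$ and the lifting argument cannot close. Since the theorem only asserts the \emph{existence} of a bijection, which already follows from the equality of cardinalities established in your first paragraph, you should simply delete the Hensel step (together with the acknowledged nondegeneracy obstacle it creates); with that excision the proof is complete and coincides in substance with the paper's.
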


\begin{proof}
Without loss of generality we can set again $z=1$. It follows from our
previous lemma, $\boldsymbol{I}(\boldsymbol{V}_{k,n})=\boldsymbol{I}_{k,n}$,
that the dimension of the algebraic variety $\boldsymbol{V}_{k,n}$ equals
the degree of the (affine) Hilbert polynomial of $\boldsymbol{I}_{k,n}$; see
e.g. \cite[Definitions 5,7 and Theorem 8, Chapter 9, \S 3, pp. 459-461]{CLOS}%
. But since $\mathbb{\Bbbk }[e_{1},\ldots ,e_{k}]/\boldsymbol{I}_{k,n}=\Bbbk
\lbrack \boldsymbol{V}_{k,n}]$ has finite dimension as a vector space, the
affine Hilbert polynomial of $\boldsymbol{I}_{k,n}$ is of degree zero, i.e.
a constant and this constant equals the vector space dimension \cite[Chapter
9, \S 4, Ex 10, p.475]{CLOS} which we showed to be $|\mathcal{A}_{k,n}^{+}|$%
. Furthermore, $\boldsymbol{V}_{k,n}$ is non-empty. To see this assume first
that $\boldsymbol{V}_{k,n}=\varnothing $. Then the Hilbert polynomial of $%
\boldsymbol{I}(\boldsymbol{V}_{k,n})$ would be the zero polynomial \cite[p.
461]{CLOS}, but we have just seen that it is a nonzero constant. Hence, we
can conclude that $\boldsymbol{V}_{k,n}$ consists of finitely many points
and, furthermore, $|\boldsymbol{V}_{k,n}|$ equals the (constant) Hilbert
polynomial of $\Bbbk \lbrack \boldsymbol{V}_{k,n}]$; see for e.g. \cite[%
Chapter 9, \S 4, Prop 6, p. 471 and Ex 11, p. 475]{CLOS}. Hence, we arrive
at $|\boldsymbol{V}_{k,n}|=|\mathcal{A}_{k,n}^{+}|$ meaning that there are
as many distinct solutions to the Bethe ansatz equations as the dimension of
the subspace $\subset \mathcal{F}^{\otimes n}$ spanned by $\{|\lambda
\rangle :\lambda \in \mathcal{A}_{k,n}^{+}\}$.

The second claim, that the Bethe vectors are orthogonal, now follows from
observing that the eigenvalues of the transfer matrices (\ref{ncE}), (\ref%
{ncG'}) separate points, which means that for $\rho ,\sigma \in \mathcal{A}%
_{k,n}^{+}$ with $\rho \neq \sigma $ the corresponding eigenvalues have to
be different. From this fact one now easily deduces that the corresponding
scalar product between the eigenvectors has to vanish.
\end{proof}

We conclude this section by stating two more technical results which are
related to the behaviour of the Bethe roots under taking the inverse and
complex conjugation. They will be needed when introducing a Frobenius
structure on the coordinate ring $\Bbbk \lbrack \boldsymbol{V}_{k,n}]$ in
the next section.

\begin{lemma}[Inversion property]
Let $\lambda \in \mathcal{A}_{k,n}^{+}$ and $y=(y_{1},\ldots ,y_{k})$ be a
solution to the Bethe ansatz equations. Then
\begin{equation}
R_{\lambda }(y;t)=z^{k}R_{(n-\lambda _{k},\ldots ,n-\lambda _{2},n-\lambda
_{1})}(y^{-1};t)  \label{dualR}
\end{equation}%
and, thus, setting $z=1$ we have $P_{\lambda }(y;t)=P_{\lambda ^{\ast
}}(y^{-1};t)$ as well as $Q_{\lambda }(y;t)=Q_{\lambda ^{\ast }}(y^{-1};t)$
with $\lambda ^{\ast }$ being the inverse image of $(n-\lambda _{k},\ldots
,n-\lambda _{2},n-\lambda _{1})\in \mathcal{\tilde{A}}_{k,n}^{+}$ under the
bijection $\symbol{126}:\mathcal{A}_{k,n}^{+}\rightarrow \mathcal{\tilde{A}}%
_{k,n}^{+}$.
\end{lemma}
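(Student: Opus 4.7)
\medskip
\noindent\textbf{Proof proposal.}
The plan is to start from the explicit formula (\ref{HLR}) for $R_\mu$ as a sum over $\mathfrak{S}_k$, perform the substitution $y\mapsto y^{-1}$ in that formula, and collapse the resulting expression using the Bethe ansatz equations (\ref{bae}). The assertions about $P_\lambda$ and $Q_\lambda$ will then be a bookkeeping translation via (\ref{RPQ}) and the reduction identity (\ref{PQreduction}).

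For the key computation I would first verify that the Vandermonde-type factor $\theta(x;t):=\prod_{i<j}(x_i-tx_j)/(x_i-x_j)$ satisfies
$$\theta(y^{-1};t)=\prod_{i<j}\frac{y_j-ty_i}{y_j-y_i}=w_k(\theta)(y;t),$$
where $w_k$ denotes the longest element of $\mathfrak{S}_k$; this follows by clearing denominators in $y_iy_j$ and reindexing $i'=k+1-j$, $j'=k+1-i$. Substituting into (\ref{HLR}) with $\mu=(n-\lambda_k,\ldots,n-\lambda_1)$ and reparameterising the sum by $w':=ww_k$, the composition $\mu$ is reversed (so that $\mu_{k+1-j}=n-\lambda_j$), and one finds
$$R_\mu(y^{-1};t)=\sum_{w'\in\mathfrak{S}_k}\prod_{j=1}^{k}y_{w'(j)}^{\lambda_j-n}\,w'(\theta(y;t))=(y_1\cdots y_k)^{-n}\,R_\lambda(y;t),$$
since $\prod_{j}y_{w'(j)}^{-n}=(y_1\cdots y_k)^{-n}$ is independent of $w'$ and factors out of the sum. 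Multiplying the Bethe equations (\ref{bae}) over $i=1,\ldots,k$ and observing that the double products satisfy $\prod_{i\neq j}(y_it-y_j)=\prod_{i\neq j}(y_i-y_jt)$ (pair up the factors by swapping $i\leftrightarrow j$ and cancelling signs), one obtains $(y_1\cdots y_k)^n=z^k$. Combining the two displays yields (\ref{dualR}).

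For the corollary one sets $z=1$ and converts both sides of (\ref{dualR}) using (\ref{RPQ}). A direct comparison of multiplicities, using $m_0(\tilde{\lambda^\ast})=m_n(\lambda)$ and $m_i(\tilde{\lambda^\ast})=m_{n-i}(\lambda)$ for $i\geq 1$, shows that the normalising products $\prod_i(t)_{m_i(\cdot)}$ appearing on the two sides match, giving $P_\lambda(y;t)=P_{\tilde{\lambda^\ast}}(y^{-1};t)$ and $Q_\lambda(y;t)=(t)_{m_n(\lambda)}\,Q_{\tilde{\lambda^\ast}}(y^{-1};t)$. Since $y^{-1}$ also belongs to the Bethe variety $\boldsymbol{V}_{k,n}$ at $z=1$, and since $m_n(\lambda^\ast)=m_n(\lambda)$, the reduction identities (\ref{PQreduction}) allow one to promote $\tilde{\lambda^\ast}$ to $\lambda^\ast$ on the right, yielding the asserted forms of $P_\lambda$ and $Q_\lambda$. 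The main obstacle is precisely this bookkeeping: tracking how the $w_k$-twist generated by $y\mapsto y^{-1}$ interacts with the $b$-type prefactors converting $R$ into $P$ and $Q$. The algebraic heart of the argument, on the other hand, is the clean identity $(y_1\cdots y_k)^n=z^k$ extracted from the Bethe ansatz equations.
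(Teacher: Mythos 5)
Your proposal is correct and follows essentially the same route as the paper's own proof: the $w_{k}$-twist identity for the factor $\prod_{i<j}(y_{i}-ty_{j})/(y_{i}-y_{j})$ under $y\mapsto y^{-1}$, the relation $(y_{1}\cdots y_{k})^{n}=z^{k}$ extracted from the Bethe ansatz equations (\ref{bae}), and the conversion to $P_{\lambda}$ and $Q_{\lambda}$ via (\ref{RPQ}) together with the reduction identity (\ref{PQreduction}). Your write-up is in fact more explicit than the paper's (which merely asserts $(y_{1}\cdots y_{k})^{n}=z^{k}$ and compresses the normalisation bookkeeping), but there is no substantive difference in approach.
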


\begin{proof}
From the Bethe ansatz equations we have that $y_{1}^{n}\cdots
y_{k}^{n}=z^{k} $. Furthermore, one easily verifies that%
\begin{equation*}
w_{k}\left( \prod_{1\leq i<j\leq k}\frac{y_{i}-ty_{j}}{y_{i}-y_{j}}\right)
=\prod_{1\leq i<j\leq k}\frac{y_{i}t-y_{j}}{y_{i}-y_{j}}=\prod_{1\leq
i<j\leq k}\frac{y_{i}^{-1}-ty_{j}^{-1}}{y_{i}^{-1}-y_{j}^{-1}}\;.
\end{equation*}%
Hence, we arrive at the identity%
\begin{equation*}
y^{\lambda }\prod_{1\leq i<j\leq k}\frac{y_{i}-ty_{j}}{y_{i}-y_{j}}%
=w_{k}\left( y_{1}^{\lambda _{k}-n}\cdots y_{k-1}^{\lambda
_{2}-n}y_{k}^{\lambda _{1}-n}\prod_{1\leq i<j\leq k}\frac{y_{i}t-y_{j}}{%
y_{i}-y_{j}}\right)
\end{equation*}%
which gives the desired equation for $R_{\lambda }$. Exploiting the
definition (\ref{RPQ}) of $P_{\lambda }$ in terms of $R_{\lambda }$, the
remaining identities follow from $P_{\lambda }(y;t)=z^{m_{n}(\lambda )}P_{%
\tilde{\lambda}}(y;t)$ and the obvious fact that $b_{\lambda }(t)=b_{\lambda
^{\ast }}(t)$.
\end{proof}

\begin{lemma}[dual Bethe vectors]
Assume (\ref{tzassumption}) holds. For any solution $y=(y_{1},\ldots ,y_{k})$
of the Bethe ansatz equations we have that $P_{\tilde{\lambda}}(\bar{y}%
^{-1};t)=P_{\tilde{\lambda}}(y;t)$. This in particular, implies that the
dual Bethe vectors are given by%
\begin{equation}
\mathfrak{b}_{\sigma }^{\ast }=\frac{1}{||\mathfrak{b}_{\sigma }||^{2}}%
\sum_{\lambda \in \mathcal{A}_{k,n}^{+}}P_{\tilde{\lambda}}(y;t)\langle
\lambda |,\qquad ||\mathfrak{b}_{\sigma }||^{2}=\sum_{\lambda \in \mathcal{A}%
_{k,n}^{+}}Q_{\lambda }(y_{\sigma };t)P_{\lambda }(\bar{y}_{\sigma };t),
\label{dualBethevec}
\end{equation}%
that is, we have the identity $\left\langle \mathfrak{b}_{\rho }^{\ast }|%
\mathfrak{b}_{\sigma }\right\rangle =\delta _{\rho \sigma }$.
\end{lemma}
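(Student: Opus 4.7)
The plan is to first establish the symmetric-function identity $P_{\tilde\lambda}(\bar y^{-1};t)=P_{\tilde\lambda}(y;t)$ by showing that the multisets $\{y_i\}$ and $\{\bar y_i^{-1}\}$ coincide for any Bethe ansatz solution $y$; the explicit formula for the dual Bethe vectors will then follow by matching the proposed expression against the orthogonality relation of Theorem~\ref{complete}.

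The key observation is that under (\ref{tzassumption}) the operators $z^{-1/2}H_r^{+}$ and $z^{-1/2}H_r^{-}$ are, respectively, Hermitian and anti-Hermitian, as noted in the remark preceding Theorem~\ref{complete}. Their eigenvalues on $\mathfrak{b}_\sigma$ given by (\ref{IoMspec}) are therefore real and purely imaginary, i.e.\ each equals (resp.\ the negative of) its own complex conjugate. Conjugating using $\bar t=t$ and $\bar z^{\pm 1/2}=z^{\mp 1/2}$ produces the pair of equations
\[
z^{-1/2}g_r(y_\sigma;t)\pm z^{1/2}g_r(y_\sigma^{-1};t)=\pm z^{1/2}g_r(\bar y_\sigma;t)+z^{-1/2}g_r(\bar y_\sigma^{-1};t),
\]
and adding them yields $g_r(y_\sigma;t)=g_r(\bar y_\sigma^{-1};t)$ for every $r\ge 1$. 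Since $\{g_r(\,\cdot\,;t)\}_{r\ge 1}$ algebraically generates the ring of symmetric polynomials in $k$ variables, this forces the multisets $\{y_{\sigma,i}\}$ and $\{\bar y_{\sigma,i}^{-1}\}$ to be equal, and the asserted identity $P_{\tilde\lambda}(\bar y^{-1};t)=P_{\tilde\lambda}(y;t)$ is immediate from the symmetry of $P_{\tilde\lambda}$ in its arguments.

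For the dual Bethe vector formula I would verify $\langle\mathfrak{b}_\rho^{\ast}|\mathfrak{b}_\sigma\rangle=\delta_{\rho\sigma}$ by direct computation. Substituting the proposed $\mathfrak{b}_\rho^{\ast}$ yields
\[
\langle\mathfrak{b}_\rho^{\ast}|\mathfrak{b}_\sigma\rangle=\frac{1}{\|\mathfrak{b}_\rho\|^{2}}\sum_{\lambda\in\mathcal{A}_{k,n}^{+}}P_{\tilde\lambda}(y_\rho;t)\,Q_\lambda(y_\sigma;t).
\]
Replacing $P_{\tilde\lambda}(y_\rho;t)$ by $P_{\tilde\lambda}(\bar y_\rho^{-1};t)$ via the identity just established and then converting this into $P_\lambda(\bar y_\rho;t)$ through the inversion property (\ref{dualR}) together with the reduction (\ref{PQreduction}) transforms the right-hand side into the orthogonality sum $\sum_\lambda Q_\lambda(y_\rho;t)P_\lambda(\bar y_\sigma;t)$ of Theorem~\ref{complete}(2), which equals $\delta_{\rho\sigma}\|\mathfrak{b}_\rho\|^{2}$ and cancels the normalization. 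The main obstacle will be the careful bookkeeping of $z^{m_n(\lambda)}$-factors and of the bijection $\mathcal{A}_{k,n}^{+}\leftrightarrow\tilde{\mathcal{A}}_{k,n}^{+}$ across this chain of substitutions.
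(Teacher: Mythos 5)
Your overall strategy is the one the paper uses: deduce $g_{r}(y;t)=g_{r}(\bar{y}^{-1};t)$ from the (anti-)Hermiticity of $z^{-1/2}H_{r}^{\pm}$ under (\ref{tzassumption}), conclude that $P_{\tilde{\lambda}}$ takes the same value at $y$ and $\bar{y}^{-1}$, and then reduce $\langle\mathfrak{b}_{\rho}^{\ast}|\mathfrak{b}_{\sigma}\rangle=\delta_{\rho\sigma}$ to the orthogonality of Theorem \ref{complete} via (\ref{dualR}) and (\ref{PQreduction}). The second half of your argument is fine modulo the bookkeeping you yourself flag (the expansion of $\mathfrak{b}_{\sigma}$ carries $Q_{\lambda}(y_{\sigma}^{-1};t)$, not $Q_{\lambda}(y_{\sigma};t)$, and (\ref{dualR}) introduces a reindexing $\lambda\mapsto\lambda^{\ast}$ of the sum); that chain does close up.

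The genuine gap is the clause ``for every $r\geq 1$''. The Hamiltonians (\ref{IoM}) exist only for $1\leq r\leq n/2$, and their eigenvalues (\ref{IoMspec}) involve only $g_{1},\ldots,g_{n-1}$ (via $g_{n-r}(y;t)=zg_{r}(y^{-1};t)$), so the Hermiticity argument delivers $g_{r}(y;t)=g_{r}(\bar{y}^{-1};t)$ only for $1\leq r\leq n-1$. Since $\Lambda_{k}(t)=\Bbbk[g_{1},\ldots,g_{k}]$, this determines the multiset $\{y_{i}\}$ (equivalently, the value of every $P_{\tilde{\lambda}}$) only when $k\leq n-1$; the paper allows $k\geq n$, in which case $g_{n},\ldots,g_{k}$ are not reached by what you have written and your multiset conclusion does not yet follow. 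The paper closes this by an induction on $r$ using the Bethe equations in the form (\ref{gbae}), $g_{n+r}(y;t)=-zt^{k}g_{r}(y;t^{-1})$, together with (\ref{greduction}), which identifies $t^{k}g_{r}(y;t^{-1})$ up to a constant with the symmetric polynomial $R_{(0,\ldots,0,r)}(y;t)$ of degree $r$, hence with a polynomial in $g_{1}(y;t),\ldots,g_{\min(r,k)}(y;t)$; the conjugation identity therefore propagates from $r\leq n-1$ to all $r$. Inserting this one step makes your proof complete (your passage to equality of multisets is then a harmless strengthening of the paper's conclusion, which only uses that $P_{\tilde{\lambda}}$ is a polynomial in the $g_{r}$'s).
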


\begin{proof}
Assume (\ref{tzassumption}) holds. Then we can assume without loss of
generality that $z=1$. It then follows from (\ref{Epieri}) that the quantum
Hamiltonians $H_{r}^{\pm }$ defined in (\ref{IoM}) are (anti-)Hermitian and,
hence, we infer from (\ref{IoMspec}) that $g_{r}(y;t)=g_{r}(\bar{y}^{-1};t)$
for $r=1,\ldots ,n-1$. Observing that $t^{k}g_{r}(y;t^{-1})$ is a polynomial
of the $g_{r}(y;t)$'s for $r=1,\ldots ,n-1$ (this follows from (\ref%
{greduction})) we can employ (\ref{gbae}) to conclude that $g_{r}(y;t)=g_{r}(%
\bar{y}^{-1};t)$\ for $r\geq n$. Hence, we find $Q_{\tilde{\lambda}}(y;t)=Q_{%
\tilde{\lambda}}(\bar{y}^{-1};t)$ and $P_{\tilde{\lambda}}(\bar{y}%
^{-1};t)=P_{\tilde{\lambda}}(y;t)$ since the latter are polynomials in the $%
g_{r}$'s. Noting that $Q_{\tilde{\lambda}},P_{\tilde{\lambda}}$ are
homogeneous functions of degree $|\tilde{\lambda}|$ the $z$-dependence is
easily re-introduced.
\end{proof}

\subsection{Deformed fusion coefficients and Frobenius structures}

This section will see the formulation of the main result of this article:
there exists a natural Frobenius algebra structure on the
Kirillov-Reshetikhin module $W^{1,k}$ which we have previously identified
with the $k$-particle Fock space $\mathcal{F}_{k}^{\otimes n}$; see
Proposition \ref{KRmodule}. We are going to extend the base field from $%
\mathbb{C}(t)$ to $\mathbb{\Bbbk }$ as this will allow us to include the
idempotents.

\begin{theorem}[deformed Verlinde algebra]
\label{Frobenius}Let $\mathfrak{F}_{n,k}:=\mathcal{F}_{k}^{\otimes n}\otimes
_{\mathbb{C}(t)}\mathbb{\Bbbk }$ and set $z=1$. Define for $\lambda ,\mu \in
\mathcal{A}_{k,n}^{+}$ the product%
\begin{equation}
|\lambda \rangle \circledast |\mu \rangle :=\boldsymbol{Q}_{\lambda ^{\prime
}}^{\prime }|\mu \rangle  \label{Fproduct}
\end{equation}%
and the bilinear form $\eta :\mathfrak{F}_{n,k}\otimes \mathfrak{F}%
_{n,k}\rightarrow \mathbb{\Bbbk }$%
\begin{equation}
\eta (|\lambda \rangle \otimes |\mu \rangle ):=\delta _{\lambda \mu ^{\ast
}}/b_{\lambda }(t)\;.  \label{eta}
\end{equation}%
Then $(\mathfrak{F}_{n,k},\circledast ,\eta )$ is a commutative Frobenius
algebra with unit $|n^{k}\rangle $.
\end{theorem}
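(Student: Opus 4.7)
The plan is to transport the commutative algebra structure from the operator subalgebra $\mathbf{B}_{n,k} \subset \mathrm{End}_{\Bbbk}\mathfrak{F}_{n,k}$ generated by $\{\boldsymbol{Q}'_{\lambda'}\}_{\lambda}$ onto $\mathfrak{F}_{n,k}$ using the highest weight vector $|n^{k}\rangle$ as a cyclic vector, and then establish the Frobenius identity by simultaneous diagonalisation of both $\circledast$ and $\eta$ in the Bethe basis.

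First, I would note that each $\boldsymbol{Q}'_{\lambda'}$ preserves $\mathfrak{F}_{n,k}$ (being polynomial in the particle-number-preserving $a_{i}$'s) and that the $\boldsymbol{Q}'_{\lambda'}$ commute pairwise by the integrability results of Section 3. Since by the deformed fusion matrices theorem stated earlier $\boldsymbol{Q}'_{\lambda'}|n^{k}\rangle=|\lambda\rangle$, the $\Bbbk$-linear map $\Phi\colon \mathbf{B}_{n,k}\to\mathfrak{F}_{n,k}$, $X\mapsto X|n^{k}\rangle$, is surjective. Theorem \ref{complete} furnishes a basis $\{\mathfrak{b}_{\sigma}\}_{\sigma\in\mathcal{A}_{k,n}^{+}}$ diagonalising all $\boldsymbol{Q}'_{\lambda'}$ with eigenvalues $P_{\lambda}(y_{\sigma};t)$ via (\ref{ncQ'spec}); the expansion $|n^{k}\rangle=\sum_{\sigma}c_{\sigma}\mathfrak{b}_{\sigma}$ has all $c_{\sigma}\neq 0$ (from (\ref{dualBethevec}), using $P_{\varnothing}=1$), so $\Phi$ is also injective. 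The induced product is
\begin{equation*}
|\lambda\rangle\circledast|\mu\rangle=\boldsymbol{Q}'_{\lambda'}\boldsymbol{Q}'_{\mu'}|n^{k}\rangle=\boldsymbol{Q}'_{\lambda'}|\mu\rangle,
\end{equation*}
which matches (\ref{Fproduct}); commutativity and associativity follow from commutativity of the $\boldsymbol{Q}'_{\lambda'}$ and associativity of operator composition. For the unit, the eigenvalue of $\boldsymbol{Q}'_{(n^{k})'}=\boldsymbol{Q}'_{k^{n}}$ on $\mathfrak{b}_{\sigma}$ equals $P_{n^{k}}(y_{\sigma};t)=(y_{\sigma,1}\cdots y_{\sigma,k})^{n}=z^{k}=1$, as obtained by multiplying all $k$ Bethe ansatz equations (\ref{bae}); thus $\boldsymbol{Q}'_{k^{n}}$ is the identity on $\mathfrak{F}_{n,k}$ and $|n^{k}\rangle$ is the unit.

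Next, for the bilinear form (\ref{eta}), symmetry reduces to $b_{\lambda^{\ast}}(t)=b_{\lambda}(t)$, which is immediate because the $\ast$-involution sends $m_{i}(\lambda)\mapsto m_{n-i}(\lambda^{\ast})$ (after removing $n$-parts) and $b_{\lambda}(t)=\prod_{i}(t)_{m_{i}(\lambda)}$ depends only on the multiset of multiplicities; non-degeneracy is clear since the Gram matrix of $\eta$ in the basis $\{|\lambda\rangle\}$ is a signed permutation with diagonal entries $1/b_{\lambda}(t)$ after reordering by $\mu=\lambda^{\ast}$.

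The Frobenius identity is the main obstacle. Unwinding the definitions, one must show
\begin{equation*}
b_{\mu}(t)\,\langle\nu^{\ast}|\boldsymbol{Q}'_{\lambda'}|\mu\rangle=b_{\nu}(t)\,\langle\mu^{\ast}|\boldsymbol{Q}'_{\lambda'}|\nu\rangle\qquad\forall\;\lambda,\mu,\nu\in\mathcal{A}_{k,n}^{+}.
\end{equation*}
My strategy is to diagonalise $\eta$ in the Bethe basis: using the inversion identity (\ref{dualR}) in the form $Q_{\lambda}(y^{-1};t)=Q_{\lambda^{\ast}}(y;t)$ (at $z=1$) together with the expansion (\ref{Bethevec}), one computes
\begin{equation*}
\eta(\mathfrak{b}_{\rho},\mathfrak{b}_{\sigma})=\sum_{\lambda\in\mathcal{A}_{k,n}^{+}}\frac{Q_{\lambda}(y_{\rho}^{-1};t)\,Q_{\lambda^{\ast}}(y_{\sigma}^{-1};t)}{b_{\lambda}(t)}=\sum_{\lambda\in\mathcal{A}_{k,n}^{+}}Q_{\lambda}(y_{\rho}^{-1};t)\,P_{\lambda}(y_{\sigma};t),
\end{equation*}
and the task is to show this finite Hall-Littlewood Cauchy sum vanishes for $\rho\neq\sigma$ and equals an explicit nonzero weight $w_{\sigma}$ for $\rho=\sigma$. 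Once this is done, $\eta$ is diagonal in the Bethe basis while each $L_{x}\colon v\mapsto x\circledast v$ is likewise diagonal, so $\eta(L_{x}y,z)=\eta(y,L_{x}z)$ becomes automatic. The evaluation of the finite Cauchy sum follows from the completeness relation $\mathbf{1}=\sum_{\sigma}|\mathfrak{b}_{\sigma}\rangle\langle\mathfrak{b}_{\sigma}^{\ast}|$ combined with the explicit dual Bethe state formula (\ref{dualBethevec}); the technical heart is to reconcile the sesquilinear Fock-space pairing appearing there with the bilinear $\eta$, which is where the identity $P_{\tilde\lambda}(\bar{y}^{-1};t)=P_{\tilde\lambda}(y;t)$ from (\ref{dualBethevec}) combined with (\ref{dualR}) plays the decisive role.
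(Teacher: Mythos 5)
Your proposal is correct and rests on the same engine as the paper's proof --- the Bethe eigenbasis of Theorem \ref{complete}, the eigenvalue formula (\ref{ncQ'spec}), the inversion property (\ref{dualR}) and the dual Bethe vectors (\ref{dualBethevec}) --- but you package it differently. The paper first extracts the explicit structure constants $N_{\mu\nu}^{\lambda}(t)=\langle\lambda|\boldsymbol{Q}'_{\nu'}|\mu\rangle$ in "Verlinde form" (Lemma \ref{Verlinde}), reads off their symmetries (Corollary \ref{symmetries}, in particular the charge-conjugation identity $b_{\lambda}N_{\mu\nu}^{\lambda}=b_{\nu}N_{\mu\lambda^{\ast}}^{\nu^{\ast}}$), and then checks each Frobenius axiom by index manipulation; the key point is that $N_{\mu\nu}^{\lambda^{\ast}}/b_{\lambda}=\sum_{\sigma}\mathcal{S}_{\lambda\sigma}\mathcal{S}_{\mu\sigma}\mathcal{S}_{\nu\sigma}/\mathcal{S}_{0\sigma}$ is totally symmetric. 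You instead transport the algebra structure from the commutative operator algebra via the cyclic vector $|n^{k}\rangle$ (which makes commutativity and associativity immediate and avoids the index gymnastics), prove the unit property by the direct eigenvalue computation $P_{n^{k}}(y_{\sigma};t)=e_{k}(y_{\sigma})^{n}=z^{k}=1$ rather than via (\ref{PQreduction}), and obtain the Frobenius identity by showing $\eta$ and all multiplication operators are simultaneously diagonal in the Bethe basis. Your diagonalisation of $\eta$ does go through: $\eta(\mathfrak{b}_{\rho},\mathfrak{b}_{\sigma})=\sum_{\lambda}Q_{\lambda}(y_{\rho}^{-1};t)P_{\lambda}(y_{\sigma};t)=\|\mathfrak{b}_{\sigma}\|^{2}\delta_{\rho\sigma}$ follows at once from $\sum_{\lambda}\mathcal{S}_{\sigma\lambda}^{-1}\mathcal{S}_{\lambda\rho}=\delta_{\sigma\rho}$ together with (\ref{dualR}) and (\ref{dualBethevec}), so the "technical heart" you defer is a one-line consequence of lemmas already in place.

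One caution: you invoke the introduction's "deformed fusion matrices" theorem for $\boldsymbol{Q}'_{\lambda'}|n^{k}\rangle=|\lambda\rangle$, but that statement is itself part of what is being established here (it is the content of the remark following Theorem \ref{Frobenius} and of Corollary \ref{symmetries}(3)), so quoting it risks circularity. It must instead be derived inside your argument, e.g.\ from $\langle\mu|\boldsymbol{Q}'_{\lambda'}|n^{k}\rangle=\sum_{\sigma}P_{\lambda}(y_{\sigma};t)Q_{\mu}(y_{\sigma}^{-1};t)/\|\mathfrak{b}_{\sigma}\|^{2}=\sum_{\sigma}\mathcal{S}_{\lambda\sigma}\mathcal{S}_{\sigma\mu}^{-1}=\delta_{\lambda\mu}$, using $P_{\widetilde{n^{k}}}=P_{\varnothing}=1$. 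With that substitution your surjectivity claim for $\Phi$, and hence the whole transport-of-structure argument, is sound. (A minor slip of wording: the Gram matrix of $\eta$ in the basis $\{|\lambda\rangle\}$ is a rescaled permutation matrix, not a \emph{signed} one.)
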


\begin{remark}
Implicit in the last theorem is the statement that $\boldsymbol{Q}_{\lambda
^{\prime }}^{\prime }|n^{k}\rangle =|\lambda \rangle $. Moreover, one checks
from the definition (\ref{ncG'}), (\ref{ncm}) that $\boldsymbol{g}%
_{r}^{\prime }|n^{k}\rangle =0$ for $r>k$ and, hence, $\boldsymbol{Q}%
_{\lambda ^{\prime }}^{\prime }|n^{k}\rangle =0$ for $\lambda \notin
\mathcal{A}_{k,n}^{+}$. That is, the family $\boldsymbol{B}_{n}:=\{%
\boldsymbol{Q}_{\lambda }^{\prime }:\lambda \in \mathcal{P}_{n}^{+}\}\subset
\mathcal{H}_{q}^{\otimes n}$ of noncommutative analogues of Macdonald
polynomials generates the canonical basis in the $U_{n}$-module $%
S^{k}(V)\subset V^{\otimes k}$ when acting on the highest weight vector $%
|n^{k}\rangle =v_{n}\otimes \cdots \otimes v_{n}$.
\end{remark}

The proof of these statements will employ the expression of the matrix
elements $\langle \lambda |\boldsymbol{Q}_{\nu ^{\prime }}^{\prime }|\mu
\rangle $ in terms of the Bethe vectors (\ref{Bethevec}) which we state as a
separate lemma. To ease the notation we introduce the transition matrix
\begin{equation}
\mathcal{S}_{\lambda \mu }(t):=||\mathfrak{b}_{\mu }||\langle \mathfrak{b}%
_{\mu }^{\ast }|\lambda \rangle =\frac{P_{\lambda }(y_{\mu };t)}{||\mathfrak{%
b}_{\mu }||}  \label{modS}
\end{equation}%
from the basis of \emph{normalised} Bethe vectors to the vectors $\{|\lambda
\rangle :\lambda \in \mathcal{A}_{k,n}^{+}\}$ which we identified with the
canonical basis in $S^{k}(V)$. The matrix elements of the inverse matrix $%
\mathcal{S}^{-1}(t)$ are given by%
\begin{equation}
\mathcal{S}_{\mu \lambda }^{-1}(t)=\frac{\langle \lambda |\mathfrak{b}_{\mu
}\rangle }{||\mathfrak{b}_{\mu }||}=\frac{Q_{\lambda }(y_{\mu }^{-1};t)}{||%
\mathfrak{b}_{\mu }||}=b_{\lambda }(t)\overline{\mathcal{S}_{\lambda \mu }(t)%
}=b_{\lambda }(t)z^{\frac{|\lambda |-|\lambda ^{\ast }|}{n}}\mathcal{S}%
_{\lambda ^{\ast }\mu }(t)\;,  \label{modSinv}
\end{equation}%
where we have used (\ref{dualR}), (\ref{dualBethevec}) in the last two
equalities under the assumption (\ref{tzassumption}). Labelling the
partition $n^{k}$ with `$0$' note that we have in particular $\mathcal{S}%
_{0\mu }(t)=||\mathfrak{b}_{\mu }||^{-1}$ because of (\ref{PQreduction}).

\begin{lemma}[deformed Verlinde formulae]
\label{Verlinde}Let $\lambda ,\mu ,\nu \in \mathcal{A}_{k,n}^{+}$. Then we
have%
\begin{equation}
\langle \lambda |\boldsymbol{Q}_{\nu ^{\prime }}^{\prime }|\mu \rangle =z^{%
\frac{|\mu |+|\nu |-|\lambda |}{n}}\sum_{\sigma \in \mathcal{A}_{k,n}^{+}}%
\frac{\mathcal{S}_{\mu \sigma }(t)\mathcal{S}_{\nu \sigma }(t)\mathcal{S}%
_{\sigma \lambda }^{-1}(t)}{\mathcal{S}_{0\sigma }(t)}  \label{Verlinde1}
\end{equation}%
and%
\begin{equation}
\langle \lambda |\boldsymbol{S}_{\tilde{\nu}^{\prime }}^{\prime }|\mu
\rangle =z^{\frac{|\mu |+|\tilde{\nu}|-|\lambda |}{n}}\sum_{\sigma \in
\mathcal{A}_{k,n}^{+}}s_{\tilde{\nu}}(y_{\sigma })\mathcal{S}_{\nu \sigma
}(t)\mathcal{S}_{\sigma \lambda }^{-1}(t)\;.  \label{Verlinde2}
\end{equation}%
Both coefficients vanish identically unless $d=\frac{|\mu |+|\nu |-|\lambda |%
}{n}\in \mathbb{Z}_{\geq 0}$.
\end{lemma}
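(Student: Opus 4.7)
The plan is to diagonalise the commuting family $\{\boldsymbol{Q}'_{\nu'}\}$ on $\mathfrak{F}_{n,k}$ and read off the two formulae as spectral decompositions. Three inputs drive the argument: (i) by Theorem~\ref{complete} the Bethe vectors $\{\mathfrak{b}_\sigma\}_{\sigma\in\mathcal{A}_{k,n}^+}$ form a basis of $\mathfrak{F}_{n,k}$; (ii) by~(\ref{ncQ'spec}) each $\mathfrak{b}_\sigma$ is an eigenvector of $\boldsymbol{Q}'_{\nu'}$ with eigenvalue $P_\nu(y_\sigma;t)$; and (iii) the dual basis $\{\mathfrak{b}_\sigma^*\}$ is given explicitly by~(\ref{dualBethevec}). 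The matrices $\mathcal{S}$ and $\mathcal{S}^{-1}$ defined in~(\ref{modS})--(\ref{modSinv}) are then precisely the change-of-basis matrices between $\{|\lambda\rangle\}$ and the normalised Bethe basis $\{\mathfrak{b}_\sigma/||\mathfrak{b}_\sigma||\}$.

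By the scale invariance of the Bethe equations I first set $z=1$. In that case~(\ref{PQreduction}) identifies $P_\lambda$ with $P_{\tilde\lambda}$ on the variety $\boldsymbol{V}_{k,n}$, which allows one to rewrite~(\ref{dualBethevec}) as
\begin{equation*}
\langle\mathfrak{b}_\sigma^*|\mu\rangle=\frac{P_\mu(y_\sigma;t)}{||\mathfrak{b}_\sigma||^2}=\frac{\mathcal{S}_{\mu\sigma}(t)}{||\mathfrak{b}_\sigma||},\qquad \mathcal{S}_{0\sigma}(t)=||\mathfrak{b}_\sigma||^{-1}.
\end{equation*}
Expanding $|\mu\rangle=\sum_\sigma\langle\mathfrak{b}_\sigma^*|\mu\rangle\mathfrak{b}_\sigma$, acting with $\boldsymbol{Q}'_{\nu'}$ via~(\ref{ncQ'spec}), and pairing with $\langle\lambda|$ using $\langle\lambda|\mathfrak{b}_\sigma\rangle=||\mathfrak{b}_\sigma||\mathcal{S}^{-1}_{\sigma\lambda}$ and $P_\nu(y_\sigma;t)=||\mathfrak{b}_\sigma||\mathcal{S}_{\nu\sigma}$ gives
\begin{equation*}
\langle\lambda|\boldsymbol{Q}'_{\nu'}|\mu\rangle\bigr|_{z=1}=\sum_\sigma ||\mathfrak{b}_\sigma||\,\mathcal{S}_{\mu\sigma}\mathcal{S}_{\nu\sigma}\mathcal{S}^{-1}_{\sigma\lambda}=\sum_\sigma\frac{\mathcal{S}_{\mu\sigma}\mathcal{S}_{\nu\sigma}\mathcal{S}^{-1}_{\sigma\lambda}}{\mathcal{S}_{0\sigma}},
\end{equation*}
which is~(\ref{Verlinde1}) at $z=1$. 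For~(\ref{Verlinde2}) I would compute the eigenvalue of $\boldsymbol{S}'_{\tilde\nu'}$ on $\mathfrak{b}_\sigma$: the determinantal definition~(\ref{ncQ'S'}), combined with the eigenvalue relation $\boldsymbol{g}'_r\mathfrak{b}_\sigma=e_r(y_\sigma)\mathfrak{b}_\sigma$ extracted from~(\ref{ncG'spec}) and the dual Jacobi--Trudi identity $s_{\lambda'}=\det(e_{\lambda_i-i+j})$, yields $\boldsymbol{S}'_{\tilde\nu'}\mathfrak{b}_\sigma=s_{\tilde\nu}(y_\sigma)\mathfrak{b}_\sigma$, and the same spectral decomposition then delivers~(\ref{Verlinde2}) at $z=1$.

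To restore the general-$z$ prefactor and obtain the vanishing criterion I would use a charge-counting argument on the alphabet~(\ref{qplacticrep}). The only source of $z$ in $\boldsymbol{Q}'_{\nu'}$ and $\boldsymbol{S}'_{\tilde\nu'}$ is the generator $a_n=z\beta_1^*\beta_n$ appearing through $\boldsymbol{g}'_r=\sum_{\rho\vdash r}\boldsymbol{m}_\rho/(t)_\rho$; in any nonzero monomial contributing to $\langle\lambda|\boldsymbol{g}'_r|\mu\rangle$, the number of factors $\beta_1^*\beta_n$ used equals $(r+|\mu|-|\lambda|)/n$, since the remaining generators $a_1,\ldots,a_{n-1}$ each change $|\cdot|$ by $+1$ while $\beta_1^*\beta_n$ changes it by $1-n$. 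Combined with the Jacobi--Trudi-type expansion of $\boldsymbol{Q}'_{\nu'}$ (respectively $\boldsymbol{S}'_{\tilde\nu'}$) in terms of the $\boldsymbol{g}'_r$ this shows that the total $z$-degree of the matrix element is exactly $d=(|\mu|+|\nu|-|\lambda|)/n$ (respectively $(|\mu|+|\tilde\nu|-|\lambda|)/n$), forcing vanishing whenever $d\notin\mathbb{Z}_{\geq 0}$ and otherwise producing a single $z^d$ prefactor.

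The main technical obstacle I anticipate is ensuring consistency between the $z$-prefactor extracted by the charge-counting argument and the $z$-dependence hidden in the Bethe roots themselves via the rescaling $y^{(z)}=z^{1/n}y^{(1)}$. The hypothesis~(\ref{tzassumption}) is needed to keep $||\mathfrak{b}_\sigma||$ independent of $z$, so that the $\mathcal{S}$-entries on the right-hand sides of~(\ref{Verlinde1})--(\ref{Verlinde2}) unambiguously carry their $z=1$ values and all $z$-dependence of the matrix element is concentrated in the explicit prefactor. Beyond this bookkeeping the proof is a direct application of the spectral theorem to the commuting family $\{\boldsymbol{Q}'_{\nu'}\}$, with the Bethe ansatz supplying both the simultaneous eigenbasis (Theorem~\ref{complete}) and the explicit eigenvalues~(\ref{ncQ'spec}) and~(\ref{ncG'spec}).
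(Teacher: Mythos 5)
Your proposal is correct and follows essentially the same route as the paper: insert the resolution of the identity in the Bethe eigenbasis supplied by Theorem~\ref{complete}, apply the eigenvalue relations (\ref{ncQ'spec}) and (\ref{ncG'spec}) (the latter combined with the dual Jacobi--Trudi determinant to get $s_{\tilde\nu}(y_\sigma)$), and rewrite everything via (\ref{modS}), (\ref{modSinv}) and (\ref{PQreduction}), with the $z$-prefactor and the vanishing criterion extracted from the $z$-grading of the alphabet (\ref{qplacticrep}). The only cosmetic difference is that the paper keeps $z$ general and pulls the prefactor out of the homogeneity of $P_{\tilde\mu},P_{\tilde\nu},Q_{\lambda^\ast}$ under $y_\sigma=z^{1/n}y'_\sigma$, whereas you work at $z=1$ and restore it by the same charge count used for the vanishing statement.
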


\begin{proof}
Denote by $y_{\sigma }=z^{1/n}y_{\sigma }^{\prime }$ the solution to (\ref%
{bae}) under the bijection of Theorem \ref{complete}. Then we have the
following simple calculation,%
\begin{eqnarray}
\langle \lambda |\boldsymbol{Q}_{\nu ^{\prime }}^{\prime }|\mu \rangle
&=&\sum_{\sigma \in \mathcal{A}_{k,n}^{+}}\langle \lambda |\boldsymbol{Q}%
_{\nu ^{\prime }}^{\prime }|\mathfrak{b}_{\sigma }\rangle \langle \mathfrak{b%
}_{\sigma }^{\ast }|\mu \rangle =\sum_{\sigma \in \mathcal{A}_{k,n}^{+}}z^{%
\frac{|\nu |}{n}}P_{\tilde{\nu}}(y_{\sigma }^{\prime };t)\langle \lambda |%
\mathfrak{b}_{\sigma }\rangle \langle \mathfrak{b}_{\sigma }^{\ast }|\mu
\rangle  \notag  \label{Vcalc} \\
&=&z^{\frac{|\mu |+|\nu |-|\lambda |}{n}}\sum_{\sigma \in \mathcal{A}%
_{k,n}^{+}}\frac{P_{\tilde{\mu}}(y_{\sigma }^{\prime };t)P_{\tilde{\nu}%
}(y_{\sigma }^{\prime };t)Q_{\lambda ^{\ast }}(y_{\sigma }^{\prime };t)}{||%
\mathfrak{b}_{\sigma }||^{2}}
\end{eqnarray}%
and the first assertion now follows from the definition (\ref{modS}). Here
we have used in the first line that the Bethe vectors form an eigenbasis in $%
\mathcal{F}_{k}^{\otimes n}$ and in the second line the explicit expansions (%
\ref{Bethevec}), (\ref{dualBethevec}) of them and their dual vectors with
respect to the basis $\{|\lambda \rangle \}_{\lambda \in \mathcal{A}%
_{k,n}^{+}}$ as well as (\ref{PQreduction}). The second identity for $%
\langle \lambda |\boldsymbol{S}_{\tilde{\nu}^{\prime }}^{\prime }|\mu
\rangle $ follows from a computation along the same lines using (\ref%
{ncG'spec}) and the definition (\ref{ncQ'S'}).

To see that both coefficients vanish identically unless $d=\frac{|\mu |+|\nu
|-|\lambda |}{n}\in \mathbb{Z}_{\geq 0}$, recall from the definition (\ref%
{NQ'KS'}) and (\ref{ncQ'S'}) that $\boldsymbol{Q}_{\nu ^{\prime }}^{\prime }$
and $\boldsymbol{S}_{\nu ^{\prime }}^{\prime }$ are polynomial in the $a_{i}$%
's and the $z$-dependence of the latter is given in (\ref{qplacticrep}).
Thus, the overall power $\frac{|\mu |+|\nu |-|\lambda |}{n}$ of $z$ in $%
\langle \lambda |\boldsymbol{Q}_{\nu ^{\prime }}^{\prime }|\mu \rangle
,\langle \lambda |\boldsymbol{S}_{\nu ^{\prime }}^{\prime }|\mu \rangle $
must be a non-negative integer by definition.
\end{proof}

\begin{remark}
The notation chosen for the transition matrix is not coincidental. It is a
generalisation of the modular $\mathcal{S}$-matrix of the Verlinde ring
which is recovered when formally setting $t=0$ in (\ref{modS}). In \cite[%
Props 6.11 and 6.15, Def 6.13]{KS} it has been shown that the $\mathcal{S}%
(0) $-matrix is the transition matrix from the Bethe vectors to the basis $%
\{|\lambda \rangle :\lambda \in \mathcal{A}_{k,n}^{+}\}$ and coincides with
the famous Kac-Peterson formula. Thus, we can interpret (\ref{Verlinde1}) as
`deformed Verlinde formula'.
\end{remark}

Formula (\ref{Verlinde1}) implies several obvious `symmetries' of the matrix
elements $\langle \lambda |\boldsymbol{Q}_{\nu ^{\prime }}^{\prime }|\mu
\rangle $ which we summarise in the following corollary.

\begin{corollary}[symmetries]
\label{symmetries}Let $N_{\mu \nu }^{\lambda }(t)=\langle \lambda |%
\boldsymbol{Q}_{\nu ^{\prime }}^{\prime }|\mu \rangle $ and set $z=1$.
Assume further that $\bar{t}=t$. Then we have the following identities:

\begin{enumerate}
\item $N_{\mu \nu }^{\lambda }(t)=N_{\nu \mu }^{\lambda }(t)$ and $\overline{%
N_{\mu \nu }^{\lambda }(t)}=N_{\mu \nu }^{\lambda }(t)=N_{\mu ^{\ast }\nu
^{\ast }}^{\lambda ^{\ast }}(t)$

\item Charge conjugation: $b_{\lambda }(t)N_{\mu \nu }^{\lambda }(t)=b_{\nu
}(t)N_{\mu \lambda ^{\ast }}^{\nu ^{\ast }}(t)$

\item $N_{\mu n^{k}}^{\lambda }(t)=\delta _{\lambda \mu }$ and $N_{\lambda
\mu }^{n^{k}}(t)=\delta _{\lambda \mu ^{\ast }}/b_{\lambda }(t)$

\item Rotation invariance: $N_{\mu \nu }^{\lambda }(t)=N_{\limfunc{rot}(\mu
)\nu }^{\limfunc{rot}(\lambda )}(t)$.
\end{enumerate}
\end{corollary}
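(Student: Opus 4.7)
The proof rests on the deformed Verlinde formula \eqref{Verlinde1} of Lemma~\ref{Verlinde} combined with the identity $\mathcal{S}^{-1}_{\sigma\lambda}(t)=b_\lambda(t)\,\mathcal{S}_{\lambda^*\sigma}(t)$ from \eqref{modSinv} (valid at $z=1$ as a consequence of $Q_\lambda(y^{-1})=b_\lambda\,P_{\lambda^*}(y)$, which itself follows from the inversion property \eqref{dualR}). Substituting the latter into \eqref{Verlinde1} produces the master identity
\begin{equation*}
N_{\mu\nu}^\lambda(t)=b_\lambda(t)\,T(\mu,\nu,\lambda^*),\qquad T(\mu,\nu,\rho):=\sum_{\sigma\in\mathcal{A}_{k,n}^+}\frac{\mathcal{S}_{\mu\sigma}(t)\,\mathcal{S}_{\nu\sigma}(t)\,\mathcal{S}_{\rho\sigma}(t)}{\mathcal{S}_{0\sigma}(t)}.
\end{equation*}
The trilinear function $T$ is manifestly $\mathfrak{S}_3$-symmetric, and the star-involution $\lambda\mapsto\lambda^*$ preserves $b_\lambda(t)$: using the description of $\lambda^*$ from Section~2 one verifies $m_i(\lambda^*)=m_{n-i}(\lambda)$ for $1\le i<n$ together with $m_n(\lambda^*)=m_n(\lambda)$, whence $b_{\lambda^*}=b_\lambda$. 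Swapping the first two slots of $T$ yields the commutativity $N_{\mu\nu}^\lambda=N_{\nu\mu}^\lambda$, and swapping the second and third slots produces the charge-conjugation identity of part~(2).

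For the reality statement $\overline{N_{\mu\nu}^\lambda(t)}=N_{\mu\nu}^\lambda(t)$ it suffices to observe that \eqref{ncG'2} expresses each matrix element $\langle\lambda|\boldsymbol{g}_r^\prime|\mu\rangle$ as a rational function of $t$ with rational coefficients (the $\Psi'_{\lambda'/d/\mu'}(t)$ being polynomials with non-negative integer coefficients), and $\boldsymbol{Q}_{\nu'}^\prime$ is built from the $\boldsymbol{g}_r^\prime$ by rational combinations, so bar-invariance under $\bar t=t$ is automatic. The star-symmetry $N_{\mu\nu}^\lambda=N_{\mu^*\nu^*}^{\lambda^*}$ is obtained from the explicit form $N_{\mu\nu}^\lambda=\sum_\sigma P_\mu(y_\sigma)P_\nu(y_\sigma)Q_{\lambda^*}(y_\sigma)/\|\mathfrak{b}_\sigma\|^2$ derived in the proof of Lemma~\ref{Verlinde} by the change of variable $y\mapsto y^{-1}$: the inversion symmetry of Section~7 makes this a bijection of the Bethe solution set at $z=1$, while \eqref{dualR} sends $P_\mu(y^{-1})\mapsto P_{\mu^*}(y)$ and $Q_{\lambda^*}(y^{-1})\mapsto Q_\lambda(y)$; the norm factor $\|\mathfrak{b}(y)\|^2=\sum_\lambda b_\lambda(t)\,P_\lambda(y)P_\lambda(\bar y;t)$ is invariant under $y\mapsto y^{-1}$ after relabelling $\lambda\mapsto\lambda^*$ and invoking $b_{\lambda^*}=b_\lambda$.

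The unit relations in part~(3) follow from the Frobenius structure of Theorem~\ref{Frobenius}: the remark after that theorem delivers $\boldsymbol{Q}_{\lambda'}^\prime|n^k\rangle=|\lambda\rangle$, so that $N_{\mu\,n^k}^\lambda=\langle\lambda|\mu\rangle=\delta_{\lambda\mu}$, and the value of $N_{\lambda\mu}^{n^k}$ is then forced by the Frobenius axiom $\eta(a\circledast b,c)=\eta(a,b\circledast c)$ applied with $c=|n^k\rangle$, together with the definition~\eqref{eta} of $\eta$. Alternatively, inserting $P_{n^k}(y_\sigma)=1$ (from \eqref{PQreduction} at $z=1$) into the master formula collapses $T(\mu,n^k,\lambda^*)=\sum_\sigma\mathcal{S}_{\mu\sigma}\mathcal{S}^{-1}_{\sigma\lambda}/b_\lambda=\delta_{\mu\lambda}/b_\lambda$, confirming the first identity of~(3), and the second is extracted from the analogous computation using the biorthogonality relation $\sum_\sigma P_{\lambda^*}(y_\sigma)P_\mu(y_\sigma)/\|\mathfrak{b}_\sigma\|^2=\delta_{\lambda\mu}/b_\lambda$ guaranteed by Theorem~\ref{complete}.

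Finally, for rotation invariance in part~(4) I introduce the unitary operator $R$ on $\mathcal{F}_k^{\otimes n}$ implementing the $\widehat{\mathfrak{sl}}(n)$ Dynkin diagram automorphism by the cyclic shift $R|m_1,\ldots,m_n\rangle=|m_n,m_1,\ldots,m_{n-1}\rangle$, so that $R|\lambda\rangle=|\limfunc{rot}^{-1}(\lambda)\rangle$. A direct check using~\eqref{qplacticrep} at $z=1$ shows $R\,a_i\,R^{-1}=a_{i+1}$ with indices modulo $n$, hence every cyclically symmetric polynomial in the alphabet $\{a_1,\ldots,a_n\}$ commutes with $R$; in particular $R\,\boldsymbol{g}_r^\prime\,R^{-1}=\boldsymbol{g}_r^\prime$ (equivalently, the transfer matrix $\boldsymbol{G}'(u)$ is translation invariant because its $R$-matrix is compatible with cyclic shifts, as already used in the proof of~\eqref{integrable}), and consequently $R\,\boldsymbol{Q}_{\nu'}^\prime\,R^{-1}=\boldsymbol{Q}_{\nu'}^\prime$. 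Sandwiching this commutation between $\langle\lambda|$ and $|\mu\rangle$ and using unitarity of $R$ delivers part~(4). The main obstacle, and the step where care is most needed, is tracking the $z$-factor attached to the boundary generator $a_n=z\beta_1^*\beta_n$: at $z\ne 1$ the cyclic shift produces a spurious $z^{\pm 1}$ that would need to be absorbed into a twisted version of $R$, and the change of variable used for the star-invariance together with the norm identity in the second paragraph likewise rely on $z=1$, so this restriction (in force throughout Section~7) is essential.
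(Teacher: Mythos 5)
Your strategy is sound and, for parts (1)--(3), essentially the paper's: both rest on the deformed Verlinde formula (\ref{Verlinde1}) together with the inversion property (\ref{dualR}) and $b_{\lambda^*}=b_\lambda$, and your repackaging as $N_{\mu\nu}^\lambda=b_\lambda\,T(\mu,\nu,\lambda^*)$ with $T$ manifestly $\mathfrak{S}_3$-symmetric is a clean way to organise the bookkeeping. For part (4) you take a genuinely different route: the paper evaluates $P_{\limfunc{rot}(\lambda)}$ at the Bethe roots and cancels powers of $e_k(y)$, whereas you conjugate by the cyclic shift on $\mathcal{F}_k^{\otimes n}$ and invoke translation invariance of the transfer matrix. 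That argument is correct at $z=1$, and your parenthetical appeal to the trace of the monodromy matrix is the right justification --- the monomials (\ref{ncm}) single out the affine node and are \emph{not} literally cyclically symmetric in the $a_i$, so the naive ``cyclically symmetric polynomials commute with $R$'' step needs exactly that trace argument.

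There is, however, a genuine gap in (2) and in the second identity of (3): you assert the conclusions without carrying out the last step, and your (correct) master identity does not deliver the identities as printed. The fully symmetric combination is $b_\mu b_\nu N_{\mu\nu}^\lambda=b_\mu b_\nu b_{\lambda^*}T(\mu,\nu,\lambda^*)$, and swapping the slots $\nu\leftrightarrow\lambda^*$ yields $b_\nu N_{\mu\nu}^\lambda=b_\lambda N_{\mu\lambda^*}^{\nu^*}$ --- with $b_\lambda$ and $b_\nu$ transposed relative to part (2). Likewise the ``analogous computation'' gives $N_{\lambda\mu}^{n^k}=b_{n^k}\,T(\lambda,\mu,n^k)=(t)_k\,\delta_{\lambda\mu^*}/b_\lambda$, carrying an extra factor $b_{n^k}(t)=(t)_k$ that the statement omits. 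These are not the printed identities; indeed the printed statement is internally inconsistent (evaluate the two halves of (3) at $\lambda=\mu=n^k$ to get $1$ versus $1/(t)_k$), and the versions your computation actually produces are the ones compatible with the Frobenius axiom $\eta(a\circledast b,c)=\eta(a,b\circledast c)$ as it is used in the proof of Theorem \ref{Frobenius}. So you must either record and correct the discrepancy, or the claims that the slot-swap ``produces the charge-conjugation identity of part (2)'' and that the biorthogonality relation ``confirms'' the second identity of (3) are false as written. A smaller point: your first argument for (3), via the Remark after Theorem \ref{Frobenius}, is circular, since that theorem is proved using the present corollary; only your alternative direct computation is admissible there.
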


\begin{proof}
Note that invariance under complex conjugation, $\overline{N_{\mu \nu
}^{\lambda }(t)}=N_{\mu \nu }^{\lambda }(t)$, follows simply from the
definition (\ref{ncQ'S'}) and observing that $\langle \lambda |a_{i}|\mu
\rangle \in \mathbb{Z}[t]$ for all $i=1,\ldots ,n$. The remaining identities
in 1-2 are now trivial consequences of (\ref{dualR}), $Q_{\lambda
}=b_{\lambda }P_{\lambda }$ and $b_{\lambda }=b_{\lambda ^{\ast }}$. For
statement 3 simply use (\ref{PQreduction}) and observe that $P_{\emptyset
}=1 $, whence $N_{\mu n^{k}}^{\lambda }(t)=\sum_{\sigma }\mathcal{S}_{\mu
\sigma }(t)\mathcal{S}_{\sigma \lambda }^{-1}(t)=\delta _{\mu \lambda }$.
Now use statement 2 to obtain the second identity in 3. Finally, the last
statement is a simply consequence of $P_{\limfunc{rot}(\lambda
)}(y;t)=e_{k}(y)P_{\limfunc{rot}(\lambda )}(y;t),$ $b_{\lambda }=b_{\limfunc{%
rot}(\lambda )}$ and that $e_{k}(\bar{y})=e_{k}(y^{-1})=e_{k}(y)^{-1}$.
\end{proof}

We are now ready to prove the main result.

\begin{proof}[Theorem \protect\ref{Frobenius}]
From the first and second property in the last Corollary it is now clear
that the product (\ref{Fproduct}) is commutative, $N_{\nu \mu }^{\lambda
}(t)=\langle \lambda |\boldsymbol{Q}_{\nu ^{\prime }}^{\prime }|\mu \rangle
=\langle \lambda |\boldsymbol{Q}_{\mu ^{\prime }}^{\prime }|\nu \rangle
=N_{\mu \nu }^{\lambda }(t),$ and that $|n^{k}\rangle \ast |\lambda \rangle
=|\lambda \rangle $. Associativity then easily follows from (\ref{integrable}%
),%
\begin{multline*}
\lambda \circledast (\mu \circledast \nu )=\sum_{\sigma \in \mathcal{A}%
_{k,n}^{+}}N_{\mu \nu }^{\rho }(t)\lambda \circledast \rho =\sum_{\rho
,\sigma \in \mathcal{A}_{k,n}^{+}}N_{\lambda \rho }^{\sigma }(t)N_{\nu \mu
}^{\rho }(t)~\sigma \\
=\sum_{\rho ,\sigma \in \mathcal{A}_{k,n}^{+}}\langle \sigma |\boldsymbol{Q}%
_{\nu ^{\prime }}^{\prime }\boldsymbol{Q}_{\lambda ^{\prime }}^{\prime }|\mu
\rangle ~\sigma =\sum_{\rho ,\sigma \in \mathcal{A}_{k,n}^{+}}N_{\nu \rho
}^{\sigma }(t)N_{\lambda \mu }^{\rho }(t)~\sigma =(\lambda \circledast \mu
)\circledast \nu \;,
\end{multline*}%
where for ease of notation we have denoted vectors simply by partitions.
Thus, $(\mathfrak{F}_{n,k},\circledast )$ is a unital, associative and
commutative algebra. Since $\{\lambda :\lambda \in \mathcal{A}_{k,n}^{+}\}$
is a basis and the map $\lambda \mapsto \lambda ^{\ast }$ simply amounts to
a reordering of this basis, it is obvious that $\eta $ is nondegenerate. One
now easily checks with the help of (\ref{Verlinde1}) and $b_{\lambda ^{\ast
}}(t)=b_{\lambda }(t)$ that $\eta (\lambda ,\mu )=\eta (\mu ,\lambda )$ and%
\begin{equation*}
\eta (\lambda \circledast \mu ,\nu )=N_{\lambda \mu }^{\nu ^{\ast
}}(t)/b_{\nu }(t)=N_{\mu \nu }^{\lambda ^{\ast }}(t)/b_{\lambda }(t)=\eta
(\lambda ,\mu \circledast \nu )\;
\end{equation*}%
according to the second property in Corollary \ref{symmetries}.
\end{proof}

Denote by $\boldsymbol{Q}_{\lambda }^{(k)}$ the restriction of $\boldsymbol{Q%
}_{\lambda }^{\prime }$ to the subspace $\mathcal{F}_{k}^{\otimes n}$
spanned by $\{|\lambda \rangle ~|~\lambda \in \mathcal{A}_{k,n}^{+}\}$.

\begin{corollary}[deformed fusion matrices]
Set $z=1$ and consider the subalgebra $\subset \limfunc{End}\mathfrak{F}%
_{n,k}$ generated by $\{\boldsymbol{Q}_{\lambda ^{\prime }}^{(k)}~|~\lambda
\in \mathcal{A}_{k,n}^{+}\}$. Then the map $|\lambda \rangle \mapsto
\boldsymbol{Q}_{\lambda ^{\prime }}^{(k)}$ is an algebra isomorphism. That
is, we have for all $\mu ,\nu \in \mathcal{A}_{k,n}^{+}$\ the product
expansion%
\begin{equation}
\boldsymbol{Q}_{\mu ^{\prime }}^{(k)}\boldsymbol{Q}_{\nu ^{\prime
}}^{(k)}=\sum_{\lambda \in \mathcal{A}_{k,n}^{+}}N_{\mu \nu }^{\lambda }(t)%
\boldsymbol{Q}_{\lambda ^{\prime }}^{(k)}\;.
\end{equation}
\end{corollary}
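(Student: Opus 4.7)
The plan is to recognise the desired identity as the statement that $|\lambda\rangle\mapsto\boldsymbol{Q}_{\lambda'}^{(k)}$ realises $\mathfrak{F}_{n,k}$ via its left regular representation, so that everything reduces to associativity of the product $\circledast$ already established in Theorem \ref{Frobenius}. First I would unpack the defining relation \eqref{Fproduct}: for any $\nu,\rho\in\mathcal{A}_{k,n}^+$ the operator $\boldsymbol{Q}_{\nu'}^{(k)}$ acts on the distinguished basis by
\begin{equation*}
\boldsymbol{Q}_{\nu'}^{(k)}|\rho\rangle = |\nu\rangle\circledast|\rho\rangle = \sum_\sigma N_{\nu\rho}^\sigma(t)\,|\sigma\rangle,
\end{equation*}
so that $\boldsymbol{Q}_{\lambda'}^{(k)}$ is nothing but left multiplication by $|\lambda\rangle$ on $\mathfrak{F}_{n,k}$.

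To verify the product expansion I would evaluate both sides on an arbitrary basis vector $|\rho\rangle$. On the left,
\begin{equation*}
\boldsymbol{Q}_{\mu'}^{(k)}\boldsymbol{Q}_{\nu'}^{(k)}|\rho\rangle
 = \boldsymbol{Q}_{\mu'}^{(k)}\bigl(|\nu\rangle\circledast|\rho\rangle\bigr)
 = |\mu\rangle\circledast\bigl(|\nu\rangle\circledast|\rho\rangle\bigr),
\end{equation*}
while on the right,
\begin{equation*}
\sum_{\lambda}N_{\mu\nu}^{\lambda}(t)\boldsymbol{Q}_{\lambda'}^{(k)}|\rho\rangle
 = \Bigl(\sum_{\lambda}N_{\mu\nu}^{\lambda}(t)|\lambda\rangle\Bigr)\circledast|\rho\rangle
 = \bigl(|\mu\rangle\circledast|\nu\rangle\bigr)\circledast|\rho\rangle.
\end{equation*}
The equality of the two lines is precisely associativity of $\circledast$, supplied by Theorem \ref{Frobenius}, and as this holds for every $|\rho\rangle$ it establishes that $|\lambda\rangle\mapsto\boldsymbol{Q}_{\lambda'}^{(k)}$ is an algebra homomorphism.

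For bijectivity onto the subalgebra generated by the $\boldsymbol{Q}_{\lambda'}^{(k)}$, I would invoke the remark following Theorem \ref{Frobenius}: since $|n^k\rangle$ is the unit of $\mathfrak{F}_{n,k}$, one has $\boldsymbol{Q}_{\lambda'}^{(k)}|n^k\rangle = |\lambda\rangle$. Any linear relation $\sum_\lambda c_\lambda \boldsymbol{Q}_{\lambda'}^{(k)} = 0$ evaluated at the unit collapses to $\sum_\lambda c_\lambda|\lambda\rangle = 0$, forcing every $c_\lambda$ to vanish, so the map is injective; moreover the product formula just proved shows that the linear span of the $\boldsymbol{Q}_{\lambda'}^{(k)}$ is already closed under composition, hence coincides with the subalgebra they generate. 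There is no genuine obstacle here: the substance was packaged into Theorem \ref{Frobenius}, and this corollary is simply the familiar observation that the left regular representation of a finite-dimensional unital associative algebra is faithful, specialised to the realisation of $\mathfrak{F}_{n,k}$ as operators on $\mathcal{F}_k^{\otimes n}$.
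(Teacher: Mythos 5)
Your proof is correct and follows essentially the same route as the paper: the paper likewise evaluates both sides on an arbitrary basis vector $|\rho\rangle$ and reduces the identity to associativity of $\circledast$ from Theorem 7.6. Your additional verification of injectivity via $\boldsymbol{Q}_{\lambda'}^{(k)}|n^k\rangle=|\lambda\rangle$ is a welcome touch that the paper leaves implicit (it only records this fact in the remark following the theorem), but it does not constitute a different approach.
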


\begin{proof}
By definition of (\ref{Fproduct}) and exploiting associativity we compute
\begin{eqnarray*}
\boldsymbol{Q}_{\mu ^{\prime }}^{\prime }\boldsymbol{Q}_{\nu ^{\prime
}}^{\prime }|\rho \rangle &=&|\mu \rangle \circledast (|\nu \rangle
\circledast |\rho \rangle )=(|\mu \rangle \circledast |\nu \rangle
)\circledast |\rho \rangle \\
&=&\sum_{\lambda \in \mathcal{A}_{k,n}^{+}}N_{\mu \nu }^{\lambda
}(t)|\lambda \rangle \circledast |\rho \rangle =\sum_{\lambda \in \mathcal{A}%
_{k,n}^{+}}N_{\mu \nu }^{\lambda }(t)\boldsymbol{Q}_{\lambda ^{\prime
}}^{\prime }|\rho \rangle
\end{eqnarray*}%
for any $\rho \in \mathcal{A}_{k,n}^{+}$. Hence, the assertion follows.
\end{proof}

\begin{remark}
The last presentation of the Frobenius algebra in $\limfunc{End}\mathfrak{F}%
_{n,k}$ is particularly convenient to show that it specialises to the
Verlinde ring when restricting the coefficients to $\mathbb{Z}[t]$. Namely,
setting formally $t=0$ in (\ref{qKnuth}) by making the replacement $\mathcal{%
\hat{U}}_{n}^{-}\rightarrow \mathcal{\hat{U}}_{n}^{-}/\langle t\rangle $,
one obtains a representation of the local affine plactic algebra \cite[Def
5.4 and Prop 5.8]{KS} as mentioned earlier. Thus, the deformed fusion
matrices specialise for $t=0$ to%
\begin{equation*}
\boldsymbol{Q}_{\lambda }^{\prime }=\boldsymbol{S}_{\lambda }^{\prime }=\det
(\boldsymbol{h}_{\lambda _{i}-i+j})=\boldsymbol{s}_{\lambda },
\end{equation*}%
where $\boldsymbol{h}_{r}=\sum_{\lambda \vdash r}\boldsymbol{m}_{\lambda }$
is now the affine plactic homogeneous symmetric polynomial and $\boldsymbol{s%
}_{\lambda }$ is the affine plactic Schur polynomial defined in \cite[Def
5.15-6]{KS} and \cite[Prop 4.1 and Prop 5.1]{Korff}. The latter have been
identified with the fusion matrices of the Verlinde ring \cite[Theorem 6.18]%
{KS} and we can therefore conclude that the constant terms $N_{\mu \nu
}^{\lambda }(0)$ and $K_{\nu ,\lambda /d/\mu }(0)$ are the
Wess-Zumino-Novikov-Witten fusion coefficients $\mathcal{N}_{\hat{\mu}\hat{%
\nu}}^{(k)\hat{\lambda}}$. The indices of the fusion coefficients are given
via the following bijection $\lambda \mapsto \hat{\lambda}%
:=\tsum_{i=1}^{n}m_{i}(\lambda )\hat{\omega}_{i}$ between $\mathcal{A}%
_{k,n}^{+}$ and the set of \emph{integral dominant }$\widehat{\mathfrak{sl}}%
(n)$ \emph{weights at level }$k$,%
\begin{equation}
P_{n,k}^{+}=\left\{ \left. \hat{\lambda}=\tsum_{i=1}^{n}m_{i}\hat{\omega}%
_{i}~\right\vert ~\tsum_{i=1}^{n}m_{i}=k\right\} ,  \label{levelkweights}
\end{equation}%
where the $\hat{\omega}_{i}$'s denote the affine fundamental weights.
\end{remark}

The following proposition shows that our discussion of the Bethe ansatz in
the previous section has the algebraic interpretation of computing its
Peirce decomposition.

\begin{proposition}[idempotents]
The Bethe vectors (\ref{Bethevec}) are the idempotents of the Frobenius
algebra $(\mathfrak{F}_{k,n},\circledast ,\eta )$,%
\begin{equation}
\mathfrak{e}_{\lambda }\circledast \mathfrak{e}_{\mu }=\delta _{\lambda \mu }%
\mathfrak{e}_{\lambda },\qquad \mathfrak{e}_{\lambda }:=|\mathcal{S}%
_{0\lambda }|^{2}\mathfrak{b}_{\lambda }=\mathcal{S}_{0\lambda }\sum_{\mu
\in \mathcal{A}_{k,n}^{+}}\mathcal{S}_{\lambda \mu }^{-1}|\mu \rangle \;.
\end{equation}%
Moreover, we have the following decomposition of the unit, $|n^{k}\rangle
=\sum_{\lambda \in \mathcal{A}_{k,n}^{+}}\mathfrak{e}_{\lambda }$. The dual
Bethe vectors on the other hand obey%
\begin{equation}
\Delta _{n,k}\mathfrak{e}_{\lambda }^{\ast }=\mathfrak{e}_{\lambda }^{\ast
}\otimes \mathfrak{e}_{\lambda }^{\ast },\qquad \mathfrak{e}_{\lambda
}^{\ast }:=\mathfrak{b}_{\lambda }^{\ast }/|\mathcal{S}_{0\lambda
}|^{2}=\sum_{\mu \in \mathcal{A}_{k,n}^{+}}\langle \mu |~\frac{\mathcal{S}%
_{\mu \lambda }}{\mathcal{S}_{0\lambda }}
\end{equation}%
where $\Delta _{n,k}:\mathfrak{F}_{n,k}\rightarrow \mathfrak{F}_{n,k}\otimes
\mathfrak{F}_{n,k}$ is the coproduct induced by $\eta $.
\end{proposition}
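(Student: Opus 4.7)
The proof splits along the three assertions. For the idempotent property, using (\ref{ncQ'spec}) the operator $\boldsymbol{Q}_{\lambda'}'$ acts diagonally on the Bethe vector $\mathfrak{b}_\nu$ with eigenvalue $P_\lambda(y_\nu;t)$, so expanding $\mathfrak{b}_\mu = \sum_\lambda Q_\lambda(y_\mu^{-1};t)|\lambda\rangle$ via (\ref{Bethevec}) and using $|\lambda\rangle \circledast X = \boldsymbol{Q}_{\lambda'}'X$ gives
\[
\mathfrak{b}_\mu \circledast \mathfrak{b}_\nu = \Bigl(\tsum_{\lambda \in \mathcal{A}_{k,n}^{+}} Q_\lambda(y_\mu^{-1};t)\,P_\lambda(y_\nu;t)\Bigr)\mathfrak{b}_\nu.
\]
Rewriting the two factors using (\ref{modS}) and (\ref{modSinv}) as $||\mathfrak{b}_\mu||\,\mathcal{S}^{-1}_{\mu\lambda}$ and $||\mathfrak{b}_\nu||\,\mathcal{S}_{\lambda\nu}$, the sum collapses to $||\mathfrak{b}_\mu||^{2}\delta_{\mu\nu}$ by the matrix identity $\mathcal{S}^{-1}\mathcal{S} = I$. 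Since $|\mathcal{S}_{0\lambda}|^{2} = ||\mathfrak{b}_\lambda||^{-2}$, the renormalised vectors $\mathfrak{e}_\lambda = |\mathcal{S}_{0\lambda}|^{2}\mathfrak{b}_\lambda$ satisfy $\mathfrak{e}_\lambda \circledast \mathfrak{e}_\mu = \delta_{\lambda\mu}\mathfrak{e}_\lambda$. The unit decomposition follows immediately by inverting the basis change: setting $\lambda = n^{k}$ (labelled $0$) in $|\lambda\rangle = \sum_\mu \mathcal{S}_{\lambda\mu}\,\mathfrak{b}_\mu/||\mathfrak{b}_\mu||$ and using $\mathcal{S}_{0\mu} = 1/||\mathfrak{b}_\mu||$ yields $|n^{k}\rangle = \sum_\mu \mathcal{S}_{0\mu}^{2}\mathfrak{b}_\mu = \sum_\mu \mathfrak{e}_\mu$.

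For the third assertion, I would invoke the standard fact about commutative semisimple Frobenius algebras: if $\{e_\lambda\}$ are primitive orthogonal idempotents with pairing values $\eta(e_\lambda,e_\mu) = \delta_{\lambda\mu}\theta_\lambda$, then $\Delta(e_\lambda) = \theta_\lambda^{-1}e_\lambda\otimes e_\lambda$, and consequently $e_\lambda/\theta_\lambda$ is group-like. The main computation is the theta constant $\theta_\lambda = |\mathcal{S}_{0\lambda}|^{4}\,\eta(\mathfrak{b}_\lambda,\mathfrak{b}_\lambda)$. Expanding $\mathfrak{b}_\lambda$ in the basis $\{|\mu\rangle\}$, applying the definition (\ref{eta}) together with the inversion identity $Q_{\mu^{*}}(y^{-1};t) = Q_\mu(y;t)$ from (\ref{dualR}), reduces $\eta(\mathfrak{b}_\lambda,\mathfrak{b}_\lambda)$ to $\sum_\mu Q_\mu(y_\lambda^{-1};t)P_\mu(y_\lambda;t)$; the Cauchy-type symmetry $\sum_\mu Q_\mu(x)P_\mu(y) = \sum_\mu Q_\mu(y)P_\mu(x)$ combined with the reality assumption (\ref{tzassumption}) identifies this with the Bethe norm $||\mathfrak{b}_\lambda||^{2}$ from (\ref{dualBethevec}), giving $\theta_\lambda = |\mathcal{S}_{0\lambda}|^{2}$. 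To match the explicit formula in the proposition I would apply the $\eta$-induced isomorphism $\tilde{\mathcal{F}}^{\otimes n}\to\mathcal{F}^{\otimes n}$, $\langle\mu|\mapsto b_\mu(t)|\mu^{*}\rangle$, and verify via (\ref{modSinv}) that under this map $\mathfrak{e}_\lambda^{*} = \mathfrak{b}_\lambda^{*}/|\mathcal{S}_{0\lambda}|^{2}$ coincides with $\mathfrak{b}_\lambda = \mathfrak{e}_\lambda/\theta_\lambda$, from which the claimed group-like identity is immediate.

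The main obstacle is the third step, because the statement mixes two inequivalent notions of duality: $\mathfrak{b}_\lambda^{*}$ is defined via the Fock-space inner product, whereas $\Delta_{n,k}$ is the coproduct induced by the Frobenius form $\eta$. These two pairings differ precisely by the contragredience involution $|\lambda\rangle\mapsto b_\lambda(t)|\lambda^{*}\rangle$, so both the identification of $\mathfrak{e}_\lambda^{*}$ with the correct Frobenius-dual element and the Cauchy-symmetry argument that produces $\theta_\lambda = |\mathcal{S}_{0\lambda}|^{2}$ require a careful combination of the $*$-involution with the inversion and reality properties of the Bethe roots established in the previous section.
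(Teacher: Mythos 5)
Your proof of the idempotency is essentially the paper's: diagonalise $\boldsymbol{Q}_{\nu'}'$ on the Bethe basis and collapse the sum with $\mathcal{S}^{-1}\mathcal{S}=I$. For the decomposition of the unit you invert the basis change directly, which is fine and a little shorter than the paper's argument (the paper shows $\sum_\lambda \mathfrak{e}_\lambda$ acts as a unit on the basis $\{\mathfrak{e}_\lambda\}$, hence on all of $\mathfrak{F}_{n,k}$, and then deduces $\boldsymbol{1}=\boldsymbol{1}'$ from non-degeneracy of $\eta$). The genuine divergence is in the coproduct statement. The paper computes $\Delta_{n,k}\mathfrak{e}_\lambda^{\ast}$ head-on: it first derives $\Delta_{n,k}|\lambda\rangle=\sum_{\mu,\nu}\tfrac{b_\mu b_\nu}{b_\lambda}N_{\mu\nu}^{\lambda}|\nu\rangle\otimes|\mu\rangle$ from the Frobenius isomorphism, then substitutes the deformed Verlinde formula (\ref{Verlinde1}) and collapses the $\mathcal{S}$-sums. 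Your route --- the general fact that in a commutative semisimple Frobenius algebra the nonzero group-like elements are exactly $e_\lambda/\theta_\lambda$ with $\theta_\lambda=\eta(e_\lambda,e_\lambda)$, combined with the evaluation $\theta_\lambda=|\mathcal{S}_{0\lambda}|^{2}$ via the inversion identity (\ref{dualR}) and the reality assumption (\ref{tzassumption}) --- is more structural and explains \emph{why} the normalisations come out as they do; your computation of $\theta_\lambda$ checks out. One caveat: the paper identifies $\mathfrak{F}_{n,k}^{\ast}$ with $\mathfrak{F}_{n,k}$ via the Fock-space pairing $\langle\mu|\mapsto b_\mu(t)|\mu\rangle$ (no contragredient star), not via $\Phi^{-1}:\langle\mu|\mapsto b_\mu(t)|\mu^{\ast}\rangle$ as you do. Under the paper's convention the image of $\mathfrak{e}_\lambda^{\ast}$ is $\sum_\mu Q_\mu(y_\lambda;t)|\mu\rangle=\mathfrak{b}(y_\lambda^{-1})$, the Bethe vector attached to the \emph{inverted} Bethe roots; this is again one of the $\mathfrak{b}_\sigma$ (the solution set is stable under $y\mapsto y^{-1}$ at $z=1$) and hence still group-like, so the asserted identity $\Delta_{n,k}\mathfrak{e}_\lambda^{\ast}=\mathfrak{e}_\lambda^{\ast}\otimes\mathfrak{e}_\lambda^{\ast}$ survives under either identification --- but the specific group-like element you land on may be the one labelled by $\sigma$ with $y_\sigma=y_\lambda^{-1}$ rather than by $\lambda$ itself, so you should state explicitly which identification of $\mathfrak{F}_{n,k}^{\ast}$ with $\mathfrak{F}_{n,k}$ you are using before claiming $\mathfrak{e}_\lambda^{\ast}$ ``coincides with'' $\mathfrak{b}_\lambda$.
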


\begin{proof}
The first statement is a trivial consequence of (\ref{Verlinde1}),%
\begin{equation*}
\mathfrak{e}_{\lambda }\circledast \mathfrak{e}_{\mu }=\mathcal{S}_{0\lambda
}\sum_{\nu \in \mathcal{A}_{k,n}^{+}}\mathcal{S}_{\lambda \nu }^{-1}%
\boldsymbol{Q}_{\nu ^{\prime }}^{\prime }\mathfrak{e}_{\mu }=\mathcal{S}%
_{0\lambda }\sum_{\nu \in \mathcal{A}_{k,n}^{+}}\frac{\mathcal{S}_{\lambda
\nu }^{-1}\mathcal{S}_{\nu \mu }}{\mathcal{S}_{0\mu }}\mathfrak{e}_{\mu
}=\delta _{\lambda \mu }\mathfrak{e}_{\lambda }\;.
\end{equation*}%
Thus, we have for any $\mu $ that $\mathfrak{e}_{\mu }\circledast
(\sum_{\lambda }\mathfrak{e}_{\lambda })=\mathfrak{e}_{\mu }$. Since the $\{%
\mathfrak{b}_{\lambda }\}$ and, hence, the $\{\mathfrak{e}_{\lambda }\}$ are
a basis it follows that for any $\mathfrak{f}\in \mathfrak{F}_{n,k}$ we have
$\mathfrak{f}\circledast (\sum_{\lambda }\mathfrak{e}_{\lambda })=\mathfrak{f%
}$. Setting $\boldsymbol{1}:=|n^{k}\rangle $ and $\boldsymbol{1}^{\prime
}:=\sum_{\lambda }\mathfrak{e}_{\lambda }$ it follows that $\eta (%
\boldsymbol{1}-\boldsymbol{1}^{\prime },\mathfrak{f})=\eta (\boldsymbol{1},%
\mathfrak{f}\circledast (\boldsymbol{1}-\boldsymbol{1}^{\prime }))=0$ for
all $\mathfrak{f}\in \mathfrak{F}_{n,k}$ and, therefore, $\boldsymbol{1}=%
\boldsymbol{1}^{\prime }$ because $\eta $ is non-degenerate.

To prove the second statement we explicitly compute the coproduct using the
known facts about the structure of Frobenius algebras; see e.g. \cite{Kock}.
Let $\mathfrak{m}:\mathfrak{F}_{n,k}\otimes \mathfrak{F}_{n,k}\rightarrow
\mathfrak{F}_{n,k}$ be the regular representation or multiplication map, $%
\mathfrak{m}(|\mu \rangle \otimes |\nu \rangle )=|\mu \rangle \circledast
|\nu \rangle $ and $\mathfrak{m}^{\ast }:\mathfrak{F}_{n,k}^{\ast
}\rightarrow \mathfrak{F}_{n,k}^{\ast }\otimes \mathfrak{F}_{n,k}^{\ast }$
its dual map. Then the co-product $\Delta _{n,k}$ is obtained via the
following commutative diagram
\begin{equation}
\begin{CD} \mathfrak{F}_{n,k} @>\Delta_{n,k}>>
\mathfrak{F}_{n,k}\otimes\mathfrak{F}_{n,k}\\ @VV{\Phi}V
@VV{\Phi\otimes\Phi}V\\ \mathfrak{F}^\ast_{n,k} @>\mathfrak{m}^{\ast}>>
\mathfrak{F}^\ast_{n,k}\otimes\mathfrak{F}^\ast_{n,k} \end{CD}\quad ,
\label{diagram}
\end{equation}%
where the Frobenius isomorphism $\Phi :\mathfrak{F}_{n,k}\rightarrow
\mathfrak{F}_{n,k}^{\ast }$ is given by%
\begin{equation}
\Phi :|\lambda \rangle \mapsto b_{\lambda }^{-1}(t)\langle \lambda ^{\ast
}|\;.  \label{Frobiso}
\end{equation}%
We claim that the coproduct in the basis $\{|\lambda \rangle \}$ is computed
to
\begin{equation}
\Delta _{n,k}|\lambda \rangle =\sum_{\mu ,\nu \in \mathcal{A}_{k,n}^{+}}%
\frac{b_{\mu }(t)b_{\nu }(t)}{b_{\lambda }(t)}N_{\mu \nu }^{\lambda }(t)|\nu
\rangle \otimes |\mu \rangle \;.  \label{Fcoprod}
\end{equation}%
Thus, we have in particular $(\Phi \otimes \Phi )\Delta _{n,k}(|\lambda
\rangle )(|\nu \rangle \otimes |\mu \rangle )=b_{\lambda }^{-1}N_{\mu ^{\ast
}\nu ^{\ast }}^{\lambda }$, where we have used once more that $b_{\mu ^{\ast
}}=b_{\mu }$. According to (\ref{diagram}) this result has to match%
\begin{equation*}
\mathfrak{m}^{\ast }\circ \Phi (|\lambda \rangle )(|\nu \rangle \otimes |\mu
\rangle )=b_{\lambda }^{-1}\langle \lambda ^{\ast }|\mu \circledast \nu
\rangle =b_{\lambda }^{-1}N_{\nu \mu }^{\lambda ^{\ast }}~.
\end{equation*}%
That both results are indeed equal now follows from the properties in
Corollary \ref{symmetries}.

Identify the bra-vector $\langle \lambda |$ in $\mathfrak{F}_{n,k}^{\ast }$
with the ket-vector $b_{\lambda }|\lambda \rangle $ in $\mathfrak{F}_{n,k}$.
Then a straightforward computation yields the last assertion,%
\begin{eqnarray*}
\Delta _{n,k}\mathfrak{e}_{\lambda }^{\ast } &=&\sum_{\mu ,\nu ,\rho }\frac{%
\mathcal{S}_{\mu \lambda }}{\mathcal{S}_{0\lambda }}N_{\nu \rho }^{\mu
}b_{\rho }b_{\nu }|\rho \rangle \otimes |\nu \rangle \\
&=&\sum_{\mu ,\nu ,\rho ,\sigma }\frac{\mathcal{S}_{\mu \lambda }}{\mathcal{S%
}_{0\lambda }}\frac{\mathcal{S}_{\nu \sigma }\mathcal{S}_{\rho \sigma }%
\mathcal{S}_{\sigma \mu }^{-1}}{\mathcal{S}_{0\sigma }}b_{\rho }b_{\nu
}|\rho \rangle \otimes |\nu \rangle \\
&=&\sum_{\mu ,\nu ,\rho ,\sigma }\frac{\mathcal{S}_{\nu \lambda }\mathcal{S}%
_{\rho \lambda }}{\mathcal{S}_{0\lambda }^{2}}b_{\rho }b_{\nu }|\rho \rangle
\otimes |\nu \rangle =\mathfrak{e}_{\lambda }^{\ast }\otimes \mathfrak{e}%
_{\lambda }^{\ast }\;.
\end{eqnarray*}%
Here we have used (\ref{Verlinde1}) in the second line.
\end{proof}

\begin{remark}
The explicit computation of the coproduct (\ref{Fcoprod}) ties the Frobenius
algebra $\mathfrak{F}_{n,k}$ to our earlier discussion of cylindric skew
Macdonald functions. Make the formal identification $|\lambda \rangle
\mapsto Q_{\tilde{\lambda}^{\prime }}^{\prime }$ and $\langle \lambda
|\mapsto P_{\tilde{\lambda}^{\prime }}^{\prime }$ but instead of taking the
usual product (\ref{Demazureprod}) in the ring of symmetric functions define
the fusion product $Q_{\tilde{\mu}^{\prime }}^{\prime }\ast Q_{\tilde{\nu}%
^{\prime }}^{\prime }:=\sum_{\lambda \in \mathcal{A}_{k,n}^{+}}N_{\mu \nu
}^{\lambda }(t)Q_{\tilde{\lambda}^{\prime }}^{\prime }$. Then the Frobenius
coproduct yields
\begin{eqnarray*}
\Delta _{n,k}P_{\tilde{\lambda}^{\prime }}^{\prime } &=&\sum_{\mu ,\nu \in
\mathcal{A}_{k,n}^{+}}N_{\mu \nu }^{\lambda }(t)P_{\nu ^{\prime }}^{\prime
}\otimes P_{\mu ^{\prime }}^{\prime } \\
&=&\sum_{\mu \in \mathcal{A}_{k,n}^{+},d\geq 0}P_{\lambda ^{\prime }/d/\mu
^{\prime }}^{\prime }\otimes P_{\mu ^{\prime }}^{\prime }\;.
\end{eqnarray*}%
This links the Frobenius algebra $\mathfrak{F}_{n,k}$ to the partition
function of the statistical mechanics model with transfer matrix (\ref{ncG'}%
).
\end{remark}

In principle, we can compute the structure constants $N_{\mu \nu }^{\lambda
}(t)$ of the Frobenius algebra $\mathfrak{F}_{n,k}$ from the representation (%
\ref{qplacticrep}) employing (\ref{Fproduct}) and the definition (\ref%
{ncQ'S'}). An alternative is to use our description of the coordinate ring $%
\Bbbk \lbrack \boldsymbol{V}_{k,n}]$ in the previous section.

\begin{theorem}[restricted Hall algebra]
The map $|\lambda \rangle \mapsto \lbrack P_{\tilde{\lambda}}]$ defines for $%
z=1$ an algebra isomorphism $\mathfrak{F}_{n,k}\cong \Bbbk \lbrack
\boldsymbol{V}_{k,n}]$. That is, the coefficients in the product expansions%
\begin{equation}
\lbrack P_{\tilde{\mu}}][P_{\tilde{\nu}}]:=[P_{\tilde{\mu}}P_{\tilde{\nu}%
}]=\sum_{\lambda \in \mathcal{A}_{k,n}^{+}}\tilde{N}_{\mu \nu }^{\lambda
}(t)[P_{\tilde{\lambda}}],\qquad  \label{PPquotient}
\end{equation}%
and%
\begin{equation}
\lbrack P_{\tilde{\mu}}][s_{\tilde{\nu}}]:=[P_{\tilde{\mu}}s_{\tilde{\nu}%
}]=\sum_{\lambda \in \mathcal{A}_{k,n}^{+}}K_{\nu ,\lambda /d/\mu }(t)[P_{%
\tilde{\lambda}}]  \label{Psquotient}
\end{equation}%
coincide with the expansion coefficients of the cylindric skew Macdonald
functions (\ref{cylP'}),%
\begin{equation}
\tilde{N}_{\mu \nu }^{\lambda }(t)=N_{\mu \nu }^{\lambda }(t)=\langle
\lambda |\boldsymbol{Q}_{\tilde{\nu}^{\prime }}^{\prime }|\mu \rangle \qquad
\text{and\qquad }K_{\nu ,\lambda /d/\mu }(t)=\langle \lambda |\boldsymbol{S}%
_{\tilde{\nu}^{\prime }}^{\prime }|\mu \rangle \;.  \label{NQ'KS'}
\end{equation}
\end{theorem}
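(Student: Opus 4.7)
The plan is to transfer the algebra structure across the Bethe-ansatz eigenbasis, using that both $\mathfrak{F}_{n,k}$ and $\Bbbk[\boldsymbol{V}_{k,n}]$ become, after diagonalisation, the same algebra of functions on the discrete set $\boldsymbol{V}_{k,n}$. First I would establish that the assignment $|\lambda\rangle\mapsto[P_{\tilde\lambda}]$ is a linear isomorphism: the previous proposition gives the basis $\{[P_{\tilde\lambda}]:\tilde\lambda\in\tilde{\mathcal A}_{k,n}^+\}$ of $\Bbbk[\boldsymbol{V}_{k,n}]$, while $\{|\lambda\rangle:\lambda\in\mathcal A_{k,n}^+\}$ is a basis of $\mathfrak{F}_{n,k}$, and the map $\sim:\mathcal A_{k,n}^+\xrightarrow{\;\sim\;}\tilde{\mathcal A}_{k,n}^+$ matches them up.

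Next I would check that the map is an algebra homomorphism by comparing eigenvalues on both sides. On the left, by Theorem~\ref{complete} and equation (\ref{ncQ'spec}), the Bethe vectors $\{\mathfrak{b}_\sigma\}_{\sigma\in\mathcal A_{k,n}^+}$ form a joint eigenbasis of every $\boldsymbol{Q}_{\tilde\lambda'}^{\prime}$ (acting via $\circledast$) with eigenvalue $P_\lambda(y_\sigma;t)=P_{\tilde\lambda}(y_\sigma;t)$, the last equality holding at $z=1$ by the straightening rule (\ref{PQreduction}). On the right, $\Bbbk[\boldsymbol{V}_{k,n}]$ is a product of copies of $\Bbbk$ indexed by the points of $\boldsymbol{V}_{k,n}$, and the class $[P_{\tilde\lambda}]$ is diagonal in the basis of point-evaluation idempotents with eigenvalue $P_{\tilde\lambda}(y_\sigma;t)$. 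Thus both $\boldsymbol{Q}_{\tilde\lambda'}^{\prime}$ and $[P_{\tilde\lambda}]$ have identical spectra, and the assignment intertwines the two diagonalisations: it follows formally that it respects the product. Equivalently, for the structure constants of (\ref{PPquotient}) one can invert the transition to the Bethe basis and obtain
\begin{equation*}
\tilde N_{\mu\nu}^\lambda(t)=\sum_{\sigma\in\mathcal A_{k,n}^+}\frac{P_{\tilde\mu}(y_\sigma;t)P_{\tilde\nu}(y_\sigma;t)Q_{\tilde\lambda^\ast}(y_\sigma;t)}{\|\mathfrak{b}_\sigma\|^2},
\end{equation*}
and exactly the same sum appears in the calculation (\ref{Vcalc}) of $N_{\mu\nu}^\lambda(t)=\langle\lambda|\boldsymbol{Q}_{\tilde\nu'}^{\prime}|\mu\rangle$ through the Verlinde-type formula of Lemma~\ref{Verlinde}. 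Hence $\tilde N_{\mu\nu}^\lambda(t)=N_{\mu\nu}^\lambda(t)$.

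For the second identification, the expansion (\ref{cylP'2s}) from Corollary~\ref{ccHLexpansion} reads $P_{\lambda'/d/\mu'}^{\prime}=\sum_{\nu}\langle\lambda|\boldsymbol{S}_{\tilde\nu'}^{\prime}|\mu\rangle\,s_{\tilde\nu'}$, and by the definition of the cylindric Kostka numbers this matrix element is precisely $K_{\nu,\lambda/d/\mu}(t)$. I would then verify that the product expansion $[P_{\tilde\mu}][s_{\tilde\nu}]=\sum_\lambda K_{\nu,\lambda/d/\mu}(t)[P_{\tilde\lambda}]$ holds inside the quotient by the same eigenvalue-matching argument, using (\ref{Verlinde2}) together with the expansion of $s_{\tilde\nu}(y_\sigma)$.

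The main obstacle, and where care is needed, is controlling the passage between partitions $\lambda\in\mathcal A_{k,n}^+$ (which may contain parts of size $n$) and their reductions $\tilde\lambda\in\tilde{\mathcal A}_{k,n}^+$ inside the quotient ring. This rests on the extended affine straightening rule (\ref{PQreduction}), whose validity in turn reflects the extended affine Weyl symmetry of the Bethe equations (\ref{bae}); one must also check that the eigenvalues $P_{\tilde\lambda}(y_\sigma;t)$ separate points of $\boldsymbol{V}_{k,n}$, which is immediate from Theorem~\ref{complete} together with the fact that the $P_{\tilde\lambda}$ span $\Bbbk[\boldsymbol{V}_{k,n}]$. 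Once these combinatorial subtleties are in place, the diagonal picture forces both structure constants to coincide with the cylindric expansion coefficients of (\ref{NQ'KS'}).
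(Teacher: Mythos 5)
Your proposal is correct and follows essentially the same route as the paper: both rest on the completeness of the Bethe eigenbasis, the deformed Verlinde formulae (\ref{Verlinde1})--(\ref{Verlinde2}), the reduction (\ref{PQreduction}), and Lemma \ref{nullstellen} identifying evaluation at the Bethe roots with passage to the quotient $\Bbbk[\boldsymbol{V}_{k,n}]$. Your explicit sum $\tilde N_{\mu\nu}^{\lambda}(t)=\sum_{\sigma}P_{\tilde{\mu}}(y_{\sigma};t)P_{\tilde{\nu}}(y_{\sigma};t)Q_{\tilde{\lambda}^{\ast}}(y_{\sigma};t)/\|\mathfrak{b}_{\sigma}\|^{2}$ is exactly the computation (\ref{Vcalc}) carried out in the paper's proof, merely rephrased as an intertwining of the two diagonalisations.
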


\begin{proof}
The proof rests once more on the existence of an eigenbasis, Theorem \ref%
{complete}, and the expression (\ref{Verlinde1}). Namely, using (\ref%
{PQreduction}) we have
\begin{eqnarray}
N_{\mu \nu }^{\lambda }(t) &=&\sum_{\sigma \in \mathcal{A}_{k,n}^{+}}\frac{%
P_{\tilde{\mu}}(y_{\sigma };t)P_{\tilde{\nu}}(y_{\sigma };t)\mathcal{S}%
_{\sigma \lambda }^{-1}(t)}{||\mathfrak{b}_{\sigma }||}  \notag \\
&=&\sum_{\rho ,\sigma \in \mathcal{A}_{k,n}^{+}}\frac{\tilde{N}_{\mu \nu
}^{\rho }(t)P_{\tilde{\rho}}(y_{\sigma };t)\mathcal{S}_{\sigma \lambda
}^{-1}(t)}{||\mathfrak{b}_{\sigma }||} \\
&=&\sum_{\rho \in \mathcal{A}_{k,n}^{+}}\tilde{N}_{\mu \nu }^{\rho
}(t)\sum_{\sigma \in \mathcal{A}_{k,n}^{+}}\mathcal{S}_{\rho \sigma }(t)%
\mathcal{S}_{\sigma \lambda }^{-1}(t)=\tilde{N}_{\mu \nu }^{\lambda }(t)\;.
\notag
\end{eqnarray}%
The second line employs Lemma \ref{nullstellen} which ensures that the
expansion of the product $P_{\tilde{\mu}}(y_{\sigma };t)P_{\tilde{\nu}%
}(y_{\sigma };t)$ equals the expansion of $[P_{\tilde{\mu}}P_{\tilde{\nu}}]$
in $\Bbbk \lbrack \boldsymbol{V}_{k,n}]$. The second assertion follows from
an analogous computation.
\end{proof}

Recall that $Q_{(r)}=g_{r}$ and $P_{(1^{r})}=e_{r}$, then a direct
consequence of our earlier computations and the last theorem is the
following obvious corollary which links the transfer matrices (\ref{ncE})
and (\ref{ncG'}) to the coordinate ring.

\begin{corollary}[Pieri rules in the quotient]
\label{Pieri}Let $\mu \in \mathcal{A}_{k,n}^{+}$ and $0\leq r<n$, $0\leq
r^{\prime }\leq k$. Then we have the following modified Pieri rules in the
coordinate ring $\mathfrak{R}_{n,k}[z]:=\Bbbk \lbrack \boldsymbol{V}%
_{k,n}]\otimes _{\Bbbk }\Bbbk \lbrack z],$
\begin{eqnarray}
\lbrack g_{r}P_{\tilde{\mu}}] &=&\sum_{\lambda \in \mathcal{A}%
_{k,n}^{+}}\langle \lambda |\boldsymbol{e}_{r}|\mu \rangle \lbrack P_{\tilde{%
\lambda}}]=\sum_{\substack{ \lambda /d/\mu =(r),  \\ \lambda \in \mathcal{A}%
_{k,n}^{+}}}z^{d}\Phi _{\lambda /d/\mu }(t)[P_{\tilde{\lambda}}],
\label{quotientpieri1} \\
\lbrack e_{r^{\prime }}P_{\tilde{\mu}}] &=&\sum_{\lambda \in \mathcal{A}%
_{k,n}^{+}}\langle \lambda |\boldsymbol{g}_{r^{\prime }}^{\prime }|\mu
\rangle \lbrack P_{\tilde{\lambda}}]=\sum_{\substack{ \lambda /d/\mu
=(1^{r^{\prime }}),  \\ \lambda \in \mathcal{A}_{k,n}^{+}}}z^{d}\Psi
_{\lambda ^{\prime }/d/\mu ^{\prime }}^{\prime }(t)[P_{\tilde{\lambda}}]\;.
\label{quotientpieri2}
\end{eqnarray}
\end{corollary}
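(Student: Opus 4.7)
The plan is to compose the algebra isomorphism $\mathfrak{F}_{n,k} \cong \Bbbk[\boldsymbol{V}_{k,n}]$ established in the preceding theorem with the Bethe ansatz diagonalisation, so as to identify the operators $\boldsymbol{e}_r$ and $\boldsymbol{g}'_{r'}$ with the images of multiplication by the symmetric functions $g_r$ and $e_{r'}$ respectively. First I would extract from (\ref{ncEspec}) the eigenvalues of the individual coefficients of $\boldsymbol{E}(u)$ on a Bethe vector $\mathfrak{b}(y)$: for $0\le r<n$ only the first product in the eigenvalue of $\boldsymbol{E}(u)$ contributes to the coefficient of $u^r$, yielding $\boldsymbol{e}_r\,\mathfrak{b}(y)=g_r(y;t)\,\mathfrak{b}(y)$. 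An analogous expansion of (\ref{ncG'spec}) gives $\boldsymbol{g}'_{r'}\,\mathfrak{b}(y)=e_{r'}(y)\,\mathfrak{b}(y)$ for $0\le r'\le k$. Since by Theorem~\ref{complete} the Bethe vectors form a basis of $\mathfrak{F}_{n,k}$, both operators are diagonalisable and uniquely specified by these joint eigenvalues.

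Next I would invoke the algebra isomorphism: under $|\lambda\rangle\mapsto[P_{\tilde\lambda}]$ the same Bethe basis diagonalises multiplication in $\Bbbk[\boldsymbol{V}_{k,n}]$, with $[P_{\tilde\lambda}]$ acting on the $\sigma$th idempotent by $P_{\tilde\lambda}(y_\sigma;t)$. Matching eigenvalues point by point on $\boldsymbol{V}_{k,n}$, the operator $\boldsymbol{e}_r$ must then correspond to multiplication by $[g_r]$ and $\boldsymbol{g}'_{r'}$ to multiplication by $[e_{r'}]$ in $\mathfrak{R}_{n,k}[z]$. Evaluating on the image of $|\mu\rangle$ immediately yields the first equalities $[g_r\,P_{\tilde\mu}]=\sum_\lambda\langle\lambda|\boldsymbol{e}_r|\mu\rangle[P_{\tilde\lambda}]$ and $[e_{r'}\,P_{\tilde\mu}]=\sum_\lambda\langle\lambda|\boldsymbol{g}'_{r'}|\mu\rangle[P_{\tilde\lambda}]$ asserted in (\ref{quotientpieri1}) and (\ref{quotientpieri2}).

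Finally, the second equalities in both Pieri rules follow by substituting the explicit cylindric horizontal-strip expansion (\ref{nce2}) of Lemma~\ref{cylhorstrip} for $\boldsymbol{e}_r|\mu\rangle$, and the analogous cylindric vertical-strip formula (\ref{ncG'2}) for $\boldsymbol{g}'_{r'}|\mu\rangle$, and reading off matrix elements. I anticipate no substantive obstacle here since every ingredient has already been assembled in the preceding sections; the only point requiring care is the crossed identification of operators with symmetric functions, namely $\boldsymbol{e}_r\leftrightarrow[g_r]$ rather than $[e_r]$ (and dually $\boldsymbol{g}'_{r'}\leftrightarrow[e_{r'}]$), which reflects the two distinct solutions to the Yang--Baxter equation governing the transfer matrices and is consistent with the noncommutative Cauchy identities (\ref{ncC1}), (\ref{ncC2}). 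One must also restrict to $r<n$ so that the affine correction term proportional to $u^n$ in the eigenvalue of $\boldsymbol{E}(u)$ does not contaminate the eigenvalue of $\boldsymbol{e}_r$; for $r=n$ the Bethe equations in the form (\ref{gbae}) reconcile the two contributions with the identity $\boldsymbol{e}_n=z\cdot 1$, which is why the statement is formulated for $r<n$.
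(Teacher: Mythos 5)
Your proposal is correct and follows essentially the route the paper intends: the corollary is stated there as a direct consequence of the restricted Hall algebra theorem (whose proof is exactly the Bethe-eigenbasis/eigenvalue-matching argument you spell out, using $Q_{(r)}=g_r$ and $e_{r'}=s_{(1^{r'})}$) together with the strip expansions (\ref{nce2}) and (\ref{ncG'2}). Your added care about the restriction $r<n$ and the affine term $zt^ku^n$ in (\ref{ncEspec}) is exactly the right point to flag.
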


\begin{remark}[Open boundary conditions]
Note that either of the Pieri rules (\ref{quotientpieri1}) and (\ref%
{quotientpieri2}) fixes the product in $\mathfrak{R}_{n,k}[z]$. Setting $z=0$
and $t=-1$ the quotient $\mathfrak{R}_{n,n}[z]/\langle z,t+1\rangle $ is
isomorphic to the cohomology ring of the orthogonal Grassmannian OG$(n,2n)$
in the basis of $P$-polynomials. The Pieri rule for $Q$-polynomials
coincides with the cohomology ring of the Lagrangian Grassmannian LG$%
(n-1,2n-2)$. For general $z$ we obtain a deformation of these cohomology
rings which is different from the usual quantum cohomology \cite{BKT}.
\end{remark}

\subsubsection{Algorithm to compute deformed fusion coefficients}

We demonstrate on an explicit example how the expansion coefficients (\ref%
{cylP'2s}) and (\ref{cylP'2P'}) can be computed in the coordinate ring $%
\Bbbk \lbrack \boldsymbol{V}_{k,n}]$. The general procedure can be described
as follows: first compute the normal product expansion $P_{\tilde{\mu}}P_{%
\tilde{\nu}}=\sum_{\lambda }f_{\mu \nu }^{\lambda }(t)P_{\lambda }$ in $%
\mathbb{Z}[t][e_{1},\ldots ,e_{k}]$, which is possible since explicit
formulae for computing the coefficients $f_{\mu \nu }^{\lambda }(t)$ are
known; see \cite{Macdonald}. In the second step rewrite those $P_{\lambda }$
with $\lambda $ outside the fundamental region $\mathcal{\tilde{A}}%
_{k,n}^{+} $ in terms of $P_{\tilde{\lambda}}$'s with $\tilde{\lambda}\in
\mathcal{\tilde{A}}_{k,n}^{+}$ using first the affine (\ref{straight4}) and
then the non-affine straightening rules (\ref{straight1}), (\ref{straight2}%
). Collecting coefficients of the individual terms one obtains the expansion
in $\Bbbk \lbrack \boldsymbol{V}_{k,n}]$ and, hence, $N_{\mu \nu }^{\lambda
}(t) $.

\begin{example}
Set $n=4,$ $k=3$ and $z=1$. Consider the partitions $\lambda =(3,2,1)$ and $%
\mu =(4,3,1)$. We exploit the fact that $f_{\lambda \mu }^{\nu }(t)\equiv 0$
unless $f_{\lambda \mu }^{\nu }(0)=c_{\lambda \mu }^{\nu }\neq 0$ where $%
c_{\lambda \mu }^{\nu }$ is the Littlewood-Richardson coefficient. Perform
the Littlewood-Richardson algorithm, one finds that for all partitions $\nu $
of length $\leq k$ the nonzero coefficients are%
\begin{equation*}
c_{\lambda \mu }^{(7,5,2)}=c_{\lambda \mu }^{(7,4,3)}=c_{\lambda \mu
}^{(6,6,2)}=c_{\lambda \mu }^{(6,4,4)}=c_{\lambda \mu }^{(5,5,4)}=1,\qquad
c_{\lambda \mu }^{(6,5,3)}=2\;.
\end{equation*}%
With the help of a computer one then calculates the following expansion
coefficients in the corresponding product of Hall-Littlewood polynomials,%
\begin{equation*}
f_{\lambda \mu }^{(7,5,2)}=f_{\lambda \mu }^{(7,4,3)}=1,\quad f_{\lambda \mu
}^{(6,6,2)}=f_{\lambda \mu }^{(6,4,4)}=f_{\lambda \mu }^{(5,5,4)}=1+t,\quad
f_{\lambda \mu }^{(6,5,3)}=2-t^{2}\;.
\end{equation*}%
Applying the straightening rules (\ref{straight3}) and (\ref{straight1}) one
finds%
\begin{gather*}
P_{(7,5,2)}=tP_{(3,2,1)},\quad P_{(7,4,3)}=[2]P_{(4,3,3)},\quad
P_{(6,6,2)}=[3]P_{(2,2,2)}, \\
P_{(6,5,3)}=P_{(3,2,1)},\quad P_{(6,4,4)}=P_{(4,4,2)},\quad
P_{(5,5,4)}=P_{(4,1,1)}\;.
\end{gather*}%
Thus, after removing all $n$-rows we arrive at the expansion%
\begin{multline*}
P_{(3,2,1)}P_{(4,3,1)}=(2+t-t^{2})P_{(3,2,1)}+(1+t)(1+t+t^{2})P_{(2,2,2)} \\
+(1+t)(P_{(3,3,0)}+P_{(1,1,0)}+P_{(2,0,0)})\;.
\end{multline*}%
From this computation we thus obtain the following $t$-deformed fusion
coefficients%
\begin{gather*}
N_{\lambda \mu }^{(3,2,1)}(t)=2+t-t^{2},\qquad N_{\lambda \mu
}^{(2,2,2)}(t)=(1+t)(1+t+t^{2}), \\
N_{\lambda \mu }^{(4,3,3)}(t)=N_{\lambda \mu }^{(4,1,1)}(t)=N_{\lambda \mu
}^{(4,4,2)}(t)=1+t\;.
\end{gather*}%
Setting $t=0$ one can verify, using the Verlinde formula or the Kac-Walton
algorithm, that one obtains the correct fusion coefficients of the $\widehat{%
\mathfrak{sl}}(n)_{k}$-Verlinde algebra.
\end{example}

So far we have focussed on the matrix elements $N_{\mu \nu }^{\lambda
}(t)=\langle \lambda |\boldsymbol{Q}_{\nu ^{\prime }}^{\prime }|\mu \rangle $%
. However, the definition (\ref{ncQ'S'}) expresses the latter in terms of
the matrix elements $K_{\nu ,\lambda /d/\mu }(t)=\langle \lambda |%
\boldsymbol{S}_{\nu ^{\prime }}^{\prime }|\mu \rangle $. Sample computations
of the latter lead to the following observation.

\begin{conjecture}
Let $\lambda ,\mu \in \mathcal{A}_{k,n}^{+}$ and $\nu \in \mathcal{\tilde{A}}%
_{k,n}^{+}$. The expansion coefficients (matrix elements) $\langle \lambda |%
\boldsymbol{S}_{\nu ^{\prime }}^{\prime }|\mu \rangle $ are always
polynomials with non-negative coefficients.
\end{conjecture}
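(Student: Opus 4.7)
My plan is to pursue two complementary strategies, one combinatorial and one representation-theoretic, with the combinatorial route being primary. The starting point in both cases is to recognise the conjecture as a cylindric generalisation of the Lascoux--Sch\"utzenberger positivity of Kostka--Foulkes polynomials: by Proposition \ref{KFpolys}, specialising $\mu=n^{k}$ in $\langle\lambda|\boldsymbol{S}_{\nu'}'|\mu\rangle$ reduces it to the classical $K_{\nu\tilde\lambda}(t)$, whose positivity was proved via the charge statistic. So I would aim to lift the charge statistic from ordinary tableaux to cylindric tableaux of shape $\lambda'/d/\mu'$.

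Concretely, the combinatorial plan is to turn the signed formula
\[
\langle\lambda|\boldsymbol{S}_{\nu'}'|\mu\rangle=\sum_{w\in\mathfrak{S}_{\ell}}\varepsilon(w)\sum_{\substack{|T|=\lambda'/d/\mu'\\\operatorname{wt}(T)=\nu'(w)}}\Psi_{T}'(t)
\]
into a manifestly positive one. The inner sums are positive because each $\Psi_{T}'(t)$ is a product of $t$-binomials. I would construct (i) a cylindric analogue of the Bender--Knuth / Sch\"utzenberger involutions, adapted so as to respect the periodicity (\ref{T1}) and the affine straightening rules (\ref{straight3})--(\ref{straight4}), giving a sign-reversing involution that cancels all terms indexed by $w\neq\mathrm{id}$; and then (ii) a non-negative statistic $c(T)\in\mathbb{Z}_{\geq 0}$ on the surviving cylindric tableaux such that $\langle\lambda|\boldsymbol{S}_{\nu'}'|\mu\rangle=\sum_{T}t^{c(T)}$. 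The sum would be over cylindric tableaux of shape $\lambda'/d/\mu'$ and content $\nu'$, and $c$ should restrict to the classical (co)charge when $\mu=n^{k}$. A reasonable candidate is the statistic obtained by running the Lascoux--Sch\"utzenberger cyclage on each lift of $T$ to the fundamental region of $\mathbb{Z}^{2}/\Omega\mathbb{Z}$ and then proving invariance under the shift $\Omega=(k,-n)$.

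In parallel, I would try a representation-theoretic route. The paper recalls the Chari--Loktev theorem that $K_{\lambda'\mu'}(t)$ is the graded dimension of a subspace of a Feigin--Loktev fusion product of current-algebra modules. The aim is then to identify $\langle\lambda|\boldsymbol{S}_{\nu'}'|\mu\rangle$ as a graded multiplicity: use the Kirillov--Reshetikhin module isomorphism of Proposition~\ref{KRmodule} to realise $\mathcal{F}_{k}^{\otimes n}\cong W^{1,k}$, interpret $\boldsymbol{S}_{\nu'}'|\mu\rangle$ via the noncommutative Cauchy identity (\ref{ncC2}) as a component of $\boldsymbol{G}'(x_{1})\cdots\boldsymbol{G}'(x_{\ell})|\mu\rangle$ picked out by the Schur function $s_{\nu'}$, and match this against the graded character of a Demazure-type submodule of a fusion product of highest-weight $\mathfrak{sl}(n)\otimes\mathbb{C}[t]$-modules whose top components are indexed by $\lambda$ and $\mu$, with the winding number $d$ encoding the shift under the affine Weyl action of (\ref{straight4}). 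If this matches, positivity is automatic from the grading.

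The main obstacle I anticipate is the construction of the cylindric charge. In the finite setting, charge is defined via RSK insertion in the plactic monoid, and the proof that it computes $K_{\lambda\mu}(t)$ is delicate. In the cylindric setting one must work with the \emph{affine} quantum plactic algebra $\hat{\mathcal{U}}_{n}^{-}$ from (\ref{qplacticrep})--(\ref{qKnuth}), whose relations differ from the ordinary plactic ones by the additional generator $a_{n}$ and by the periodicity-induced terms visible in (\ref{TQ2}). A candidate statistic must therefore (a) be invariant under $\mathbb{Z}_{n}$-rotation (Corollary \ref{symmetries}.4), (b) descend to the classical cocharge when $d=0$ and $\mu=n^{k}$, and (c) be well-defined on orbits modulo the shift $\Omega=(k,-n)$. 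Existence of such a statistic is suggested by the connection of the noncommutative Macdonald operators with the spherical Hecke algebra via the Satake isomorphism, and by the consistency with the cylindric Macdonald functions $P'_{\lambda/d/\mu}$ of Section 6, but verifying all three properties simultaneously and checking that the resulting sign-reversing involution respects the cylindric shape constraint is the technical heart of the argument.
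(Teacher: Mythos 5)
You should be aware at the outset that the paper does not prove this statement: it is recorded as a conjecture, supported only by numerical verification on examples and by the explicit tables that follow it, so there is no proof in the text against which to measure your argument. The relevant question is therefore whether your proposal itself closes the gap, and it does not. What you have written is a research programme whose two routes both terminate exactly where the real difficulty begins. In the combinatorial route, the positivity of each individual $\Psi_{T}^{\prime }(t)$ is indeed immediate (it is a product of $t$-binomial coefficients), but the entire content of the conjecture lies in the cancellation of the signed terms in the alternating sum over $w\in \mathfrak{S}_{\ell }$ in the formula of Corollary \ref{ccHLexpansion}; you name the two objects that would accomplish this -- a cylindric sign-reversing involution and a cylindric charge statistic compatible with the shift by $\Omega =(k,-n)$ -- but you construct neither, and you yourself identify their construction as "the technical heart of the argument." The representation-theoretic route is conditional on an identification ("if this matches") that is precisely the open problem: the introduction of the paper already states that finding such a representation-theoretic interpretation, analogous to the Chari--Loktev and Feigin--Loktev results in the non-cylindric case, "would be desirable" and would prove the conjecture, so your second strategy restates the author's own suggested line of attack without advancing it.

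That said, your framing is sound as far as it goes: the reduction to classical Kostka--Foulkes positivity at $\mu =n^{k}$ via Proposition \ref{KFpolys} is correct, and your list of constraints (a)--(c) on a candidate cylindric charge is a reasonable specification of the problem. If you want to make progress, I would suggest first treating the case $d=0$, where the coefficient reduces to $K_{\nu ,\lambda /\mu }(t)=\sum_{\rho }K_{\nu \rho }(t)f_{\rho \mu }^{\lambda }(t)=\langle Q_{\lambda },P_{\mu }s_{\nu }\rangle _{t}$; the paper's own tables show that the Hall polynomials $f_{\rho \mu }^{\lambda }(t)$ are not individually positive, so any involution or statistic must already do genuine work before the periodicity and the winding number $d$ enter at all. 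Until one of the two constructions is actually carried out, the statement remains a conjecture, and your text should be presented as a plan rather than a proof.
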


This conjecture has been numerically verified for several examples.

\begin{example}
Choose $n=k=5$ and set $\mu =(3,2,2,1,1)$, $\nu =(4,3,1)$. Then the table
below lists all $\lambda \in \mathcal{A}_{k,n}^{+}$ for which the matrix
element $\langle \lambda |\boldsymbol{S}_{\nu ^{\prime }}^{\prime }|\mu
\rangle $ is nonvanishing.%
\begin{equation*}
\begin{tabular}{|c|c|}
\hline\hline
$\lambda $ & $\langle \lambda |\boldsymbol{S}_{\nu ^{\prime }}^{\prime }|\mu
\rangle $ \\ \hline\hline
\multicolumn{1}{|l|}{$4,2,2,2,2$} & \multicolumn{1}{|l|}{$%
1+3t+6t^{2}+8t^{3}+8t^{4}+6t^{5}+3t^{6}+t^{7}$} \\ \hline
\multicolumn{1}{|l|}{$4,3,2,2,1$} & \multicolumn{1}{|l|}{$%
2+9t+16t^{2}+14t^{3}+6t^{4}+t^{5}$} \\ \hline
\multicolumn{1}{|l|}{$4,3,3,1,1$} & \multicolumn{1}{|l|}{$%
1+6t+13t^{2}+14t^{3}+8t^{4}+2t^{5}$} \\ \hline
\multicolumn{1}{|l|}{$4,4,2,1,1$} & \multicolumn{1}{|l|}{$%
1+6t+11t^{2}+10t^{3}+3t^{4}$} \\ \hline
\multicolumn{1}{|l|}{$2,2,1,1,1$} & \multicolumn{1}{|l|}{$%
1+3t+4t^{2}+3t^{3}+t^{4}$} \\ \hline
\multicolumn{1}{|l|}{$3,1,1,1,1$} & \multicolumn{1}{|l|}{$%
1+2t+3t^{2}+3t^{3}+2t^{4}+t^{5}$} \\ \hline
\multicolumn{1}{|l|}{$5,4,3,3,2$} & \multicolumn{1}{|l|}{$%
1+7t+20t^{2}+31t^{3}+29t^{4}+17t^{5}+6t^{6}+t^{7}$} \\ \hline
\multicolumn{1}{|l|}{$5,4,4,2,2$} & \multicolumn{1}{|l|}{$%
1+6t+17t^{2}+24t^{3}+20t^{4}+9t^{5}+2t^{6}$} \\ \hline
\multicolumn{1}{|l|}{$5,2,2,2,1$} & \multicolumn{1}{|l|}{$%
2+6t+9t^{4}+7t^{3}+3t^{4}$} \\ \hline
\multicolumn{1}{|l|}{$5,3,2,1,1$} & \multicolumn{1}{|l|}{$3+8t+9t^{4}+3t^{3}$%
} \\ \hline
\multicolumn{1}{|l|}{$5,4,1,1,1$} & \multicolumn{1}{|l|}{$%
1+3t+4t^{2}+3t^{3}+t^{4}$} \\ \hline
\multicolumn{1}{|l|}{$5,5,3,2,2$} & \multicolumn{1}{|l|}{$%
2+8t+16t^{2}+17t^{3}+10t^{4}+3t^{5}$} \\ \hline
\multicolumn{1}{|l|}{$5,5,3,3,1$} & \multicolumn{1}{|l|}{$%
1+5t+10t^{2}+10t^{3}+5t^{4}+t^{5}$} \\ \hline
\multicolumn{1}{|l|}{$5,5,4,2,1$} & \multicolumn{1}{|l|}{$%
1+5t+8t^{2}+6t^{3}+2t^{4}$} \\ \hline
\end{tabular}%
\end{equation*}%
Note that the constant term for each listed polynomial coincides with the
fusion coefficient.
\end{example}

\begin{remark}
Let $\lambda ,\mu \in \mathcal{A}_{k,n}^{+}$ and $\nu \in \mathcal{\tilde{A}}%
_{k,n}^{+}$. Then for $dn=|\mu |+|\nu |-|\lambda |=0$ the matrix elements $%
N_{\mu \nu }^{\lambda }(t)=\langle \lambda |\boldsymbol{Q}_{\nu ^{\prime
}}^{\prime }|\mu \rangle $ and $K_{\nu ,\lambda /d/\mu }(t)=\langle \lambda |%
\boldsymbol{S}_{\nu ^{\prime }}^{\prime }|\mu \rangle $ specialise to the
known polynomials%
\begin{equation*}
f_{\mu \nu }^{\lambda }(t)=\langle Q_{\lambda /\mu },P_{\nu }\rangle
_{t}=\langle Q_{\lambda },P_{\mu }P_{\nu }\rangle _{t}
\end{equation*}%
and%
\begin{equation*}
K_{\nu ,\lambda /\mu }(t)=\sum_{\rho \in \mathcal{\tilde{A}}%
_{k,n}^{+}}K_{\nu \rho }(t)f_{\rho \mu }^{\lambda }(t)=\langle S_{\nu
},Q_{\lambda /\mu }\rangle _{t}=\langle Q_{\lambda },P_{\mu }s_{\nu }\rangle
_{t}\;,
\end{equation*}%
respectively. Here $S_{\lambda }$ is the dual Schur function with respect to
the Hall inner product; see (\ref{SS'}).
\end{remark}

\begin{example}
Let $\nu =(4,3,1,0,0)$, $\mu =(3,2,2,1,1)$, $\lambda =(5,5,3,2,2)$ then%
\begin{equation*}
f_{\mu \nu }^{\lambda }(t)=2+3t+t^{2}-t^{3}-t^{4}
\end{equation*}%
and%
\begin{equation*}
K_{\nu ,\lambda /\mu }(t)=2+8t+16t^{2}+17t^{3}+10t^{4}+3t^{5}
\end{equation*}

With the help of a computer one finds the following values for the Hall and
Kostka-Foulkes polynomials from the known formulae in \cite[Chapter III.6]%
{Macdonald},%
\begin{equation*}
\begin{tabular}{|c|c|c|}
\hline\hline
$\rho $ & $f_{\rho \mu }^{\lambda }(t)$ & $K_{\nu \rho }(t)$ \\ \hline\hline
\multicolumn{1}{|l|}{$4,3,1$} & \multicolumn{1}{|l|}{$2+3t+t^{2}-t^{3}-t^{4}$%
} & \multicolumn{1}{|l|}{$1$} \\ \hline
\multicolumn{1}{|l|}{$4,2,2$} & \multicolumn{1}{|l|}{$1+t$} &
\multicolumn{1}{|l|}{$t$} \\ \hline
\multicolumn{1}{|l|}{$4,2,1,1$} & \multicolumn{1}{|l|}{$1+2t+t^{2}$} &
\multicolumn{1}{|l|}{$t+t^{2}$} \\ \hline
\multicolumn{1}{|l|}{$4,1,1,1,1$} & \multicolumn{1}{|l|}{$0$} &
\multicolumn{1}{|l|}{$t^{3}+t^{4}+t^{5}$} \\ \hline
\multicolumn{1}{|l|}{$3,3,2$} & \multicolumn{1}{|l|}{$1+t-t^{2}-t^{3}$} &
\multicolumn{1}{|l|}{$t+t^{2}$} \\ \hline
\multicolumn{1}{|l|}{$3,3,1,1$} & \multicolumn{1}{|l|}{$2+2t-2t^{3}-t^{4}$}
& \multicolumn{1}{|l|}{$t+2t^{2}+t^{3}$} \\ \hline
\multicolumn{1}{|l|}{$3,2,2,1$} & \multicolumn{1}{|l|}{$1+2t+t^{2}$} &
\multicolumn{1}{|l|}{$2t^{2}+2t^{3}+t^{4}$} \\ \hline
\multicolumn{1}{|l|}{$3,2,1,1,1$} & \multicolumn{1}{|l|}{$1+t$} &
\multicolumn{1}{|l|}{$t^{2}+2t^{3}+3t^{4}+2t^{5}+t^{6}$} \\ \hline
\multicolumn{1}{|l|}{$2,2,2,2$} & \multicolumn{1}{|l|}{$0$} &
\multicolumn{1}{|l|}{$t^{3}+t^{4}+2t^{5}+2t^{6}+t^{7}$} \\ \hline
\multicolumn{1}{|l|}{$2,2,2,1,1$} & \multicolumn{1}{|l|}{$0$} &
\multicolumn{1}{|l|}{$t^{3}+3t^{4}+3t^{5}+3t^{6}+2t^{7}+t^{8}$} \\ \hline
\end{tabular}%
\end{equation*}
\end{example}

\section{Conclusions}

There has been recently a lot of attention on integrable quantum many-body
systems, such as the nonlinear quantum Schr\"{o}dinger model, in connection
with the infrared limit of four-dimensional supersymmetric $N=2$ gauge
theories \cite{NS}. It has been observed that the quantum Hamiltonians can
be identified with operators in the chiral ring, that is operators that are
annihilated by one chiral half of the supercharges. Moreover, it has been
asserted that the Bethe ansatz equations describe the vacua of the gauge
theory and that their solutions, the Bethe roots, have the interpretation of
coordinates on the moduli space of these vacua. It has been further argued
that the vacua should correspond to the states of a 2D topological quantum
field theory. The latter are known to be in correspondence with commutative
Frobenius algebras \cite{Kock}.

In light that the quantum integrable model investigated in this article is a
discrete version of the quantum nonlinear Schr\"{o}dinger model \cite%
{vanDiejen} our findings summarised in the following table confirm the
connection between quantum integrable systems and two-dimensional
topological quantum field theories.%
\begin{equation*}
\begin{tabular}{||l||c|c|c|}
\hline
\begin{tabular}{l}
{\small quantum} \\
{\small integrable model}%
\end{tabular}
&
\begin{tabular}{l}
{\small quantum} \\
{\small Hamiltonians}%
\end{tabular}
& {\small Bethe vectors} &
\begin{tabular}{l}
{\small Bethe ansatz} \\
{\small equations}%
\end{tabular}
\\ \hline
\begin{tabular}{l}
{\small Frobenius} \\
{\small algebra}%
\end{tabular}
& {\small generators} & {\small idempotents} &
\begin{tabular}{l}
{\small coordinate ring} \\
{\small presentation}%
\end{tabular}
\\ \hline
\end{tabular}%
\end{equation*}

\begin{center}
{\small Table 8.1. Dictionary between commutative Frobenius algebras and
quantum integrable models.}\bigskip
\end{center}

Moreover, our discussion highlights the central role of the eigenbasis of the
quantum Hamiltonians, the so-called Bethe vectors: they provide the algebra isomorphism
between the subalgebra in $\limfunc{End}S^{k}(V)$ generated by the deformed fusion
matrices $\boldsymbol{Q}_{\nu }^{\prime }$'s and the quotient of the spherical
Hall algebra. The latter is the coordinate ring we discussed in Section 7 and according
to the above correspondence it should have the interpretation of the moduli space of
vacua of a quantum field theory.

In this context we note that the Verlinde algebra can be
understood in terms of a purely topological construction using so-called
equivariant $K$-theory \cite{FTH}. Frobenius algebra deformations of the
Verlinde algebra have been suggested in \cite{Teleman} and it would be interesting to see if these constructions are related.

The discussion here is very much motivated along similar lines as
investigations of the so-called Bethe algebra for the Gaudin
and related models. There it has been shown that the Bethe algebra related
to Yangians describes the equivariant cohomology of flag manifolds \cite{RSTV}. In
contrast the discussion in \cite{KS, Korff, Korffproceedings} shows that the representations
of the Bethe algebra related to the quantum group $U_{q}\widehat{\mathfrak{gl}}(n)$ in the crystal limit,
 $q=0$,  are identical to the WZNW fusion ring and the small \emph{quantum}
cohomology ring of the Grassmannian. The present work extends this
discussion to the case when $q\neq 0$.\bigskip

\textbf{Acknowledgment.} The author is financially supported by a University
Research Fellowship of the Royal Society. In the course of this work the
author has benefitted from numerous discussions with colleagues and would
like to thank Kenneth Brown, Iain Gordon, Ulrich Kr\"{a}hmer, Junichi
Shiraishi and Robert Weston. Special thanks are due to Alastair Craw, for
his helpful explanations regarding affine varieties and Hilbert polynomials, Alain
Lascoux for pointing out reference \cite{Sanderson}, Anne Schilling and Nicolas
Thiery for sharing their insight and hospitality
at the Universit\'{e} Paris Sud in November 2008, Catharina Stroppel for sharing knowledge, ideas and hospitality at the University Bonn in December 2009.

\end{document}